\documentclass[11pt,letterpaper]{article}

\usepackage[T1]{fontenc}
\usepackage{newpxtext}
\usepackage{mathpazo}
\usepackage{amsthm, amsfonts, amssymb, graphicx}
\usepackage{mathtools}
\usepackage{tikz}
\usepackage{blkarray}
\usepackage{enumitem}
\usepackage{dsfont}
\usepackage[margin=1truein]{geometry}
\usepackage[full]{complexity}
\usepackage{enumitem}

\usepackage[pdfstartview=FitH,pdfpagemode=UseNone,colorlinks=true,citecolor=blue,linkcolor=blue,backref=page,hypertexnames=false]{hyperref}
\definecolor{blueviolet}{RGB}{60,50,200}
\definecolor{oliveg}{RGB}{40,200,30}
\hypersetup{colorlinks=true,       
    linkcolor=blueviolet,
    citecolor=oliveg,
}

\usepackage{comment, bbm, enumitem}
\usepackage{xcolor}
\usepackage{pmat}
\usepackage{algorithm}
\usepackage{algpseudocode}
\usepackage{underscore}
\usetikzlibrary{matrix}
\usepackage{amsmath} 
\usepackage[capitalize]{cleveref}
\usepackage{authblk}

\crefname{observation}{Observation}{observations}
\crefname{ALC@line}{Step}{Steps}
\Crefname{ALC@line}{Step}{Steps}

\usepackage{nicematrix}
\usepackage{mathtools}
\definecolor{DeepGreen}{RGB}{6,64,43}

\DeclarePairedDelimiterX{\Set}[2]{\{}{\}}{\,{#1}\,:\,{#2}\,}
\DeclarePairedDelimiter{\card}{|}{|}

\DeclareMathOperator*{\tw}{ct}

\DeclareMathOperator*{\diag}{diag}
\DeclareMathOperator*{\rank}{rank}

\newcommand{\comp}[1]{\overline{#1}}

\newcommand{\rg}[1]{\textcolor{red}{\guillemotleft \textbf{Rohit:} #1\guillemotright}}

\newcommand{\RR}[1]{{\color{blue}[{\tiny \textbf{Roshan:}\bf #1}]\marginpar{\color{blue}*}}}
\newcommand{\sgc}[1]{\textcolor{orange}{\guillemotleft \textbf{Sumanta:} #1\guillemotright}}

\newcommand{\F}{\mathbb{F}}

\newtheorem{theorem}{Theorem}[section]
\newtheorem{lemma}{Lemma}[section]

\newtheorem{observation}{Observation}[section]
\newtheorem{claim}{Claim}[section]

\newtheorem{corollary}{Corollary}[section]
\newtheorem{definition}{Definition}[section]
\newtheorem{remark}{Remark}[section]

\newtheorem{problem}{Problem}[section]

\DeclareMathOperator{\calP}{\mathcal{P}}
\newcommand{\pp}[2]{\{\{#1\},\{#2\}\}}
\DeclareMathOperator*{\ct}{ct}

\DeclareMathOperator{\ad}{adj}
\newcommand{\adj}[1]{(#1)^{\ad}}

\newcommand{\DS}{\stackrel{\mathclap{\scalebox{.5}{\mbox{DS}}}}{=}}
\newcommand{\nDS}{\stackrel{\mathclap{\scalebox{.5}{\mbox{DS}}}}{\neq}}
\newcommand{\DE}{\stackrel{\mathclap{\scalebox{.5}{\mbox{DE}}}}{=}}
\newcommand{\PME}{\stackrel{\mathclap{\scalebox{.4}{\mbox{PME}}}}{=}}
\newcommand{\nPME}{\stackrel{\mathclap{\scalebox{.4}{\mbox{PME}}}}{\neq}}
\newcommand{\kPME}[3]{\mathrm{PME}(#1, #2, #3)}
\newcommand{\newPME}[4]{\mathrm{PME}(#1, #4, #2#3)}
\newcommand{\nsubset}[1]{[n]\setminus\{#1\}}

\newcommand{\calD}{\mathcal{D}}
\newcommand{\calA}{\mathcal{A}}
\newcommand{\calB}{\mathcal{B}}
\newcommand{\prop}{\mathcal{R}}

\newcommand{\PMoracle}[1]{\textsc{PM}_{#1}}

\newcommand{\submatrixfamily}[1]{\mathcal{F}_{#1}}
\newcommand{\fourPMEfamily}[1]{\mathcal{S}_{#1}}

\DeclareMathOperator*{\pol}{poly}

\algdef{SE}{PFor}{EndPFor}[1]{\textbf{for all} \(\mbox{#1}\) \textbf{do in parallel}}{\textbf{end for}}

\algnewcommand{\algorithmicinput}{\textbf{Input:}}
\algnewcommand{\Input}{\item[\algorithmicinput]}
\algnewcommand{\algorithmicoutput}{\textbf{Output:}}
\algnewcommand{\Output}{\item[\algorithmicoutput]}

\newenvironment{repstatement}[1]{\noindent {\bf \cref{#1} (restated).} \em}{}

\newcommand{\probA}{\text{Learning RODs}}
\newcommand{\probB}{\text{Black-box PMAP}}
\newcommand{\calC}{\mathcal{C}}

\newcommand{\N}{\mathbb{N}}

\setlist[description]{font=\normalfont\itshape\space}
\newlength{\jeroenlen}
\newenvironment{remarks}
{\settowidth{\jeroenlen}{\textit{Remarks.}}%
	\begin{description}[leftmargin=\jeroenlen,labelwidth=0pt,labelsep=0pt]
		\item[\textit{Remarks.\qquad}]%
		\begin{enumerate}[leftmargin=0em,labelsep=0.5em]}
		{\end{enumerate}\end{description}}
\crefname{remarks}{Remark}{Remarks}
\Crefname{remarks}{Remark}{Remarks}
\renewcommand{\char}{\textnormal{char}}

\title{Learning Read-Once Determinants and the Principal Minor Assignment Problem}
\author[1]{Abhiram Aravind}
\author[2]{Abhranil Chatterjee}
\author[3]{Sumanta Ghosh}
\author[4]{Rohit Gurjar}
\author[5]{Roshan Raj}
\author[1]{Chandan Saha}

\affil[1]{Indian Institute of Science, Bangalore}
\affil[2]{Indian Institute of Technology Kanpur}
\affil[3]{Indian Statistical Institute, Kolkata}
\affil[4]{Indian Institute of Technology Bombay}
\affil[5]{Tata Institute of Fundamental Research, Mumbai}
 \date{}

\begin{document}

\maketitle

\begin{abstract}
A symbolic determinant under rank-one restriction computes a polynomial of the form $\det(A_0 + A_1y_1 + \ldots + A_ny_n)$, where $A_0, A_1, \ldots, A_n$ are square matrices over a field $\F$ and $\rank(A_i) = 1$ for each $i \in [n]$. This class of polynomials has been studied extensively, since the work of Edmonds (1967), in the context of linear matroids, matching, matrix completion and polynomial identity testing. We study the following learning problem for this class: Given black-box access to an $n$-variate polynomial $f = \det(A_0 + A_1y_1 + \ldots + A_ny_n)$, where $A_0, A_1, \ldots, A_n$ are unknown square matrices over $\F$ and $\rank(A_i) = 1$ for each $i \in [n]$, find a square matrix $B_0$ and rank-one square matrices $B_1, \ldots, B_n$ over $\F$ such that $f = \det(B_0 + B_1y_1 + \ldots + B_ny_n)$. In this work, we give a randomized $\pol(n)$ time algorithm to solve this problem; the algorithm can be derandomized in quasi-polynomial time. To our knowledge, this is the first efficient learning algorithm for this class. As the above-mentioned class is known to be equivalent to the class of \emph{read-once determinants} (RODs), we will refer to the problem as \emph{learning RODs}. An ROD computes the determinant of a matrix whose entries are field constants or variables and every variable appears at most once in the matrix. Thus, the class of RODs is a rare example of a well-studied class of polynomials that admits efficient \emph{proper} learning. 

The algorithm for learning RODs is obtained by connecting with a well-known open problem in linear algebra, namely the \emph{Principal Minor Assignment Problem} (PMAP), which asks to find (if possible) a matrix having prescribed principal minors. PMAP has also been studied in machine learning to learn the kernel matrix of a determinantal point process. Here, we study a natural \emph{black-box} version of PMAP: Given black-box access to an $n$-variate polynomial $f = \det(A + Y)$, where $A \in \F^{n \times n}$ is unknown and $Y = \diag(y_1, \ldots, y_n)$, find a $B \in \F^{n\times n}$ such that $f = \det(B + Y)$. We show that black-box PMAP can be solved in randomized $\pol(n)$ time, and further, it is randomized polynomial-time equivalent to learning RODs. The algorithm and the reduction between the two problems can be derandomized in quasi-polynomial time. To our knowledge, no efficient algorithm to solve this black-box version of PMAP was known before.

We resolve black-box PMAP by investigating a crucial property of dense matrices that we call the \emph{rank-one extension} property. Understanding ``cuts'' of matrices with this property and designing a \emph{black-box cut-finding} algorithm to solve PMAP for such matrices (using only principal minors of order $4$ or less) constitute the technical core of this work. The insights developed along the way also help us give the first NC algorithm for the \emph{Principal Minor Equivalence} problem, which asks to check if two given matrices have equal corresponding principal minors. 

\end{abstract}

\thispagestyle{empty}
\newpage

\setcounter{tocdepth}{2}
\tableofcontents
\thispagestyle{empty}
\newpage
\setcounter{page}{1}

\section{Introduction}
Learning arithmetic circuits is a fundamental problem in algebraic complexity theory. An arithmetic circuit, which is an arithmetic analogue of a Boolean circuit, consists of addition and multiplication operations and computes a multivariate polynomial in the input variables. The \emph{learning problem} (a.k.a. \emph{reconstruction problem}) for a circuit class $\calC$ asks to find a circuit $C$ from black-box access to a polynomial $f$ that is computable by a circuit in class $\calC$ such that $C$ computes $f$. Black-box access to $f$ allows us to obtain the value of $f$ at any chosen point in one time step. If the output circuit $C$ belongs to class $\calC$, then we say that the learning is \emph{proper}.

Despite research spanning over two decades, efficient proper learning algorithms are known for only a handful of circuit classes. This is not surprising, as proper learning is considered a computationally hard problem for most circuit classes. Some of the important classes for which proper learning is known are sparse polynomials (that is, depth two circuits) \cite{Ben-OrT88, KlivansS01}, read-once formulas (ROFs) \cite{HancockH91, BshoutyHH95, ShpilkaV14, MinahanV18}, and read-once oblivious algebraic branching programs (ROABPs) with known variable ordering \cite{BeimelBBKV00, KlivansS06}. Algebraic branching programs (ABPs), which can be viewed as skew circuits, form a powerful model (class) of arithmetic computation. One of the most important polynomials in algebraic complexity, namely the \emph{determinant} polynomial, can be computed by an ABP of polynomial size \cite{Berkowitz84, MahajanV97}. Due to a known reduction from ABPs to symbolic determinants \cite{Valiant79a}, it can be assumed that an ABP computes a polynomial of the form $\det(A_0 + A_1y_1 + \ldots + A_ny_n)$, where $y_1, \ldots, y_n$ are variables and $A_0, A_1, \ldots, A_n \in \F^{r \times r}$ are $r \times r$ matrices over an underlying field $\F$. 

A subclass of ABPs that has received a lot of attention in the past is \emph{symbolic determinants under rank-one restriction}. A polynomial $f$ is computable by this class if $f$ can be expressed as $\det(A_0 + A_1y_1 + \ldots + A_ny_n)$, where $A_1, \ldots, A_n \in \F^{r \times r}$ are rank-one matrices; there is no restriction on $A_0 \in \F^{r \times r}$. This class has been studied extensively over the past several decades in the context of matroid problems, matrix completion, and polynomial identity testing \cite{Edmonds67, Edmonds68, Edmonds79, Lovasz89, Murota93, Geelen99, HarveyKM05, IvanyosKS10, DBLP:journals/jcss/IvanyosKQS15, GT20}. The border complexity of this class has also been recently investigated \cite{CGGR23}. In this work, we study the learning problem for this class.

\begin{problem} \label{prob:learningROD}
Given black-box access to an $n$-variate polynomial $f = \det(A_0 + A_1y_1 + \ldots + A_ny_n)$, where $A_0, A_1, \ldots, A_n$ are unknown square matrices over $\F$ and $\rank(A_i) = 1$ for each $i \in \{1, \ldots, n\}$, find a square matrix $B_0$ and rank-one square matrices $B_1, \ldots, B_n$ over $\F$ such that $f = \det(B_0 + B_1y_1 + \ldots + B_ny_n)$.

\end{problem}

It is known that a polynomial computable by a symbolic determinant under rank-one restriction is also computable by a \emph{read-once determinant} (ROD) of small size (see Lemma 4.3 in \cite{GT20}). \\

\noindent{\textbf{Read-once determinants.}}  An ROD is the determinant of a matrix $M$ whose entries are variables or field constants such that every variable appears at most once in $M$. The dimension of $M$ defines the \emph{size} of the ROD. In the converse direction, it is easy to see that an ROD is (by definition) a symbolic determinant under rank-one restriction. As the two classes, RODs and symbolic determinants under rank-one restriction, are essentially the same, we will refer to Problem \ref{prob:learningROD} as \underline{\emph{Learning RODs}}. Both ROABPs and RODs are more expressive classes than ROFs \cite{Valiant79a}. But, unlike ROABPs, RODs are \emph{not universal} in the sense that not every polynomial can be expressed as an ROD. For example, the elementary symmetric polynomial and the permanent polynomial cannot be expressed as ROD \cite{AravindJ15}. On the other hand, the determinant and the iterated matrix multiplication polynomials (both of which define complete polynomial families for the important algebraic complexity class $\mathsf{VBP}$) admit small RODs \cite{Valiant79a}. It is worth noting that any ROABP computing the determinant has exponential size \cite{Nisan91}, and so, learning ROABPs is not a viable way to learn RODs, even improperly. Thus, Problem \ref{prob:learningROD} is interesting in its own right in algebraic complexity theory. Besides, there is another strong motivation to study this problem which comes from its connection with the \emph{Principal Minor Assignment Problem} (PMAP).  \\

\noindent{\textbf{Principal Minor Assignment Problem.}} PMAP is a well-known open problem in linear algebra \cite{HS02, GriffinTsatsomeros2006}. Given a list $\calP = (p_S)_{\varnothing \neq S \in 2^{[n]}}$ of $2^n - 1$ field elements, PMAP asks to find (if possible) a matrix $A \in \F^{n \times n}$ such that the principal minor of $A$ defined by the rows and columns indexed by $S$ is $p_S$ for all nonempty $S \in 2^{[n]}$. If $\calP$ is promised to be the list of principal minors of some unknown matrix $A$, then it is typically assumed that $\calP$ is given as an oracle that returns $p_S$ when queried with $S$; the goal is to minimize the number of queries to the oracle $\calP$. Efficient algorithms for PMAP that ask only $\pol(n)$ queries to the oracle are known if the matrix $A$ to be reconstructed is special, such as a symmetric or magnitude-symmetric matrix \cite{RisingKuleszaTaskar2015, Brunel2018, BrunelUrschel2024} or a certain kind of dense matrix \cite{GriffinTsatsomeros2006}. But the problem of learning an arbitrary $A$ in $\pol(n)$ time using oracle access to the list $\calP$ remains open. Although PMAP is interesting in its own right in linear algebra, another motivation to study it comes from learning \emph{determinantal point processes} (DPPs).

A DPP associated with an $n\times n$ kernel matrix $K$ is a probability distribution over subsets of a set of $n$ items, where the probability of getting any subset $T\subseteq [n]$ is given by the principal minor of $K$ indexed by $T$ (see \cite{KuleszaTaskar2012}). 
DPPs are useful in settings where one needs to sample a diverse set of objects. The learning question for DPP asks to learn the associated kernel matrix, given samples from the distribution. Efficient PMAP algorithms for symmetric matrices \cite{Oeding2011-jo,RisingKuleszaTaskar2015} played a crucial role in making significant progress on learning symmetric DPPs~\cite{Urschel2017} as the principal minors of the kernel can be well estimated from the observed samples using a method of moments. A similar progress via PMAP is also made for learning signed DPPs in which the underlying kernel is magnitude-symmetric \cite{Brunel2018, BrunelUrschel2024}.   \\

Now, observe that the coefficients of a polynomial $f = \det(A + Y)$, where $Y = \diag(y_1, \ldots, y_n)$ and $A \in \F^{n \times n}$, are the principal minors of $A$. This naturally motivates us to study the following \emph{black-box version} of the Principal Minor Assignment Problem.   

\begin{problem} \label{prob:black-boxPMAP}
Given black-box access to an $n$-variate polynomial $f = \det(A + Y)$, where $A \in \F^{n \times n}$ is unknown and $Y = \diag(y_1, \ldots, y_n)$, find a $B \in \F^{n\times n}$ such that $f = \det(B + Y)$.
\end{problem}
As the black-box allows us to recover a principal minor of $A$ in $\pol(n)$ time using interpolation (assuming $|\F| > n$), we will refer to Problem \ref{prob:black-boxPMAP} as \underline{\emph{Black-box PMAP}}. Note that $\det(A + Y)$ is an ROD, and so, Problem \ref{prob:black-boxPMAP} is clearly related to Problem \ref{prob:learningROD}. Also, both are proper learning problems. 

Observe that the output $B$ in Problem \ref{prob:black-boxPMAP} has the same corresponding principal minors as $A$. Two matrices are said to be \emph{principal minor equivalent} (PME) if they have equal corresponding principal minors of all orders. The decision problem of testing if two given matrices are PME is naturally related to PMAP. A characterization of PME was known for some special cases like symmetric matrices~\cite{ES80}, matrices with no cuts~\cite{Hartfiell84,Loewy86}, and few other classes related to skew-symmetry~\cite{BOUSSAIRI201547,BC16,BoussairiChaichaaCherguiLakhlifi2021}. 
Recently, a characterization of PME was proven for the general case~\cite{Chatterjee25},
where a polynomial-time deterministic algorithm was also given to test if two given matrices are PME. 
Informally speaking, their characterization proves that two matrices are PME if and only if they are related by a sequence of simple operations -- called \emph{cut-transpose} (see Section \ref{sec: cut transpose}). 
The algorithm starts from one of the matrices, keeps applying appropriate cut-transpose operations and reaches the other matrix in at most $2n$ steps or says that they are not PME. 
The algorithm seems to be inherently sequential, raising an interesting question: 
\begin{problem} \label{prob:PMEinNC}
Is there an efficient parallel (NC) algorithm for testing PME? 
\end{problem}
The class NC contains problems for which we have parallel algorithms running in poly-logarithmic time using only a polynomial number of processors. Interestingly, if we allow randomization then there is a simple parallel algorithm to test PME via a reduction to polynomial identity testing (PIT). As indicated before, two matrices $A$ and $B$ are PME if and only if the polynomial identity $\det(A+Y) = \det(B+Y)$ holds.  
A simple randomized algorithm to check a polynomial identity is to plug in random numbers for variables and check equality~\cite{Sch80,DL78,Zip79}, which succeeds with high probability.
Note that the determinant can be computed efficiently in parallel (NC) (see, e.g., \cite{Csanky76, Berkowitz84}), and thus, the above algorithm is in RNC.
While derandomization of PIT is a challenging open question, we can ask whether we can derandomize the specific instance arising from PME and get an NC algorithm.

In this work, we give randomized polynomial-time algorithms to solve Problem \ref{prob:learningROD} (Learning RODs) and Problem \ref{prob:black-boxPMAP} (Black-box PMAP) (see Theorem \ref{thm:main}). In fact, we show a randomized polynomial-time equivalence between the two problems (in Section \ref{sec:equiv bet probA and probB}), and then solve the latter in randomized polynomial time (in Sections \ref{sec:reduction to matrices with prop R}-\ref{sec: PMAP for matrices with prop R}). The steps involving randomization (in the equivalence and the algorithm) can be derandomized in quasi-polynomial time. Thus, the class defined by RODs is one of the rare classes of polynomials for which efficient proper learning is possible. Also, to our knowledge, no efficient algorithm for any natural black-box version of PMAP was known prior to this work. Further, we give an NC algorithm for testing PME, thereby solving Problem \ref{prob:PMEinNC} (see Theorem \ref{thm:main-NC-algorithm} and Section \ref{sec:NC-algo-for-PME}). These algorithms are obtained by studying the principal minors of order at most $4$ of dense matrices satisfying a key property that we call the ``\emph{rank-one extension}'' property (see Theorem \ref{thm:main-characterization} and \ref{thm:main-reconstruction-PM-with-Q-property}). 

\subsection{Our results} \label{sec:results}
We now state our results formally. The results hold under a mild condition on the size of the underlying field $\F$. We also assume that a field operation takes unit time, and square root computation over $\F$ can be done in polynomial time\footnote{Over $\mathbb{Q}$, square roots can be computed in deterministic polynomial time. Over finite fields, there are multiple randomized polynomial time square root finding algorithms. Moreover, if a quadratic non-residue over $\mathbb{F}_q$ is known (which is a well justified assumption under GRH) then square roots can be computed in deterministic polynomial time over $\mathbb{F}_q$}.  

\begin{theorem}[Main theorem] \label{thm:main}
Let $n \in \N$ and $|\F| > n^6$. 
\begin{enumerate}
\item \emph{(Learning RODs)} There is a $\poly(n)$ time randomized algorithm that takes input black-box access to an $n$-variate polynomial $f = \det(A_0 + A_1y_1 + \ldots + A_ny_n)$, where $A_0 \in \F^{r \times r}$ is unknown and $A_1, \ldots, A_n \in \F^{r \times r}$ are unknown rank-one matrices and $r\in \N$ is also unknown, and outputs (with high probability) a matrix $B_0 \in \F^{n\times n}$ and rank-one matrices $B_1, \ldots, B_n \in \F^{n \times n}$ such that $f = \det(B_0 + B_1y_1 + \ldots + B_ny_n)$. The algorithm can be derandomized in quasi-polynomial time.

\item \emph{(Black-box PMAP)} There is a $\poly(n)$ time randomized algorithm that takes input black-box access to an $n$-variate polynomial $f = \det(A + Y)$, where $A \in \F^{n \times n}$ is unknown and $Y = \diag(y_1, \ldots, y_n)$, and outputs (with high probability) a matrix $B \in \F^{n\times n}$ such that $f = \det(B + Y)$. The algorithm can be derandomized in quasi-polynomial time.
\end{enumerate}
\end{theorem}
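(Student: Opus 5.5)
The plan has two parts: a reduction between the two problems, and an algorithm for Black-box PMAP.

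\emph{Reduction from Learning RODs to Black-box PMAP.} Write $A_i = u_i v_i^{\top}$ for $i \in [n]$. We may shift $A_0$ to a random invertible point: for random $a \in \F^n$, set $C = A_0 + \sum_i a_i A_i$, which is invertible with high probability if $f \neq 0$. Then
\[
f(y+a) = \det(C) \det\Bigl(I + C^{-1} U Y V^{\top}\Bigr) = \det(C)\det\bigl(I_n + V^{\top} C^{-1} U Y\bigr).
\]
This is $\det(C)$ times $\det(M + Y')$ up to the diagonal scaling, with $M = V^{\top}C^{-1}U \in \F^{n\times n}$. More precisely, the coefficient of $\prod_{i\in S} y_i$ equals $\det(C)\,\det(M_S)$. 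Hence $g(y) := f(y+a)/f(a)$ equals $\det(I + MY)$. The top coefficient and the reversal $y_i \mapsto 1/y_i$ turn this into black-box access to a $\det(M' + Y)$-type polynomial; alternatively one notes that $\det(I+MY)$ has principal minors of $M$ as coefficients, so a black box for $\det(M + Y)$ is obtained via $\prod_i y_i \cdot g(1/y)$. Polynomial evaluations therefore let us simulate Black-box PMAP on $M$. Given $B$ with $\det(B+Y) = \det(M+Y)$, we output the ROD $f(a)\det(I + BY')$ with $Y' = Y - \diag(a)$. This is an $n\times n$ symbolic determinant whose $y_i$-coefficient matrix $B e_i e_i^{\top}$ has rank one, and we fold $f(a)$ into one row of the constant part. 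The converse reduction is trivial. The randomness, namely choosing $a$ with $f(a)\neq 0$, is a PIT for ROD, which can be derandomized in quasi-polynomial time via known hitting sets for such rank-one symbolic determinants (ROABP-like or isolation-based hitting sets). The bound $|\F| > n^6$ gives room for Schwartz--Zippel.

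\emph{Solving Black-box PMAP.} By interpolation we can query any principal minor $p_S$ with $|S| \le 4$ in $\poly(n)$ time. We first reduce to the case where $A$ satisfies the rank-one extension property $\prop$. Random diagonal shifts $A + D$ for $D$ with random entries preserve the problem, since $\det(A + D + Y)$ is computable from the black box, and they make all small principal minors generic. Any leftover degeneracy comes from block-triangular structure: the matrix decomposes along the strongly connected components of its support graph, and $\det(A+Y)$ factors accordingly. Those blocks can be found from $2\times 2$ minors $p_{ij} - p_ip_j = -a_{ij}a_{ji}$, each block is handled separately, and the solutions are reassembled.

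For a matrix with property $\prop$, we use the cut structure. If $A$ has no cut, the principal minors of order at most $3$ determine $A$ up to diagonal similarity and transposition (Hartfiel--Loewy). We reconstruct the products $a_{ij}a_{ji}$ from $2\times2$ minors and the cycle products $a_{ij}a_{jk}a_{ki}$ from $3\times3$ minors, which requires square roots to fix signs. Then we fix a spanning tree's off-diagonal entries to be $1$ and solve for the rest consistently. If cuts exist, we find a cut $(S,\bar S)$ in black-box fashion using $4\times4$ minors, which detect the rank-one condition on the $S\times\bar S$ off-diagonal block. We recurse on the two sides and glue them using the characterization of PME via cut-transpose (Chatterjee).

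\emph{Main obstacle.} The hard step is cut-finding and the consistent gluing using only minors of order at most $4$: it must be shown that these minors certify the rank-one off-diagonal block, and that independently chosen sign and square-root choices on the pieces can be reconciled. This is where property $\prop$ is essential. I would first establish the characterization of cuts for $\prop$-matrices through $4\times4$ minor identities, and then argue inductively that the glued $B$ is PME to $A$. The final check of that fact can itself be done by randomized PIT, with derandomization again quasi-polynomial.
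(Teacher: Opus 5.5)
Your reduction from Learning RODs to Black-box PMAP is correct, and it takes a different route from the paper's. The paper homogenizes $f$ and isolates a monomial to write $f'=\gamma\det(Z+\bar Z A)$. You instead shift to a point $a$ with $f(a)\neq 0$ and use Sylvester's identity to get $f(y+a)/f(a)=\det(I_n+MY)$ with $M=V^{\top}C^{-1}U$. This needs only a hitting set for RODs rather than isolation: the isolation-based one applies, but ROABP hitting sets do not, since determinants need exponential-size ROABPs. When you absorb $f(a)$, scale the chosen row in $B_0$ and in every $B_i$, not only in the constant part.

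\textbf{First gap: the reduction to $\prop$.} Both conditions of $\prop$ involve only off-diagonal entries: the entries $A[i,j]$ with $i\neq j$, and the blocks $A[S,\comp{S}]$. Hence $A+D$ has $\prop$ if and only if $A$ does, for every diagonal $D$, so a random shift achieves nothing here. Nor is the leftover degeneracy only block-triangular. Take the irreducible cyclic matrix with $A[1,2]=A[2,3]=A[3,1]=1$ and all other entries $0$. Every $A[i,j]A[j,i]$ vanishes, so your $2\times2$ test splits $[3]$ into singletons, and the reassembled matrix gives $y_1y_2y_3$ instead of $\det(A+Y)=y_1y_2y_3+1$.

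The missing idea is to invert after shifting, i.e. to work with $(A+D)^{-1}$, equivalently $\adj{A+D}$.
\begin{itemize}
\item For irreducible $A$, $\adj{A+Y}$ has nonzero off-diagonal entries.
\item A singular block $\adj{A+Y}[\{i,j\},\{k,\ell\}]$ becomes, by Jacobi's identity, a vanishing $(n-2)\times(n-2)$ minor of $A+Y$ with variables on all but two diagonal positions. \cref{lem:Qpropertyhelpertheorm} turns this into a rank-one block $A[X,\comp{X}]$, and \cref{lem:AandAdjArank1} lifts it back to the adjugate.
\item A random $D$ preserves these properties, and black-box access to $\det((A+D)^{-1}+Y)$ comes from inverting variables.
\item The irreducible blocks of $(A+D)^{-1}$ are then dense, so the $2\times2$ test does recover them, and $C^{-1}-D$ gives the answer.
\end{itemize}

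\textbf{Second gap: the steps you call the main obstacle.} These are the technical core, and what you assert about them is false as stated.

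First, minors of order at most $3$ do not determine a no-cut matrix up to $\DE$; Loewy's theorem assumes PME of all orders. Consider the $4\times4$ family in the proof of \cref{lem:small-matrices-DE-to-cut-4-base-case}: $A$ and $B$ agree on $\{1,3,4\}$ and $\{2,3,4\}$, and are transposes up to diagonal similarity on $\{1,2,3\}$ and $\{1,2,4\}$. Then $\kPME{A}{B}{3}$ holds, yet $\det A-\det B$ is a product of three factors, all nonzero for generic dense cut-free $A$. Such an $A$ automatically has $\prop$. So after normalizing one row to ones, each remaining entry has two candidate roots, and \emph{solving consistently} needs a certificate that local agreement forces global PME. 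The paper supplies this certificate with \cref{thm:main-characterization}: order $\le 4$ suffices under $\prop$. Its proof uses \cref{cor:no-cut-from-big-to-small-matrix}, and excludes $|\calD_1\cup\calD_2|\in\{3,4\}$ via rank-one extension in \cref{lem:small-matrices-DE-to-cut-4} and \cref{lem:small-matrices-DE-to-cut-3}. The paper also orders the reconstruction along no-cut sequences, so each new entry has at most four candidates to test.

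Second, principal minors cannot certify that $S$ is a cut of $A$ itself. Cut-transposes preserve all minors but change which sets are cuts (\cref{thm:cutmap}). You also give no way to search the exponentially many candidate sets $S$. The paper handles this in three steps:
\begin{itemize}
\item it tests property $\calP$ on quadruples and proves $\calP$ is transitive under $\prop$ (\cref{lem:PropPtransitive});
\item it proves that plausible sets are exactly the cuts of matrices PME to $A$ (\cref{lem:CutChecking});
\item it finds a plausible set by solving a 2-SAT instance.
\end{itemize}

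Finally, your final check $\det(A+Y)=\det(B+Y)$ tests a difference of two RODs, which the ROD hitting set does not directly cover, so it does not give the claimed derandomization. The paper instead verifies through $(B+D)^{-1}$ and minors of order at most $4$.
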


\noindent{\textbf{An algorithm for PMAP.} The algorithm for Black-box PMAP immediately implies a $2^n \cdot \pol(n)$ time randomized algorithm for PMAP as the input list $\calP = (p_S)_{\varnothing \neq S \in 2^{[n]}}$ can be used to implement a query to the black-box in $2^n \cdot \pol(n)$ time. Note that the complexity is almost linear in the size of the input $\calP$ which is the best one can hope for if it is not promised that $\cal{P}$ is consistent with the principal minors of some matrix. In the absence of such a promise, one needs to check if $\cal{P}$ is consistent with the principal minors of the reconstructed matrix. As the size of $\cal{P}$ is $2^n - 1$ and principal minors of a matrix can be computed in $O(2^n)$ time \cite{GriffinTsatsomeros2006b}, any algorithm for PMAP takes $\Omega(2^n)$ time. In this sense, we give a \emph{near optimal} time algorithm for PMAP. To our knowledge, no $\tilde{O}(2^n)$ time algorithm was known for PMAP before this work. A naive reduction to polynomial solvability incurs a cost of $n^{O(n^4)}$ over $\mathbb{C}$ \cite{I89}, as there are $n^2$ unknown entries of a matrix. 

If it is promised that $\cal{P}$ is the list of principal minors of some matrix, then it is conceivable that far fewer queries to the oracle $\cal{P}$ suffice. For instance, if $\cal{P}$ is the list of principal minors of a symmetric matrix, then $O(n^2)$ queries are necessary and sufficient \cite{RisingKuleszaTaskar2015}. The algorithm is based on \cite{Oeding2011-jo} that characterizes the variety of principal minors of symmetric matrices. Unfortunately, the ideal of polynomial relations satisfied by the principal minors of a generic $n \times n$ matrix is not well-understood (except for $n = 4$ \cite{LinSturmfels2008}). Nevertheless, we show that PMAP for matrices satisfying a certain genericity condition can be solved using $\pol(n)$ queries (see Theorem \ref{thm:main-reconstruction-PM-with-Q-property}). \\

A proof sketch of Theorem \ref{thm:main} is given in Section \ref{sec:ideas thm main}. A couple of remarks are in order:
\begin{remarks}
\item  The condition on the size of $\F$ in Theorem \ref{thm:main} originates from the need to apply the Polynomial Identity Lemma (see Lemma \ref{lemma:random-diag-shift-satisfies-R}). If the black-box allows us to make queries at points from a field extension, then we can drop this assumption on the field size. 

\item The assumption on the efficiency of square root finding arises from the need to solve quadratic equations (see Algorithm \ref{algo:reconstruction-of-size-four-matrices} in Section \ref{sec: PMAP for matrices with prop R}). If $\char(\F) = 2$, then solving quadratic equations reduces to solving a system of linear equations over $\F_2$.

\end{remarks}

A proof sketch of the following theorem on PME is given in Section \ref{sec:ideas thm NC}.
\begin{theorem}[PME is in NC]
\label{thm:main-NC-algorithm}
Testing principal minor equivalence of two given matrices is in NC.
\end{theorem}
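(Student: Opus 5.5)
We plan to reduce PME testing to a collection of local, independently checkable conditions that can all be evaluated in parallel. The starting point is the characterization of \cite{Chatterjee25}: two matrices $A,B\in\F^{n\times n}$ are PME if and only if $B$ is obtained from $A$ by a diagonal similarity followed by a sequence of cut-transpose operations. The sequential algorithm walks through these operations one at a time. We instead aim for a canonical decomposition of each matrix that can be computed in NC, so that PME reduces to comparing the two decompositions.

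First we dispose of easy necessary conditions in parallel. We check $a_{ii}=b_{ii}$ for all $i$, $a_{ij}a_{ji}=b_{ij}b_{ji}$ for all $i\neq j$, and equality of all principal minors of order $3$ and $4$; there are $O(n^4)$ of these and each is a constant-size computation. Next we compute the "cut structure" of $A$. A cut $(S,\comp S)$ is one where the submatrices $A[S,\comp S]$ and $A[\comp S,S]$ have rank at most one. We expect that, as in the dense/rank-one-extension analysis used for Theorem~\ref{thm:main-characterization}, the relevant cuts form a laminar (tree-like) family that can be determined from rank conditions on $2\times 2$ minors. Rank tests on small submatrices are in NC, and laminar families can be assembled into a tree with parallel graph algorithms such as connectivity and transitive closure. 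Within each node of this tree the blocks should have no cuts (or have the rank-one extension property). For those blocks, principal minors of order at most $4$ determine the matrix up to diagonal similarity and transposition, in the spirit of the Hartfiel--Loewy theorem~\cite{Hartfiell84,Loewy86}.

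The test is then as follows. For each block of the decomposition, in parallel, we check whether the corresponding block of $B$ equals the block of $A$ or its transpose, up to diagonal similarity. Diagonal similarity can be tested by fixing a spanning tree of the support graph, solving for the scaling factors multiplicatively, and verifying every entry. Both steps are parallelizable: spanning forests are in NC, and the products along tree paths are computed by parallel prefix. We must also check consistency of the rank-one "glue" across each cut. Since each glue block is rank one, it is determined by a vector pair, and we can check that these vector pairs match after the chosen scalings and transposes.

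The main obstacle is the second step. We need to show that the cut decomposition is canonical and can be computed without the sequential choices made in \cite{Chatterjee25}. In particular, the transpose/no-transpose decision in each block must be made locally, and the local decisions must be consistent globally. We would first attempt to prove that these decisions are forced by the order-$3$ and order-$4$ principal minors that straddle adjacent cuts. That would reduce global consistency to checking $O(n^4)$ local constraints, making the whole procedure an NC circuit of polylogarithmic depth.
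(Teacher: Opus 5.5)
There is a genuine gap, and it is the step you yourself call the main obstacle. The canonical, NC-computable cut decomposition is the whole content of the theorem, and the structural facts you intend to rely on are false for general matrices.

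\begin{enumerate}
\item You never handle reducible matrices or zero off-diagonal entries. Cuts and the Hartfiel--Loewy theory only apply after splitting into irreducible components via strongly connected components (\cref{lem:redToIr}).
\item Cuts of an irreducible matrix need not be laminar. If $A=\Lambda+uv^T$ with $\Lambda$ diagonal and $u,v$ nowhere zero, then every $X$ with $2\le |X|\le n-2$ is a cut. At best you would get a split-decomposition-type tree with degenerate nodes, and you would still have to prove it is computable in NC and suitable for comparison.
\item Cuts cannot be read off from $2\times 2$ rank conditions unless the rank-one extension property holds. \cref{appendix:necessity-of-Q} gives a dense $A$ with $\rank A[\{1,2\},\{3,4\}]=1$ but no cut separating $\{1,2\}$ from $\{3,4\}$.
\item The cut structure is not a PME invariant. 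A cut-transpose with respect to $S$ turns a crossing cut $T$ into $T\Delta S$ (\cref{thm:cutmap}). So ``the corresponding block of $B$'' only makes sense after $B$ has been realigned by cut-transposes, which is exactly the sequential process you are trying to avoid.
\item Your claim that order-$\le 4$ minors determine cut-free blocks up to diagonal equivalence is proved only under property $\prop$ (\cref{thm:PME-for-no-cut}). Hartfiel--Loewy and Loewy assume equality of \emph{all} principal minors. \cref{appendix:necessity-of-Q} gives dense $A,B$ with $\kPME{A}{B}{n-1}$ but $A\nPME B$. So there is no support in general for your hope that straddling order-$3$/$4$ minors force the global transpose decisions.
\end{enumerate}

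The missing idea is to \emph{transform} the instance rather than decompose it. After the SCC split, the paper replaces $(A,B)$ by $(\adj{A+D},\adj{B+D})$. This preserves PME for invertible shifts (Hartfiel's Lemma~4, and Lemma~2.9 of \cite{Chatterjee25}).

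\begin{enumerate}
\item The map is applied twice. The first application makes all off-diagonal entries nonzero (\cref{lem:ShiftAdjugatePreservesPME}).
\item The second enforces the rank-one extension property (\cref{lem:findDwithQ}). Since $\adj{A+Y}$ satisfies $\prop$ for irreducible $A$ (\cref{clm:AplusYsatisfy}), a suitable diagonal shift works.
\item Such a shift is found deterministically in NC. Each $2\times 2$ minor condition is reduced, via Jacobi's identity, to a nonvanishing-entry condition of the first kind. The $O(n^4)$ requirements are then combined by Lagrange interpolation.
\end{enumerate}

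Then \cref{thm:main-characterization} says PME holds iff the $O(n^4)$ principal minors of order at most $4$ agree, and these are computed in parallel. Property $\prop$ is exactly the hypothesis under which your ``local forcing'' heuristic becomes a theorem. The adjugate shift is what makes that hypothesis available for arbitrary inputs, and no cut decomposition is ever computed.
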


The algorithms in Theorem \ref{thm:main} and \ref{thm:main-NC-algorithm} are obtained by studying a crucial property of matrices which we define next. A proof sketch of Theorem \ref{thm:main-characterization} below is provided in Section \ref{sec:ideas characterization of matrices with prop Q}.

\begin{definition}[Rank-one extension and property $\prop$] \label{defn:rank one extension}
A matrix is \underline{\emph{dense}} if all its off-diagonal entries are nonzero. We say that a matrix $A$ satisfies the \underline{\emph{rank-one extension}} property if the following condition holds: if $\rank(A[\{i,j\},\{k,\ell\}])=1$ for any four distinct indices $i,j,k,\ell \in [n]$, then 
there exists a subset $S \subseteq [n]$ with $i,j \in S$ and $k,\ell \in \comp{S}$ such that $\rank(A[S,\comp{S}])=1$. We say that a matrix satisfies \underline{\emph{property $\prop$}} if it is dense and satisfies the rank-one extension property.
\end{definition}

Understanding and exploiting the rank-one extension property is at the heart of the technical contributions of this work. The following theorems demonstrate the usefulness of this property. 

\begin{theorem} [Sufficiency of PME up to order $4$]
\label{thm:main-characterization}
Let $A$ be a matrix satisfying property $\prop$. 
Let $B$ be another matrix of the same size such that $A$ and $B$ have equal corresponding principal minors of order at most 4. 
Then $A$ and $B$ have equal corresponding principal minors of all orders.
\end{theorem}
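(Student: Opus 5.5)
Our approach is induction on the size $n$, with the characterization of principal minor equivalence via cut-transposes \cite{Chatterjee25} as the target structure. The goal is to show that $B$ is obtained from $A$ by a diagonal similarity, a transpose, and cut-transposes along cuts of $A$. Each such operation preserves all principal minors, so the theorem follows. For $n\le 4$ there is nothing to prove. Assume $n\ge 5$. First we record what the hypothesis gives directly:
\begin{itemize}
\item from order-1 minors, $a_{ii}=b_{ii}$;
\item from order-2 minors, $a_{ij}a_{ji}=b_{ij}b_{ji}$, so $B$ is dense as well;
\item from order-3 minors, the cycle sums $a_{ij}a_{jk}a_{ki}+a_{ik}a_{kj}a_{ji}$ agree with the corresponding sums for $B$.
\end{itemize}
After a diagonal similarity $D B D^{-1}$, which preserves all principal minors, we may normalise $b_{1j}=a_{1j}$ for all $j$. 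Then the order-2 identities give $b_{j1}=a_{j1}$. The remaining unknowns are the pairs $(b_{jk},b_{kj})$ with fixed product $a_{jk}a_{kj}$. The order-3 minor on $\{1,j,k\}$ then forces $b_{jk}\in\{a_{jk},\; a_{j1}a_{1k}a_{kj}/(a_{k1}a_{1j})\}$, the second value being the ``transposed'' one. So each pair $\{j,k\}$ is labelled either \emph{straight} or \emph{flipped}; when the two candidate values coincide, the pair carries no information.

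Next we use order-4 minors to constrain the labelling. For distinct $j,k,l\ne 1$, compare the $4\times 4$ minors on $\{1,j,k,l\}$ for $A$ and $B$. The expected outcome of this computation is a dichotomy. Either the labels on the triangle $\{j,k\},\{k,l\},\{j,l\}$ are all equal, or the discrepancy in the 4-cycle terms vanishes only when a $2\times 2$ submatrix such as $A[\{1,j\},\{k,l\}]$ (or a permuted version) has rank one. In the second case the rank-one extension property supplies a set $S$ with $\rank A[S,\comp S]=1$ separating the relevant indices; that is, $A$ has a cut.

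We then assemble the cases globally. If $A$ has no cut in the sense above, all labels are equal and $B$ equals $A$ or (after the diagonal similarity) $A^T$, hence $B$ is PME to $A$. If there is a cut $S$, we split $[n]=S\cup\comp S$. The principal submatrices $A[S]$ and $A[\comp S]$ (with one extra index from the other side to retain the rank-one interaction) inherit property $\prop$: density is clear, and rank-one extension restricts because a cut of $A$ intersects to a cut of the submatrix. So the induction hypothesis gives that the corresponding submatrices of $B$ are PME to those of $A$. The rank-one block $A[S,\comp S]=uv^T$ is recovered up to scaling from entries we already matched. Since the determinant of a matrix with a rank-one off-diagonal block is determined by the principal minors of the two sides together with that rank-one data (a Schur-complement computation), all principal minors of $A$ and $B$ agree. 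Equivalently, $B$ is a cut-transpose of a PME-copy of $A$.

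The main obstacle is the order-4 case analysis. We must show that every inconsistency in the straight/flipped labelling forced by $4\times 4$ minors produces exactly a rank-one $2\times2$ submatrix $A[\{i,j\},\{k,l\}]$. We must also handle degenerate pairs where both candidate values for $b_{jk}$ coincide. A further subtlety is that the cut supplied by the rank-one extension property must be compatible with the labelling on all indices, not just the four indices involved. Our first attempt would be to expand the $4\times4$ minor into cycle terms: the 4-cycles $1jkl$ and $1lkj$ and the products of 2-cycles. We would write the difference of minors as a product of two factors, one of the form (label mismatch) and one of the form $a_{1k}a_{jl}-a_{1l}a_{jk}$ up to symmetric variants. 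We would then argue that the set of flipped pairs is closed in a way that makes $\{j: \{1,j\}\text{-side}\}$ define $S$.
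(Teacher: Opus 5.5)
Your local analysis is sound. After normalising the first row, the order-3 minors leave two candidates for each $b_{jk}$. Inside a single set $\{1,j,k,l\}$, a mixed labelling forces some $\{1,x\}$ to be a two-sided cut of $A[\{1,j,k,l\}]$, which is essentially \cref{lem:4PME}.

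\textbf{First gap: the no-cut case.} From a mixed triangle you conclude that ``$A$ has a cut'', and this inference fails. Property $\prop$ applied to $\rank A[\{1,x\},\{y,z\}]=1$ gives some $S\supseteq\{1,x\}$ with $\rank A[S,\comp S]=1$. Applied to $\rank A[\{y,z\},\{1,x\}]=1$, it gives some possibly different $T\supseteq\{y,z\}$ with $\{1,x\}\subseteq\comp T$ and $\rank A[T,\comp T]=1$. A cut needs a single partition of rank at most one in both directions, and nothing forces $T=\comp S$. This is a real obstruction. Take a dense $5\times 5$ matrix with $\rank A[\{1,2\},\{3,4,5\}]=\rank A[\{3,4\},\{1,2,5\}]=1$ and otherwise generic entries. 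It satisfies property $\prop$ and has no cut, yet $A[\{1,2,3,4\}]$ has the cut $\{1,2\}$. So ``$A$ has no cut'' does not rule out an inconsistent triangle. The global closure step you defer (``the set of flipped pairs is closed\ldots'') is the missing theorem, and the degenerate pairs where both candidates for $b_{jk}$ coincide also block naive propagation of labels.

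The paper supplies the global step by induction on $n$:
\begin{itemize}
\item \cref{lem:lift-of-cut}, which is where property $\prop$ really enters, combined with \cref{lem:no-cut-property}, gives three $(n-1)\times(n-1)$ principal submatrices without a cut.
\item The induction hypothesis and \cref{lem:no-cut-to-PME} make these diagonally equivalent to the corresponding submatrices of $B$.
\item \cref{lem:partial-DS-to-complete-DS} finishes when $|\calD_1(A,B)\cup\calD_2(A,B)|\ge 5$.
\item In the cases $|\calD_1(A,B)\cup\calD_2(A,B)|\in\{3,4\}$ (\cref{lem:small-matrices-DE-to-cut-4,lem:small-matrices-DE-to-cut-3}), those diagonal similarities give relations $A[\comp X,X]=E[\comp X]\,A[X,\comp X]^T E[X]^{-1}$. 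This is what upgrades the one-sided set from property $\prop$ to a genuine cut.
\end{itemize}

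\textbf{Second gap: the cut case.} Your Schur-complement step expresses $\det A[X]$ through the two sides and the rank-one blocks of $A$. To conclude $\det A[X]=\det B[X]$ you need the same block structure in $B$, i.e.\ $S$ must also be a cut of $B$; this is the hypothesis of \cref{lem:cut-decomposition}. Nothing you have matched forces $\rank B[S,\comp S]\le 1$, and it can genuinely fail. If $X$ is another cut of $A$ crossing $S$, then $\tw(A,X)\PME A$. But by \cref{thm:cutmap}, $S$ is a cut of $\tw(A,X)$ only if $S\Delta X$ is a cut of $A$, which need not hold.

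The paper closes this by taking $S$ to be a minimal cut. For $|S|>2$, \cref{lem:common-cut} shows $S$ is a cut of $B$; its proof itself uses the no-cut case, \cref{thm:PME-for-no-cut}. For $|S|=2$, \cref{lem:Cut-and-PME-upto-4} replaces $B$ by a cut-transpose of $B$ that has $S$ as a cut, after which the decomposition and induction go through.
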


What makes property $\prop$ so effective is that Black-box PMAP for arbitrary matrices randomly reduces to that for matrices satisfying property $\prop$ (see Section \ref{sec:reduction to matrices with prop R}). This fact, coupled with the next theorem (whose proof sketch appears in Section \ref{sec:ideas thm main}), gives us a proof of Theorem \ref{thm:main}. 

\begin{theorem} [PMAP for matrices with property $\prop$]
\label{thm:main-reconstruction-PM-with-Q-property}
There exists a polynomial-time algorithm that, given oracle access to all the principal minors of order at most $4$ of a matrix $A \in \F^{n \times n}$ satisfying property $\prop$, reconstructs a matrix $B$ such that the corresponding principal minors of $A$ and $B$ are equal.
\end{theorem}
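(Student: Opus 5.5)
The plan is to build $B$ incrementally on growing index sets $[k]$ for $k=4,\dots,n$, matching all principal minors of order at most $4$ of $A$ on $[k]$ at each stage. Theorem~\ref{thm:main-characterization} then finishes the job: since $A$ satisfies property $\prop$, any $B$ agreeing with $A$ on all principal minors of order $\le 4$ is PME to $A$. So the whole task is to find, in polynomial time, some $B$ matching the $1\times1$ through $4\times 4$ principal minors. Note that only $O(n^4)$ oracle queries are ever needed.

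\textbf{Normalizing and setting up the unknowns.} The order-$1$ minors give $b_{ii}=a_{ii}$. The order-$2$ minors give the products $a_{ij}a_{ji}$. Because $A$ is dense these products are nonzero, so PME is invariant under diagonal similarity $D A D^{-1}$. I will therefore normalize $b_{1j}=1$ for $j\ne 1$, which forces $b_{j1}=a_{1j}a_{j1}$. It remains to determine $b_{ij}$ for $i,j\ge 2$, with the products $b_{ij}b_{ji}$ already fixed.

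\textbf{Using the order-$3$ minors.} The $3\times 3$ minor on $\{1,i,j\}$ fixes the cycle sum $b_{1i}b_{ij}b_{j1}+b_{1j}b_{ji}b_{i1}$. Combined with the known product $b_{ij}b_{ji}$, this means $b_{ij}$ is a root of a quadratic, giving at most two candidates per pair. This is the step where square roots are needed.

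\textbf{Choosing among the candidates.} Candidate choices across pairs must then be made consistent using the $3\times3$ minors on $\{i,j,k\}$ and the $4\times 4$ minors. The structure I expect is as follows.
\begin{itemize}
\item The two roots coincide, or the choice is forced, except when certain $2\times 2$ submatrices $A[\{i,j\},\{k,\ell\}]$ have rank one. Equivalently, the two roots correspond to $B$ versus a partial transpose.
\item By the rank-one extension property, each such rank-one configuration extends to a cut $(S,\comp S)$ with $\rank A[S,\comp S]=1$.
\item Across such a cut, a cut-transpose (Section~\ref{sec: cut transpose}) preserves all principal minors. The ambiguity is therefore genuine but harmless: the two choices are related by cut-transposes.
\end{itemize}

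\textbf{Algorithm.} Concretely, I will fix the entries in rows and columns $1,\dots,4$ by brute force over the constant number of candidate combinations, as in Algorithm~\ref{algo:reconstruction-of-size-four-matrices}. Then, for each new index $k$, I choose the entries $b_{ik},b_{ki}$ for $i<k$. First, for each $i$ I pick a root for the pair $(i,k)$, anchored by one reference index. Then, using the $4\times4$ minors on $\{1,i,j,k\}$, I propagate the choice to all other $i$, keeping the invariant that $B[[k]]$ matches all order-$\le4$ minors of $A[[k]]$.

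\textbf{Main obstacle.} The hard step is showing that this local propagation never gets stuck. Specifically, whenever the $4\times 4$ minors fail to determine the relative choice for pairs $(i,k)$ and $(j,k)$, some rank-one $2\times2$ block must appear. The rank-one extension property then supplies a cut in the current matrix, and either choice is consistent with a cut-transposed completion. I would try to prove this by directly expanding the $4\times 4$ determinant: the difference between the two candidate assignments should factor as a product involving $\det A[\{i,j\},\{k,\ell\}]$-type terms, using density to divide out nonzero entries. Once that is established, the cut argument shows that a greedy choice extends, and the total running time is polynomial.
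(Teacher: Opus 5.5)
\paragraph{The gap: greedy prefix extension can get stuck.} Reducing the task to matching all principal minors of order at most $4$ via \cref{thm:main-characterization} is fine. Your expected local factorization is essentially \cref{lem:small-matrices-DE-to-cut-4-base-case}. The gap is the claim that the greedy prefix propagation never gets stuck, i.e.\ that ``either choice is consistent with a cut-transposed completion.''

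A local ambiguity on $\{1,i,j,k\}$ only yields a cut of that $4\times 4$ submatrix. Property $\prop$ upgrades the two rank-one blocks of that cut only \emph{separately}, to one-sided blocks $\rank A[S,\comp S]=1$ and $\rank A[\comp{S'},S']=1$ with possibly $S\neq S'$. These are exactly the mixed alternatives in \cref{lem:lift-of-cut}. So this cut need not extend to the current prefix. Worse, even a genuine cut $X$ of $A[[k]]$ need not be a cut of $A[[k+1]]$.

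Suppose $A[[k+1]]$ has no cut. Then every $B'\PME A[[k+1]]$ satisfies $B'\DE A[[k+1]]$ by \cref{lem:no-cut-to-PME}, hence $B'[[k]]\DE A[[k]]$. Now suppose you had committed to (a diagonal similarity of) $\tw(A[[k]],X)$. It matches every principal minor of $A[[k]]$, but generically it is not $\DE$ to $A[[k]]$. Then no choice of row and column $k+1$ matches the order-$\le 4$ minors.

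This already happens for $n=5$. Take $A$ dense and generic except that $A[\{1,2,5\},\{3,4\}]$ and $A[\{3,4,5\},\{1,2\}]$ have rank one. Then $A$ satisfies property $\prop$ and has no cut, yet $\{1,2\}$ is a cut of $A[[4]]$. Fixing the cut-transposed member of $\fourPMEfamily{A[[4]]}$ in your brute-force step is therefore a dead end. Such non-extending prefix cuts can arise at many stages, and your proposal gives no polynomial bound on the backtracking they force.

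A smaller point: the two roots of your quadratic, $A[1,i]A[i,j]/A[1,j]$ and $A[j,i]A[i,1]/A[j,1]$, generically differ. They encode $A$ versus its diagonally similar transpose on $\{1,i,j\}$, so ambiguity is not confined to rank-one configurations.

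\paragraph{The missing ideas.} The paper needs two structural steps that your proposal lacks.

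First, cuts of $A$ itself are split off before any reconstruction. A black-box cut finder (\cref{algo:cut-detection-finding-algorithm}) uses only the families $\fourPMEfamily{A[T]}$, a 2-SAT formulation and the transitivity \cref{lem:PropPtransitive} to produce a plausible set. By \cref{lem:CutChecking}, a plausible set is a cut of some $C\PME A$. The two sides are then reconstructed recursively and glued by \cref{cl:reductionStep}.

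Second, for cut-free $A$ the insertion order is not arbitrary. By \cref{cor:no-cut-from-big-to-small-matrix} (\cref{lem:lift-of-cut} combined with Loewy's bound \cref{lem:no-cut-property}), the indices can be ordered so that every prefix $A[I_j]$ is cut-free, and this can be tested from principal minors with the same cut finder. Then \cref{lem:no-cut-to-PME} forces the already-built $B[I_{j-1}]$ to be $A[I_{j-1}]$ or its diagonally similar transpose. The new row and column are filled one entry at a time along a second cut-free ordering $L_\ell$ of $I_j$, with at most four candidates per entry. The correct candidate is certified by the order-$\le 4$ minors together with \cref{thm:PME-for-no-cut}.

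This is where the rank-one extension property is really used: to guarantee cut-free orderings and to make the cut finder work. It does not make arbitrary local choices globally extendable.
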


\begin{remarks}
\item The above theorem solves PMAP for matrices satisfying property $\prop$ in $\pol(n)$ time as we need oracle access to only the principal minors of order at most $4$. In comparison, we use black-box access to $\det(A + Y)$ (in Theorem \ref{thm:main}) to solve $\probB$ for an \emph{arbitrary} $A$. It turns out that the proof of Theorem \ref{thm:main} requires only oracle access to the principal minors of $A$ and that of $A+D$ for a random diagonal matrix $D$.

\item A random matrix satisfies property $\prop$ as it does not have a $2 \times 2$ submatrix of rank one with high probability. So, property $\prop$ defines a genericity condition. In \cite{GriffinTsatsomeros2006}, a $\poly(n)$ time PMAP algorithm was given for matrices satisfying another genericity condition, namely \emph{off-diagonal full (ODF)}. Property $\prop$ and ODF seem incomparable (see Section \ref{sec:comparison}). However, Black-box PMAP for aribitrary matrices reduces to PMAP for matrices with property $\prop$ (see Section \ref{sec:reduction to matrices with prop R}). It is unclear if the same is true for matrices satisfying ODF.
\item A \emph{random kernel} of a DPP can be defined as follows: Let $n > 1$, $\lambda \in (0, 1/2)$ and $\mu \in [0, \lambda/(n-1))$. Let $D \in \mathbb{R}^{n \times n}$ be a diagonal matrix with $D[i,i] \in [\lambda, 1-\lambda]$ for all $i \in [n]$, and $A\in \mathbb{R}^{n \times n}$ whose entries are randomly and independently chosen from $[-1,1]$. As there is a DPP with kernel $K = D + \mu A$ (see Section 2.2 in \cite{Brunel2018}), we can say that $K$ is a random kernel. Note that $K$ satisfies property $\prop$ as $A$ is random. In this sense, a random kernel can be learnt from its principal minor oracle in $\poly(n)$ time using Theorem \ref{thm:main-reconstruction-PM-with-Q-property} .
\end{remarks}



\subsection{Proof ideas} \label{sec:ideas}
\subsubsection{Proof sketch for Theorem \ref{thm:main} and Theorem \ref{thm:main-reconstruction-PM-with-Q-property}} \label{sec:ideas thm main}
As an initial first step towards proving Theorem \ref{thm:main}, we establish a randomised polynomial-time reduction from Learning RODs to Black-box PMAP (in fact, we show an equivalence between the two problems in Section \ref{sec:equiv bet probA and probB}). Then, we use the three broad steps outlined in the algorithm below to solve Black-box PMAP. The algorithm uses the notion of \emph{cut} of a matrix: A set $X \subset [n]$ of size between $2$ and $n-2$ is a cut of a matrix $A \in \F^{n \times n}$ if the ranks of the submatrices $A[X, \comp{X}]$ and $A[\comp{X}, X]$ are at most one. The \emph{principal minor oracle} of a matrix $A \in \F^{n \times n}$ takes input a nonempty set $S \subseteq [n]$ and outputs the principal minor of $A$ defined by the rows and columns indexed by $S$.

\begin{algorithm}[H]
\caption{Black-box PMAP}
\label{algo:Black-box PMAP}
\textbf{Input:} Black-box access to $f = \det(A + Y)$, where $Y = \diag(y_1, \ldots, y_n)$ and $A \in\F^{n\times n}$ is unknown.\\
\textbf{Output:} A matrix $B \in \F^{n\times n}$ such that $f = \det(B + Y)$. 
\begin{algorithmic}[1]
\State Reduce the problem to learning $M \in \F^{n \times n}$ given access to the principal minor oracle of $M$, where $M$ \underline{\emph{satisfies property $\prop$}} (see Algorithm \ref{algo:PMAP_redn_to_prop_R} in Section \ref{sec:reduction to matrices with prop R}).
\State Use the \underline{\emph{black-box cut finding}} algorithm (Algorithm \ref{algo:cut-detection-finding-algorithm} in Section \ref{sec: cut discovery}) to reduce the problem further to learning $N \in \F^{m \times m}$ from the principal minor oracle of $N$, where $N$ satisfies property $\prop$ and has no cut. ~(See Algorithm \ref{algo:reconstruction-with-Q} in Section \ref{sec: PMAP for matrices with prop R}.)
\State Solve PMAP for matrices with property $\prop$ and \underline{\emph{having no cuts}} (using Algorithm \ref{algo:reconstruction-with-no-cut-&-Q} in Section \ref{sec: PMAP for matrices with prop R}).
\end{algorithmic}
\end{algorithm}

\noindent The cut finding algorithm in Step 2 accesses only the principal minors of order at most $4$ and helps reduce PMAP for matrices with property $\prop$ to the no-cut case. So, the last two steps of the above algorithm, which run in deterministic polynomial time and constitute the main technical part of this work, give us a proof of Theorem \ref{thm:main-reconstruction-PM-with-Q-property}. We briefly discuss the steps below.  \\


\noindent{\textbf{Step $0$: Reducing $\probA$ to $\probB$}.} We have black-box access to $f = \det(A_0 + \sum_{i\in [n]} A_i y_i)$, where $A_0, \dots, A_n \in \F^{r \times r}$ and $A_1, \dots, A_n$ are rank one. We reduce the problem to learning $h = \det(A + Z)$, where $Z = \diag(z_1, \dots, z_n)$, in randomised polynomial-time as follows:

\begin{enumerate}
\item First, homogenise $f$ to get the degree $n$ multilinear polynomial $f' = t_1 t_2\dots t_nf(\frac{y_1}{t_1}, \frac{y_2}{t_2}, \dots \frac{y_n}{t_n})$.

\item Use the Isolation Lemma \cite[Section 3]{MVV87} to find a monomial (with non-zero coefficient) in $f'$. Label the variables in this monomial $z_1,z_2,\dots,z_n$, and the rest of the variables $\bar{z}_1,\bar{z}_2, \dots, \bar{z}_n$.

\item Let $\gamma$ be the coefficient of $z_1z_2 \cdots z_n$ in $f'$. We then show (in the proof of Theorem \ref{aa_thm_redn_rank1_to_pma}) that
\begin{align*}
    f' &= \gamma \det\left(\begin{bmatrix}I_n & I_n\end{bmatrix}\begin{bmatrix}Z & O_{n,n} \\ O_{n,n} & \bar{Z}\end{bmatrix} \begin{bmatrix}I_n \\ A\end{bmatrix}\right) = \gamma \det(Z + \bar{Z}A),
\end{align*}
for some $n \times n$ matrix $A$ over $\mathbb{F}$. Here, $Z = \diag(z_1,z_2, \dots, z_n)$ and $\bar{Z} = \diag(\bar{z}_1,\bar{z}_2, \dots, \bar{z}_n)$. Set $\bar{z}_1,\bar{z}_2, \dots, \bar{z}_n$ to $1$, scale by $\gamma^{-1}$, and learn the polynomial $h = \det(A + Z)$.

\item If $h = \det(A' + Z)$, then
$f' = \gamma \det\left(\begin{bmatrix}I_n & I_n\end{bmatrix}\begin{bmatrix}Z & O_{n,n} \\ O_{n,n} & \bar{Z}\end{bmatrix} \begin{bmatrix}I_n \\ A'\end{bmatrix}\right)\text{.}$
Set $t_1,\dots,t_n$ to $1$ and simplify to get $f = \det(B_0 + \sum_{i \in [n]} B_i y_i)$, where $B_0,\dots, B_n \in \F^{n \times n}$ and $B_1, \dots, B_n$ are rank $1$.
\end{enumerate}
The step involving the Isolation Lemma can be derandomised in quasi-polynomial time \cite{GT20}. \\




\noindent{\textbf{Step $1$: Reduction to PMAP for matrices with property $\prop$}.} We have black-box access to $f = \det(A+Y)$ where $A$ is an arbitrary $n \times n$ matrix and $Y = \diag(y_1, \dots, y_n)$. We reduce it to the case where $A$ satisfies property $\prop$ in three steps:

\begin{enumerate}
    \item We first demonstrate that for a diagonal matrix $D$ of randomly chosen entries, the maximal irreducible submatrices (see \cref{def:Reducible} and \cref{ob:irreducible-matrix-and-directed-graph}) of $B = (A+D)^{-1}$ satisfy property $\prop$ with high probability (\cref{lemma:random-diag-shift-satisfies-R-for-irred-comp}). This is where we require the field to be of size at least $n^6$. Using \cref{lem:bba_to_adj}, we obtain black-box access to $\det(B+Y)$.
    \item Since the irreducible blocks of $B$ have non-zero off-diagonal entries, we see that $i,j \in [n]$ are indices of the same irreducible submatrix if and only if $B[\{i,j\}]$ is irreducible (which can be checked using Claim \ref{clm:checking 2x2 reducibility}). We then compare enough pairs of indices $i,j$ in \cref{algo:PMAP_redn_finding_irred_blocks} to find the indices corresponding to the irreducible submatrices $B_1,B_2,\dots, B_s$ of $B$ and obtain their principal minor oracles (using Claim \ref{claim:bba-to-pm-oracle}).
    \item Each irreducible submatrix $B_i$ of $B$ satisfies property $\prop$, and learning them separately reduces to the principal minor assignment problem for matrices satisfying property $\prop$. Let $C_1 \PME B_1,C_2 \PME B_2, \dots, C_s \PME B_s$ be the matrices we learn. Then, by \cref{lem:redToIr}, there exists a permutation matrix $P$ such that $C = P^T \diag(C_1,\dots,C_s)P \PME B = (A+D)^{-1}$. The matrix $P$ is known to us due to our knowledge of the indices corresponding to the irreducible submatrices. We then show in \cref{lem:pme_A_from_pme_adj_A} that $C^{-1} - D \PME A$.
\end{enumerate}

\noindent This step can be derandomised in quasi-polynomial time using the strategy outlined in \cref{sec:deranomisation}. \\

\noindent{\textbf{Step $2$: Using black-box cut finding to reduce to the no-cut case.} Given an unknown matrix $M$ with property $\prop$ via its principal minor oracle, we show that if $M$ admits a cut, then the learning problem for $M$ can be reduced to learning a couple of its principal submatrices, whose total size is approximately that of $M$ (\cref{cl:reductionStep}). We apply this reduction recursively until the matrix has no cut, and then use the learning algorithm described in Step 3. To get these smaller principal submatrices, we have to find a cut of $M$ or any other matrix that is principal minor equivalent to it. Given an explicit matrix, \cite{Chatterjee25} reduces the problem of finding a cut to submodular minimization. It is not clear how to get access to the submodular function that they minimize using the principal minor oracle (or $\det(M+Y)$ evaluation oracle).

Given a matrix $M$ with non-zero off-diagonal entries, it is easy to see that a set $S$ is a cut if and only if for each $\{i,j\}\subset S$ and $\{k,l\}\subset \comp{S}$, $M[\{i,j,k,l\}]$ has $\{i,j\}$ (equivalently $\{k,l\}$) as a cut. With only principal minor access to $M$, we cannot decide directly whether $M[\{i,j,k,l\}]$ has such a cut. However, we can efficiently check whether there exists another $4\times4$ matrix, principal-minor equivalent to it, that has $\{i,j\}$ as a cut (\cref{obs:checkP}). We show that this suffices for matrices with property $\prop$, using some structural properties of cuts from \cite{Chatterjee25}. Precisely, a set $S$ is a cut of a matrix $M$ with property $\prop$, or of some matrix principal-minor equivalent to it, if and only if for all $\{i,j\}\subset S$ and $\{k,l\}\subset \comp{S}$, there exists a $4\times4$ matrix principal minor equivalent to $M[\{i,j,k,l\}]$ having the corresponding cut (\cref{lem:CutChecking}). We call such sets \emph{plausible sets} (\cref{def:plausibleSet}).

Using this idea, we give \cref{algo:cut-detection-finding-algorithm} to find a plausible set. To illustrate the idea, we describe a relatively simple algorithm for finding a cut of an explicitly given matrix. The same idea extends to finding a plausible set with slight modifications (see \cref{rem:cut-finding}). Given a matrix $M$, a set $\{i,j,k,l\}$ of four elements of the index set and a set $S$ with $i,j\in S$ and $k,l\in \comp{S}$, suppose we have to decide whether $S$ is a cut of $M$. First of all, $M[\{i,j,k,l\}]$ must have $\{i,j\}$ as a cut. If $S$ is such a cut, then an element $e$ cannot belong to $S$ if $M[\{i,e,k,l\}]$ does not have $\{i,e\}$ as a cut. Similarly, $e$ cannot be in $\comp{S}$ if $M[\{i,j,k,e\}]$ does not have cut $\{i,j\}$.  Also, for two elements $p,q$, it cannot be the case that $p\in S, q\in \comp{S}$ if $M[\{i,p,q,l\}]$ does not have $\{i,p\}$ as a cut. It can be shown that these local conditions suffice to test whether $S$ is a cut. \cref{lem:CorrectnessOfCutFinding} shows that similar local conditions are sufficient for a set to be a plausible set for matrices with property $\prop$. Such a set can be found by reducing the problem to a 2-SAT instance, where each variable corresponds to an index and each clause enforces one of the above conditions (see \cref{algo:cut-detection-finding-algorithm} for details). \\

\noindent{\textbf{Step $3$: Solving PMAP for matrices with property $\prop$ and having no cuts.} Suppose that $A\in\F^{n\times n}$ is a matrix satisfying property $\prop$ and has \emph{no} cut. We have access to the principal minors of $A$. We say two matrices $M, N\in\F^{n\times n}$ are \emph{diagonally similar}, if there exists an invertible diagonal matrix $D$ such that $N=D^{-1}MD$. Since all off-diagonal entries of $A$ are nonzero, there exists a (unique) matrix $M$ such that $M$ is diagonally similar to $A$ and $M[1, i]=1$ for all $i\in[n]\setminus\{1\}$. Moreover, $M\PME A$, that is, the corresponding principal minors of $M$ and $A$ are equal. Therefore, without loss generality, we may assume that $A[1, i]=1$ for all $i\in[n]\setminus \{1\}$. Our goal is to compute a matrix $B$ such that $B\PME A$. Moreover, for all $i\in[n]\setminus\{1\}$, the entry $B[1, i]$ will be $1$.

We construct $B$ iteratively. However, we do not follow an arbitrary sequence of indices in the iteration. Instead, we compute a sequence $\mathcal X=(i_n, i_{n-1}, \ldots, i_3, i_2)$ of distinct elements from $I\setminus\{1\}$ such that for all $j\in\{4,5,6, \ldots, n\}$, $A[I_j]$ has no cut, where $I_j=\{1, i_2,i_3, \ldots, i_j\}$. Since $A$ satisfies property $\prop$, such a sequence exists and can be computed in polynomial time (see~\cref{cor:no-cut-from-big-to-small-matrix} and~\cref{algo:finding-no-cut-sequence}). 

The usefulness of the sequence $\mathcal X$ can be explained as follows: At the $j$-th iteration, we construct the submatrix $B[I_j]$ such that $B[I_j]\PME A[I_j]$.  Since $A[I_j]$ has no cut, by~\cite[Theorem~1]{Loewy86} (or, see~\cref{lem:no-cut-to-PME}) and from additional structural properties on both $A$ and $B$, we have
\begin{equation}
\label{eqn:proof-idea-one}
B[I_j]=A[I_j], \text{ or } B[I_j] \text{ is diagonally similar to } A^T[I_j]
\end{equation}
This will be helpful to correctly run the iteration. In contrast, if $A[I_j]$ had a cut, by~\cite[Theorem~1.1]{Chatterjee25}, we obtain a weaker relation between $A[I_j]$ and $B[I_j]$, but it is not clear how to use that for our purpose. 

At the beginning of $j$-th iteration, the only unknown entries of $A[I_j]$ lie in the row and the column indexed by $i_j$. Note that $B[i_j, i_j]=A[i_j, i_j]$, hence this diagonal entry is known. Since $B[1, i_j]=1$, by the condition $B[\{1, i_j\}]\PME A[\{1, i_j\}]$, the entry $B[i_j, 1]$ is also determined.  Next, we iteratively compute $B[i_j, s]$ for all $s\in I_j\setminus\{1, i_j\}$. 

As before, we avoid following an arbitrary sequence. Instead, we find a sequence $(k_j, k_{j-1}, \ldots, k_3)$ of distinct elements from $I_j\setminus\{1, i_j\}$ such that for all $\ell\in\{4,5,6,\ldots, j\}$, the submatrix $A[L_\ell]$  has no cut, where $L_\ell=\{1, i_j, k_3, \ldots, k_\ell\}$. Since $A$ satisfies property $\prop$, the submatrix $A[L_\ell]$ also satisfies it (see~\cref{prop:matrix-property-large-to-small}). Moreover,  as $A[I_j]$ has no cut, such a sequence exists and can be computed in polynomial time (see~\cref{cor:no-cut-from-big-to-small-matrix} and~\cref{algo:finding-no-cut-sequence}).

Observe that $L_j=I_j$. We iteratively compute the matrix $B[L_j]$ such that $B[L_j]\PME A[L_j]$. At $\ell$-th iteration, the only unknown entries are $B[i_j, k_\ell]$ and $B[k_\ell, i_j]$.  Furthermore, once $B[i_j, k_\ell]$ is determined, $B[k_\ell, i_j]$ is uniquely fixed by the condition $A[\{i_j, k_\ell\}]\PME B[\{i_j, k_\ell\}]$. Note that $A[\{1, i_j, k_\ell\}]$ must be principal minor equivalent to $B[\{1, i_j, k_\ell\}]$. This restricts the value of $B[i_j, k_\ell]$  to at most two possible values, which corresponds to roots of a suitable quadratic equation (see~\cref{lem:reconstruction-size-three-matrix}). Additionally, we note the following:
\begin{enumerate}
\item Since $L_\ell-i_j$ is a subset of $I_{j-1}$, applying~\cref{eqn:proof-idea-one} for $(j-1)$-th iteration yields that $ B[L_\ell-i_j]=A[L_\ell-i_j]$, or it is diagonally similar to $A^T[L_\ell-i_j]$. 

\item Since $A[L_{\ell-1}]\PME B[L_{\ell-1}]$ and $A[L_{\ell-1}]$ has no cut, again by \cite[Theorem~1]{Loewy86} (or, see \cref{lem:no-cut-to-PME}), $ B[L_{\ell-1}]=A[L_{\ell-1}]$, or it is diagonally similar to $A^T[L_{\ell-1}]$.
\end{enumerate}
Combining the above facts, we obtain four possible cases, based on which we will compute $B[L_\ell]$ satisfying $B[L_\ell]\PME A[L_\ell]$. To identify a correct candidate for $B[L_\ell]$,~\cref{thm:main-characterization} plays a crucial role. For each possible choice of $B[L_\ell]$, we verify whether the corresponding principal minors of $A[L_\ell]$ and $B[L_\ell]$ up to order $4$ are equal. Since $A[L_\ell]$ satisfies property $\prop$, by~\cref{thm:main-characterization}, this verification is sufficient to ensure whether $A[L_\ell]\PME B[L_\ell]$. Thus, we can find a correct $B[L_\ell]$ in polynomial time. 

For more details, see~\cref{algo:reconstruction-with-no-cut-&-Q} and its correctness in~\cref{sec:reconstruction-with-no-cut-&-Q-correctness}.

\subsubsection{Proof sketch for Theorem \ref{thm:main-characterization}}
\label{sec:ideas characterization of matrices with prop Q}

Consider two matrices $A, B\in \F^{n\times n}$ that have equal corresponding principal minors of order at most 4. Suppose further that $A$ satisfies property~$\prop$. Our goal is to show that $A$ and $B$ have equal corresponding principal minors of all orders up to $n$. We begin with the case where $A$ has no cut, which constitutes the main technical component of the proof. 
\begin{enumerate}
	\item Suppose $A$ has no cut. We prove the principal minor equivalence of $A$ and $B$ by induction on~$n$. The base case $n=4$ is immediate. 
    
    For any $k \in [n]$, let $A_k$ and $B_k$ denote the principal submatrices of $A$ and $B$, respectively, obtained by deleting the $k$-th row and column. 
    Let $I \subseteq [n]$ be the set of indices for which $A_i$ is diagonally similar to $B_i$ or $B_i^T$.
    Following Lemma~\ref{lem:partial-DS-to-complete-DS} \cite[Lemma~1]{Hartfiell84}, it suffices to prove $|I| \geq 5$ as it would imply that $A$ is diagonally similar to $B$ or $B^T$. Consequently, $A$ and $B$ are principal minor equivalent \cite[Theorem 1]{Loewy86} (also see Lemma~\ref{lem:no-cut-to-PME}).
    
    For $n \geq 5$, we show that there exist at least three indices $i_1, i_2, i_3 \in [n]$ such that each $A_{i_j}$ has no cut. We crucially use property $\prop$ to show the existence of these indices (see~\cref{cor:no-cut-from-big-to-small-matrix}). By the induction hypothesis, $A_{i_j}$ is principal minor equivalent to $B_{i_j}$. Since $A_{i_j}$ has no cut, it again follows from \cite[Theorem~1]{Loewy86} (also see~\cref{lem:no-cut-to-PME}) that $A_{i_j}$ is diagonally similar to $B_{i_j}$ or its transpose. 
    
     We have $|I| \geq 3$ as $i_1, i_2, i_3 \in I$. We then use the principal minor equivalence of $A$ and $B$ up to order 4 and property $\prop$ of $A$ to prove that if $|I|$ were $3$ or $4$, then $A$ would necessarily have a cut (see~\cref{lem:small-matrices-DE-to-cut-4}, and~\cref{lem:small-matrices-DE-to-cut-3}). It contradicts our assumption that $A$ has no cut. A similar result appears in \cite[Theorem~2 and Theorem~4]{Loewy86} with the assumption that $A$ and $B$ are dense and principal minor equivalent of all orders. On the other hand, we show the result still holds even if we assume the matrices to be principal minor equivalent up to only order $4$ and additionally $A$ satisfies rank-one extension property.

	\item If $A$ has a cut, we consider a minimal cut $S$ of $A$. We first show that either $S$ is also a cut of $B$ or there is another matrix $\widetilde{B}$ principal minor equivalent to $B$ such that $S$ is a cut of $\widetilde{B}$ (see \cref{lem:Cut-and-PME-upto-4}, \cref{lem:common-cut}). Some weaker versions of~\cref{lem:Cut-and-PME-upto-4} and~\cref{lem:common-cut} appear in \cite[Lemma~3.5 and Lemma~3.6]{Chatterjee25}. We observe that almost the same proofs even work for our case. The argument then relies on the decomposition lemma (see \cref{lem:cut-decomposition}) that asserts the equivalence of the following two statements:
\begin{enumerate}
\item $A$ is principal minor equivalent to $B$.
\item For some common cut $S$ of $A$ and $B$,  $A[S+t]$ is principal minor equivalent to $B[S+t]$ and $A[\comp{S}+s]$ is principal minor equivalent to $B[\comp{S}+s]$ for any $s\in S$, and $t\in \comp{S}$.
\end{enumerate}
We use ideas from the proof of \cite[Lemma~2.12]{Chatterjee25} to prove the decomposition lemma. With the above setup, the inductive proof proceeds where the base case is either $n=4$, or when $A$ has \emph{no cut}.
\end{enumerate}

\subsubsection{Proof sketch for Theorem \ref{thm:main-NC-algorithm}} \label{sec:ideas thm NC}

To design an NC algorithm for principal minor equivalence, we first consider the special case where one of the matrices satisfies property $\prop$. In this case, \cref{thm:main-characterization} immediately yields an NC algorithm, since we can, in parallel, verify whether $\det(A[S])=\det(B[S])$ for all $S\subseteq [n]$ of size at most $4$ to determine the principal minor equivalence of $A$ and $B$. To solve the general case, we provide an NC reduction to the above special case through the following steps. 

\begin{enumerate}
     \item The matrices $A$ and $B$ are principal minor equivalent if and only if their maximal irreducible submatrices induce the same partition of the index set $[n]$, and the corresponding principal submatrices are themselves principal minor equivalent (see~\cref{lem:redToIr}). The maximal irreducible submatrices correspond to the strongly connected components of a graph associated with the input matrix (see~\cref{ob:irreducible-matrix-and-directed-graph}), which can be computed in NC using standard parallel algorithms for finding strongly connected components~\cite{Gazit88, Cole89}.

     \item Let $M^{\mathrm{adj}}$ denote the \emph{adjugate} of a matrix $M$ and $Y$ be an $n\times n$ diagonal matrix with distinct indeterminates on its diagonal. Then, $A$ is principal minor equivalent to $B$ if and only if $\adj{A+Y}$ and $\adj{B+Y}$ are principal minor equivalent (\cite[Lemma~4]{Hartfiell84}). Furthermore, we show that $\adj{A+Y}$ satisfies property $\prop$ (see~\cref{clm:AplusYsatisfy}). 

    Our goal is now to compute, in NC, a diagonal matrix $D$ such that $\adj{A+D}$ satisfies property $\prop$. Consequently, it reduces the principal minor equivalence testing of two irreducible matrices $A$ and $B$ to testing principal minor equivalence of $\adj{A+D}$ and $\adj{B+D}$. However, we do not know of an NC procedure to compute such $D$ directly. Instead, we first compute, in NC, a diagonal matrix $D_1$ such that all entries of  $A_1 = \adj{A+D_1}$ are nonzero ~(\cref{lem:ShiftAdjugatePreservesPME}). Next we compute another diagonal matrix $D_2$ in NC such that $A_2 = \adj{A_1+D_2}$ satisfies property~$\prop$ (\cref{lem:findDwithQ}). 
    Since the determinant of a matrix can be computed in NC~\cite{Csanky76, Borodin83}, all these matrices can be computed in NC. 
\end{enumerate} 
\subsection{Comparison with previous works} \label{sec:comparison}
In this section, we provide comparisons with known results that are directly relevant to our work. 

Randomized polynomial-time learning algorithms for ROFs were given in \cite{HancockH91, BshoutyHH95}. These algorithms were derandomized in polynomial-time in \cite{ShpilkaV14, MinahanV18} using hitting-sets for ROFs. Similarly, the randomized polynomial-time learning algorithm for ROABPs (with known variable ordering) in \cite{BeimelBBKV00, KlivansS06} was derandomized in quasi-polynomial time in \cite{ForbesS13} using hitting-sets for ROABPs. If the variable ordering in unknown, then ROABP learning is hard \cite{BDGT24}. The class of RODs is more powerful than ROFs (by a reduction in \cite{Valiant79a}). It is easy to show that the degree-$3$ determinant polynomial, which has an ROD of size $3$, cannot be computed by an ROF. But, RODs do not capture ROABPs -- the elementary symmetric polynomial, which has an ROABP of polynomial size, cannot be expressed as an ROD \cite{AravindJ15}. On the other hand, any ROABP that computes the degree-$n$ determinant polynomial has size $2^{\Omega(n)}$ \cite{Nisan91}. So, the learning algorithms for ROFs and ROABPs do not imply that RODs are efficiently learnable. Furthermore, the partial derivatives and evaluation dimension based techniques used for learning ROFs and ROABPs do not seem effective for learning RODs. However, as is the case for ROF and ROABP learning, the hitting-set for RODs in \cite{DBLP:conf/stoc/GurjarT17} can be used to derandomize our ROD learning algorithm in quasi-polynomial time (see Section \ref{sec:deranomisation}).

The PMAP problem was known to have a polynomial-time solution for the following classes of matrices: symmetric matrices \cite{Oeding2011-jo, RisingKuleszaTaskar2015}, magnitude-symmetric matrices ($|A(i,j)| = |A(j,i)|$)~\cite{Urschel2017,BrunelUrschel2024}, and ODF (off-diagonal full) matrices~\cite{GriffinTsatsomeros2006}.    
%
The property of symmetry or magnitude-symmetry is crucially used in the first two algorithms, and thus, their approaches do not seem directly applicable for the non-symmetric case.
The ODF property, introduced in \cite{GriffinTsatsomeros2006}, involves a technical condition which states that at every step of their algorithm, the solution should be unique.
Though this condition is true generically, it does not seem to correspond to any structural property of the given matrix that will gurantee an efficient solution for PMAP. 
On the other hand, property $\prop$, for which we solve PMAP (\cref{thm:main-reconstruction-PM-with-Q-property}) is a generic structural condition on the matrix.
Also note that by definition, ODF matrices cannot have a cut, while property~$\prop$ does not have such a restriction. 
Moreover, we show that blackbox PMAP for arbitrary matrices reduces to PMAP for matrices with property~$\prop$. 
It is not clear if there is such a reduction for ODF matrices.

As mentioned in Section~\ref{sec:results}, identifying property~$\prop$ and exploiting it for PMAP is one of our key technical ideas. 
Though property~$\prop$ does not appear explicitly in the literature, its definition is inspired by the proofs 
found in the works of Hartfiel-Loewy~\cite[Theorem 2]{Hartfiell84} and
Loewy~\cite[Lemma 6]{Loewy86}. 
The same is true for the result that for any irreducible matrix $A$, $(A+D)^{-1}$ satisfies property~$\prop$ for a random diagonal matrix $D$ (\cref{lemma:random-diag-shift-satisfies-R}).

Our solution for PMAP builds on  the fact that for matrices with property~$\prop$, PME up to order 4 implies PME (Theorem~\ref{thm:main-characterization}). 
Similar results were known for other classes of matrices, namely, skew-symmetric matrices with some other conditions~\cite{BC16, BOUSSAIRI201547}, generalized tournament matrices~\cite{BoussairiChaichaaCherguiLakhlifi2021}, and magnitude symmetric matrices~\cite{BrunelUrschel2024}. 
In contrast to our algebraic approach, most of these results rely fundamentally on graph-theoretic techniques.

These results also imply  efficient parallel algorithms for testing PME for these classes: in parallel, one can test equality of all corresponding principal minors up to order 3 or 4.
For the general case, no such parallel algorithms were known. 
The only known efficient algorithm for PME testing~\cite{Chatterjee25} involves finding a sequence of cut-transpose operations, which does not seem parallelizable. 
Here we make use of property~$\prop$ and Theorem~\ref{thm:main-characterization} to get a parallel algorithm for PME testing for two arbitrary matrices (\cref{thm:main-NC-algorithm}). 

\subsection{Other related works} \label{sec:related works}
Proper learning is known for a few other classes of polynomials. These classes are defined by restricted depth three and depth four arithmetic circuits having bounded top fan-in \cite{Shpilka09a, KarninS09, GuptaKL12, Sinha16, BSV20, BSV21}. Like RODs, these models are not universal. The learning algorithm for ROABPs implies improper learning algorithms for two important subclasses of depth three circuits, namely depth three powering circuits (symmetric tensors) and set-multilinear depth three circuits (tensors). However, learning general depth three or depth four circuits (even improperly) in the worst case is presumably hard (see the discussions about hardness of learning in \cite{KayalS19, GKS20}). Consequently, the complexity of learning has been investigated for various circuit models, including depth three and depth four circuits, in the average-case \cite{GuptaKL11, GKQ14, KayalNS19, KayalS19, GKS20, BGKS22}. Note that a natural way to define an average-case version of the PMAP problem is to let the input matrix be chosen randomly. Since a random matrix satisfies property $\prop$ with high probability (see the second remark after Theorem \ref{thm:main-reconstruction-PM-with-Q-property}), we have a deterministic polynomial-time algorithm for this average-case version of PMAP (by Theorem \ref{thm:main-reconstruction-PM-with-Q-property}).
\section{Preliminaries}
We use $[n]$ to denote the set of positive integers $\{1,2,3,\ldots, n\}$. For a set $S$ and a subset $T$ of $S$, $\comp{T}$ denotes the complement of $T$, i.e., $\comp{T}=S\setminus T$. For a set $S$ and an element $s$, we use $S+s$ and $S-s$ to denote the sets $S\cup\{s\}$ and $S\setminus\{s\}$, respectively.

For an $n\times n$ matrix $A$ and subsets $S, T\subseteq [n]$, we use $A[S,T]$ to denote the submatrix of $A$ whose rows are indexed by $S$ and columns are indexed by $T$. For $S=T$, we use $A[S]$ to denote $A[S, S]$. For a square matrix $A$, $A^{\mathrm{adj}}$ denotes the adjugate matrix of $A$. For a vector $(a_1,a_2,\ldots, a_n)$, $\diag(a_1, a_2, \ldots, a_n)$ denotes the $n\times n$ diagonal matrix $D$ such that for $i\in[n]$, $D[i, i]=a_i$. $I_n$ denotes the $n \times n$ identity matrix and $0_{m,n}$ denotes the $m \times n$ zero matrix.

\subsection{Reducible and Irreducible matrix}
\begin{definition}[Reducible and Irreducible matrix] 
\label{def:Reducible}
A matrix is called \emph{reducible} if it can be written as a block upper triangular matrix after permuting the rows and the corresponding columns. A matrix that is not reducible is called \emph{irreducible}. 

Equivalently, if we replace the nonzero off-diagonal entries with one and the diagonal entries with zero, then a reducible matrix corresponds to the adjacency matrix of a directed graph having more than one strongly connected component.
\end{definition}
From the above definition, it is easy to see that any matrix $A$ with all nonzero off-diagonal entries is an irreducible matrix. The above definition directly gives us the following observation.
\begin{observation}
\label{ob:irreducible-matrix-and-directed-graph}
Let $A$ be an $n\times n$ matrix over a field $\F$ such that the row and columns of $A$ are indexed by $[n]$. Let $G_A$ be a directed graph defined as follows: the vertex set in $[n]$, and a tuple $(i,j)$ is an edge of $G_A$ if and only if $i\neq j$ and $A[i,j]\neq 0$. Let $I_1, I_2,\ldots, I_s$ be the strongly connected components of $A$. Then, after permuting the rows and the corresponding columns, the matrix $A$ can be made a block upper triangular matrix, and the diagonal blocks $A[I_1], A[I_2],\ldots, A[I_s]$ are irreducible matrices.
\end{observation}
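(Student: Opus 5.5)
This is essentially a restatement of the definition, so the plan is a direct argument from \cref{def:Reducible}. Form the digraph $G_A$ and compute its strongly connected components $I_1,\dots,I_s$. The condensation of $G_A$ (one vertex per component, an arc $I_a \to I_b$ whenever some edge of $G_A$ goes from $I_a$ to $I_b$) is acyclic. If it had a directed cycle, the components on that cycle would be mutually reachable and hence would form a single strongly connected component, contradicting maximality.

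Take a topological order of the condensation and relabel the components so that every arc goes from $I_a$ to $I_b$ only when $a<b$. Now permute the rows and the corresponding columns of $A$ so that the indices of $I_1$ come first, then those of $I_2$, and so on. Consider any entry $A[i,j]\neq 0$ with $i\in I_b$, $j\in I_a$ and $a\neq b$. It gives an edge $(i,j)$, hence an arc $I_b\to I_a$, so $b<a$. The entry therefore lies above the block diagonal, and every block below the diagonal is zero. This is exactly the block upper triangular form. The permutation is a simultaneous relabeling of the vertices, so $G_A$ changes only by an isomorphism, and the diagonal blocks are precisely $A[I_1],\dots,A[I_s]$.

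It remains to check that each $A[I_k]$ is irreducible. The digraph $G_{A[I_k]}$ is the subgraph of $G_A$ induced on $I_k$. It is strongly connected, because any two vertices of $I_k$ are joined by directed paths, and every vertex on such a path is reachable from and reaches $I_k$, so it lies in $I_k$. Suppose instead that some permutation put $A[I_k]$ into block upper triangular form with at least two diagonal blocks. Then no edge would go from a later block to an earlier block. A vertex in the last block could never reach a vertex in the first block, contradicting strong connectivity. Hence $A[I_k]$ is irreducible. Note that diagonal entries play no role anywhere, since $G_A$ ignores them.

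No step is a real obstacle. The only point needing care is the orientation convention: an entry $A[i,j]$ with $i$ in a later block and $j$ in an earlier block sits below the diagonal, so the topological order has to be chosen to make edges point from earlier to later components. Reversing that order would give a block lower triangular form instead, which is equally valid up to relabeling.
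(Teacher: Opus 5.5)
Your argument is correct and takes the same route as the paper, which gives no separate proof and simply treats the observation as immediate from \cref{def:Reducible} and its graph-theoretic reformulation. You fill in the standard details: the condensation of $G_A$ is acyclic, ordering the components topologically with arcs pointing from earlier to later components gives the block upper triangular form, and each $A[I_k]$ is irreducible because the subgraph induced on $I_k$ is strongly connected.
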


For two reducible matrices $A$ and $B$, the next lemma helps to reduce the testing of whether $A\PME B$ to multiple instances of testing whether two irreducible matrices have the same corresponding principal minors. The following lemma is a direct consequence of~\cite[Corollary~5.4]{Ahmadieh23}.
\begin{lemma}
\label{lem:redToIr}
Let $A$ and $B$ be two $n\times n$ matrices over a field $\F$. Suppose that after permuting the rows and the corresponding columns, $A$ can be written as a block upper triangular matrix with $s$ diagonal blocks $A_1,A_2,\dots,A_s$ where each $A_i$ is irreducible and the rows and columns of $A_i$ are indexed by set $T_i\subseteq [n]$. Then, $A\PME B$ if and only if the following holds:
\begin{enumerate}
    \item After permuting some rows and the corresponding columns, $B$ can be written as a block upper triangular matrix with $s$ diagonal blocks $B_1,B_2,\dots,B_s$ such that each $B_i$ is irreducible and the rows and columns of $B_i$ are indexed by the same sets $T_i$.
    \item For each $i\in [s]$, $A_i\PME B_i$.
\end{enumerate}
\end{lemma}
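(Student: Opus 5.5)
The plan is to prove both directions using the expansion of $\det(A+Y)$ over the block structure, where $Y=\diag(y_1,\dots,y_n)$. Apply the same simultaneous permutation to $A$ and $B$; this preserves PME, so we may assume $A$ is block upper triangular with diagonal blocks $A_1,\dots,A_s$ indexed by $T_1,\dots,T_s$, in that order. Then
\[
\det(A+Y)=\prod_{i=1}^s \det(A_i+Y_{T_i}).
\]
The ``if'' direction is immediate from this. If $B$ has the same block structure and $A_i\PME B_i$ for every $i$, then $\det(B+Y)$ factors in the same way with equal factors. This gives $\det(A+Y)=\det(B+Y)$, which is exactly $A\PME B$.

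For the ``only if'' direction, suppose $\det(A+Y)=\det(B+Y)$. The first key step is to show that for an irreducible matrix $M$, the polynomial $\det(M+Y)$ is irreducible as a polynomial, or at least does not factor as $g(y_S)h(y_{\comp S})$ for any proper nonempty $S$. It is multilinear, so any factorization splits the variables into disjoint sets. Taking $M$ itself, the coefficient structure gives the right test. If $\det(M+Y)=g\,h$ with variable sets $S,\comp S$, then comparing coefficients of $\prod_{k\ne i,j} y_k$ gives, for $i\in S$ and $j\in\comp S$,
\[
M_{ii}M_{jj}-M_{ij}M_{ji} = M_{ii}M_{jj}.
\]
Hence $M_{ij}M_{ji}=0$. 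More generally, comparing coefficients of cycle terms across the cut forces the cyclic products that cross $S$ to vanish. From this I would derive that the digraph $G_M$ has no cycle crossing $S$, contradicting strong connectivity. I expect this to be the main obstacle, since products of cycles can cancel. To handle it, I would instead cite \cite[Corollary~5.4]{Ahmadieh23} directly for the statement that the factorization of $\det(M+Y)$ into variable-disjoint factors corresponds exactly to the strongly connected components.

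Granting this, decompose $B$ into its own irreducible blocks with index sets $T'_1,\dots,T'_r$, using \cref{ob:irreducible-matrix-and-directed-graph}. Then $\det(B+Y)$ factors as the product of the $\det(B'_j+Y_{T'_j})$. Multilinear polynomials factor uniquely into irreducible factors with disjoint variable supports, up to scalars, and each factor is monic in the product of its variables. So the finest variable-disjoint factorizations of the two sides must coincide, which gives $\{T_i\}=\{T'_j\}$ with equal factors. Equality of the factors $\det(A_i+Y_{T_i})=\det(B_i+Y_{T_i})$ gives $A_i\PME B_i$.

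Finally, $B$ can be permuted to block upper triangular form along the topological order of its components. If needed, reorder so that the blocks are indexed by the same sets $T_i$ as in the statement; the statement only asks for some permutation, so this suffices.
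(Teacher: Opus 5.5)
Your proposal is correct and is essentially the paper's route. The paper derives the lemma directly from \cite[Corollary~5.4]{Ahmadieh23}, and you make that deduction explicit via $\det(A+Y)=\prod_i\det(A_i+Y_{T_i})$, irreducibility of $\det(M+Y)$ for irreducible $M$ (taken from the same source), and unique factorization with the monic normalization. Just note that only the index sets of $B$'s blocks, not their order, can be matched to $A$'s (e.g.\ $B=A^T$), which is all the statement requires.

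If you want the cited step self-contained, suppose $f=\det(M+Y)$ factors as $f=g(y_S)\,h(y_{\comp{S}})$. Then $f\,\partial_i\partial_j f=\partial_i f\,\partial_j f$ for every $i\in S$, $j\in\comp{S}$. By the Desnanot--Jacobi identity this means $\adj{M+Y}[i,j]\,\adj{M+Y}[j,i]=0$, which contradicts \cite[Theorem~1]{Hartfiell84}: the off-diagonal entries of $\adj{M+Y}$ are nonzero for irreducible $M$.
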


\subsection{Diagonal similarity}
\label{sec:diagonal-similarity}
Suppose that $A$ and $B$ are $n\times n$ matrices over $\F$. We say that $A$ is \emph{diagonally similar} to $B$, denoted by $A\DS B$, if there exists an invertible diagonal matrix $D$ such that $B=D^{-1}AD$. If no such matrix exists, we use $A\nDS B$ to denote that $A$ is \emph{not} diagonally similar to $B$.  We say that $A$ is \emph{diagonally equivalent} to $B$, denoted by $A\DE B$, if $A\DS B$ or $A\DS B^T$. For a matrix with nonzero off-diagonal entries, we make the following observation.
\begin{observation}
\label{obs:uniqueness-of-DS}
Let $A\in\F^{n\times n}$  with nonzero off-diagonal entries. Then, for every $i\in[n]$, there exists a unique $n\times n$ matrix $M_i$ such that $$M_i\DS A \text{ and } M_i[i, j]=1\ \ \forall\, j\in \nsubset{i}.$$ 
\end{observation}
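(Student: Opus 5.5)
The plan is to construct $D$ explicitly and show it is forced. If $M\DS A$ via $M=D^{-1}AD$ with $D=\diag(d_1,\ldots,d_n)$ invertible, then $M[k,\ell]=d_k^{-1}A[k,\ell]d_\ell$ for all $k,\ell$. In particular, the diagonal is preserved, and only the off-diagonal entries get rescaled.

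For existence, fix $i\in[n]$ and set $d_i=1$. For each $j\neq i$, set $d_j=A[i,j]^{-1}$, which is well defined because $A$ has nonzero off-diagonal entries. Then
\[
M_i[i,j]=d_i^{-1}A[i,j]d_j=A[i,j]\cdot A[i,j]^{-1}=1 \quad\text{for all } j\neq i.
\]
Moreover, $D$ is invertible since every $d_j$ is nonzero, so $M_i:=D^{-1}AD$ satisfies $M_i\DS A$.

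For uniqueness, suppose $M\DS A$ and $M[i,j]=1$ for all $j\neq i$. Write $M=D^{-1}AD$ with $D=\diag(d_1,\ldots,d_n)$. The conditions give $d_j=d_i\,A[i,j]^{-1}$ for every $j\neq i$, so $D=d_i D_0$, where $D_0$ is the diagonal matrix constructed above. Scalar multiples cancel in a conjugation, so
\[
M=(d_iD_0)^{-1}A(d_iD_0)=D_0^{-1}AD_0=M_i .
\]
Hence $M$ is unique.

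There is no real obstacle here. The only point that needs care is noting that $D$ is determined only up to a nonzero scalar, and that this scalar does not affect the conjugate.
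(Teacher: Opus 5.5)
Your argument is correct. The paper states this as an observation without proof, and your normalization $d_i=1$, $d_j=A[i,j]^{-1}$, together with the fact that any admissible $D$ is a scalar multiple of this one (and scalars cancel under conjugation), is exactly the routine verification the paper leaves implicit. One cosmetic point: the paper's $M_i \DS A$ literally means $A=D^{-1}M_iD$, whereas you wrote $M=D^{-1}AD$; these are interchangeable because $D\mapsto D^{-1}$ is a bijection on invertible diagonal matrices.
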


For a fixed $i\in[n]$, the matrix $M_i$ in the above observation can be identified as a \emph{canonical representative} of all matrices that are diagonally similar to $A$. We next define two subsets $\calD_1(A, B)$ and $\calD_2(A, B)$ of $[n]$.
\begin{align}
\calD_1(A,B) &:= \{i\in[n]\,\mid\, A[\nsubset{i}] \DS B[\nsubset{i}]\}\label{eqn:ds-one}\\
\calD_2(A,B) &:= \{i\in[n]\,\mid\, A[\nsubset{i}]\DS B[\nsubset{i}]^T\}\label{eqn:ds-two}
\end{align}
From these two subsets, we have the following result.
\begin{lemma}[\cite{Hartfiell84}, Lemma~1]
\label{lem:partial-DS-to-complete-DS}
Let $n\geq 4$ be an integer. Let $A$ and $B$ be two $n\times n$ matrices over a field $\F$. Suppose that all the off-diagonal entries of $A$ are nonzero. Then, 
\begin{enumerate}
\item if $|\calD_1(A,B)|\geq 3$ then $A\DS B$.
\item if $|\calD_2(A,B)|\geq 3$ then $A\DS B^T$.
\item if $|\calD_1(A,B)\cup\calD_2(A,B)|\geq 5$ then $A\DE B$
\end{enumerate}
\end{lemma}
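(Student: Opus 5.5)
The plan is to prove all three parts using one criterion for diagonal similarity of dense matrices. Throughout, write $c_A(p_1,\dots,p_m) := A[p_1,p_2]A[p_2,p_3]\cdots A[p_m,p_1]$ for the cycle product of $A$ along the cycle $(p_1,\dots,p_m)$.

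\textbf{Step 0: a criterion for diagonal similarity.} Let $A,B$ have nonzero off-diagonal entries. The criterion to use is that $A\DS B$ if and only if three conditions hold: $A[p,p]=B[p,p]$ for all $p$; $c_A(p,q)=c_B(p,q)$ for all $p\neq q$; and $c_A(p,q,r)=c_B(p,q,r)$ for all distinct $p,q,r$.

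Necessity is clear, since $D^{-1}AD$ preserves the diagonal and every cycle product. For sufficiency:
\begin{itemize}
\item Use \cref{obs:uniqueness-of-DS} with index $1$: replace $A$ and $B$ by their canonical representatives $M_1$ and $N_1$, so that $A[1,j]=B[1,j]=1$ for all $j\neq 1$.
\item Equality of the $2$-cycles $(1,j)$ then gives $A[j,1]=B[j,1]$.
\item Equality of the triangles $(1,j,k)$ gives $A[j,k]A[k,1]=B[j,k]B[k,1]$, hence $A[j,k]=B[j,k]$.
\end{itemize}
So the two normalized matrices coincide, which gives $A\DS B$.

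\textbf{Step 1: part 1.} Let $i,j,k\in\calD_1(A,B)$ be distinct, so $A[[n]-x]\DS B[[n]-x]$ for $x\in\{i,j,k\}$.
\begin{itemize}
\item Each diagonal entry and each $2$-cycle $\{p,q\}$ contains at most two of $i,j,k$. It therefore lies inside $[n]-x$ for some $x\in\{i,j,k\}$, and so it agrees between $A$ and $B$.
\item The same covering shows that every off-diagonal entry of $B$ lies in some $B[[n]-x]$. Since that submatrix is diagonally similar to the dense $A[[n]-x]$, the entry is nonzero, so $B$ is dense and Step 0 applies.
\item Any triangle $\{p,q,r\}\neq\{i,j,k\}$ misses one of $i,j,k$, so its cycle products agree by the same argument.
\end{itemize}

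The only remaining case, and the one step needing care, is the triangle on $\{i,j,k\}$ itself. Here use $n\ge 4$ and pick $l\notin\{i,j,k\}$. One checks the identity
\[
c_A(i,j,k)=\frac{c_A(i,j,l)\,c_A(j,k,l)\,c_A(k,i,l)}{c_A(j,l)\,c_A(i,l)\,c_A(k,l)},
\]
which is valid because all the denominators are nonzero for dense $A$, and the same identity holds for $B$. Every factor on the right involves at most two of $i,j,k$, so it agrees between $A$ and $B$. Hence $c_A(i,j,k)=c_B(i,j,k)$, and Step 0 yields $A\DS B$.

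\textbf{Step 2: parts 2 and 3.} Since $\calD_2(A,B)=\calD_1(A,B^T)$, part 2 is part 1 applied to the pair $(A,B^T)$. For part 3, suppose $|\calD_1\cup\calD_2|\ge 5$. By pigeonhole, either $|\calD_1|\ge 3$ or $|\calD_2|\ge 3$. Parts 1 and 2 then give $A\DS B$ or $A\DS B^T$, that is, $A\DE B$.

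The main obstacle is the triangle $\{i,j,k\}$ in Step 1, which is the only place where the hypothesis $n\ge 4$ is used, via the extra index $l$.
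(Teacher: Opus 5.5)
Your proof is correct. The paper gives no proof of this lemma; it is quoted from Hartfiel and Loewy \cite{Hartfiell84} (Lemma~1) and used as a black box. Your argument is therefore a self-contained substitute rather than a variant of something in the text.

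The route you choose works well. You characterize diagonal similarity of dense matrices by equality of diagonals, $2$-cycle products and $3$-cycle products. After that, density of $B$, the diagonal entries, all $2$-cycles, and every triangle other than $\{i,j,k\}$ follow from a pure covering argument. The hypothesis $n\ge 4$ is used in exactly one place, and the identity expressing $c_A(i,j,k)$ through triangles and $2$-cycles at $l$ checks out. Parts 2 and 3, via $\calD_2(A,B)=\calD_1(A,B^T)$ and $|\calD_1|+|\calD_2|\ge 5$, are exactly right.

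Two small remarks.

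\begin{itemize}
\item Your criterion in Step 0 quantifies over ordered triples, so you also need $c_A(i,k,j)=c_B(i,k,j)$. This is the same identity with $j$ and $k$ interchanged, so nothing is missing, but it is worth stating.
\item The identity can be avoided entirely. The sufficiency argument in Step 0 only uses the $2$-cycles and triangles through the normalizing index. If you normalize at some $l\notin\{i,j,k\}$ instead of at $1$, then every $2$-cycle and triangle used contains at most two of $i,j,k$, so all of them are already covered. The choice of $l$ is then the only place $n\ge 4$ is needed.
\end{itemize}
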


\subsection{Principal minor equivalence}
Suppose that $A$ and $B$ are two $n\times n$ matrices over $\F$. Then, the matrices $A$ and $B$ are said to be \emph{principal minor equivalent} if all the corresponding principal minors of $A$ and $B$ are equal, i.e., for all $S\subseteq [n]$, $\det A[S]=\det B[S]$. We use $A\PME B$ to denote that $A$ is principal minor equivalent to $B$, and $\kPME{A}{B}{k}$ to denote that all the corresponding principal minors of $A$ and $B$ of order \emph{at most} $k$ are equal, that is, for all $S\subseteq [n]$ with $|S|\leq k$, $\det A[S]=\det B[S]$. By $\newPME{A}{\geq}{k}{B}$, we denote that all the corresponding principal minors of $A$ and $B$ of order \emph{at least} $k$ are equal, that is, for all $S\subseteq [n]$ with $|S|\geq k$, $\det A[S]=\det B[S]$.

We now discuss the relationship between diagonal equivalence and principal minor equivalence. To begin, we state the following observation.
\begin{observation}
\label{obs:DE-to-PME}
Let $A$ and $B$ be an $n\times n$ matrices over $\F$ such that $A\DE B$. Then, $A\PME B$. 
\end{observation}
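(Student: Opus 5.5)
Since principal minors of $A[S]$ and of $B[S]$ are determinants, the plan is to show that diagonal similarity and transposition each preserve every principal minor, and then combine the two. By definition of $A\DE B$, either $A\DS B$ or $A\DS B^T$, so it suffices to handle these two cases.

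\emph{Case $A\DS B$.} There is an invertible diagonal $D$ with $B=D^{-1}AD$. Fix $S\subseteq[n]$. Because $D$ is diagonal, taking a principal submatrix commutes with conjugation by $D$:
\[
B[S]=D[S]^{-1}A[S]D[S],
\]
where $D[S]$ is itself invertible and diagonal. Multiplicativity of the determinant then gives $\det B[S]=\det A[S]$.

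\emph{Case $A\DS B^T$.} The previous case gives $\det A[S]=\det B^T[S]$ for every $S$. Since $B^T[S]=(B[S])^T$ and $\det(M^T)=\det M$, we get $\det B^T[S]=\det B[S]$.

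In both cases $\det A[S]=\det B[S]$ for all $S\subseteq[n]$, i.e.\ $A\PME B$. The only step requiring any care is the identity $(D^{-1}AD)[S]=D[S]^{-1}A[S]D[S]$. It holds entrywise: $(D^{-1}AD)[i,j]=d_i^{-1}A[i,j]d_j$ for $i,j\in S$, and these are exactly the entries of $D[S]^{-1}A[S]D[S]$. Everything else in the argument is routine.
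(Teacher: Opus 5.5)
Your proof is correct and follows essentially the same approach as the paper. The paper states this as an observation without proof, and your argument is the routine one it has in mind: conjugation by an invertible diagonal matrix commutes with taking principal submatrices, and transposition preserves determinants.
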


We next consider the converse of the above observation. Hartfiel and Loewy~\cite[Theorem~3]{Hartfiell84} showed that if $A$ and $B$ are irreducible matrices of size $n=2$ or $3$, then $A\PME B$ implies that $A\DE B$. Subsequently, Loewy~\cite[Theorem~1]{Loewy86} proved that for irreducible matrices $A$ and $B$ with $A$ having no cut, $A\PME B$ implies that $A\DE B$. Combining these results, we obtain the following lemma.
\begin{lemma}
\label{lem:no-cut-to-PME}
Let $A$ and $B$ be $n\times n$ matrices over $\F$ such that $A$ is irreducible and $A\PME B$. Then, 
\begin{enumerate}
\item if $n=2$ or $3$, then $A\DE B$
\item if $n\geq 4$ and $A$ has no cut, then $A\DE B$.
\end{enumerate}
\end{lemma}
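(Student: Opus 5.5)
This lemma just combines two published results, and the plan is simply to invoke each under its hypotheses. The one point to check is that the citations apply as stated. In particular, we need to confirm that irreducibility of only $A$ suffices, even though the original statements are phrased with both matrices irreducible.

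\textbf{Step 1: transfer irreducibility from $A$ to $B$.} Suppose $A \PME B$ and $A$ is irreducible. Apply \cref{lem:redToIr} with $s=1$ and the single block $T_1=[n]$. It tells us that $B$ admits a block upper triangular form with exactly one irreducible diagonal block, indexed by $[n]$. Hence $B$ is irreducible. So both matrices are irreducible, and the hypotheses of the cited theorems are met.

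\textbf{Step 2: the cases $n=2,3$.} Here I invoke \cite[Theorem~3]{Hartfiell84}. I would also sketch a direct check as a sanity test.

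For $n=2$, equal diagonal entries and equal determinants give $a_{12}a_{21}=b_{12}b_{21}\neq 0$. Choosing $D=\diag(1,d)$ with $d=b_{12}/a_{12}$ then gives $D^{-1}AD = B$ in the orientation where $B=D^{-1}AD$.

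For $n=3$, normalize both matrices by diagonal similarity. The order-2 minors fix the products $a_{ij}a_{ji}$. The order-3 minor then fixes the sum of the two cycle products $a_{12}a_{23}a_{31}+a_{13}a_{32}a_{21}$. Their product is also fixed, since it equals $\prod_{i<j} a_{ij}a_{ji}$. So the cycle pair is determined up to swapping. Since cycle products are the diagonal-similarity invariants of an irreducible $3\times 3$ matrix, this yields $A\DS B$ or $A\DS B^T$.

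The main caveat is irreducible $3\times 3$ matrices with some zero off-diagonal entries. In that case one would rely on the cited theorem rather than re-deriving it.

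\textbf{Step 3: the case $n\ge 4$ with $A$ having no cut.} This is exactly \cite[Theorem~1]{Loewy86}, using the irreducibility of $B$ from Step 1.

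The only real obstacle is reconciling our notion of a cut with Loewy's. Ours requires $2\le |X|\le n-2$ and $\rank A[X,\comp{X}]\le 1$, $\rank A[\comp{X},X]\le 1$. I would confirm that Loewy's definition coincides with this for irreducible matrices; otherwise the cited result does not transfer directly.
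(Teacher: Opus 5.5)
Your proposal is correct and takes the same approach as the paper, which also obtains this lemma by combining \cite[Theorem~3]{Hartfiell84} for $n=2,3$ with \cite[Theorem~1]{Loewy86} for $n\ge 4$. Your Step~1 deduces that $B$ is irreducible from the irreducibility of $A$, using \cref{lem:redToIr} with the single block $T_1=[n]$; this makes explicit a hypothesis of the cited theorems that the paper leaves implicit.
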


\subsection{Cut of a matrix}
\begin{definition}[Cut of a matrix]
\label{def:cut-of-matrix}
Let $A$ be an $n\times n$ matrix over $\F$ with $n\geq 4$. A subset $X$ of $[n]$ is called a \emph{cut} of the matrix $A$, if $2\leq |X|\leq n-2$, and the rank of the submatrices $A[X, \comp{X}]$ and $A[\comp{X}, X]$ are at most one.

In particular, if $A$ is an irreducible matrix, then $\rank A[X, \comp{X}]=\rank A[\comp{X}, X ]=1$.
\end{definition}

The following lemma demonstrates how to derive new cuts from an existing collection of cuts.
\begin{lemma}
\label{prop:old-to-new-cut}
Let $A$ be an $n\times n$ matrix over a field $\F$ with $n\geq 2$ and all the off-diagonal entries of $A$ are nonzero. Let $X_1$ and $X_2$ be two subsets of $[n]$ such that both $X_1\cap X_2$ and $\comp{X}_1\cap \comp{X}_2$ are nonempty. Let $\rank A[X_1, \comp{X}_1]=\rank A[X_2, \comp{X}_2]=1$. Then, $$\rank A[X_1\cap X_2, \comp{X}_1\cup \comp{X}_2]=\rank A[X_1\cup X_2, \comp{X}_1\cap \comp{X}_2]=1.$$
\end{lemma}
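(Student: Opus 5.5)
The plan is to show the two rank statements separately; they are symmetric, so we focus on $\rank A[X_1\cap X_2, \comp{X}_1\cup\comp{X}_2]=1$. Write $P=X_1\cap X_2$ and $Q=\comp{X}_1\cap\comp{X}_2$, both nonempty by hypothesis. Since all off-diagonal entries are nonzero and $P\cap(\comp{X}_1\cup\comp{X}_2)=\varnothing$, every entry of the submatrix is nonzero, so the rank is at least one. It therefore suffices to show the rank is at most one.

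For the upper bound, the rank-one hypothesis gives factorizations. Since $\rank A[X_1,\comp{X}_1]=1$ and all its entries are nonzero, there are vectors $u$ on $X_1$ and $v$ on $\comp{X}_1$, with all coordinates nonzero, such that $A[i,j]=u_iv_j$ for $i\in X_1$, $j\in\comp{X}_1$. Similarly, $A[i,j]=u'_iv'_j$ for $i\in X_2$, $j\in\comp{X}_2$, again with nonzero coordinates. Restricted to rows $P$, the columns are $\comp{X}_1\cup\comp{X}_2=\comp{X}_1\cup(\comp{X}_2\setminus\comp{X}_1)$. On columns in $\comp{X}_1$ the rows of $P$ are all proportional to $u|_P$, and on columns in $\comp{X}_2$ they are proportional to $u'|_P$. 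The key step is to check that $u|_P$ and $u'|_P$ are proportional. To do this, pick a column $q\in Q$, which lies in both $\comp{X}_1$ and $\comp{X}_2$. Then for $i\in P$ we have $u_iv_q=A[i,q]=u'_iv'_q$, so $u'|_P=(v_q/v'_q)\,u|_P$. Hence every column of $A[P,\comp{X}_1\cup\comp{X}_2]$ is a multiple of $u|_P$, which gives rank at most one.

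The second statement is proved the same way. The columns are now $Q$, and the rows are $X_1\cup X_2$. On rows in $X_1$ the entries are $u_iv_j$, and on rows in $X_2$ they are $u'_iv'_j$. Choosing a row $p\in P$, which lies in both $X_1$ and $X_2$, gives $u_pv_j=u'_pv'_j$ for $j\in Q$, so $v|_Q$ and $v'|_Q$ are proportional. Hence every row is a multiple of $v|_Q$, and the rank is exactly one by the same nonzero-entry argument.

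The only real point is the linking step through the common column $q$ (respectively the common row $p$). This is exactly where the nonemptiness of $Q$ (respectively $P$) and the nonvanishing of the entries are used: we need $v'_q\neq 0$ (respectively $u'_p\neq 0$) to divide. I do not expect any obstacle beyond this; the rest is routine bookkeeping.
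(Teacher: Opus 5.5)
Your proof is correct and is essentially the paper's argument: the paper also fixes a column $k\in\comp{X}_1\cap\comp{X}_2$ and uses the rank-one conditions on $A[X_1\cap X_2,\comp{X}_1]$ and $A[X_1\cap X_2,\comp{X}_2]$ to show that every column of $A[X_1\cap X_2,\comp{X}_1\cup\comp{X}_2]$ is a multiple of column $k$, and it handles the row statement symmetrically. Your outer-product factorizations and your nonzero-entry argument for the lower bound only spell out that step and the ``rank exactly one'' conclusion in more detail.
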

\begin{proof}
From the lemma hypothesis, we have $$\rank A[X_1\cap X_2, \comp{X}_1]=\rank A[X_1\cap X_2, \comp{X}_2]=1$$
Let $k\in \comp{X}_1\cap\comp{X}_2$. Then, the above condition implies that for all $j\in\comp{X}_1\cup\comp{X}_2$, the column $A[X_1\cap X_2, j]$ is a scalar multiple of $A[X_1\cap X_2, k]$; that is, $$A[X_1\cap X_2, j]=cA[X_1\cap X_2, k] \text{ for some }c\in\F.$$ Hence, it follows that $\rank A[X_1\cap X_2, \comp{X}_1\cup\comp{X}_2]=1$. A similar argument shows that $\rank A[X_1\cup X_2, \comp{X}_1\cap \comp{X}_2]=1.$
\end{proof}

Suppose that $A$ is an $n\times n$ matrix over $\F$. Then, a family $\calA(A)$ of ordered pair of disjoint subsets of $[n]$ is defined as follows
\begin{equation}
\label{eqn:cut-of-matrix-one}
\calA(A):=\left\{(S, \comp{S})\,\mid\, S\subseteq [n],\ 2\leq|S|\leq n-2,\ \rank A[S,\comp{S}]=1\right\}
\end{equation}
We next consider the following two subsets of $[n]$.
\begin{align}
\calB_1(A) &:= \left\{i\in[n] \,\mid\, \exists\, X\subseteq \nsubset{i} \text{ s.t. } (X+i, \comp X)\in \calA(A) \text{ and } (\comp X+i, X)\in\calA(A)\right\}\label{eqn:cut-of-matrix-two}\\
\calB_2(A) &:= \left\{i\in[n] \,\mid\, \exists\, X\subseteq \nsubset{i} \text{ s.t. } (X, \comp X+i)\in \calA(A) \text{ and } (\comp X, X+i)\in\calA(A)\right\}\label{eqn:cut-of-matrix-three}
\end{align}
Note that in the above equations, $\comp{X}=[n]\setminus (X+i)$. From the above definitions, Loewy~\cite{Loewy86} showed the following lemma:
\begin{lemma}[\cite{Loewy86}, Theorem~A.1]
\label{lem:no-cut-property}
Let $A$ be an $n\times n$ matrix over $\F$ such that $A$ is irreducible and has no cut. Let $\calB_1(A)$ and $\calB_2(A)$ be defined as~\cref{eqn:cut-of-matrix-two} and~\cref{eqn:cut-of-matrix-three}, respectively. Then, $$|\calB_1(A)\cup\calB_2(A)|\leq n-3.$$
\end{lemma}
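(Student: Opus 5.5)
The plan is a minimal-witness (uncrossing) argument: we exhibit three indices that lie in neither $\calB_1(A)$ nor $\calB_2(A)$.

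\textbf{Step 1 (unpack the witness).} If $i\in\calB_1(A)$ with witness $X$, write $[n]=X\sqcup\{i\}\sqcup\comp X$. The size constraints $2\le |X+i|\le n-2$ and $2\le|\comp X+i|\le n-2$ force $|X|\ge 2$ and $|\comp X|\ge 2$. The two rank conditions say the following. Row $i$, restricted to the columns of $\comp X$, lies in the rank-one row space of $A[X,\comp X]$. Row $i$, restricted to the columns of $X$, lies in the row space of $A[\comp X,X]$. So $i$ can be moved to either side of the partition $(X,\comp X)$ without destroying rank one on the ``row'' blocks. Symmetrically, $\calB_2(A)$ is the same statement for $A^T$, i.e.\ for columns. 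Since $A$ is irreducible, every nonempty block $A[P,Q]$ with $P\sqcup Q=[n]$ is nonzero, so each of these ranks is exactly one. Because $A$ has no cut, for $i\in\calB_1(A)$ neither $X+i$ nor $X$ can simultaneously satisfy the column condition; otherwise it would be a cut. This asymmetry is what we will exploit.

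\textbf{Step 2 (choose a minimal witness).} Among all pairs $(i,X)$ with $i\in\calB_1(A)\cup\calB_2(A)$ and $X$ a witness (for $\calB_1$ or $\calB_2$), choose one with $|X|$ minimal; call it $(i_0,X_0)$. The goal is to show that no $j\in X_0$ lies in $\calB_1(A)\cup\calB_2(A)$. This gives two excluded indices, since $|X_0|\ge 2$.

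\textbf{Step 3 (uncrossing).} Suppose $j\in X_0$ has a witness $Y$. Intersect the partitions $(X_0,i_0,\comp X_0)$ and $(Y,j,\comp Y)$ and apply the argument of \cref{prop:old-to-new-cut}: the unions and intersections of the ``rank-one sides'' again give rank-one blocks. The dense hypothesis there is used only to pick a nonzero reference column, and we replace it by irreducibility. This produces one of two outcomes. Either a partition satisfying both the row and the column rank-one conditions, i.e.\ a cut of $A$, which is a contradiction. Or a witness for $i_0$ or $j$ strictly contained in $X_0$, which contradicts minimality. The case analysis splits according to which of $\calB_1,\calB_2$ contain $i_0$ and $j$, and according to the relative position of $i_0$ with respect to $Y$.

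\textbf{Step 4 (a third excluded index).} Repeat the minimal choice on the other side: minimize $|\comp X|$ among witnesses, obtaining $(i_1,X_1)$. This excludes every element of $\comp X_1$. We then check that $X_0\cup\comp X_1$ has at least three elements; if it has exactly two, the two minimal witnesses coincide in a way that again yields a cut. This gives $|\calB_1(A)\cup\calB_2(A)|\le n-3$.

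I expect Step 3 to be the main obstacle. Without density, the uncrossing of \cref{prop:old-to-new-cut} can fail when the reference column in $A[X_1\cap X_2,\comp X_1\cap\comp X_2]$ is zero. There I would first try to use irreducibility to pick a different intersection pair with a nonzero entry. Failing that, I would show that a zero block of this kind itself forces either reducibility or a cut.
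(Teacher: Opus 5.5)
\textbf{Genuine gap.} The paper gives no proof of this lemma; it is imported from Loewy's Theorem~A.1. So your argument has to stand on its own, and it does not. The dichotomy claimed in Step~3 is false even for dense matrices, so the density issue you flag at the end is not the real obstacle. Take $n=5$; the diagonal entries are irrelevant, since every condition involves disjoint row and column sets:
\[
A=\begin{pmatrix}
\ast & 1 & 2 & 1 & 3\\
5 & \ast & 7 & 2 & 6\\
11 & 13 & \ast & 3 & 9\\
17 & 4 & 8 & \ast & 12\\
55 & 65 & 130 & 15 & \ast
\end{pmatrix}.
\]
All off-diagonal entries are nonzero, and checking the ten $2$-subsets shows that $A$ has no cut. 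Now $1\in\calB_1(A)$ with witness $X_0=\{2,3\}$, because $A[\{1,2,3\},\{4,5\}]$ and $A[\{1,4,5\},\{2,3\}]$ have rank one. Also $2\in\calB_2(A)$ with witness $\{1,4\}$, because $A[\{1,4\},\{2,3,5\}]$ and $A[\{3,5\},\{1,2,4\}]$ have rank one. For $n=5$ every witness has $|X|=|\comp X|=2$, so $(1,\{2,3\})$ is a minimum witness. Yet $2\in X_0$ lies in the union. Uncrossing these two witnesses therefore produces neither a cut nor a smaller witness. In fact $\calB_1(A)\cup\calB_2(A)=\{1,2\}$, so the bound $n-3$ is attained here.

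Step~4 cannot supply the third index either. The witness conditions for both $\calB_1$ and $\calB_2$ are invariant under $X\leftrightarrow\comp X$. So minimizing $|\comp X|$ is the same problem as minimizing $|X|$, and it may return $(i_0,\comp X_0)$, giving $\comp X_1=X_0$. The case $|X_0\cup\comp X_1|=2$ then occurs with no cut anywhere. Worse, for $n=5$ your Steps~2--3 apply to both $(i_0,X_0)$ and $(i_0,\comp X_0)$, since both are minimum witnesses. Together they would force $|\calB_1(A)\cup\calB_2(A)|\le 1$, which the example refutes. So the approach establishes a false intermediate statement; it is not merely incomplete. A correct argument has to handle crossing witnesses such as $(1,\{2,3\})$ and $(2,\{1,4\})$, each of which contains the other's pivot. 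In the example, the three excluded indices $3,4,5$ come from the sides $\{4,5\}$ and $\{3,5\}$ of these two witnesses that avoid the other pivot, not from the elements of a single minimal side.
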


\subsection{Cut-transpose operation} \label{sec: cut transpose}
Motivated by~\cite[Lemma~4.5]{Ahmadieh23}, \cite{Chatterjee25} introduced the concept of the \emph{cut-transpose operation} and provided a characterization of principal minor equivalence using this operation. Below, we describe the definition and some results that will be useful in our setting. For further details, see~\cite[Section~2.5]{Chatterjee25}.   

\begin{definition}[\cite{Chatterjee25}, Definition~2.10]
\label{def:cut-transpose-operation}
Let $A$ be an $n\times n$ irreducible matrix, and $X\subseteq [n]$ be a cut of $A$. Let  $q, u\in\F^{|\comp{X}|}$ such that $q^T$ is the first nonzero row of $A[X,\comp{X}]$, and $u$ is the first nonzero column of $A[\comp{X},X]$. Let $p,v\in\F^{|X|}$  such that $A[X,\comp{X}]=p\cdot q^T$ and $A[\comp{X},X]=u\cdot v^T$. That is,
\[ 
A=\begin{pmatrix}
A[X]        & p\cdot q^T\\
&\\
u\cdot v^T  & A[\comp{X}]
\end{pmatrix}
\] 
Then, the \emph{cut-transpose operation on $A$ with respect to $X$}, denoted by $\tw(A,X)$, transforms $A$ to the following matrix:
\[
\tw(A,X) =
\begin{pmatrix}
A[X]          & p\cdot u^T\\
&\\
q\cdot v^T  & A[\comp{X}]^T
\end{pmatrix}
 \] 
\end{definition}
Like diagonal equivalence (see~\cref{sec:diagonal-similarity}), the principal minors of a matrix preserved under the cut-transpose operation.
\begin{lemma}
\label{lem:PME-under-cut-transpose}
Let $A$ be an $n\times n$ irreducible matrix over a field $\F$. Let $X\subseteq [n]$ be a cut of the matrix $A$. Then, $A\PME \tw(A, X)$.
\end{lemma}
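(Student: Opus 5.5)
We compare $\det A[S]$ and $\det \tw(A,X)[S]$ for an arbitrary $S\subseteq[n]$. Write $S_1=S\cap X$ and $S_2=S\cap\comp{X}$. Both $A[S]$ and $\tw(A,X)[S]$ keep the block form of Definition~\ref{def:cut-transpose-operation}. The diagonal blocks are $A[S_1]$ and $A[S_2]$ (respectively $A[S_2]^T$). The off-diagonal blocks are restrictions of rank-one products, such as $p_{S_1}q_{S_2}^T$ and $u_{S_2}v_{S_1}^T$, where subscripts denote restriction of a vector to the indicated coordinates. If either $S_1$ or $S_2$ is empty, the claim is immediate, since $\det A[S_2]^T=\det A[S_2]$.

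In the general case, we expand the determinant of a $2\times 2$ block matrix whose off-diagonal blocks have rank at most one. Take the block matrix $\begin{pmatrix} P & ab^T\\ cd^T & Q\end{pmatrix}$. Working over the field of rational functions (or first treating $P$ and $Q$ as generic and invertible, then using polynomial identity and continuity/Zariski density), the Schur complement gives
\[
\det\begin{pmatrix} P & ab^T\\ cd^T & Q\end{pmatrix}=\det P\det(Q-cd^TP^{-1}ab^T)=\det P\det Q\,\bigl(1-(d^TP^{-1}a)(b^TQ^{-1}c)\bigr),
\]
where the second equality is the matrix determinant lemma. Multiplying through by $\det P\det Q$ in the form $P^{-1}\det P=\adj{P}$ gives the polynomial identity
\[
\det P\det Q-(d^T\adj{P}a)(b^T\adj{Q}c).
\]
This identity is valid for all $P,Q$, since both sides are polynomials agreeing on a dense set.

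We apply this with $P=A[S_1]$ and $Q=A[S_2]$. For $A$ the relevant vectors are $a=p$, $b=q$, $c=u$, $d=v$, so $\det A[S]=\det P\det Q-(v^T\adj{P}p)(q^T\adj{Q}u)$. For $\tw(A,X)$ the diagonal blocks are $P$ and $Q^T$, and the vectors become $a=p$, $b=u$, $c=q$, $d=v$. This gives $\det P\det Q^T-(v^T\adj{P}p)(u^T\adj{(Q^T)}q)$. Since $\adj{(Q^T)}=\adj{Q}^T$, we have $u^T\adj{Q}^Tq=q^T\adj{Q}u$ (a scalar equals its transpose). Also $\det Q^T=\det Q$. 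Hence the two expressions coincide, and $\det A[S]=\det\tw(A,X)[S]$ for every $S$.

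The only delicate point is justifying the adjugate form of the rank-one block determinant without invertibility assumptions. This can be handled by the generic-argument step above. Alternatively, one can expand directly via Laplace expansion along the rows of $S_1$: since the off-diagonal blocks have rank at most one, at most one row of $S_1$ can use a column of $S_2$. This yields the same bilinear expression combinatorially. Irreducibility is used only to guarantee that the vectors $p,q,u,v$ in the definition exist.
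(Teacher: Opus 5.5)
Your proof is correct. The identity
\[
\det\begin{pmatrix} P & ab^T\\ cd^T & Q\end{pmatrix}=\det P\,\det Q-(d^T\adj{P}a)(b^T\adj{Q}c)
\]
is a polynomial identity with integer coefficients. Deriving it over the rational function field, where generic $P,Q$ are invertible, and then specializing makes it valid over every field, finite ones included; the appeal to ``continuity'' adds nothing. Combined with $\det Q^T=\det Q$ and $u^T\adj{Q^T}q=q^T\adj{Q}u$, this handles every principal minor. The argument also does not depend on which rank-one factorizations $pq^T$, $uv^T$ are chosen.

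The paper does not prove this lemma in the text; it defers to Appendix~A of \cite{Chatterjee25-full-version}. The closest argument inside the paper is the proof of \cref{lem:cut-decomposition} in Appendix~\ref{subsec:proof-of-lem:cut-decomposition}. It takes the Generalized Laplace route that you mention only as an alternative:
\begin{itemize}
\item Expand along the rows of $S_1$.
\item Every $|S_1|\times|S_1|$ minor using two or more columns from $S_2$ vanishes, since those columns are multiples of $p_{S_1}$. This is the precise content of your remark that ``at most one row of $S_1$ can use a column of $S_2$''; it holds for the minors in that expansion, not for individual permutation terms.
\item The surviving terms give $\det P\det Q$ minus the product of the bordered determinants $\det\begin{pmatrix} P & p\\ v^T & 0\end{pmatrix}=-v^T\adj{P}p$ and $\det\begin{pmatrix} 0 & q^T\\ u & Q\end{pmatrix}=-q^T\adj{Q}u$.
\end{itemize}
These are exactly your bilinear forms. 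Replacing $(Q,q,u)$ by $(Q^T,u,q)$ merely transposes the second bordered matrix, which is the cut-transpose invariance.

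The Laplace route is purely combinatorial and needs no genericity step. Your route, via the Schur complement and the matrix determinant lemma, is shorter and makes the invariance transparent. Its cost is the generic-invertibility justification, which you supply adequately through the rational function field.
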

For a proof see~\cite[Appendix~A]{Chatterjee25-full-version} (a full version of~\cite{Chatterjee25}). The next lemma describes a relationship between the cuts of two matrices whose corresponding principal minors up to order $4$ are equal.
\begin{lemma}
\label{lem:Cut-and-PME-upto-4}
Let $A$ and $B$ be two $n\times n$ matrices with nonzero off-diagonal entries such that $\kPME{A}{B}{4}$ and $S=\{s_1, s_2\}$ is a cut of $A$. Then either $S$ is a cut of $B$, or for each $i\in\{1,2\}$, the set $X_i$, defined as $$X_i=\{t\in\comp{S}\,\mid\, A[S+t]\DS B[S+t]\}\cup\{s_i\},$$ is a cut of $B$ and $S$ is a cut in $\tw(B, X_i)$.
\end{lemma}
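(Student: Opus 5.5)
The plan is to work with the $3\times 3$ submatrices $A[S+t]$, $B[S+t]$ for $t\in\comp{S}$ and the $4\times 4$ submatrices $A[S+t+t']$, $B[S+t+t']$. Since $\kPME{A}{B}{4}$ and all off-diagonal entries are nonzero, each $A[S+t]$ is irreducible and $A[S+t]\PME B[S+t]$. By \cref{lem:no-cut-to-PME}, $A[S+t]\DE B[S+t]$. We normalize by a diagonal similarity of $B$, which does not change cuts or principal minors, and use \cref{obs:uniqueness-of-DS} to fix the entries of $B$ in row $s_1$. Because $\{s_1,s_2\}$ is a cut of $A$, the rows $A[S,\comp{S}]$ have rank one, and so do the columns $A[\comp{S},S]$. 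We then partition $\comp{S}$ into
\[
T_1=\{t\in\comp{S} : A[S+t]\DS B[S+t]\}, \qquad T_2=\comp{S}\setminus T_1 .
\]
On $T_2$ we have $A[S+t]\DS B[S+t]^T$ and not $\DS B[S+t]$. Note that $X_i=T_1+s_i$.

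The key local step is a case analysis on $4\times 4$ blocks. Take $t,t'\in\comp{S}$. Because $S$ is a cut of $A[S+t+t']$, the $4\times4$ matrix $A[S+t+t']$ has the cut $\{s_1,s_2\}$, and its PME-up-to-4 partner $B[S+t+t']$ is genuinely PME to it. We use explicit $4\times 4$ computations, or the known structure of PME for $4\times 4$ dense matrices with a cut (cut-transposes and diagonal similarity, as in \cite{Chatterjee25}). These should show the following:
\begin{enumerate}
\item If $t,t'$ lie in the same class, the $2\times2$ blocks $B[S,\{t,t'\}]$ and $B[\{t,t'\},S]$ have rank one.
\item If $t\in T_1$ and $t'\in T_2$, then $B[\{s_i,t\},\{s_{3-i},t'\}]$ and $B[\{s_{3-i},t'\},\{s_i,t\}]$ have rank one for each $i$.
\end{enumerate}
Concretely, the transpose on the $T_2$ side swaps the roles of the row vector and the column vector across the cut, and that is exactly what forces these rank-one patterns.

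Globalization comes next. If $T_2=\emptyset$ (or, symmetrically, $T_1=\emptyset$, after replacing $B$ by $B^T$-type diagonal similarity), then pairwise rank-one conditions on $2\times 2$ blocks with nonzero entries glue into $\rank B[S,\comp S]=\rank B[\comp S,S]=1$, so $S$ is a cut of $B$. The gluing works because any column of $B[S,\comp S]$ is a nonzero multiple of a fixed one. Otherwise, both classes are nonempty, and we show that $X_i=T_1+s_i$ is a cut of $B$ by the same gluing. For a row in $X_i$ and a column in $\comp{X_i}=T_2+s_{3-i}$, the $2\times2$ minors of $B[X_i,\comp{X_i}]$ all live inside some $B[S+t+t']$ or $B[S+t]$, and these vanish by the local step. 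Density then upgrades vanishing of all $2\times 2$ minors to rank one. One should check the size condition $2\le |X_i|\le n-2$, which holds since both $T_1$ and $T_2$ are nonempty.

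Finally, we verify that $S$ is a cut of $\tw(B,X_i)$ by writing $B[X_i,\comp{X_i}]=pq^T$ and $B[\comp{X_i},X_i]=uv^T$ and using \cref{def:cut-transpose-operation}. In $\tw(B,X_i)$, the block $\comp{X_i}$ is transposed. The entries between $s_i\in X_i$ and $T_2$, and between $s_{3-i}$ and $T_1$, are now given by $pu^T$ and $qv^T$. Rank one of $\tw(B,X_i)[S,\comp S]$ then reduces to the local identities from the $T_1$/$T_2$ blocks, which assert that the transposed side matches the $A$-pattern.

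I expect the main obstacle to be the $4\times4$ local lemma in the mixed case $t\in T_1$, $t'\in T_2$. That case requires extracting, purely from equality of principal minors of order at most 4, the precise rank-one relations across the mixed pair. My first attempt would be to normalize $B[S+t+t']$ via \cref{obs:uniqueness-of-DS}, write the order-3 and order-4 minor equations explicitly, and compare them with the two candidate forms (diagonal similarity versus cut-transpose).
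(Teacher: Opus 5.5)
Your architecture is the right one and matches the route the paper takes, since it defers to the proof of \cite[Lemma~3.5]{Chatterjee25} and notes that only minors of order at most $4$ are used. The architecture is: classify each $t\in\comp S$ by whether $A[S+t]\DS B[S+t]$, extract rank-one relations inside each $B[S+t+t']$, and glue using density. The gap is that your two local claims carry all the content and are only asserted.

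\textbf{Closing item 2.} Put $A'=A[S+t+t']$ and $B'=B[S+t+t']$.
\begin{itemize}
\item By \cref{lem:4PME}, either $B'\DE A'$, or, after possibly replacing $Z$ by $\comp Z$, $B'\DS\tw(A',Z)$ for some cut $Z$ of $A'$.
\item Directly from \cref{def:cut-transpose-operation} one checks three facts: $\tw(A',Z)[Z+x]\DS A'[Z+x]$ for $x\notin Z$; $\tw(A',Z)[\comp Z+y]\DS A'[\comp Z+y]^T$ for $y\in Z$; and $\tw(A',\comp Z)\DS\tw(A',Z)^T$.
\item If $B'\DE A'$ or $Z\in\{S,\comp S\}$, then $S$ is a cut of $B'$ and $t,t'$ lie in the same class.
\item If instead $Z=\{s,t\}$ with $s\in S$, then $B'[S+t]\DS A'[S+t]$ and $B'[S+t']\DS A'[S+t']^T$. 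Moreover $Z$ is a cut of $B'$, and by \cref{thm:cutmap} so is $S\Delta Z=\{s',t\}$. This is exactly your item~2.
\end{itemize}
No explicit minor equations are needed.

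\textbf{The trap in item 1.} Write $A[S,\comp S]=pq^T$ and $A[\comp S,S]=uv^T$. Comparing the two $3$-cycle products shows that $A[S+t]\DS A[S+t]^T$ if and only if
\[
A[s_1,s_2]\,p_{s_2}v_{s_1}=A[s_2,s_1]\,p_{s_1}v_{s_2},
\]
and this condition does not depend on $t$.
\begin{itemize}
\item When the condition holds, $T_2=\emptyset$. The case $Z=\{s,t\}$ then puts \emph{both} $t$ and $t'$ in $T_1$, so the argument ``same class excludes the mixed case'' breaks down.
\item You need one extra check: if $S$ and $Z$ are cuts of $A'$ and the condition holds, then $S\Delta Z$ is also a cut of $A'$. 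This is two lines after normalizing row $s_1$ of $A'$ to ones. Then $S$ is a cut of $\tw(A',Z)$ by \cref{thm:cutmap}, hence of $B'$.
\item When the condition fails, the classes are genuinely exclusive and the case analysis above is complete.
\end{itemize}

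\textbf{Two smaller fixes.}
\begin{itemize}
\item Not every $2\times2$ minor of $B[X_i,\comp{X_i}]$ lies in some $B[S+t+t']$; those with both rows in $T_1$ and both columns in $T_2$ do not. Density rescues you: the minors through row $s_i$ and column $s_{3-i}$ alone give $B[a,c]=B[a,s_{3-i}]B[s_i,c]/B[s_i,s_{3-i}]$, and those minors are exactly item~2.
\item The final step is immediate from \cref{thm:cutmap}. Both $X_{3-i}$ and its complement meet $X_i$ and $\comp{X_i}$. So $X_{3-i}$ being a cut of $B$, which you already have, is equivalent to $X_{3-i}\Delta X_i=S$ being a cut of $\tw(B,X_i)$. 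No further local identities are required.
\end{itemize}
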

A similar result was shown in~\cite[Lemma~3.5]{Chatterjee25}, with the key difference that their assumption is stronger: they require $A\PME B$, whereas the lemma above only assumes $\kPME{A}{B}{4}$. Nonetheless, the proof remains essentially the same. A close reading of their argument reveals that only the equality of principal minors of order at most 4 is used. For full details, refer to the proof of \cite[Lemma~3.5]{Chatterjee25}.

\subsection{Black-box evaluation with inverted inputs} \label{sec:bb with inverted inputs}

In \cref{aa_thm_redn_rank1_to_pma} and \cref{lem:bba_to_adj}, we have black-box access to polynomials such that some of the inputs to the black-box are inverted. As an example, given black-box access to a degree $d$ polynomial $f(y)$, consider the black-box for the polynomial $g(y,z) = z^d f(\frac{y}{z})$ constructed using the black-box for $f$. The inverse of $z$ is an input to this black-box for $g$. Since $g$ is a polynomial that can be evaluated at all points, we need to be able to evaluate the black-box for $g$ at points where $z$ is zero.

To generalise, suppose we have a degree $d$ polynomial $f(y_1,\dots,y_m,z_1,\dots,z_n)$ such that the inverses of $z_1,\dots,z_n$ are inputs to a black-box. Suppose further that we are trying to evaluate $f$ at a point where $k \leq n$ of the $z_i$ variables are set to zero. Assume without loss of generality the point is $(a_1,a_2,\dots,a_m,b_1,b_2,\dots,b_{n-k},0,0,\dots,0)$. We can set the variables $z_{n-k+1}, z_{n-k+2},\dots,z_n$ to a new variable $t$ to get
\[f(a_1,a_2,\dots,a_m,b_1,b_2,\dots,b_{n-k},t,t,\dots,t) = \sum_{i=0}^{d} c_i t^i\text{.}\]

Let $g(t) = f(a_1,\dots,a_m,b_1,\dots,b_{n-k},t,\dots,t)$. Note that $f(a_1,\dots,a_m,b_1,\dots,b_{n-k},0,\dots,0) = g(0) = c_0$ is the value we need to compute. If $\card{\F} > d+1$, we can find $d+1$ distinct non-zero values $\beta_0,\dots,\beta_{d}$ and use interpolation to compute $c_0$:

\[\begin{pmatrix}g(\beta_0) \\ g(\beta_1) \\ \vdots \\ g(\beta_d)\end{pmatrix} = \begin{pmatrix}
    1 & \beta_0 & \beta_0^2 & \cdots & \beta_0^d \\
    1 & \beta_1 & \beta_1^2 & \cdots & \beta_1^d \\
    \vdots & \vdots & \vdots & \ddots & \vdots \\
    1 & \beta_d & \beta_d^2 & \cdots & \beta_d^d \\
\end{pmatrix}\begin{pmatrix} c_0 \\ c_1 \\ \vdots \\ c_d\end{pmatrix}\text{.}\]

On the right is a Vandermonde matrix. We can then express $c_0$ as a linear combination $\sum_{i = 0}^d \alpha_i g(\beta_i)$ for $\alpha_i$'s computable from $\beta_i$'s. This requires $d+1$ queries to the black-box for $f$.

\begin{observation}\label{obs:black-box-var-inversion}
    Suppose we have black-box access to a degree $d$ polynomial $f(y_1,\dots,y_n)$ where some of the inputs to the black-box are inverted variables. Then if $\card{\F} > d+1$, we can evaluate the polynomial at any point using $d+1$ queries to the black-box.
\end{observation}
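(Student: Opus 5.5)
The plan is to reduce to evaluating a univariate polynomial at $0$ by interpolation at nonzero points, exactly as in the discussion preceding the statement. Fix a target point $\alpha=(\alpha_1,\dots,\alpha_n)\in\F^n$. Let $Z\subseteq[n]$ be the set of indices $i$ such that $y_i$ is fed to the black-box only through its inverse and $\alpha_i=0$. If $Z=\varnothing$, a single query at $\alpha$ suffices, since every inverted input $1/\alpha_i$ is then defined. Otherwise, introduce a fresh variable $t$ and consider the univariate polynomial
\[
g(t)=f(\alpha^{(t)}),\qquad \alpha^{(t)}_i=\begin{cases}t, & i\in Z,\\ \alpha_i, & i\notin Z.\end{cases}
\]
Its degree is at most $d$, because substituting constants and a single variable $t$ into a degree-$d$ polynomial cannot raise the degree. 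Moreover, $f(\alpha)=g(0)$.

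Next, pick $d+1$ distinct nonzero elements $\beta_0,\dots,\beta_d\in\F$; these exist because $|\F|>d+1$ guarantees at least $d+1$ nonzero elements. For each $j$, the point $\alpha^{(\beta_j)}$ has all coordinates in $Z$ equal to $\beta_j\neq 0$, and every other inverted coordinate is nonzero by the definition of $Z$. Hence each required inverse is defined, and the black-box can legitimately be queried there, returning $g(\beta_j)$. This uses $d+1$ queries in total.

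Finally, write $g(t)=\sum_{i=0}^d c_it^i$. The values $g(\beta_j)$ satisfy the Vandermonde system displayed before the statement. The Vandermonde matrix is invertible because the $\beta_j$ are distinct, so $c_0=g(0)=f(\alpha)$ is a fixed linear combination $\sum_j\gamma_j g(\beta_j)$. The coefficients $\gamma_j$ are the Lagrange weights $\prod_{k\neq j}\beta_k/(\beta_k-\beta_j)$ and depend only on the $\beta$'s.

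There is no real obstacle. The only points needing care are that the substitution keeps the degree at most $d$, which makes interpolation with $d+1$ points exact, and that the chosen nodes avoid $0$, so that no query ever requires inverting zero. The field-size hypothesis is precisely what supplies enough such nodes.
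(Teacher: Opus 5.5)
Your proof is correct and matches the paper's argument: substitute a single fresh variable $t$ for the inverted coordinates that vanish at the target point, query at $d+1$ distinct nonzero nodes (available because $|\F|>d+1$), and recover $g(0)=c_0$ by Vandermonde (equivalently Lagrange) interpolation. Your explicit weights $\prod_{k\neq j}\beta_k/(\beta_k-\beta_j)$ and the separate one-query case $Z=\varnothing$ are minor refinements of the same approach.
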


\begin{remarks}
    \item For multilinear polynomials, $d \leq n$ and we would require at most $n+1$ queries.
    \item The values $\alpha_0,\dots,\alpha_d$ only depend on $\beta_0,\dots,\beta_{d}$ and thus can be computed in advance.
\end{remarks}
\section{Equivalence between $\probA$ and $\probB$} \label{sec:equiv bet probA and probB}

In this section, we show a randomised polynomial-time equivalence between learning read-once determinants and the black-box principal minor assignment problem. We first show show a randomised polynomial-time reduction from learning polynomials of the form $f(y_1,\dots, y_n) = \det(A_0 + \sum_{i \in [n]} A_i y_i)$ where $A_1, \dots, A_n$ are rank-one matrices to learning a polynomial of the form $h(y_1,\dots,y_n) = \det(A+Y)$ where $Y = \diag(y_1,\dots,y_n)$. Observe that for the polynomial $f$, if $A_i = u_iv_i^T$ for some vectors $u_i$ and $v_i$, then $f(y_1,\dots,y_n) = \det(UYV^T + A_0)$ where $u_i$ and $v_i$ are the columns of $U$ and $V$ respectively. In addition to using these two representations interchangeably, we shall also use the following lemma to represent $f$ as a read-once determinant:

\begin{lemma}[{\cite[Lemma 4.3]{GT20}}]\label[lemma]{aa_lemma_rank1_to_read_once} For matrices $U,V \in \F^{r\times n}$, $A_0 \in \F^{r\times r}$ and $Y = \diag(y_1,\dots,y_n)$,
    \[\det(UYV^T + A_0) = \det\begin{pmatrix}
        I_n & Y & 0_{n,r} \\
        0_{n,n} & I_n & V^T \\
        U & 0_{r,n} & A_0
    \end{pmatrix}\text{.}\]
\end{lemma}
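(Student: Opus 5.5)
We will prove the identity by a Schur-complement computation with respect to the invertible upper-left block. Write the $(2n+r)\times(2n+r)$ matrix as
\[
M=\begin{pmatrix} P & Q\\ R & A_0\end{pmatrix},\qquad
P=\begin{pmatrix} I_n & Y\\ 0_{n,n} & I_n\end{pmatrix},\quad
Q=\begin{pmatrix} 0_{n,r}\\ V^T\end{pmatrix},\quad
R=\begin{pmatrix} U & 0_{r,n}\end{pmatrix}.
\]
The block $P$ is block upper triangular with identity diagonal blocks. Hence $\det P=1$, and $P$ is invertible over the polynomial ring (indeed over $\F[y_1,\dots,y_n]$). Its inverse is
\[
P^{-1}=\begin{pmatrix} I_n & -Y\\ 0_{n,n} & I_n\end{pmatrix}.
\]

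The first step is the standard Schur determinant formula, $\det M=\det P\cdot\det(A_0-RP^{-1}Q)$. This holds as a polynomial identity, since $P$ is invertible over $\F[y]$; equivalently, one can perform block elimination with unimodular block-triangular matrices.

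The second step is to compute $RP^{-1}Q$ directly. First,
\[
P^{-1}Q=\begin{pmatrix} -YV^T\\ V^T\end{pmatrix}.
\]
Multiplying on the left by $R=(U\;\;0_{r,n})$ then gives $RP^{-1}Q=-UYV^T$. Therefore
\[
\det M = 1\cdot\det(A_0+UYV^T),
\]
which is the claimed identity.

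There is no real obstacle here. The only points needing care are checking the sign in $P^{-1}$, which makes the minus signs cancel, and noting that the Schur complement formula is valid over the commutative ring $\F[y_1,\dots,y_n]$ because $\det P=1$ is a unit. Alternatively, one can avoid rings altogether. Left-multiply $M$ by the unimodular matrix
\[
\begin{pmatrix} I_{2n} & 0\\ -RP^{-1} & I_r\end{pmatrix},
\]
which yields a block upper triangular matrix with diagonal blocks $P$ and $A_0+UYV^T$. Reading off the determinant as the product of the determinants of the diagonal blocks gives the result.
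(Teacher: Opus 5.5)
Your proof is correct. The computations $P^{-1}=\begin{pmatrix} I_n & -Y\\ 0_{n,n} & I_n\end{pmatrix}$, $P^{-1}Q=\begin{pmatrix}-YV^T\\ V^T\end{pmatrix}$ and $RP^{-1}Q=-UYV^T$ all check out. Your unimodular block-elimination variant also makes the identity hold directly over $\F[y_1,\dots,y_n]$, with no invertibility or field-size caveats.

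The paper does not prove this lemma itself; it imports it from \cite{GT20}. Your Schur-complement computation is a complete, self-contained verification, in the same spirit as the block Gaussian elimination the paper uses to prove \cref{aa_lemma_detsplit}.
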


We show that for $f(y_1,\dots, y_n) = \det(A_0 + \sum_{i \in [n]} A_i y_i)$, we can assume the matrices $A_i$ are no larger than $n \times n$.

\begin{lemma}\label[lemma]{aa_lemma_rank1_size_redn}
For a non-zero polynomial $f(y_1, \dots, y_n) = \det(UYV^T + A_0)$ with $U,V \in \F^{r \times n}$ and $A_0 \in \F^{r \times r}$, we can assume without loss of generality that $r \leq n$ and $U$ is full-rank.
\end{lemma}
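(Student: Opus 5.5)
The plan is to reduce the size of the representation in two stages. First we make $U$ full rank with a change of basis on the left. Then, if $r > n$, we eliminate the extra rows using the invertible part of $A_0$ that is forced by $f \neq 0$. We always use the freedom that multiplying $UYV^T + A_0$ on the left and right by invertible constant matrices only rescales the determinant by a nonzero constant. That constant can be absorbed into $A_0$ and one $U$-column, for instance by scaling a single row of $U$ together with the same row of $A_0$.

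Write $M(Y) = UYV^T + A_0$. The first step handles the case $\rank U = k < r$. Choose an invertible $P \in \F^{r\times r}$ such that $PU = \begin{pmatrix} U' \\ 0\end{pmatrix}$ with $U' \in \F^{k\times n}$ of full row rank. Then
\[PM(Y) = \begin{pmatrix} U'YV^T + A_0' \\ A_0''\end{pmatrix},\]
where $A_0''$ is a constant $(r-k)\times r$ block. Since $f \neq 0$, the block $A_0''$ has full row rank $r-k$. Choose an invertible $Q$ that column-reduces $A_0''$ to $\begin{pmatrix} 0 & I_{r-k}\end{pmatrix}$. The matrix $PM(Y)Q$ then has the form
\[\begin{pmatrix} U'YV'^T + C & * \\ 0 & I_{r-k}\end{pmatrix},\]
with $V'^T$ consisting of the first $k$ columns of $V^TQ$. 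Its determinant equals $\det(U'YV'^T + C)$. This gives $f = c\cdot \det(U'YV'^T + C)$ for a nonzero constant $c = \det(P)^{-1}\det(Q)^{-1}$, and we absorb $c$ as described above. The new $U'$ has full row rank $k$, and $k \le n$ because $U'$ has $n$ columns. So after this step $U$ has full row rank $r = \rank U \le n$.

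This also settles the $r\le n$ claim: full row rank of an $r\times n$ matrix already forces $r\le n$. The interpretation of "full-rank" needs care. I am reading it as full row rank, which is exactly what makes $r\le n$ automatic. If the paper instead wants $U$ to be $n\times n$ invertible, we can pad. Replace $M$ by $\begin{pmatrix} M & 0\\ 0 & I\end{pmatrix}$ and append rows to $U$ completing it to an invertible matrix, with $V$ padded by zeros. This needs care because the appended rows of $UYV^T$ must vanish. That works if $V$'s new columns are zero, since $UYV^T$ restricted to the new rows is $U_{\text{new}}YV^T$. To make that vanish we would instead need to pad $V$, so the padding stage is the delicate point. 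I would only perform it if the later use of the lemma needs a square $U$.

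The main obstacle is step one, specifically justifying that $A_0''$ has full row rank from $f\neq 0$ and tracking the Schur-complement shape correctly. The rest is bookkeeping of nonzero scalars.
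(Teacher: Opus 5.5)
Your argument is correct and is essentially the paper's proof: the paper also row-reduces $U$, uses $f\neq 0$ to force the constant block (its $C_2$) to have full row rank, column-reduces that block to $\begin{pmatrix}0 & I_{r-k}\end{pmatrix}$, and discards the identity block. The only difference is that the paper performs these operations on the read-once determinant of \cref{aa_lemma_rank1_to_read_once}, whereas you work directly on $UYV^T+A_0$ via $P$ and $Q$.

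The paper does mean full row rank ($\hat U\in\F^{k\times n}$), so your padding paragraph is unnecessary. Its worry is also unfounded: if you pad $V$ with zero rows, the enlarged matrix is block lower triangular with an identity block, so the appended rows of $U$ never need to make anything vanish.
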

\begin{proof}
Suppose $k \leq n$ is the rank of the matrix $U$. By \cref{aa_lemma_rank1_to_read_once},
\[
    f(y_1, \dots, y_n) = \det \begin{pmatrix}
        I_n & Y & 0_{n,r} \\
        0_{n,n} & I_n & V^T \\
        U & 0_{r,n} & A_0
    \end{pmatrix}\text{.}
\]

Performing row transformations on $U$ and absorbing resulting scalars into the last row,
\[
    f(y_1, \dots, y_n) = \det \begin{pmatrix}
        I_n & Y & 0_{n,r} \\
        0_{n,n} & I_n & V^T \\
        \hat{U} & 0_{k,n} & C_1 \\
        0_{r-k, n} & 0_{r-k, n} & C_2
    \end{pmatrix}\text{,}
\]
where $\hat{U} \in \F^{k \times n}$, $C_1 \in \F^{k \times r}$, $C_2 \in \F^{(r-k) \times r}$ and $\hat{U}$ is full-rank. For $f$ to be non-zero, $C_2$ must be full rank. Performing row and column transformations on $C_2$ and absorbing scalars into a column of $\hat{A_0}$, we have
\begin{align*}
    f(y_1, \dots, y_n) &= \det \begin{pmatrix}
        I_n & Y & 0_{n,k} & 0_{n,r-k} \\
        0_{n,n} & I_n & \hat{V}^T & \hat{V}_1^T \\
        \hat{U} & 0_{k,n} & \hat{A_0} & C_3 \\
        0_{r-k, n} & 0_{r-k, n} & 0_{r-k, k} & I_{r-k}
    \end{pmatrix} \\
    &= \det \begin{pmatrix}
        I_n & Y & 0_{n,k} \\
        0_{n,n} & I_n & \hat{V}^T \\
        \hat{U} & 0_{k,n} & \hat{A_0}
    \end{pmatrix} \\
    &= \det(\hat{U}Y\hat{V}^T + \hat{A_0})\text{,}
\end{align*}
where $\hat{V} \in \F^{k \times n}$ and $\hat{A_0} \in \F^{k \times k}$.
\end{proof}

The following lemma shall be helpful in converting a read-once determinant back into a symbolic determinant under rank 1 restriction.

\begin{lemma}\label[lemma]{aa_lemma_detsplit}
    If $A \in \F^{(n-r) \times r}, B \in \F^{(n-r) \times (n-r)}$ and $C \in \F^{r \times (n-r)}$, then \[\det\begin{pmatrix}A & B \\ I_r & C\end{pmatrix} = 
        (-1)^{r(n-r)}\det\left(\begin{pmatrix} A & B \end{pmatrix} \begin{pmatrix} -C \\ I_{n-r} \end{pmatrix}\right)\text{.}\]
\end{lemma}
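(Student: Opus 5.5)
The plan is a single block column operation that clears $C$, followed by a block column permutation that leaves a block upper triangular matrix. First, note that $\begin{pmatrix} A & B \end{pmatrix} \begin{pmatrix} -C \\ I_{n-r} \end{pmatrix} = B - AC$, which is an $(n-r)\times(n-r)$ matrix. So the claim is equivalent to
\[
\det\begin{pmatrix}A & B \\ I_r & C\end{pmatrix} = (-1)^{r(n-r)}\det(B - AC).
\]

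Next, right-multiply the block matrix by
\[
E=\begin{pmatrix} I_r & -C \\ 0_{n-r,r} & I_{n-r}\end{pmatrix}.
\]
This matrix is block upper triangular with identity diagonal blocks, so $\det E = 1$. Carrying out the block multiplication gives
\[
\begin{pmatrix}A & B \\ I_r & C\end{pmatrix} E = \begin{pmatrix}A & B - AC \\ I_r & 0_{r,n-r}\end{pmatrix}.
\]
Hence the determinant we want equals the determinant of the right-hand matrix.

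Finally, move the last $n-r$ columns in front of the first $r$ columns, keeping their relative order. Each of the $n-r$ moved columns passes over exactly $r$ columns, so this permutation has sign $(-1)^{r(n-r)}$. The result is
\[
\begin{pmatrix} B - AC & A \\ 0_{r,n-r} & I_r\end{pmatrix},
\]
which is block upper triangular. Its determinant is $\det(B-AC)\cdot\det I_r = \det(B-AC)$. Combining the three steps gives the stated identity.

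The only point that needs care is counting the sign of the column permutation correctly; no other step is a real obstacle.
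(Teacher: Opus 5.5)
Your proof is correct and essentially the same as the paper's: both reduce the determinant to $\det(B-AC)$ using one block elimination that clears $C$ and one block permutation that contributes the sign $(-1)^{r(n-r)}$. The only difference is the order and orientation of these two steps. The paper first moves the rows $\begin{pmatrix} I_r & C\end{pmatrix}$ to the top and then eliminates, whereas you eliminate first and then permute columns; this difference is immaterial.
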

\begin{proof}
    Moving $\begin{pmatrix}I_r & C\end{pmatrix}$ to the top (each of the $r$ rows being swapped $n-r$ times, totalling $r(n-r)$ swaps) and using block Gaussian elimination,
    \begin{align*}
        \det\begin{pmatrix}A & B \\ I_r & C\end{pmatrix} &= (-1)^{r(n-r)}\det\begin{pmatrix}I_r & C \\ A & B\end{pmatrix} \\
            &= (-1)^{r(n-r)}\det\begin{pmatrix}I_r & 0_{r,n-r} \\ A & B - AC\end{pmatrix} \\
            &= (-1)^{r(n-r)}\det(B - AC) \\
            &= (-1)^{r(n-r)}\det\left(\begin{pmatrix} A & B \end{pmatrix} \begin{pmatrix} -C \\ I_{n-r} \end{pmatrix}\right)\text{.}
    \end{align*}
\end{proof}

\subsection{$\probA$ to $\probB$}

We first proceed similarly to \cite[Theorem 4.3]{GT20} in homogenising the polynomial $f$. Then we use the Isolation Lemma \cite{MVV87} to reduce learning $f$ to the black-box principal minor assignment problem.

\begin{theorem}\label[theorem]{aa_thm_redn_rank1_to_pma}
    Let $\card{\F} > n+1$. Given black-box access to $f(y_1,\dots,y_n) = \det(A_0 + \sum_{i \in [n]} A_i y_i)$ where $A_0, \dots, A_n \in \F^{r \times r}$ and $A_1, \dots, A_n$ are rank 1, we can, in randomised polynomial-time, construct black-box access to a polynomial $h(z_1,\dots,z_n) = \det(A + Z)$ where $A \in \F^{n \times n}$ and $Z = \diag(z_1, \dots, z_n)$. Furthermore, given a matrix $A' \in \F^{n \times n}$ such that $\det(A + Z) = \det(A' + Z)$, we can recover matrices $B_0, \dots, B_n \in \F^{n \times n}$ such that $f(y_1,\dots,y_n) = \det(B_0 + \sum_{i \in [n]} B_iy_i)$ and $B_1, \dots, B_n$ are rank 1.
\end{theorem}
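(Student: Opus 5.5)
By \cref{aa_lemma_rank1_size_redn} we may assume $f=\det(UYV^T+A_0)$ with $r\le n$ and $U\in\F^{r\times n}$ of full rank $r$. If $f$ is the zero polynomial, which we detect with high probability by random evaluation, we output all-zero data, taking $B_1,\dots,B_n$ to be arbitrary rank-one matrices and $B_0=0$. The plan follows the sketch in Step~0.

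First we homogenise. Let $f'(y,t)=t_1\cdots t_n\, f(y_1/t_1,\dots,y_n/t_n)$. Black-box access to $f'$ is available via \cref{obs:black-box-var-inversion}; since $f'$ is multilinear of degree $n$, this needs $|\F|>n+1$. Each $y_i$ has degree at most one in $f$, so $f'$ is multilinear and homogeneous of degree $n$, and every monomial picks exactly one of $y_i,t_i$ for each $i$.

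Next we expand $f$ by Cauchy--Binet:
\[
\det(A_0+UYV^T)=\sum_{T}\det(\cdot)\,\prod_{i\in T}y_i ,
\]
where the coefficient of $y_T$ is a combination of minors of $U$, $V$ and $A_0$. To obtain the matrix form directly, we use \cref{aa_lemma_rank1_to_read_once} and write $f'$ as a read-once determinant of size $2n+r$ in which each $y_i$ and each $t_i$ appears once.

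We then apply the Isolation Lemma to the multilinear $f'$. We substitute $y_i\mapsto y_i w^{a_i}$ and $t_i\mapsto t_i w^{b_i}$ with random weights, and find, by interpolation in $w$ and partial evaluation, a monomial $\prod_{i\in T}y_i\prod_{i\notin T}t_i$ with nonzero coefficient $\gamma$. The isolation succeeds with high probability when the weights are drawn from $[2n]^2$-sized ranges. Once the minimum-weight monomial is isolated, its support $T$ can be read off by setting individual variables to $0$ and checking whether the lowest-degree coefficient in $w$ survives. We call these variables $z_i$ and the complementary ones $\bar z_i$.

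The main obstacle is the structural claim $f'=\gamma\det(Z+\bar ZA)$ for some $A\in\F^{n\times n}$. To prove it, write $f'(x,\bar x)$ as a function of the coordinates $x_i$ (meaning $y_i$ or $t_i$) such that $f'$ is affine in each pair and vanishes in the appropriate degree. Concretely, $f'=\det\big(\begin{smallmatrix}UY & A_0\\ T\ \text{part}\end{smallmatrix}\big)$-type form: from \cref{aa_lemma_rank1_to_read_once}, $f' $ equals $\det(M_0\,\diag(y,t)\,M_1)$ for fixed matrices $M_0\in\F^{n\times 2n}$ and $M_1\in\F^{2n\times n}$, obtained by padding to size $n$ via the full rank of $U$. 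Cauchy--Binet then shows that the coefficient of $z_1\cdots z_n$ is $\det(M_0[\cdot,Z])\det(M_1[Z,\cdot])=\gamma\neq 0$. Multiplying on the left by $M_0[\cdot,Z]^{-1}$ and on the right by $M_1[Z,\cdot]^{-1}$ normalises the $Z$-columns and rows to $I_n$, which yields
\[
f'=\gamma\det\big([I_n\ \ C]\diag(Z,\bar Z)[I_n;\ A'']\big)=\gamma\det(Z+C\bar ZA'').
\]
Because $\bar Z$ is diagonal, the product $C\bar Z A''$ has rank-one terms $c_i a_i^T\bar z_i$. Comparing coefficients of $\bar z_i z_{[n]-i}$ (via a further diagonal rescaling) lets us absorb $C$, giving the form $\det(Z+\bar ZA)$. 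Establishing the $n\times n$ factorisation $M_0\diag M_1$ cleanly is the delicate step: I would derive it from the $2n+r$ read-once determinant by Schur complement, in the manner of \cref{aa_lemma_detsplit}.

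Finally, we set $\bar z_i=1$ and divide by $\gamma$, which gives black-box access to $h=\det(A+Z)$. Given $A'$ with $\det(A'+Z)=h$, the multilinear identity $\det(Z+\bar ZA)=\bar z^{[n]}\det(\bar Z^{-1}Z+A)$ shows that $f'=\gamma\det(Z+\bar ZA')$ as polynomials. We then substitute back $t_i=1$ and the original $y_i$. Each variable appears in exactly one diagonal position, so $f=\det(B_0+\sum_i B_iy_i)$. Here $B_0$ collects the constant rows: it is the identity in the rows of $z_i=t_i$, and rows of $\gamma$-scaled $A'$ where $\bar z_i=t_i$. The matrix $B_i$ is $e_ie_i^T$ or $e_i (A'[i,\cdot])$, according to the role of $y_i$, so each $B_i$ is rank one. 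The factor $\gamma$ is absorbed into one row of $B_0$ and of the $B_i$ sharing that row.
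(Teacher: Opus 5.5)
\paragraph{Verdict.} Your outline is the paper's: homogenise, isolate a monomial $z_1\cdots z_n$, normalise, set $\bar z=\mathbf 1$, and re-homogenise for the recovery. The gap is in the step you yourself call the main obstacle, which is not established.

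\paragraph{Where the argument breaks.} First, the factorisation $f'=\det(M_0\,\diag(y,t)\,M_1)$ is only asserted; ``padding to size $n$ via the full rank of $U$'' does not produce it. Second, even granting it, your normalisation gives $f'=\gamma\det(Z+C\bar ZA'')$, where $C=M_0[\cdot,Z]^{-1}M_0[\cdot,\bar Z]$ is an arbitrary invertible matrix. Your way of absorbing $C$ does not work, for two reasons.
\begin{enumerate}
\item The coefficients of $\bar z_i z_{[n]-i}$ only record $C[i,i]A''[i,i]$, together with the vanishing of $C[j,i]A''[i,j]$ for $j\neq i$. They say nothing about the higher coefficients $\det(C[S])\det(A''[S])$.
\item A diagonal similarity $D^{-1}(\cdot)D$ only replaces $C$ by $D^{-1}C$, which has the same off-diagonal support. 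So no diagonal rescaling turns $C\bar ZA''$ into $\bar ZA$ unless $C$ is already diagonal.
\end{enumerate}

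\paragraph{The paper's fix.} The missing observation is \cref{aa_obs_redn_rank1_L_structure}. Carry out the Schur-complement step you mention on the matrix of \cref{aa_lemma_rank1_to_read_once}, with $T$ in the top-left block. All variables lie in the first $n$ rows $(T\ \ Y\ \ 0)$. The other $n+r$ rows are constant and have full row rank whenever $f\neq 0$; this is equivalent to $(U\ \ A_0)$ having full row rank, not a condition on $U$ alone. Row-reduce only those rows to $(I_{n+r}\ \ H)$, after a column permutation, and apply \cref{aa_lemma_detsplit}. This gives $f'=\det(LWR^T)$, where $L$ is a column permutation of $(I_n\ \ I_n)$ whose $y_i$-column and $t_i$-column both equal $e_i$. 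Hence $M_0[\cdot,Z]=M_0[\cdot,\bar Z]=I_n$, and $C=I_n$ comes for free.

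\paragraph{An alternative fix.} You can instead keep an arbitrary factorisation and use the identity from your own final paragraph. Set $\Phi:=\det(Z+C\bar ZA'')=f'/\gamma$. Every monomial of $\Phi$ contains exactly one of $z_i,\bar z_i$, so
$\Phi(z,\bar z)=\bar z_1\cdots\bar z_n\,\Phi(z_1/\bar z_1,\dots,z_n/\bar z_n,\mathbf 1)=\bar z_1\cdots\bar z_n\det(\bar Z^{-1}Z+CA'')=\det(Z+\bar Z\,CA'')$.
That is, take $A=CA''$. This route needs no structure of $M_0$, but you still have to actually derive some $n\times n$ factorisation; the Schur-complement argument above supplies one.

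With either fix, the rest of your argument goes through as in the paper: black-box access to $h$ at $\bar z=\mathbf 1$, then recovery of rank-one $B_i$ from the rows of $Z+\bar ZA'$.

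\paragraph{A minor slip.} In the $f=0$ case the rank-one $B_i$ cannot be arbitrary: $B_0=0$ with $B_i=e_ie_i^T$ gives $y_1\cdots y_n\neq 0$. Take them all equal to $e_1e_1^T$ instead.
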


\begin{proof}
Let $U$ and $V \in \F^{r \times n}$ be the matrices such that $f(y_1, \dots, y_n) = \det(UYV^T + A_0)$ where $Y = \diag(y_1, \dots, y_n)$. We first homogenise the multilinear polynomial $f$ as
\[ f'(y_1, y_2, \dots, y_n, t_1, t_2, \dots, t_n) = t_1t_2\dots t_nf(\frac{y_1}{t_1}, \frac{y_2}{t_2}, \dots, \frac{y_n}{t_n})\text{.}\]

where $t_1,\dots,t_n$ are a fresh set of variables. Due to \cref{obs:black-box-var-inversion}, we can obtain black-box access to $f'$. By \cref{aa_lemma_rank1_to_read_once}, we have
\begin{align*}
    f'(y_1, \dots, y_n, t_1, \dots, t_n) &= \det(T)\det\begin{pmatrix}
        I_n & T^{-1}Y & 0_{n,r} \\
        0_{n,n} & I_{n} & V^T \\
        U & 0_{r,n} & A_0
    \end{pmatrix} \\
    &= \det\begin{pmatrix}
        T & Y & 0_{n,r} \\
        0_{n,n} & I_{n} & V^T \\
        U & 0_{r,n} & A_0
    \end{pmatrix}\text{,}
\end{align*}
where $T = \diag(t_1, t_2, \dots, t_n)$. Observe that the matrix $\begin{pmatrix}U & A_0\end{pmatrix}$ must be full-rank for $f'$, and by extension $f$, to be non-zero. We now perform elementary row transformations on the last $n+r$ rows, and permute the columns to get
\[f'(y_1, \dots, y_n, t_1, \dots, t_n) = \alpha\det\begin{pmatrix}
        \begin{array}{ccc} \leftarrow & L'W' & \rightarrow \end{array} \\
        \begin{array}{cc} I_{n+r} & H \end{array} \\
    \end{pmatrix}\text{,}\]
where, letting $P$ be the $(2n+r) \times (2n+r)$ permutation matrix corresponding to the column permutations performed,
\begin{itemize}
    \item $L'$ is the $n \times (2n+r)$ matrix $\begin{pmatrix}I_n & I_n & 0_{n,r}\end{pmatrix} P$.
    \item $W'$ is the $(2n+r) \times (2n+r)$ diagonal matrix $P^T\begin{pmatrix}
        T & 0_{n,n} & 0_{n,r} \\
        0_{n,n} & Y & 0_{n,r} \\
        0_{r,n} & 0_{r,n} & 0_{r,r} \\
    \end{pmatrix}P$.
    \item $H$ is the $(n+r) \times n$ matrix resulting from the transformations.
    \item $\alpha$ is the scalar resulting from the transformations.
\end{itemize}
Let $R' = \begin{pmatrix}-H^T & I_n\end{pmatrix}$. Observe that $R'$ has size $n \times (2n+r)$. Then by \cref{aa_lemma_detsplit},
\[f'(y_1, \dots, y_n, t_1, \dots, t_n) = (-1)^{n(n+r)}\alpha\det(L'W'R'^T)\text{.}\]

We can drop the columns of $L'$ and $R'$ corresponding to the zeroes in the diagonal of $W'$ to get \[f'(y_1, \dots, y_n, t_1, \dots, t_n) = \det(LWR^T)\] with $L$ and $R$ being the resulting $n \times 2n$ matrices and $W$ being the $2n \times 2n$ diagonal matrix of the $y_i$ and $t_i$ variables. The constant $(-1)^{n(n+r)}\alpha$ is absorbed into the matrix $R$.

\begin{observation}\label[observation]{aa_obs_redn_rank1_L_structure}
    $L$ is a column permutation of the matrix $\begin{pmatrix}I_n & I_n\end{pmatrix}$. For all $i$, the columns of $L$ corresponding to $y_i$ and $t_i$ are equal.
\end{observation}

Observe that $f$ is a homogeneous multilinear polynomial of degree $n$. We can use the Isolation Lemma \cite[Section 3]{MVV87} to isolate a monomial \cite[Section 4.1]{NSV94}. Let this monomial comprise variables $z_1, z_2, \dots z_n$ where $z_i = y_i \text{ or } t_i$. Similarly, let $\bar{z}_i = t_i$ if $z_i = y_i$, and vice-versa. By \cref{aa_obs_redn_rank1_L_structure}, the columns in $L$ corresponding to $z_i$s must form an identity matrix. This lets us write $f'$ as

\[f'(y_1, \dots, y_n, t_1, \dots, t_n) = \det\left(\begin{pmatrix}I_n & I_n\end{pmatrix}\begin{pmatrix}Z & 0_{n,n} \\ 0_{n,n} & \bar{Z}\end{pmatrix} \begin{pmatrix}R_1 \\ R_2\end{pmatrix}\right)\text{,}\]
where $Z = \diag(z_1, \dots, z_n)$, $\bar{Z} = \diag(\bar{z}_1, \dots, \bar{z}_n)$, $R_1$ are the rows of $R^T$ corresponding to $Z$, and $R_2$ the rest of the rows of $R^T$. Since the coefficient of $z_1\dots z_n$ is $\det(R_1)$ and is non-zero, we can multiply $f'$ on the right by $\det({R_1}^{-1})$, letting $A = R_2{R_1}^{-1}$, to get
\begin{align*}
    g(y_1, \dots, y_n, t_1, \dots, t_n) &= f'(y_1, \dots, y_n, t_1, \dots, t_n) \det({R_1}^{-1}) \\
        &= \det\left(\begin{pmatrix}I_n & I_n\end{pmatrix}\begin{pmatrix}Z & 0_{n,n} \\ 0_{n,n} & \bar{Z}\end{pmatrix} \begin{pmatrix}I_n \\ A\end{pmatrix}\right) \\
        &= \det\left(\begin{pmatrix}Z & \bar{Z}\end{pmatrix}\begin{pmatrix}I_n \\ A\end{pmatrix}\right) \\
        &= \det\left(Z + \bar{Z}A\right)\text{.}
\end{align*}

Let $h(z_1, \dots, z_n, \bar{z}_1, \dots, \bar{z}_n) = \bar{z}_1\bar{z}_2\dots\bar{z}_{n}g(z_1, z_2, \dots, z_n, \frac{1}{\bar{z}_1}, \frac{1}{\bar{z}_2}, \dots, \frac{1}{\bar{z}_n})$. We see
\begin{align*}
    h(z_1, z_2, \dots, z_n, \bar{z}_1, \dots, \bar{z}_{n}) &= \det(\bar{Z})\det(Z+\bar{Z}^{-1}A) \\
        &= \det(\bar{Z}Z + A)\text{.}
\end{align*}

\begin{observation}\label[observation]{aa_obs_redn_rank1_homogenisation_correspondence}
    In the monomials of $h$, $z_i$ appears if and only if $\bar{z}_i$ appears. This is evident from the fact that the $y$ variables arose out of the homogenisation of $f$; the variable $t_i$ appears in a monomial of $g$ if and only if $y_i$ does not.
\end{observation}

Suppose we have a matrix $A' \in \F^{n\times n}$ such that $\det(Z + A') = h(z_1, \dots, z_n, 1, 1, \dots, 1) = g(z_1, \dots, z_n, 1, 1, \dots, 1) = \det(Z + A)$. By \cref{aa_obs_redn_rank1_homogenisation_correspondence}, $h(z_1,\dots,z_n,\bar{z}_1,\dots,\bar{z}_n) = \det(\bar{Z}Z + A')$. This means
\begin{align*}
    g(y_1, \dots, y_n, t_1, \dots, t_n) &= \det\left(Z + \bar{Z}A'\right) \\
        &= \det\left(\begin{pmatrix}Z & \bar{Z}\end{pmatrix}\begin{pmatrix}I_n \\ A'\end{pmatrix}\right) \\
        &= \det\left(\begin{pmatrix}I_n & I_n\end{pmatrix}\begin{pmatrix}Z & 0_{n,n} \\ 0_{n,n} & \bar{Z}\end{pmatrix} \begin{pmatrix}I_n \\ A'\end{pmatrix}\right)\text{.}
\end{align*}

Let $l_{y_i}$ be the column of $\begin{pmatrix}I_n & I_n\end{pmatrix}$ corresponding to $y_i$, and $r_{y_i}^T$ the row of $\begin{pmatrix}I_n \\ A'\end{pmatrix}$ corresponding to $y_i$. Similarly let $l_{t_i}$ and $r_{t_i}$ be the same for $t_i$. We see \[g(y_1, \dots, y_n, t_1, \dots, t_n) = \det \left( \left(\sum_{i \in [n]} l_{y_i}r_{y_i}^T y_i\right) + \left(\sum_{i \in [n]} l_{t_i}r_{t_i}^T t_{i}\right)\right)\text{.}\] Setting all $t_i$ variables to $1$ and $C = \sum_{i \in [n]} l_{t_i}r_{t_i}^T$, we have \[f(y_1, \dots, y_n) = \det(R_1) \det \left( \left(\sum_{i \in [n]} l_{y_i}r_{y_i}^T y_i\right) + C\right)\text{.}\]

Setting $B_i = l_{y_i}r_{y_i}^T$ with their first rows scaled by $\det(R_1)$, and $B_0 = C$ with its first row scaled by $\det(R_1)$, we have $f(y_1,\dots,y_n) = \det(B_0 + \sum_{i \in [n]} B_iy_i)$. \\
\end{proof}

To summarise, given black-box access to $f(y_1,\dots,y_n)$, the reduction algorithm is as follows:

\begin{algorithm}[H]
\caption{Reduction of \probA\, to \probB}
\label{algo:reducing probA to probB}
\textbf{Input:} Black-box access to $f = \det( A_0 + \sum_{i \in [n]} A_i y_i)$ for some matrices $A_i \in\F^{r\times r}$ and $A_1,\dots, A_n$ are rank one.\\
\textbf{Output:} Matrices $B_0, \dots, B_n \in \F^{n\times n}$ with $B_1,\dots,B_n$ being rank-one such that $f = \det( B_0 + \sum_{i \in [n]} B_i y_i)$. \\
\textbf{Assumption:} Oracle access to learning algorithm for $\det(A + Z)$ where $Z = \diag(z_1,\dots,z_n)$.
\begin{algorithmic}[1]
\State Homogenise $f$ to get $f'(y_1,\dots,y_n,t_1,\dots,t_n) = t_1t_2\dots t_nf(\frac{y_1}{t_1}, \frac{y_2}{t_2}, \dots, \frac{y_n}{t_n})$.
\State Use the Isolation Lemma to isolate a monomial of $f'$. Let this monomial be $z_1\dots z_n$ with coefficient $\gamma$.
\State $h(z_1,\dots,z_n) = \frac{1}{\gamma} f'(z_1,\dots,z_n, 1, \dots, 1)$ will be the input to the black-box principal minor assignment problem.
\State Suppose $A'$ is the output of the black-box principal minor assignment problem. Then $f' = \gamma \det\left(\begin{pmatrix}
    I_n & I_n\end{pmatrix}\begin{pmatrix}Z & 0_{n,n} \\ 0_{n,n} & \bar{Z}\end{pmatrix} \begin{pmatrix}I_n \\ A'
\end{pmatrix}\right)$.
\State Set the $t_i$ variables to $1$ and simplify to recover $f = \det(B_0 + \sum_{i \in [n]} B_i y_i)$. Output $B_0, \dots, B_n$.
\end{algorithmic}
\end{algorithm}


\subsection{$\probB$ to $\probA$}

\begin{theorem}\label[observation]{aa_thm_redn_pma_to_rank1}
    Given black-box access to $h(y_1,\dots,y_n) = \det(A + Y)$ where $Y = \diag(y_1, \dots, y_n)$ and $A \in \F^{n \times n}$, suppose we have matrices $B_0, \dots, B_n \in\F^{r \times r}$, $B_1, \dots, B_n$ being rank 1, such that $\det(A + Y) = \det(B_0 + \sum_{i \in [n]} B_iy_i)$. Then we can, in deterministic polynomial-time, recover a matrix $A' \in \F^{n \times n}$ such that $h(y_1,\dots,y_n) = \det(A' + Y)$.
\end{theorem}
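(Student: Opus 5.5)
Given $B_0,\dots,B_n\in\F^{r\times r}$ with $B_i=u_iv_i^T$ rank one and $\det(A+Y)=\det(B_0+\sum_i B_iy_i)$, we first put the right-hand side into the form $\det(UYV^T+B_0)$, where $U,V\in\F^{r\times n}$ have columns $u_i,v_i$. The plan is to rerun the argument of \cref{aa_thm_redn_rank1_to_pma} deterministically on this explicit representation. The only randomized step there was the Isolation Lemma, used to locate a nonzero monomial of the homogenized polynomial. Here we already know one: the coefficient of $y_1\cdots y_n$ in $\det(A+Y)$ is $1$.

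\textbf{Step 1: shrink to $r=n$.} Apply \cref{aa_lemma_rank1_size_redn} constructively. The row and column operations in its proof are explicit Gaussian elimination on $U$, $B_0$ and $C_2$, so they run in deterministic polynomial time. They give $\hat U,\hat V\in\F^{k\times n}$ and $\hat B_0\in\F^{k\times k}$ with $\hat U$ full rank and $k\le n$, up to a scalar we track. The degree in $y$ of $\det(\hat UY\hat V^T+\hat B_0)$ is at most $k$, yet it equals $n$. So $k=n$, and $\hat U$ is an invertible $n\times n$ matrix.

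\textbf{Step 2: diagonalize.} We now have $\det(A+Y)=c\det(\hat UY\hat V^T+\hat B_0)$ for some scalar $c$. Multiply on the left by $\hat U^{-1}$ inside the determinant to get $c\det(\hat U)\det(Y\hat V^T+\hat U^{-1}\hat B_0)$. Comparing coefficients of $y_1\cdots y_n$ gives $c\det(\hat U)\det(\hat V)=1$, so $\hat V$ is invertible. Multiplying on the right by $(\hat V^T)^{-1}$ yields
\[
\det(A+Y)=\det\bigl(Y+\hat U^{-1}\hat B_0(\hat V^T)^{-1}\bigr).
\]
We then set $A'=\hat U^{-1}\hat B_0(\hat V^T)^{-1}$. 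All steps are matrix inversions and products, hence deterministic polynomial time. The black-box for $h$ is not needed at all, except possibly to double-check the scalar.

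\textbf{Main obstacle.} The delicate point is tracking the scalar factors accumulated in Step 1 (absorbed rows and columns). The leading-coefficient comparison avoids this: whatever scalar arises, the normalization $\det(A'+Y)$ being monic in $y_1\cdots y_n$ forces the correct value. Degenerate cases ($f$ identically zero) cannot occur, since $\det(A+Y)$ is nonzero.
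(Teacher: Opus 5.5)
Your proof is correct and follows essentially the same route as the paper: reduce to $r=n$ via \cref{aa_lemma_rank1_size_redn}, compare coefficients of $y_1\cdots y_n$ to conclude that $U$ and $V$ are invertible with the relevant product of determinants equal to $1$, and output $A'=U^{-1}B_0(V^T)^{-1}$. Two details differ slightly from the paper, both in a more careful direction: you force $k=n$ with a degree bound rather than through the read-once form of \cref{aa_lemma_rank1_to_read_once}, and you carry the stray scalar $c$ explicitly instead of absorbing it.
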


\begin{proof}

Let $U, V \in \F^{r \times n}$ be the matrices such that $h(y_1,\dots,y_n) = \det(A + Y) = \det(UYV^T + B_0)$. Due to \cref{aa_lemma_rank1_size_redn}, we can assume that $r \leq n$. By \cref{aa_lemma_rank1_to_read_once},
\begin{align*}
h(y_1,\dots,y_n) &= \det\begin{pmatrix}
        I_n & Y & 0_{n,r} \\
        0_{n,n} & I_{n} & V^T \\
        U & 0_{r,n} & B_0
    \end{pmatrix}\text{.}
\end{align*}

Observe that the coefficient of $y_1 \dots y_n$ is $(-1)^n\det\begin{pmatrix}0_{n,n} & V^T \\ U & B_0\end{pmatrix} = 1$. This would be zero if $r<n$, thus $r=n$. Then, $(-1)^n\det\begin{pmatrix}0_{n,n} & V^T \\ U & B_0\end{pmatrix} = \det(U)\det(V^T) = 1$. This implies that both $U$ and $V^T$ are invertible, with $\det(U^{-1})\det({(V^T)}^{-1}) = 1$. Thus,
\begin{equation*}
h(y_1,\dots,y_n) = \det(U^{-1})\det(UYV^T + B_0)\det({(V^T)}^{-1})
    = \det(Y + U^{-1}B_0{(V^T)}^{-1})\text{.}
\end{equation*}
\end{proof}
\section{$\probB$ to PMAP for matrices with Property $\prop$} \label{sec:reduction to matrices with prop R}

In this section, we show how $\probB$ reduces to PMAP for matrices that satisfy property $\prop$. We first demonstrate that for any matrix $A$, the irreducible blocks of the matrix $(A+D)^{-1}$ where $D$ is a diagonal matrix of randomly chosen field elements satisfy property $\prop$ with high probability and how we get black-box access to $\det((A + D)^{-1} + Y)$ from black-box access to $\det(A + Y)$. Then, we find out the indices corresponding to the irreducible blocks of $(A+D)^{-1}$. Since the irreducible blocks of $(A+D)^{-1}$ satisfy property $\prop$, learning them reduces to PMAP for matrices satisfying property $\prop$. Finally, from the matrices $C_1,\dots,C_k$ that are principal minor equivalent to the irreducible blocks of $(A+D)^{-1}$, we find a matrix $C \PME A$.

\begin{algorithm}[H]
\caption{Reducing $\probB$ to PMAP for matrices with property $\prop$}
\label{algo:PMAP_redn_to_prop_R}
\textbf{Input:} Black box access to $\det(A+Y)$ where $Y = \diag(y_1,\dots,y_n)$.\\
\textbf{Output:} A matrix $B \in \F^{n\times n}$ such that $B \PME A$. \\
\textbf{Assumption:} Access to oracle \Call{PMAP-Prop-R}{ $\PMoracle{M}$ } which solves PMAP given access to the principal minor oracle $\PMoracle{M}$ (See \cref{def:prinicpal-minor-oracle}) of a matrix $M$ that satisfies property $\prop$.
\begin{algorithmic}[1]
\State Using \cref{lem:bba_to_adj}, obtain black box access to $\det((A+D)^{-1} + Y)$.
\State Using \cref{algo:PMAP_redn_finding_irred_blocks}, find the indices $T_1,\dots,T_s$ that correspond to the irreducible blocks of $(A+D)^{-1}$.
\For{$i \in [s]$}
\State Let $C_i \gets$ \Call{PMAP-Prop-R}{ $\PMoracle{(A+D)^{-1}[T_i]}$ }.
\EndFor
\State Find permutation matrix $P$ such that $P^T \diag(Y[T_1],\dots,Y[T_s])P = Y$.
\State Let $C \gets P^T\diag(C_1,\dots,C_s)P$.
\State \Return $C^{-1} - D$.
\end{algorithmic}
\end{algorithm}

\subsection{Irreducible blocks of \texorpdfstring{$(A+D)^{-1}$}{(A+D)\^(-1)} satisfy property $\prop$}

To find a diagonal matrix $D$ such that each irreducible block of $(A+D)^{-1}$ satisfies property $\prop$, it suffices to find a $D$ for which $\det(A+D) \neq 0$ and each irreducible block of $\adj{A+D}$ satisfies property $\prop.$
First, we will show that for any irreducible matrix $A$, the matrix $\adj{A+Y}\in \mathbb{F}(y_1,y_2,\dots,y_n)^{n \times n}$ satisfies property $\prop$ where $Y = \diag(y_1, \ldots, y_n)$. The following result of Hartfiel and Loewy~\cite[Theorem~2]{Hartfiell84} will be useful.
\begin{theorem}[\cite{Hartfiell84}, Theorem~2]
\label{lem:Qpropertyhelpertheorm}
    Let $n\geq 2$ and $M$ be $n\times n$ matrix over a field $\mathbb{F}$. Let $Z=\diag(0,0,z_3,z_4,\dots,z_n)$ where $z_3, z_4, \ldots, z_n$ are independent indeterminates. Let $\det(M+Z)=0$. Then, there exists a partition $(X_1, X_2)$ of $\{3,4,\dots,n\}$ such that $$\rank(M[\{1,2\}\cup X_1, \{1,2\}\cup X_2])\leq 1.$$ 
\end{theorem}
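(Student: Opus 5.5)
The plan is to first turn the hypothesis into a purely combinatorial statement about principal minors, and then run an induction on $n$. Expanding the determinant multilinearly in the diagonal indeterminates gives
\[
\det(M+Z)=\sum_{T\subseteq\{3,\dots,n\}} \Bigl(\prod_{t\in T} z_t\Bigr)\det M[[n]\setminus T].
\]
So $\det(M+Z)=0$ is equivalent to
\[
\det M[\{1,2\}\cup W]=0 \quad\text{for every } W\subseteq\{3,\dots,n\}. \qquad (\ast)
\]
Condition $(\ast)$ is inherited by every principal submatrix $M[\{1,2\}\cup W]$. This is what makes induction on $n$ natural.

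The base case $n=2$ is immediate. There $(\ast)$ says $\det M[\{1,2\}]=0$, so $X_1=X_2=\varnothing$ works. For $n\ge 3$, apply the induction hypothesis to $M[[n-1]]$. This gives a partition $(Y_1,Y_2)$ of $\{3,\dots,n-1\}$ with $\rank M[\{1,2\}\cup Y_1,\{1,2\}\cup Y_2]\le 1$. It then remains to show that $n$ can be added either to the row side, giving $(Y_1+n,Y_2)$, or to the column side, giving $(Y_1,Y_2+n)$, possibly after modifying $(Y_1,Y_2)$. The rank-one block $R=M[\{1,2\}\cup Y_1,\{1,2\}\cup Y_2]$ has a $2\times 2$ core $M[\{1,2\}]$ sitting in both its rows and its columns. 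Adding $n$ to $X_1$ keeps rank $\le1$ iff the row $M[n,\{1,2\}\cup Y_2]$ lies in the row space of $R$. Adding $n$ to $X_2$ keeps rank $\le 1$ iff the column $M[\{1,2\}\cup Y_1,n]$ lies in the column space of $R$. So the key claim is: if both fail, then some minor $\det M[\{1,2\}\cup W]$ with $W\ni n$ is nonzero, contradicting $(\ast)$.

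To get that contradiction, I would use a Schur-complement / rank-one-update view in place of a raw cofactor expansion. Suppose $M[n,n]+z_n$ is treated as a unit. Then $\det(M+Z)=(m_{nn}+z_n)\det\bigl(M'+Z'-\tfrac{1}{m_{nn}+z_n}\,c\,r^T\bigr)$, where $M'=M[[n-1]]$, $c=M[[n-1],n]$ and $r^T=M[n,[n-1]]$. Here $w=1/(m_{nn}+z_n)$ acts as a new independent indeterminate. The condition becomes that $\det(M'+Z'-w\,c\,r^T)\equiv 0$, i.e.
\[
\det(M'+Z')=0 \quad\text{and}\quad r^T\adj{M'+Z'}\,c=0 \quad\text{identically.}
\]
The first part reproduces the hypothesis on $M'$. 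The second part is the genuinely new information. I would exploit it by specializing the $z_t$: sending $z_t\to\infty$ for $t\in Y_1\cup Y_2$ in a controlled order isolates the terms of $r^T\adj{M'+Z'}c$ that pair row $n$ with the columns of $R$, and column $n$ with its rows. From this I want to conclude that $r$ restricted to $\{1,2\}\cup Y_2$ is proportional to the row space of $R$, or $c$ restricted to $\{1,2\}\cup Y_1$ lies in its column space. If needed, the partition $(Y_1,Y_2)$ would be re-chosen among all valid partitions, e.g. a maximal one, to rule out degenerate configurations.

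The main obstacle is this extension step, and especially degenerate cases. These include $R$ being zero, $M[\{1,2\}]$ having rank $0$, and the induced partition not being unique. In those cases neither "row in row space" nor "column in column space" follows cleanly from a single minor, and one may need to move indices between $Y_1$ and $Y_2$. My fallback plan is to strengthen the induction hypothesis so it handles an arbitrary fixed set $F\supseteq\{1,2\}$ of non-perturbed indices. I would then run the induction by deleting a cleverly chosen index, one whose row or column is generic with respect to $R$, rather than always deleting $n$.
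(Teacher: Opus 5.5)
The paper gives no proof of this statement (it is quoted from Hartfiel--Loewy), so your argument has to stand on its own, and as written it has a genuine gap. The reformulation $(\ast)$ is correct. So is the split of $\det(M+Z)\equiv0$ into $\det(M'+Z')\equiv0$ and $r^T\adj{M'+Z'}c\equiv0$. But the extension step is the entire content of the theorem, and you leave it as a hope. In the form you state it (extend the inherited partition $(Y_1,Y_2)$ by putting $n$ on one side) it is false. Take
\[
M=\begin{pmatrix}1&1&1&1\\1&1&1&2\\1&1&0&1\\1&1&2&0\end{pmatrix}.
\]
Columns $1$ and $2$ coincide, so every $\det M[\{1,2\}\cup W]$ vanishes. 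For $M[[3]]$ the partition $(Y_1,Y_2)=(\varnothing,\{3\})$ is valid, since $M[\{1,2\},\{1,2,3\}]$ is all ones. However, $M[\{1,2,4\},\{1,2,3\}]$ and $M[\{1,2\},[4]]$ both have rank $2$. So neither placement of $4$ works, although $(X_1,X_2)=(\{3,4\},\varnothing)$ does. The transpose $M^T$ defeats the other valid choice $(\{3\},\varnothing)$ in the same way, so no one-sided ``maximal partition'' rule rescues the approach. The proposed limit $z_t\to\infty$ is also not an argument: you never extract a rank statement from $r^T\adj{M'+Z'}c\equiv0$.

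Two things are missing. First, the right strengthened hypothesis. Second, an uncrossing step that combines the \emph{two} partitions you get, instead of extending one.

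For the strengthening, fix an arbitrary set $F\subseteq[n]$ and put indeterminates only on $T=[n]\setminus F$. Prove that $\det(M+Z_T)\equiv0$ implies a partition $(X_1,X_2)$ of $T$ with $\rank M[F\cup X_1,F\cup X_2]\le|F|-1$. The bound must be $|F|-1$, not $1$. Induct on $|T|$; the base case $T=\varnothing$ is just $\det M=0$.

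For $t\in T$ one has $\det(M+Z_T)=z_t\det(M[[n]-t]+Z_{T-t})+\det(M+Z_{T-t})$, and both terms vanish identically. This is exactly your split, since for $t=n$ the second term is $m_{nn}\det(M'+Z')-r^T\adj{M'+Z'}c$.
\begin{itemize}
\item Induction on the first term (fixed set $F$) gives $(Y_1',Y_2')$ with $\rank M[F\cup Y_1',F\cup Y_2']\le|F|-1$.
\item Induction on the second term (fixed set $F+t$) gives $(Y_1,Y_2)$ with $\rank M[(F\cup Y_1)+t,(F\cup Y_2)+t]\le|F|$.
\end{itemize}
Now apply
\[
\rank M[R_1,C_1]+\rank M[R_2,C_2]\ \ge\ \rank M[R_1\cup R_2,C_1\cap C_2]+\rank M[R_1\cap R_2,C_1\cup C_2].
\]
This follows from submodularity of matroid rank on the columns of $[\,M\mid E\,]$, with $E$ the identity. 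The columns $C$ of $M$ together with the unit columns indexed by $[n]\setminus R$ have rank $\rank M[R,C]+n-|R|$.

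The two pairs on the right are $(F\cup X_1,F\cup X_2)$ for the partitions $((Y_1\cup Y_1')+t,\;Y_2\cap Y_2')$ and $(Y_1\cap Y_1',\;(Y_2\cup Y_2')+t)$ of $T$. Their ranks sum to at most $2|F|-1$, so one of them is at most $|F|-1$. With $F=\{1,2\}$ this is the theorem.
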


In addition to the above theorem, we also require the following result. The proof of the following lemma directly follows from \cite[Claim A.1]{Chatterjee25-full-version}.
\begin{lemma}
\label{lem:AandAdjArank1}
Let $n\geq 4$ be a positive integer. Let $M\in \mathbb{F}^{n\times n}$ and $X\subseteq [n]$ such that $\rank( M[X,\comp{X}])=1$. Then, $\rank M^{\ad}[X,\comp{X}]=1$.
\end{lemma}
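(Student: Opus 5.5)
The plan is to prove the inequality $\rank M^{\ad}[X,\comp{X}]\leq 1$ as a polynomial identity. Strict equality needs a separate non-degeneracy argument, which I discuss at the end.

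Write, after permuting so that $X$ comes first,
\[
M=\begin{pmatrix} P & u v^T\\ Q & R\end{pmatrix},\qquad P=M[X],\ R=M[\comp{X}],\ Q=M[\comp{X},X],
\]
where $u\in\F^{|X|}$ and $v\in\F^{|\comp{X}|}$. Permuting rows and columns simultaneously only permutes the adjugate compatibly, up to a global sign pattern that does not affect ranks, so this normalization is harmless.

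First treat the case where $M$ and $R$ are both invertible. Then $M^{\ad}=\det(M)\,M^{-1}$. The block-inverse formula with respect to the invertible block $R$ gives
\[
M^{-1}[X,\comp{X}] = -(M/R)^{-1}\,u v^T R^{-1},
\]
where $M/R=P-uv^TR^{-1}Q$ is the Schur complement, invertible because $\det M=\det R\det(M/R)$. Hence $M^{\ad}[X,\comp{X}]=-\det(M)\,\big((M/R)^{-1}u\big)\big(R^{-T}v\big)^T$. This is an outer product of two vectors, so its rank is at most one.

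Next remove the invertibility assumption by a density argument. Treat the entries of $P,Q,R,u,v$ as independent indeterminates. Each $2\times 2$ minor of $M^{\ad}[X,\comp{X}]$ is then a polynomial in these indeterminates with integer coefficients. By the previous step it vanishes whenever $\det M\cdot\det R\neq 0$, a nonzero polynomial, so it vanishes identically. Specializing to the actual entries over $\F$ gives $\rank M^{\ad}[X,\comp{X}]\leq 1$ for every $M$ with $\rank M[X,\comp{X}]\le 1$.

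The genuine obstacle is promoting "at most one" to "exactly one". This cannot hold unconditionally: if $\rank M\le n-2$ then $M^{\ad}=0$. So I would verify nonvanishing of $M^{\ad}[X,\comp{X}]$ in the setting where the lemma is applied, namely $M=A+Y$ with $Y$ a diagonal matrix of indeterminates and $A$ irreducible. There I would exhibit one nonzero entry $M^{\ad}[i,j]$ with $i\in X$, $j\in\comp{X}$. Its cofactor contains the monomial $A[j',i']\prod_{k\neq i,j}y_k$-type terms along a path from $j$ to $i$ in the digraph $G_A$, and irreducibility guarantees such a path exists. Because the diagonal variables are distinct, these path monomials cannot cancel, so the cofactor is a nonzero polynomial. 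Combined with the inequality above, this gives rank exactly one. This matches the use of \cite[Claim A.1]{Chatterjee25-full-version} for the identity part, with irreducibility supplying the nondegeneracy.
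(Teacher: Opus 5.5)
You are right that the lemma is false as literally stated. For $n=4$, $X=\{1,2\}$ and $M$ the all-ones matrix, $\rank M[X,\comp{X}]=1$ while $M^{\ad}=0$. The paper gives no proof of its own; it defers to \cite[Claim~A.1]{Chatterjee25-full-version}. It applies the lemma only to $M=A+Y$, which is invertible over $\F(y_1,\dots,y_n)$, so the intended hypothesis is that $M$ is invertible.

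Your proof of $\rank M^{\ad}[X,\comp{X}]\le 1$ through $M^{\ad}=\det(M)M^{-1}$ and the Schur-complement formula is correct. In the extension step, apply the block formula over the fraction field of the polynomial ring in the entries of $P,Q,R,u,v$. Vanishing at all $\F$-points where $\det M\det R\neq 0$ yields a polynomial identity only over an infinite field. A density-free alternative uses the Jacobi identity the paper itself uses in \cref{clm:AplusYsatisfy}. For $i,i'\in X$ and $j,j'\in\comp{X}$, the identity $\det M^{\ad}[\{i,i'\},\{j,j'\}]=\pm\det(M)\det M[\comp{\{j,j'\}},\comp{\{i,i'\}}]$ holds for every $M$. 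The rows of the last submatrix indexed by $X$ have rank at most $(|X|-2)+1<|X|$, so that determinant is zero.

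For exactness, your claim that the path monomials cannot cancel is not correct as stated. Every $i\to j$ linkage with vertex set $V$ (a path plus disjoint cycles or loops on the rest of $V$) contributes to the same monomial $\prod_{k\notin V}y_k$. Its coefficient is $\pm\det A[V\setminus\{j\},V\setminus\{i\}]$. For $V=\{i,a,j\}$ this is $\pm(A[i,a]A[a,j]-A[i,j]A[a,a])$, which can vanish. Also, $M^{\ad}[i,j]$ is governed by paths from $i$ to $j$, not from $j$ to $i$; this is harmless under irreducibility. Choosing a \emph{shortest} $i\to j$ path fixes the argument: a shortest path has no forward chords, so its product is the only nonzero term of that minor. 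This nonvanishing of the off-diagonal entries of $\adj{A+Y}$ for irreducible $A$ is what the paper takes from \cite[Theorem~1]{Hartfiell84}.

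More simply, irreducibility is not needed. If $M$ is invertible and $M^{-1}[X,\comp{X}]=0$, then $M^{-1}$ is block lower triangular, hence so is $M$, forcing $M[X,\comp{X}]=0$. So invertibility alone upgrades your bound to $\rank M^{\ad}[X,\comp{X}]=1$, which is exactly what the application to $A+Y$ needs.
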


The following claim shows that $\adj{A+Y}$ satisfies property $\prop$.
\begin{claim} \label{clm:AplusYsatisfy}
Define $Y = \diag(y_1, \ldots, y_n)$. For any irreducible matrix $A\in \mathbb{F}^{n\times n}$, the matrix $\adj{A+Y}$ satisfies property $\prop$.
\end{claim}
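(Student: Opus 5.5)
We must show two things for $N := \adj{A+Y}$ over $\F(y_1,\dots,y_n)$: that every off-diagonal entry of $N$ is nonzero, and that $N$ has the rank-one extension property. The plan is to derive both from \cref{lem:Qpropertyhelpertheorm}, applied to suitably permuted and shifted versions of $A$, and then to transfer rank-one blocks back to $N$ with \cref{lem:AandAdjArank1}.

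\emph{Density.} Up to sign, the entry $N[i,j]$ for $i\ne j$ is the minor $\det((A+Y)[[n]-j,[n]-i])$. I will show this is a nonzero polynomial by exhibiting a monomial. Since $A$ is irreducible, the digraph $G_A$ of \cref{ob:irreducible-matrix-and-directed-graph} has a simple directed path $j=v_0\to v_1\to\cdots\to v_m=i$.

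1. Delete row $j$ and column $i$ from $A+Y$.
2. Use the entries $A[v_0,v_1],\dots,A[v_{m-1},v_m]$, which cover every row of the path except $i$ and every column of the path except $j$.
3. Use the diagonal variables $y_k$ for all vertices $k$ off the path.

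The corresponding term is $\pm\prod A[v_{t},v_{t+1}]\prod_{k\notin\text{path}}y_k$. It is the unique term with exactly this set of $y$-variables and no others, because every other index must be matched by a non-$y$ entry. The coefficient of this monomial is therefore a nonzero sum over paths with this vertex set. That coefficient is really a minor of $A$ restricted to the path vertices; it is not a single product, so I will need to argue that it is nonzero.

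A cleaner route, which I expect to use instead, is \cref{lem:Qpropertyhelpertheorm}. Relabel so that $\{i,j\}$ becomes $\{1,2\}$, and take $M=(A+\diag(y_i,y_j,0,\dots,0))$ with the row for $j$ and the column for $i$ swapped into position, so that $M$ encodes the minor. If the minor vanished identically, the theorem would give a partition $(X_1,X_2)$ of the remaining indices with a rank-$\le 1$ block. That block, combined with the $2\times 2$ structure involving $y_i$ and $y_j$, would force a zero block $A[S,\comp S]$. Such a zero block contradicts irreducibility, so $N$ is dense.

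\emph{Rank-one extension.} Suppose $\rank N[\{i,j\},\{k,l\}]=1$, that is, the $2\times 2$ minor of the adjugate vanishes. By Jacobi's complementary minor identity,
\[
\det N[\{i,j\},\{k,l\}] = \pm\det(A+Y)\cdot\det\big((A+Y)[[n]\setminus\{k,l\},[n]\setminus\{i,j\}]\big).
\]
Since $\det(A+Y)\ne 0$, the complementary minor must vanish. After reindexing, this complementary minor has the form $\det(M+Z)$, where $M$ is a matrix over the field $\F(y_i,y_j,y_k,y_l)$ and $Z$ carries independent indeterminates on the diagonal positions indexed by the other $n-4$ indices. \Cref{lem:Qpropertyhelpertheorm} therefore applies with the two special rows/columns playing the roles of $1,2$. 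It yields a partition of the remaining indices giving a rank-$\le1$ block of $A+Y$. Translating indices back should produce a set $S\ni i,j$ with $k,l\notin S$ and $\rank (A+Y)[S',\comp{S'}]\le 1$ for $S'$ equal to $S$ or its complement, oriented appropriately; the adjugate swaps the roles of rows and columns. Irreducibility makes the rank exactly $1$. \Cref{lem:AandAdjArank1} then transfers this to $\rank N[S,\comp S]=1$; I will apply it to $A+Y$ or its transpose as needed.

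\emph{Main obstacle.} The hardest step is the index bookkeeping in this last reduction. The Hartfiel--Loewy theorem is stated for a square matrix with the two zero-shift positions on the diagonal, whereas our complementary minor has distinct row set $[n]-\{k,l\}$ and column set $[n]-\{i,j\}$. I will handle this by permuting columns to identify $k\leftrightarrow i$ and $l\leftrightarrow j$ and absorbing $y_i,y_j,y_k,y_l$ into $M$. I then need to check that the block $[\{1,2\}\cup X_1,\{1,2\}\cup X_2]$ it returns corresponds, after undoing the permutation, to a block $[S,\comp S]$ with $i,j\in S$ and $k,l\in\comp S$ in $A+Y$. This correspondence, rather than any computation, is where the argument could fail.
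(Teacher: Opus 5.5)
Your rank-one extension argument is the paper's: Jacobi's identity, then \cref{lem:Qpropertyhelpertheorm}, then irreducibility and \cref{lem:AandAdjArank1}. The density part is different. The paper does not prove it; it cites Theorem~1 of Hartfiel--Loewy. Neither of your two routes establishes it as written, so that is where the gap is.

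\textbf{Path route.} The direction is reversed. The minor $(A+Y)[[n]-j,[n]-i]$ has no row $j$, so $A[j,v_1]$ is not available; you need a path $i=w_0\to\cdots\to w_m=j$. Also, as you suspected, for an arbitrary path the coefficient of $\prod_{k\notin P}y_k$ is $\pm\det A[P-j,P-i]$, and this can vanish when the path has chords (e.g.\ $A[i,v]A[v,j]-A[i,j]A[v,v]$). Taking a \emph{shortest} path fixes this. Then $A[w_s,w_t]=0$ whenever $t\ge s+2$, so $A[\{w_0,\dots,w_{m-1}\},\{w_1,\dots,w_m\}]$ is triangular with determinant $\prod_s A[w_s,w_{s+1}]\neq 0$.

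\textbf{Theorem~2 route.} The setup is wrong. The minor contains neither $y_i$ nor $y_j$. It also has only one position without a variable (row $i$ paired with column $j$). So $A+\diag(y_i,y_j,0,\dots,0)$ does not encode it, and \cref{lem:Qpropertyhelpertheorm} does not apply as stated. You can repair it as follows.
\begin{enumerate}
\item Absorb some $y_{k_0}$, with $k_0\notin\{i,j\}$, into the ground field $\F(y_{k_0})$. This creates the second variable-free position, so the theorem applies.
\item The resulting rank-$\le 1$ block has $a_{k_0k_0}+y_{k_0}$ as its only non-constant entry.
\item All $2\times 2$ minors through that entry must vanish identically. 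This forces $A[S,\comp{S}]=0$ for $S=\{i,k_0\}\cup X_1$ or for $S=\{i\}\cup X_1$, contradicting irreducibility.
\end{enumerate}

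The index bookkeeping you flag as the main risk in the rank-one extension step is harmless. None of $y_i,y_j,y_k,y_\ell$ occurs in $(A+Y)[\comp{\{k,\ell\}},\comp{\{i,j\}}]$. Pair row $i$ with column $k$ and row $j$ with column $\ell$ as positions $1,2$, and row $r$ with column $r$ for every other $r$. Then the minor is $M+Z$ with $M=A[\comp{\{k,\ell\}},\comp{\{i,j\}}]$ over $\F$ and $Z=\diag(0,0,y_{r_1},\dots,y_{r_{n-4}})$, where $\{r_1,\dots,r_{n-4}\}=[n]\setminus\{i,j,k,\ell\}$. The theorem gives $\rank A[\{i,j\}\cup X_1,\{k,\ell\}\cup X_2]\le 1$. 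Since $X_1\sqcup X_2=[n]\setminus\{i,j,k,\ell\}$, this block is exactly $A[S,\comp{S}]$ with $S=\{i,j\}\cup X_1$. No complement or transpose is needed. \cref{lem:AandAdjArank1} sends the $[S,\comp{S}]$ block to the $[S,\comp{S}]$ block, so $\rank\adj{A+Y}[S,\comp{S}]=1$ follows at once. Your fallback of applying the lemma to the transpose could not have fixed a reversed orientation anyway: it would again give the $[\comp{S},S]$ block of the adjugate.
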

\begin{proof}
The first condition of property $\prop$ follows from \cite[Theorem 1]{Hartfiell84}. For the second condition, we need to show that for any distinct four elements $i,j,k,\ell\in [n],$ $$\rank \adj{A+Y}[\{i,j\},\{k,\ell\}]=1\Longrightarrow \rank \adj{A+Y}[X,\comp{X}]=1$$ for some $X\subset [n]$ with $\{i,j\}\subseteq X$ and $\{k,\ell\}\subseteq \comp{X}$. Without loss of generality, we can assume that $i=1,j=2,k=3,\ell=4$. Note that $$\rank(\adj{A+Y}[\{1,2\},\{3,4\}])=1\Longrightarrow \det(\adj{A+Y}[\{1,2\},\{3,4\}])=0.$$ Now, applying Jacobi's identity~\cite[Page~21]{Gantmacher59}, we obtain $\det((A+Y)[\comp{\{3,4\}},\comp{\{1,2\} }])=0$. Observe that for a matrix $A=(a_{i,j})_{i,j\in[n]}$,
\[
(A+Y)[\comp{\{3,4\}},\comp{\{1,2\} }]= \begin{pmatrix} a_{1,3} & a_{1,4} & a_{1,5} & \cdots & a_{1,n}\\
    a_{2,3} & a_{2,4} & a_{2,5} & \cdots & a_{2,n}\\
    a_{5,3} & a_{5,4} & a_{5,5}+y_5 & \cdots & a_{5,n}\\
    \vdots & & & \ddots &\\
    a_{n,3} & a_{n,4} & a_{n,5} & \cdots & a_{n,n}+y_n \end{pmatrix}.\]
From the above, we obtain $$(A+Y)[\comp{\{3,4\}},\comp{\{1,2\} }]=A'+Y' \text{ where } A'= A[\comp{\{3,4\}},\comp{\{1,2\} }] \text{ and } Y'= \diag(0,0, y_5,\dots,y_n).$$ Since $\det(A'+Y')=0$, from \cref{lem:Qpropertyhelpertheorm}, there exists a partition $X_1,X_2$ of $\{5, 6,\ldots,n\}$ such that $$\rank A[X_1\cup\{1, 2\},X_2\cup\{3, 4\}]= 1.$$ Let $X=X_1\cup \{1,2\}$. Then $\comp{X}=X_2\cup \{3, 4\}$. Thus, we obtain $\rank (A+Y)[X,\comp{X}]=1$. From \cref{lem:AandAdjArank1}, $\rank \adj{A+Y}[X,\comp{X}]=1$. This completes the proof of the lemma.
\end{proof}


Next, we show the necessary properties of a diagonal matrix $D$ such that $\adj{A+D}$ satisfies property $\prop$.
\begin{claim}
\label{clm:satisfying-prop-Q}
Let $D\in \mathbb{F}^{n\times n}$ be a diagonal matrix such that $\adj{A+D}$ 
satisfies the following properties:
\begin{enumerate}
\item $A+D$ is invertible, i.e.\ $\det(A + D) \neq 0$.
\item $\adj{A+D}$  has nonzero off-diagonal entries, i.e.\ for each $1\leq i,j \leq n$, $\det ((A + D)[\comp{i}, \comp{j}]) \neq 0$.
\item For any four distinct elements $i,j,k,\ell\in[n]$,  
\begin{equation}\label{eq:diag-shift-property}
  \det\left(\adj{A+D}[\{i,j\}, \{k,\ell\}]\right)=0 \Longrightarrow \det\left(\adj{A+Y}[\{i,j\},\{k,\ell\}]\right)=0.   
\end{equation}
\end{enumerate}
Then, $\adj{A+D}$ satisfies property $\prop$.
\end{claim}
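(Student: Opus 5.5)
Condition 2 already makes $\adj{A+D}$ dense, so only the rank-one extension property needs proof. The plan is to transfer the cut structure of the symbolic matrix $\adj{A+Y}$, which \cref{clm:AplusYsatisfy} provides, to the specialization $Y\mapsto D$ using \cref{lem:AandAdjArank1}. Implicitly $A$ is irreducible, as in \cref{clm:AplusYsatisfy}; otherwise we apply the argument to $\adj{A+Y}$ in the setting where that claim holds.

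Fix distinct $i,j,k,\ell$ with $\rank(\adj{A+D}[\{i,j\},\{k,\ell\}])=1$. Then the $2\times 2$ determinant vanishes, and by condition 3 so does $\det(\adj{A+Y}[\{i,j\},\{k,\ell\}])$. All entries of this symbolic submatrix are nonzero, since they specialize to the nonzero entries of $\adj{A+D}$. So this $2\times 2$ submatrix of $\adj{A+Y}$ has rank exactly one. Now apply \cref{clm:AplusYsatisfy}. Rather than use its conclusion as a black box, I would reuse its proof, since that yields a better intermediate statement. Vanishing of the symbolic $2\times2$ minor gives, via Jacobi's identity and \cref{lem:Qpropertyhelpertheorm}, a set $X$ with $\{i,j\}\subseteq X$, $\{k,\ell\}\subseteq\comp X$ and $\rank A[X,\comp X]\le 1$. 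This is a statement about $A$ alone, independent of $Y$.

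Since $D$ is diagonal, $(A+D)[X,\comp X]=A[X,\comp X]$, so $\rank (A+D)[X,\comp X]\le 1$. Applying \cref{lem:AandAdjArank1} to $M=A+D$ (with $n\ge 4$) gives $\rank \adj{A+D}[X,\comp X]\le 1$; if the lemma's hypothesis requires rank exactly one, handle the rank-zero case by the same adjugate cofactor argument. Finally, $\adj{A+D}[X,\comp X]$ has nonzero entries by condition 2, so its rank is exactly $1$. This is the set required by the rank-one extension property, so $\adj{A+D}$ satisfies property $\prop$. Condition 1 plays no role in the argument itself; it only ensures that $\adj{A+D}$ is a nonzero multiple of $(A+D)^{-1}$, which is how the claim gets used.

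The main obstacle is the middle step. Invoking \cref{clm:AplusYsatisfy} only as stated gives a rank-one block of $\adj{A+Y}$ over $\F(y)$. Specializing $Y\mapsto D$ preserves rank at most one, because $2\times2$ minors that vanish identically still vanish after substitution. So even the black-box route works: specialize, then use density from condition 2 to get rank exactly one. I would present this specialization argument as the main line, since it avoids re-opening the earlier proof, and keep the $A[X,\comp X]$ route as a backup.
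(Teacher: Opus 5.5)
Your main line is correct and matches the paper's argument. Condition~3 makes the symbolic minor $\det(\adj{A+Y}[\{i,j\},\{k,\ell\}])$ vanish, \cref{clm:AplusYsatisfy} supplies $X$ with $\rank \adj{A+Y}[X,\comp{X}]=1$, and specializing $Y\mapsto D$ together with the density from condition~2 yields $\rank \adj{A+D}[X,\comp{X}]=1$. You spell out the specialization and ``rank exactly one'' steps that the paper leaves implicit. Your backup route is also valid: it uses $\rank A[X,\comp{X}]\le 1$, the fact that $(A+D)[X,\comp{X}]=A[X,\comp{X}]$, and \cref{lem:AandAdjArank1} in its rank-at-most-one form applied to $A+D$. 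You are also right that condition~1 is not needed for the conclusion.
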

\begin{proof}
Let $i,j,k,\ell\in[n]$ be four distinct elements. Then, the third property implies that $$\det\left(\adj{A+D}[\{i,j\}, \{k,\ell\}]\right)=0\Longrightarrow \det\left(\adj{A+Y}[\{i,j\}, \{k,\ell\}]\right)=0.$$ From~\cref{clm:AplusYsatisfy}, we know that $\adj{A+Y}$ satisfies property $\prop$. Thus, $$\rank \adj{A+D}[\{i,j\},\{k,\ell\}]=1\Longrightarrow \rank \adj{A+Y}[X,\comp{X}]=1 \Longrightarrow \rank \adj{A+D}[X,\comp{X}]=1$$ for some $X\subset [n]$ with $i,j\in X$ and $k,\ell\in \comp{X}$. 
Hence, $\adj{A+D}$ will satisfy property $\prop.$
\end{proof}

The following result shows that $\adj{A+D}$ satisfies property $\prop$ where $D$ is a `random' diagonal matrix.

\begin{lemma}\label{lemma:random-diag-shift-satisfies-R}
    Let $\card{\F} \geq n^6$. Consider a matrix $D\in \F^{n\times n}$ such that $D = \diag(d_1, d_2, \ldots, d_n)$ and each $d_i$ is chosen independently and uniformly at random from $\F$. Then, for any irreducible matrix $A\in \F^{n\times n}$, $\adj{A+D}$ satisfies property $\prop$ with high probability.
\end{lemma}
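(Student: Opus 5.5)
The plan is to verify the three hypotheses of \cref{clm:satisfying-prop-Q} for a random $D$ by the Polynomial Identity Lemma, together with a union bound. Each hypothesis says that some nonzero polynomial in $y_1,\dots,y_n$ of degree at most $2n$ does not vanish at $(d_1,\dots,d_n)$. Once all three hold, \cref{clm:satisfying-prop-Q} immediately gives that $\adj{A+D}$ satisfies property $\prop$.

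\textbf{Hypotheses 1 and 2.} The polynomial $\det(A+Y)$ is monic in $y_1\cdots y_n$, so it is a nonzero polynomial of degree $n$. By \cite[Theorem 1]{Hartfiell84} (as used in \cref{clm:AplusYsatisfy}), irreducibility of $A$ gives that every off-diagonal entry $\pm\det((A+Y)[\comp{i},\comp{j}])$ of $\adj{A+Y}$ is a nonzero polynomial of degree at most $n-1$. Evaluating at $Y=D$ commutes with taking the adjugate, since each entry is a polynomial in the $y$'s. Therefore each of these at most $n^2+1$ polynomials vanishes at $D$ with probability at most $n/|\F|$.

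\textbf{Hypothesis 3.} Fix four distinct indices $i,j,k,\ell$ and set $p_{ijk\ell}(Y)=\det(\adj{A+Y}[\{i,j\},\{k,\ell\}])$, a polynomial of degree at most $2(n-1)$. There are two cases.
\begin{itemize}
\item If $p_{ijk\ell}\equiv 0$, the implication in \eqref{eq:diag-shift-property} holds trivially.
\item If $p_{ijk\ell}\not\equiv 0$, we need $p_{ijk\ell}(D)\neq 0$. This fails with probability at most $2n/|\F|$.
\end{itemize}
There are fewer than $n^4$ such quadruples.

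\textbf{Union bound.} The total failure probability is at most
\[
\frac{n+ n^2\cdot n + n^4\cdot 2n}{|\F|} = O\!\left(\frac{n^5}{|\F|}\right),
\]
which is $O(1/n)$ when $|\F|\ge n^6$.

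The only delicate point is in Hypothesis 3: the implication must be read polynomially, so that the vanishing of $p_{ijk\ell}$ at $D$ is compared with its identical vanishing as a polynomial. Using $\adj{A+D}=\adj{A+Y}|_{Y=D}$ makes this exact. If the ``high probability'' is required to be smaller, we can sample from a subset $S\subseteq\F$ of size $n^6$ and apply the same bound.
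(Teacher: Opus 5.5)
Your proposal is correct and follows essentially the same route as the paper: it verifies the three hypotheses of \cref{clm:satisfying-prop-Q} by applying the Polynomial Identity Lemma to the same collection of polynomials, namely $\det(A+Y)$, the cofactors of $A+Y$, and the $2\times 2$ minors $\det(\adj{A+Y}[\{i,j\},\{k,\ell\}])$, handling identically zero minors trivially. Your explicit degree bounds and union bound simply make precise the ``with high probability'' that the paper leaves implicit.
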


\begin{proof}
    It suffices to show that a randomly chosen diagonal matrix $D$ satisfies the properties stated in \cref{clm:satisfying-prop-Q}. Consider the following small collection of polynomials, where $Y = \diag(y_1, \ldots, y_n)$:
    \begin{enumerate}
        \item $\det(A+Y)$.
        \item For each $1\leq i,j \leq n$, $\det ((A + Y)[\comp{i}, \comp{j}])$.
        \item For each distinct $i,j,k,\ell\in[n]$, $\det\left(\adj{A+Y}[\{i,j\},\{k,\ell\}]\right)$. 
    \end{enumerate}
It follows from the Polynomial Identity Lemma \cite{Sch80,Zip79, DL78} that, upon substituting each $y_i$ with a randomly chosen $d_i \in \F$, each nonzero polynomial in the collection remains nonzero with high probability.  Consequently, for the diagonal matrix $D = \diag(d_1, d_2, \ldots, d_n)$, all the properties mentioned in \cref{clm:satisfying-prop-Q} hold. Hence, $\adj{A+D}$ satisfies property $\prop$.
\end{proof}


\begin{remark}\label{lemma:random-diag-shift-satisfies-R-derandomization}
    The proof of \cref{lemma:random-diag-shift-satisfies-R} continues to hold even if, for every distinct $i,j,k,\ell$, we replace $\det\left(\adj{A+Y}[\{i,j\},\{k,\ell\}]\right)$in the collection with the polynomial $\det((A+Y)[\comp{\{k,\ell\}},\comp{\{i,j\} }])$. This follows from Jacobi's identity~\cite[Page~21]{Gantmacher59}, which implies that $\det\left(\adj{A+Y}[\{i,j\},\{k,\ell\}]\right)\neq 0$ if and only if $\det((A+Y)[\comp{\{k,\ell\}},\comp{\{i,j\} }]) \neq 0$.  Furthermore, each polynomial in the collection is now computable by a read-once determinant (ROD). Therefore, we can use the quasi-polynomial size hitting set for RODs~\cite{GT20} to compute a quasi-polynomially large collection of $n\times n$ diagonal matrices such that there exists a diagonal matrix $D$ in the collection for which $\adj{A+D}$ satisfies property~$\prop$.
\end{remark}

\begin{corollary}\label{lemma:random-diag-shift-satisfies-R-for-irred-comp}
    Consider a matrix $D\in \F^{n\times n}$ such that $D = \diag(d_1, d_2, \ldots, d_n)$ and each $d_i$ is chosen independently and uniformly at random from $\F$. Then, for any matrix $A\in \F^{n\times n}$, each irreducible block of $\adj{A+D}$ (and as a result, $(A+D)^{-1}$) satisfies property $\prop$ with high probability.
\end{corollary}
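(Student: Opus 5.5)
The plan is to reduce to the irreducible case of \cref{lemma:random-diag-shift-satisfies-R}, applied blockwise. Permute rows and columns of $A$ by \cref{ob:irreducible-matrix-and-directed-graph} so that $A$ is block upper triangular with irreducible diagonal blocks $A_1=A[T_1],\dots,A_s=A[T_s]$. Since $D$ is diagonal, $A+D$ has the same block upper triangular shape, and its diagonal blocks are $A_t+D[T_t]$. When $A+D$ is invertible, $(A+D)^{-1}$ is also block upper triangular, and its diagonal blocks are $(A_t+D[T_t])^{-1}$.

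The first step is to identify the irreducible blocks of $(A+D)^{-1}$.

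\begin{enumerate}
\item If $\adj{A_t+D[T_t]}$ has all off-diagonal entries nonzero, then $(A_t+D[T_t])^{-1}$ is dense and hence irreducible.
\item With all diagonal blocks irreducible, the strongly connected components of $G_{(A+D)^{-1}}$ are unions of the $T_t$'s.
\item They should be exactly the $T_t$'s, because the block triangular structure forbids edges from later blocks back to earlier ones.
\end{enumerate}

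So the irreducible blocks of $(A+D)^{-1}$ are precisely the $(A_t+D[T_t])^{-1}$, up to the inherited ordering. For $\adj{A+D}=\det(A+D)(A+D)^{-1}$, scaling by the nonzero scalar $\det(A+D)$ changes neither the graph nor ranks of submatrices. Hence the same blocks appear, each a scalar multiple of $(A_t+D[T_t])^{-1}$.

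The second step is to apply \cref{lemma:random-diag-shift-satisfies-R} to each irreducible $A_t$ with the random diagonal $D[T_t]$. Its entries are independent and uniform, so with high probability $\adj{A_t+D[T_t]}$ satisfies property $\prop$. We also need $\det(A_t+D[T_t])\neq 0$, which is the first item of \cref{clm:satisfying-prop-Q}. Then $(A_t+D[T_t])^{-1}$, a nonzero scalar multiple of $\adj{A_t+D[T_t]}$, also satisfies $\prop$, since density and all rank conditions are invariant under nonzero scaling. Blocks of size at most $3$ need separate handling: there the rank-one extension condition is vacuous (it needs four distinct indices), so only density matters, which is again the second item of \cref{clm:satisfying-prop-Q}.

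A union bound over the $s\le n$ blocks finishes the proof. Each block's failure probability comes from the Polynomial Identity Lemma over at most $O(n^4)$ polynomials of degree at most $n$. With $|\F|\ge n^6$, the total failure probability is $O(n^6/|\F|)$ per block at worst. Tightening this to get a genuinely small overall bound uses the actual polynomial count and degrees, or a larger field as the paper allows. The main obstacle I expect is the first step: it must be verified carefully that no two distinct $T_t$'s merge into one strongly connected component of the inverse. The upper-triangular shape guarantees this, but the argument must be made explicit, including the identification of blocks of the adjugate with those of the inverse.
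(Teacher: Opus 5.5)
Your proposal is correct and is essentially the argument the paper leaves implicit. You apply \cref{lemma:random-diag-shift-satisfies-R} to each irreducible diagonal block $A[T_t]$ with its own shift $D[T_t]$. The block upper triangular structure then shows that, once $\det(A+D)\neq 0$ and each $\adj{A[T_t]+D[T_t]}$ is dense, the irreducible blocks of $\adj{A+D}$ and of $(A+D)^{-1}$ are exactly nonzero scalar multiples of the matrices $\adj{A[T_t]+D[T_t]}$. The loose $O(n^6/|\F|)$ estimate you flagged is fixed by counting each block with its own size $m_t=|T_t|$: block $t$ fails with probability $O(m_t^5/|\F|)$, and $\sum_t m_t^5\le n^5$ gives total failure $O(n^5/|\F|)=O(1/n)$ for $|\F|\ge n^6$.
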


\subsection{Black-box access to \texorpdfstring{$\det((A + D)^{-1} + Y)$}{det((A+D)\^(-1) + Y)} from that to \texorpdfstring{$\det(A + Y)$}{det(A + Y)}}

We are given black-box access to $\det(A+Y)$ for some matrix $A$ and $Y = \diag(y_1,\dots,y_n)$. We demonstrate that we can obtain black-box access to $\det((A+D)^{-1} + Y)$ as follows:

\begin{lemma}\label{lem:bba_to_adj} If $\card{\F} > n+1$, from black box access to $\det(A+Y)$ with $Y = \diag(y_1,\dots,y_n)$, we can obtain black box access to $\det((A+D)^{-1} + Y)$ where $D$ is a diagonal matrix such that $(A+D)$ is invertible. \end{lemma}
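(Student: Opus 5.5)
Write $M = A+D$ with $\det M \neq 0$. The plan is to express $\det(M^{-1}+Y)$ through a shifted and variable-inverted evaluation of the given black box, and then remove the inversions with \cref{obs:black-box-var-inversion}. The basic identity is
\[
\det(M^{-1}+Y) = \det(M^{-1})\det(I + MY) = \det(M)^{-1}\det(Y)\det(Y^{-1}+M),
\]
valid whenever all $y_i\neq 0$. Since $\det(Y^{-1}+M) = \det(A + (D+Y^{-1}))$, this is $\det(A+Y)$ evaluated at the point $y_i \mapsto d_i + 1/y_i$.

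Define $g(y_1,\dots,y_n) = \det(M^{-1}+Y)$, a multilinear polynomial of degree at most $n$. For a query point $b\in\F^n$ with all $b_i\neq 0$, compute
\[
g(b) = \det(M)^{-1}\Bigl(\prod_i b_i\Bigr)\, f(d_1+b_1^{-1},\dots,d_n+b_n^{-1}),
\]
where $f=\det(A+Y)$ is the given black box. The scalar $\det(M) = \det(A+D) = f(d_1,\dots,d_n)$ takes a single query, and it is nonzero by assumption.

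Points with some $b_i=0$ are handled exactly as in \cref{sec:bb with inverted inputs}. Set all zero coordinates equal to a fresh variable $t$, evaluate the univariate polynomial (degree $\le n$) at $n+1$ distinct nonzero values $\beta_0,\dots,\beta_n$, which exist because $|\F|>n+1$, and interpolate the constant term. This is precisely \cref{obs:black-box-var-inversion}. Each query to $g$ therefore costs $O(n)$ queries to $f$ plus polynomial-time arithmetic.

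The only delicate point is making sure that the black box for $g$ is a genuine polynomial black box, defined on all of $\F^n$ rather than only where $y_i\neq0$. This holds because $g$ is a polynomial: $\det(M^{-1}+Y)$ has polynomial entries. The identity above agrees with $g$ on the dense set of points where no coordinate is zero, so the interpolation recovers the correct value everywhere. No other obstacle is expected.
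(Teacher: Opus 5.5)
Your proposal is correct and is essentially the paper's argument: translate by $D$, divide by $\det(A+D)=f(d_1,\dots,d_n)$, invert the variables and multiply by $\prod_i y_i$ to get $\det((A+D)^{-1}+Y)$, and handle zero coordinates by univariate interpolation as in \cref{obs:black-box-var-inversion}. Your closing remark about agreement on a ``dense set'' is not needed, since interpolating a univariate polynomial of degree at most $n$ from $n+1$ exact values already recovers its constant term exactly.
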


\begin{proof}
By translating $y_1,\dots,y_n$, we get black-box access to $\det((A+D) + Y)$. Since we know the value of $\det(A+D)$ and that it is non-zero, we see
\[\det(A+D)^{-1}\det((A+D) + Y) = \det(I_n + (A+D)^{-1}Y)\text{.}\]
By \cref{obs:black-box-var-inversion}, we can invert the variables and multiply by $y_1\cdots y_n$ to get black-box access to
\[\det((A+D)^{-1}Y^{-1} + I_n)\det(Y) = \det((A+D)^{-1} + Y)\text{.}\]
\end{proof}

Let $D$ be a diagonal matrix of randomly chosen field elements. Since $\det(A+Y)$ is invertible, $\det(A+D)$ is invertible with high probability. Thus, we have black-box access to $\det((A+D)^{-1} + Y)$, and by \cref{lemma:random-diag-shift-satisfies-R-for-irred-comp}, the irreducible blocks of $(A+D)^{-1}$ satisfy property $\prop$ with high probability.

\subsection{Learning the index sets of the irreducible blocks of \texorpdfstring{$\det((A + D)^{-1} + Y)$}{det(adj(A+D)^(-1) + Y)}} \label{sec:access to irreduble blocks}

Suppose we have an index set $T \subseteq [n]$ and we wish to get black-box access to $\det(A[T] + Y[T])$ as well as the principal minor $\det(A[T])$ from black-box access to $\det(A+Y)$ where $Y = \diag(y_1,\dots,y_n)$. We can do so using polynomial interpolation:

\begin{claim}\label{claim:bba-to-pm-oracle}
    Let $\card{\F} > n$. Suppose we have black-box access to $\det(A+Y)$ where $Y = \diag(y_1,\dots,y_n)$ and $A \in \F^{n \times n}$. Then, for any set $T \subseteq [n]$, we can simulate one query to $\det(A[T] + Y[T])$ (and thus obtain the principal minor $\det(A[T])$) using $n-\card{T}+1$ queries to the black-box $\det(A+Y)$.
\end{claim}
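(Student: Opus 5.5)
The plan is to make the variables outside $T$ vanish by interpolation, so that only the principal submatrix indexed by $T$ survives. Write $\comp{T} = [n]\setminus T$. Expanding $\det(A+Y)$ along the diagonal variables gives
\[
\det(A+Y) = \sum_{S\subseteq [n]} \det(A[S]) \prod_{i\notin S} y_i .
\]
Grouping terms by which variables indexed by $\comp{T}$ appear, the coefficient of the monomial $\prod_{i\in \comp{T}} y_i$, viewed as a polynomial in the variables $\{y_i\}_{i\in T}$, is
\[
\sum_{S\subseteq T} \det(A[S])\prod_{i\in T\setminus S} y_i = \det(A[T]+Y[T]).
\]
So simulating a query to $\det(A[T]+Y[T])$ amounts to extracting this coefficient.

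To extract it with few queries, fix the values $y_i=a_i$ for $i\in T$ at the query point. Substitute $y_i = t$ for every $i\in\comp{T}$, with $t$ a fresh variable. This gives a univariate polynomial $g(t)$ of degree at most $m = |\comp{T}| = n-|T|$, by multilinearity.

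The only contribution to the top coefficient $t^m$ comes from monomials containing every $y_i$ with $i\in\comp{T}$. Hence the $t^m$-coefficient of $g$ is exactly $\det(A[T]+Y[T])$ evaluated at $(a_i)_{i\in T}$.

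Since $|\F|>n\ge m$, choose $m+1$ distinct field elements $\beta_0,\dots,\beta_m$ and query the black box at each $\beta_j$. Solving the Vandermonde system, as in the interpolation described in \cref{sec:bb with inverted inputs}, recovers the $t^m$-coefficient. This uses $n-|T|+1$ queries, as claimed.

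Finally, setting all $a_i=0$ for $i\in T$ gives $\det(A[T])$ itself.

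There is no real obstacle. The only point requiring care is confirming that distinct subsets $S\subseteq T$ do not collide in the $t^m$ coefficient. This holds because the variables indexed by $T$ are fixed to constants, and $t$ tracks only the variables indexed by $\comp{T}$.
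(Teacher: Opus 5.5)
Your proposal is correct and takes essentially the same route as the paper. Both substitute a single fresh variable for all $y_i$ with $i\notin T$, observe that the leading coefficient of the resulting univariate polynomial (of degree $n-|T|$) equals $\det(A[T]+Y[T])$ at the fixed point, and recover that coefficient by interpolation at $n-|T|+1$ distinct field elements; your subset expansion of $\det(A+Y)$ simply justifies the step the paper calls ``easy to see.''
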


\begin{proof}
    Without loss of generality, we can assume $T = [k]$. Let $f(y_1,\dots,y_n) = \det(A+Y)$, and suppose we want to evaluate $\det(A[T] + Y[T])$ at point $\alpha_1,\dots,\alpha_k$. Let $g(z) = f(\alpha_1,\dots,\alpha_k,z,z,\dots,z)$ where $z$ is a fresh new variable. Since $f$ is multilinear, we can write $$g(z) = \sum_{i=0}^{n-k} \gamma_i z^i\text{.}$$ From the nature of the polynomial $\det(A+Y)$, it is easy to see that $\gamma_{n-k}$ is the value we need. We find $n-k+1$ distinct values $\beta_0,\dots \beta_{n-k}$ and use interpolation to compute $\gamma_{n-k}$.

    If $\alpha_1=\alpha_2=\dots=\alpha_k=0$, we see that $\gamma_{n-k} = \det(A[T])$.
\end{proof}


By the previous subsection, we have black-box access to $\det((A+D)^{-1} + Y)$ where $Y= \diag(y_1,\dots,y_n)$. Let $B = (A+D)^{-1}$. Suppose $T_1 \sqcup T_2 \sqcup \dots \sqcup T_s = [n]$ are the index sets such that the irreducible blocks of $B$ are $B[T_1],B[T_2]\dots,B[T_s]$. By \cref{lem:redToIr}, there exists a permutation matrix $P$ such that $B \PME P^{T}\diag(B[T_1],\dots,B[T_s]) P$.

To learn the irreducible blocks of $B$ separately, we first need to find the index sets $T_1 \sqcup T_2 \sqcup \dots \sqcup T_s$. We can do so with the help of the following claim:

\begin{claim} Let $B$ be an $n \times n$ matrix whose irreducible blocks have non-zero off-diagonal entries. Then $i,j \in [n]$ belong to the same irreducible block of $B$ if and only if $B[\{i,j\}]$ is irreducible.\end{claim}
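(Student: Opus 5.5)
The plan is to work directly with the directed graph $G_B$ of \cref{ob:irreducible-matrix-and-directed-graph} and its strongly connected components. These components are exactly the index sets $T_1,\dots,T_s$ of the irreducible blocks. Two facts will be used: a $2\times 2$ matrix $B[\{i,j\}]$ is irreducible if and only if both off-diagonal entries $B[i,j]$ and $B[j,i]$ are nonzero, and the hypothesis guarantees that every irreducible block has all its off-diagonal entries nonzero. The two directions are then handled separately.

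For the forward direction, suppose $i,j\in T_r$ for some $r$ with $i\neq j$. Both $B[i,j]$ and $B[j,i]$ are off-diagonal entries of the block $B[T_r]$, so by hypothesis they are nonzero. Hence $B[\{i,j\}]$ is irreducible, since a $2\times 2$ matrix with both off-diagonal entries nonzero cannot be brought to upper triangular form by a simultaneous permutation. The degenerate case $i=j$ can be set aside or treated as trivial by convention, since the claim is only used on pairs of distinct indices.

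For the converse, suppose $B[\{i,j\}]$ is irreducible, so that $B[i,j]\neq 0$ and $B[j,i]\neq 0$. Then $(i,j)$ and $(j,i)$ are both edges of $G_B$. Thus $i$ and $j$ lie on a directed 2-cycle, hence in the same strongly connected component, and so in the same irreducible block by \cref{ob:irreducible-matrix-and-directed-graph}. Only the obvious direction of that observation is needed here, namely that the blocks are the strongly connected components.

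I do not expect any real obstacle. The only point requiring care is to fix the identification between irreducible blocks and strongly connected components via \cref{ob:irreducible-matrix-and-directed-graph}, and to note that the partition into irreducible blocks is canonical, so that "the same irreducible block" is well defined. Canonicity follows because strongly connected components are canonical, and \cref{lem:redToIr} is consistent with this.
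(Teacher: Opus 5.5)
Your proof is correct and follows essentially the same approach as the paper. The forward direction is identical. For the converse, the paper argues by contrapositive: if $i$ and $j$ lie in different blocks, the block-triangular form forces $B[i,j]=0$ or $B[j,i]=0$. This is the same fact you use directly through the $2$-cycle $i\to j\to i$ in $G_B$.
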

\begin{proof}
    If $i$ and $j$ belong to the same irreducible block, then by the fact that the irreducible blocks of $B$ have non-zero off-diagonal entries, $B[i,j]$ and $B[j,i]$ must be non-zero. Thus, $B[\{i,j\}]$ is irreducible.

    If $i$ and $j$ belong to two different irreducible blocks $B[T_q]$ and $B[T_r]$, then by \cref{lem:redToIr} there exists a permutation matrix $P$ such that $P^TBP$ is a block-triangular matrix whose blocks include $B[T_q]$ and $B[T_r]$. Thus, at least one of $B[T_q,T_r]$ and $B[T_r,T_q]$ is a zero matrix, which means at least one of $B[i,j]$ and $B[j,i]$ is zero. Thus, $B[\{i,j\}]$ is reducible.
\end{proof}
\begin{corollary}\label{cor:off_diagonal_non_zero_irred_block_transitivity}
    For $i,j,k \in [n]$, if both $B[\{i,j\}]$ and $B[\{j,k\}]$ are irreducible, then $B[\{i,k\}]$ is irreducible.
\end{corollary}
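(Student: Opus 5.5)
The corollary follows almost directly from the preceding claim, since "belonging to the same irreducible block" is an equivalence relation on $[n]$. We work under the standing hypothesis of that claim: $B$ is an $n\times n$ matrix whose irreducible blocks have nonzero off-diagonal entries. In our setting $B=(A+D)^{-1}$, and this holds with high probability by \cref{lemma:random-diag-shift-satisfies-R-for-irred-comp}.

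The plan has three steps.

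\begin{enumerate}
\item We recall that the index sets $T_1,\dots,T_s$ of the irreducible blocks form a partition of $[n]$. By \cref{ob:irreducible-matrix-and-directed-graph}, they are the strongly connected components of $G_B$. So the relation ``$i$ and $j$ lie in the same $T_q$'' is an equivalence relation, and in particular it is transitive.

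\item Given $i,j,k$ with $B[\{i,j\}]$ and $B[\{j,k\}]$ irreducible, we apply the forward direction of the claim twice. This shows that $i,j$ lie in a common block $T_q$, and that $j,k$ lie in a common block $T_r$. Since $j\in T_q\cap T_r$ and the blocks are disjoint, we get $T_q=T_r$, so $i,k$ lie in the same block.

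\item We apply the reverse direction of the claim to conclude that $B[\{i,k\}]$ is irreducible.
\end{enumerate}

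One degenerate case needs a remark. If $i=k$, then $B[\{i,k\}]$ is a $1\times 1$ matrix, which is trivially irreducible. Otherwise we assume $i,j,k$ are distinct, or note that the claim handles coincident indices directly.

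There is no real obstacle here. The only point needing care is that the claim is used as a genuine \emph{if and only if}, and that the blocks partition the index set, which is what makes the relation transitive.
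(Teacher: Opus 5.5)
Your proposal is correct and is exactly the argument the paper intends: the corollary follows from the preceding claim ($B[\{i,j\}]$ is irreducible if and only if $i,j$ lie in a common irreducible block) together with the fact that the block index sets partition $[n]$, so lying in a common block is transitive. Your labels ``forward'' and ``reverse'' are swapped relative to how the claim is stated, but since you use both directions of the equivalence this does not affect the argument.
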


By \cref{lemma:random-diag-shift-satisfies-R-for-irred-comp}, the irreducible blocks of $B = (A+D)^{-1}$ satisfy property $\prop$, and thus have non-zero off-diagonal entries. With black-box access to $\det(B[\{i,j\}] + Y[\{i,j\}])$ given by \cref{claim:bba-to-pm-oracle}, we can easily check if $B[\{i,j\}]$ is irreducible.

\begin{claim} \label{clm:checking 2x2 reducibility}
    Let $\det(B[\{i,j\}] + Y[\{i,j\}]) = y_iy_j + \alpha y_i + \beta y_j + \gamma$, then $B[\{i,j\}]$ is reducible if and only if $\gamma = \alpha\beta$.
\end{claim}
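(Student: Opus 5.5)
We expand the $2\times 2$ determinant directly and compare coefficients. Write $B[\{i,j\}]=\begin{pmatrix} b_{ii} & b_{ij}\\ b_{ji} & b_{jj}\end{pmatrix}$. Then
\[
\det(B[\{i,j\}]+Y[\{i,j\}]) = (b_{ii}+y_i)(b_{jj}+y_j) - b_{ij}b_{ji} = y_iy_j + b_{jj}\,y_i + b_{ii}\,y_j + (b_{ii}b_{jj}-b_{ij}b_{ji}).
\]
Matching coefficients gives $\alpha = b_{jj}$, $\beta=b_{ii}$ and $\gamma = b_{ii}b_{jj}-b_{ij}b_{ji}$. Consequently $\gamma-\alpha\beta = -b_{ij}b_{ji}$.

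It remains to relate this to reducibility of a $2\times 2$ matrix. By \cref{def:Reducible} (equivalently \cref{ob:irreducible-matrix-and-directed-graph}), the associated directed graph on $\{i,j\}$ has the edges $(i,j)$ iff $b_{ij}\neq 0$ and $(j,i)$ iff $b_{ji}\neq 0$. It is strongly connected exactly when both off-diagonal entries are nonzero. Thus $B[\{i,j\}]$ is reducible iff $b_{ij}b_{ji}=0$, which is iff $\gamma=\alpha\beta$.

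The coefficients $\alpha,\beta,\gamma$ are themselves obtainable. By \cref{claim:bba-to-pm-oracle} we have black-box access to the bivariate polynomial, and we can interpolate its four coefficients with a constant number of queries. Alternatively, $\gamma=\det B[\{i,j\}]$, $\alpha=\det B[\{j\}]$ and $\beta=\det B[\{i\}]$ are principal minors. There is no real obstacle here. The only point needing care is checking that reducibility of a $2\times2$ matrix means exactly that some off-diagonal entry vanishes, and this is immediate from the block-triangular definition, since the only nontrivial permutations give upper or lower triangular forms.
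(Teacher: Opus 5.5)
Your proposal is correct and follows essentially the same route as the paper: expand the $2\times 2$ determinant, read off $\alpha = B[j,j]$, $\beta = B[i,i]$, $\gamma = B[i,i]B[j,j]-B[i,j]B[j,i]$, and note that $\gamma=\alpha\beta$ iff $B[i,j]B[j,i]=0$, i.e., iff $B[\{i,j\}]$ is reducible. Your explicit justification of why a $2\times 2$ matrix is reducible exactly when an off-diagonal entry vanishes is a small detail the paper leaves implicit.
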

\begin{proof}
    \begin{align*}\det(B[\{i,j\}] + Y[\{i,j\}]) &= (y_i + B[i,i])(y_j + B[j,j]) - B[i,j]B[j,i] \\
        &= y_iy_j + y_iB[j,j] + B[i,i]y_j + B[i,i]B[j,j] - B[i,j]B[j,i]\text{.}\end{align*}
    $B[i,i]B[j,j] - B[i,j]B[j,i] = B[i,i]B[j,j]$ if and only if $B[i,j]B[j,i] = 0$, that is $B[\{i,j\}]$ is reducible.
\end{proof}

We now have the following algorithm to find the index sets for the irreducible blocks:

\begin{algorithm}[H]
\caption{Finding the indices of the irreducible blocks of $\det((A+D)^{-1} + Y)$}
\label{algo:PMAP_redn_finding_irred_blocks}
\textbf{Input:} Black box access to $\det(B+Y)$ where $Y = \diag(y_1,\dots,y_n)$ and the irreducible blocks of $B$ have non-zero off-diagonal entries.\\
\textbf{Output:} A set of index sets $\mathcal{T} = \{T_1,\dots,T_s\}$ such that $B[T_1],\dots,B[T_s]$ are the irreducible blocks of $B$.
\begin{algorithmic}[1]
\State Let $E \gets \emptyset$.
\State Let $\mathcal{T} \gets \emptyset$.
\For{$i \in [n]$}
    \If{$i \not\in E$}
        \State $E \gets E \cup \{i\}$.
        \State $T \gets \{i\}$.
        \For{$j \in [n]$}
            \label{alg:step-check-2-irreducible-1}\If{$j \not\in E$ and $B[\{i,j\}]$ is irreducible}
                \State $E \gets E \cup \{j\}$.
                \State $T \gets T \cup \{j\}$.
            \EndIf
        \EndFor
        \State $\mathcal{T} \gets \mathcal{T} \cup \{T\}$.
    \EndIf
\EndFor
\For{$T \in \mathcal{T}$}
    \For{$i,j \in T, i\neq j$}
        \label{alg:step-check-2-irreducible-2}\If{$B[\{i,j\}]$ is reducible}
            \label{algo:PMAP_redn_finding_irred_blocks_error}\State \Return \textsc{ERROR}.
        \EndIf
    \EndFor
\EndFor
\State \Return $\mathcal{T}$.
\end{algorithmic}
\end{algorithm}

In steps 8 and 18 of \cref{algo:PMAP_redn_finding_irred_blocks}, we check the irreducibility of the $2 \times 2$ matrices using \cref{clm:checking 2x2 reducibility}. Note that when the irreducible blocks of $B$ have non-zero off-diagonal entries, \cref{cor:off_diagonal_non_zero_irred_block_transitivity} ensures that the algorithm never throws an error in step 19.

Once we have the index sets $T_1,\dots,T_s$, using \cref{claim:bba-to-pm-oracle}, we have black-box access to $\det(B[T_1] + Y[T_1]), \det(B[T_2] + Y[T_2]), \dots, \det(B[T_s] + Y[T_s])$ as well as their principal minors. We can learn the matrices $B[T_1],B[T_2],\dots, B[T_s]$ separately. Since $B[T_1],B[T_2],\dots, B[T_s]$ satisfy property $\prop$ and we have access to their principal minors (which are just the principal minors of the matrix $B$), learning them reduces to PMAP for matrices satisfying property $\prop$.

\begin{remark}\label{remark:irred_blocks_finding_derandomisation}
    In the scenario where the diagonal matrix $D$ is `bad', that is, the irreducible blocks of $(A+D)^{-1}$ do not all have non-zero off-diagonal entries, observe that if $i,j$ are not indices to the same irreducible block, then \cref{algo:PMAP_redn_finding_irred_blocks} still correctly identifies $i$ and $j$ as indices to different blocks (as the matrix $B[\{i,j\}]$ remains reducible). It is in the case where $i,j$ are indices to the same irreducible block that the algorithm may fail: $B[\{i,j\}]$ might be reducible owing to the diagonal entries no longer being guaranteed to be non-zero. Thus, for a `bad' diagonal matrix $D$, \cref{algo:PMAP_redn_finding_irred_blocks} either finds a violation of \cref{cor:off_diagonal_non_zero_irred_block_transitivity} and throws an error or reports more irreducible blocks than what is actually correct.

    If we are given a quasi-polynomial number of diagonal matrices $D_1,\dots,D_k$ where at least one is guaranteed to be `good', as is the case in \cref{lemma:random-diag-shift-satisfies-R-derandomization}, then by running the algorithm for all matrices $(A+D_1)^{-1},(A+D_2)^{-1},\dots,(A+D_k)^{-1}$, we see that for the `good' diagonal matrix $D_i$, the algorithm will output the minimum number of index sets, and thus we can still find the index sets corresponding to the irreducible blocks of $(A+D_i)^{-1}$ (which is the same as that of $A$) in deterministic quasi-polynomial time.
\end{remark}


\subsection{Recovering a matrix that is principal minor equivalent to \texorpdfstring{$A$}{A}}

Let $T_1 \sqcup T_2 \sqcup \dots \sqcup T_s = [n]$ be the indices corresponding to the irreducible blocks of $(A+D)^{-1}$ where $D$ is the diagonal matrix with randomly chosen entries. Let $C_1, C_2, \dots, C_s$ be matrices such that $C_i \PME (A+D)^{-1}[T_i]$. Since \cref{algo:PMAP_redn_finding_irred_blocks} gave us the index sets $T_1,T_2,\dots,T_s$, we can find a permutation matrix $P$ such that $C = P^{T} \diag(C_1,\dots,C_s) P$ is principal minor equivalent to $(A+D)^{-1}$ (by \cref{lem:redToIr}). We now show how to recover a matrix $B \PME A$ from $C$.

\begin{lemma}\label{lem:pme_A_from_pme_adj_A}
    Let $C \PME (A+D)^{-1}$ where $D$ is a diagonal matrix such that $A+D$ is invertible. Then, $C^{-1} - D \PME A$.
\end{lemma}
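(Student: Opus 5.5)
The plan is to pass through the polynomial identity characterization of principal minor equivalence: $M \PME N$ iff $\det(M+Y)=\det(N+Y)$ as polynomials in $Y=\diag(y_1,\dots,y_n)$. Write $E=(A+D)^{-1}$, so that $C\PME E$, i.e.\ $\det(C+Y)=\det(E+Y)$. The first step is to observe that $C$ is invertible: taking $S=[n]$ gives $\det C=\det E\neq 0$. Hence $C^{-1}$, and with it $C^{-1}-D$, is well defined.

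Next, I would show that inversion preserves principal minor equivalence for invertible matrices. For nonsingular $M$, the complementary-minor form of Jacobi's identity gives $\det(M^{-1}[S])=\det(M[\comp{S}])/\det M$ for every $S\subseteq[n]$. Since $C$ and $E$ have equal principal minors, including equal full determinants, every principal minor of $C^{-1}$ equals the corresponding principal minor of $E^{-1}=A+D$. Thus $C^{-1}\PME A+D$.

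Equivalently, this can be done polynomially: $\det(M^{-1}+Y)=\det(M)^{-1}\det(I+MY)=\det(M)^{-1}\det(Y)\det(M+Y^{-1})$ as rational functions. This matches the variable-inversion trick used in \cref{lem:bba_to_adj}.

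Finally, I would undo the diagonal shift. From $\det(C^{-1}+Y)=\det(A+D+Y)$ as polynomials, substitute $Y\mapsto Y-D$; this is a valid substitution of the variables $y_i\mapsto y_i-d_i$. It gives $\det(C^{-1}-D+Y)=\det(A+Y)$, which is exactly $C^{-1}-D\PME A$.

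I do not expect any genuine obstacle here. The only point needing care is justifying the inverse step: invertibility of $C$ must come from the order-$n$ minor, and the identity should be applied either entrywise through Jacobi's complementary-minor formula or as a rational-function identity whose cleared form is a polynomial identity.
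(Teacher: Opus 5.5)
Your proposal is correct and is essentially the paper's proof: you get invertibility of $C$ from the order-$n$ minor, then use $\det(M^{-1}+Y)=\det(M)^{-1}\det(Y)\det(M+Y^{-1})$ (the paper's ``invert the variables and multiply by $y_1\cdots y_n$'' step) to obtain $C^{-1}\PME A+D$, and then apply the shift $y_i\mapsto y_i-d_i$. Your alternative for the middle step, Jacobi's complementary-minor formula $\det(M^{-1}[S])=\det(M[\comp{S}])/\det M$, is an equally valid entrywise version of the same step.
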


\begin{proof}
Since $C \PME (A+D)^{-1}$, $\det(C + Y) = \det((A+D)^{-1} + Y)$ where $Y = \diag(y_1,\dots,y_n)$. As $\det(C) = \det((A+D)^{-1}) \neq 0$, 
\begin{align*}
    \det(C^{-1}) \det(C + Y) &= \det(A+D) \det((A+D)^{-1} + Y) \\
    \implies \det(I_n + C^{-1} Y)&= \det(I_n + (A+D)Y)\text{.}
\end{align*}
Inverting the variables and multiplying by $y_1\dots y_n$, we have
\begin{align*}
    \det(C^{-1} Y^{-1} + I_n)\det(Y) &= \det((A+D)Y^{-1} + I_n)\det(Y) \\
    \implies \det(C^{-1} + Y) &= \det(A+D + Y)\text{.}
\end{align*}
Translating $Y$ by $-D$, we see that $\det(C^{-1} - D + Y) = \det(A + Y)$. Thus, $C^{-1} - D \PME A$.
\end{proof}

\subsection{Derandomising the reduction}\label{sec:deranomisation}

The randomised reduction from $\probB$ to PMAP for matrices with property $\prop$, shown in the preceding sections, can be derandomised in quasi-polynomial time using the hitting-set construction from \cite{GT20}. We provide a sketch of this deterministic reduction below:  

\begin{enumerate}
    \item Using \cref{remark:irred_blocks_finding_derandomisation}, we can find the index sets corresponding to the irreducible blocks of $A$. 
    \item By \cref{lemma:random-diag-shift-satisfies-R-derandomization}, which uses the \cite{GT20} hitting-set, we have a list of quasi-polynomially many diagonal matrices $D_1,\dots,D_k$ that contains at least one matrix $D$ such that the irreducible blocks of $(A+D)^{-1}$ satisfy property~$\prop$. We say that such a $D$ is correct.
    \item How do we identify a correct $D$ in the above list? If we proceed with an incorrect $D$, then it is quite possible that the reconstructed matrix $B$ is not principal minor equivalent to $A$. This issue can be avoided if we succeed in checking at the end that the reconstructed $B$ is indeed principal minor equivalent to $A$. Next, we argue how this is done. 
    
    
    \item To verify if $B\PME A$, given $B$ \emph{explicitly} and black-box access to $\det(A+Y)$, we first check if their irreducible blocks have the same index sets (using \cref{remark:irred_blocks_finding_derandomisation}). Then, we verify PME of the irreducible blocks separately. For the rest of the argument, assume $A$ and $B$ are irreducible. By \cref{lemma:random-diag-shift-satisfies-R-derandomization}, we have quasi-polynomially many diagonal matrices $D_1,\dots,D_r$ that contain at least one matrix $D$ such that $(B+D)^{-1}$ satisfies property $\prop$. Since we know $B$ explicitly, we can find the correct $D$ by checking if it satisfies the conditions listed in \cref{clm:satisfying-prop-Q}; the third condition can be verified by checking if $\det\left(\adj{B+D}[\{i,j\}, \{k,l\}]\right)=0 \Longrightarrow \forall m\in[r], \ \det\left(\adj{B+D_m}[\{i,j\},\{k,l\}]\right)=0$. By \cref{lem:bba_to_adj}, we have black-box access to $\det((A+D)^{-1} + Y)$ and its principal minors. We can now verify if the corresponding principal minors of orders at most $4$ of $(A+D)^{-1}$ and $(B+D)^{-1}$ are equal. If they are, then by \cref{thm:main-characterization}, $(B+D)^{-1} \PME (A+D)^{-1}$, and by \cref{lem:pme_A_from_pme_adj_A}, $B \PME A$.
\end{enumerate}
\section{Sufficiency of PME up to Order $4$: Proof of Theorem \ref{thm:main-characterization}}
\label{sec:PME-upto-4-characterization}

A key technical contribution of this paper is the identification of a property, denoted by $\prop$, such that for any two square matrices $A$ and $B$ with nonzero off-diagonal entries, if $A$ satisfies $\prop$, then $A \PME B$ if and only if $\kPME{A}{B}{4}$. In this section, we first define property $\prop$ and then prove~\cref{thm:main-characterization}. Finally, we construct a counterexample showing that if $A$ does not satisfy $\prop$, it is possible for $\kPME{A}{B}{4}$ to hold while $A$ and $B$ are not principal minor equivalent.

\begin{definition}[Property $\prop$]
\label{def:matrix-property}
Let $A$ be an $n\times n$ over a field $\F$. We say that $A$ has the \emph{property $\prop$} if it satisfies the following:
\begin{enumerate}
\item all the off-diagonal entries are nonzero.
\item if $\rank A[\{i,j\}, \{k,\ell\}]=1$ for any  four distinct elements $i, j,k,\ell\in[n]$, then there exists a subset $X$ of $[n]$ such that $i,j\in X$, $k,\ell\in\comp{X}$, and $\rank A[X, \comp{X}]=1$.
\end{enumerate}
\end{definition}

The following lemma shows that if a matrix $A$ satisfies property $\prop$, then every principal submatrix of $A$ also satisfies property $\prop$.

\begin{lemma}
\label{prop:matrix-property-large-to-small}
Let $A$ be an $n\times n$ matrix over a field $\F$ with $n\geq 5$. Suppose that $A$ satisfies property $\prop$. Let  $S\subseteq [n]$ with $|S|\geq 5$. Then, $A[S]$ also satisfies property $\prop$.
\end{lemma}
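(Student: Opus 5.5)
The plan is to check the two conditions of property $\prop$ for $A[S]$ directly from the corresponding conditions for $A$. Condition 1 is inherited at once: every off-diagonal entry of $A[S]$ is an off-diagonal entry of $A$, so it is nonzero.

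For condition 2, fix four distinct indices $i,j,k,\ell \in S$ with $\rank A[S][\{i,j\},\{k,\ell\}] = 1$. This submatrix is literally $A[\{i,j\},\{k,\ell\}]$, so property $\prop$ of $A$ supplies a set $X \subseteq [n]$ with $i,j \in X$, $k,\ell \in \comp{X}$ and $\rank A[X,\comp{X}] = 1$. We then take the restriction $X' := X \cap S$. Its complement inside $S$ is $S \setminus X' = \comp{X} \cap S$, and $i,j \in X'$, $k,\ell \in S\setminus X'$. The matrix $A[S][X', S\setminus X']$ is a submatrix of $A[X,\comp{X}]$, so its rank is at most $1$. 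It contains the entry $A[i,k]$, which is nonzero by density, so its rank is exactly $1$. Hence $X'$ witnesses condition 2 for $A[S]$.

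The only point needing care is the size convention. \cref{def:matrix-property} imposes no size restriction on $X$, but if the definition is read as requiring the witness to be a cut (so $2 \le |X'| \le |S|-2$), this also holds automatically, since $X'$ contains $i,j$ and its complement in $S$ contains $k,\ell$. I therefore expect no real obstacle; the hypothesis $|S| \ge 5$ (and $n \ge 5$) appears to serve only to keep us in the regime where the notion is meaningful and consistent with later uses. The proof is essentially the observation that rank-one off-diagonal blocks restrict to rank-at-most-one blocks, with density upgrading this to rank exactly one.
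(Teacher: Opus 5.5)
Your proposal is correct and follows the paper's proof essentially verbatim: density is inherited, and the witness $X$ is restricted to $X\cap S$, using that $A[X\cap S,\comp{X}\cap S]$ is a submatrix of $A[X,\comp{X}]$. Your extra remark, that the nonzero entry $A[i,k]$ forces the rank to be exactly one rather than at most one, fills a small step the paper leaves implicit.
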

\begin{proof}
Note that off-diagonal entries of $A[S]$ are also off-diagonal entries of $A$. Hence, they are nonzero. Let $i,j,k, \ell$ be four distinct elements from the set $S$ such that $\rank A[\{i,j\}, \{k,\ell\}]=1$. Since $A$ satisfies property $\prop$, there exists a subset $X$ of $[n]$ such that $i, j\in X$, $k,\ell\in\comp{X}$ and $\rank A[X, \comp{X}]=1$. Let $X_S=X\cap S$ and $\comp{X}_S=\comp{X}\cap S$. Then, the submatrix $A[X_S, \comp{X}_S]$ is a submatrix of $A[X, \comp{X}]$. Therefore, $\rank A[X_S, \comp{X}_S]=1$.
\end{proof}

Our next lemma is inspired by~\cite[Lemma~6]{Loewy86}. Suppose $A$ is an $n \times n$ matrix. The lemma provides a way to derive rank-one constraints on certain submatrices of $A$ from a cut in one of its $(n-1) \times (n-1)$ principal submatrices. Assume that $A$ has no cut. Then, combined with~\cref{lem:no-cut-property}, this result implies that $A$ must have at least three $(n-1) \times (n-1)$ principal submatrices with no cut. This will be useful for the inductive step in our proof.
\begin{lemma}
\label{lem:lift-of-cut}
Let $n\geq 5$ be a positive integer. Let $A$ be an $n\times n$ matrix over $\F$ satisfying property $\prop$. Let $X\subseteq [n]\setminus\{i\}$ be a cut of the matrix $A[\nsubset{i}]$ and $\comp{X}=[n]\setminus (X\cup\{i\})$. Then, $$\rank A[X+i,  \comp{X}]=1 \text{ or } \rank A[X,  \comp{X}+i]=1,$$ and $$\rank A[\comp{X}+i,  X]=1 \text{ or } \rank A[\comp{X},  X+i]=1.$$
\end{lemma}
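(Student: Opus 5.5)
The plan is a direct argument from the rank-one extension property, using density to pass from local $2\times 2$ rank conditions to global ones. First note that, since $A$ satisfies property $\prop$, it is dense. Since $X$ is a cut of $A[\nsubset{i}]$, we have $|X|\geq 2$, $|\comp{X}|\geq 2$ and $\rank A[X,\comp{X}]=\rank A[\comp{X},X]=1$.

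The second statement is the first one with the roles of $X$ and $\comp{X}$ exchanged. This is legitimate because $\comp{X}$ is also a cut of $A[\nsubset{i}]$, as the definition of a cut is symmetric in $X$ and $\comp{X}$. So it suffices to prove the first statement: $\rank A[X+i,\comp{X}]=1$ or $\rank A[X,\comp{X}+i]=1$.

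Assume $\rank A[X+i,\comp{X}]\geq 2$. All rows of $A[X,\comp{X}]$ are nonzero multiples of one another, since the matrix is dense and of rank one. Hence row $i$ restricted to $\comp{X}$ is not proportional to row $j$ restricted to $\comp{X}$, for a fixed $j\in X$. This gives two indices $k,\ell\in\comp{X}$ with $\det A[\{j,i\},\{k,\ell\}]\neq 0$. Now take any $j'\in X\setminus\{j\}$. The submatrix $A[\{j,j'\},\{k,\ell\}]$ lies inside $A[X,\comp{X}]$, so it has rank one. Applying the rank-one extension property to it, we get a set $Y\subseteq[n]$ with $j,j'\in Y$, $k,\ell\in\comp{Y}$ and $\rank A[Y,\comp{Y}]=1$.

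The key step is to locate $i$. If $i\in Y$, then $A[\{j,i\},\{k,\ell\}]$ would be a submatrix of $A[Y,\comp{Y}]$ and hence singular, which contradicts the choice of $k,\ell$. Therefore $i\in\comp{Y}$, so $A[\{j,j'\},\{k,i\}]$ has rank one. By density this reads $A[j',i]/A[j,i]=A[j',k]/A[j,k]$.

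Since $j'\in X$ was arbitrary, column $i$ restricted to $X$ is a scalar multiple of column $k$ restricted to $X$. Column $k$ restricted to $X$ already spans the column space of $A[X,\comp{X}]$, which has rank one. Consequently $\rank A[X,\comp{X}+i]=1$, which proves the first statement.

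The main obstacle is producing the witness pair $k,\ell$ such that the $2\times 2$ minor involving row $i$ is nonsingular. This is exactly what forces $i$ to the $\comp{Y}$ side for every $j'$ simultaneously, and it uses $|\comp{X}|\geq 2$ together with density. The remaining steps are routine manipulations of proportional rows and columns.
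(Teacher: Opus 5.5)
Your proof is correct, and it takes a different route from the paper's.

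\textbf{The paper's route.} It fixes $s\in X$ and $t\in\comp{X}$, and applies property $\prop$ to every minor $A[\{s,u\},\{t,v\}]$ with $u\in X-s$ and $v\in\comp{X}-t$. It then uses the uncrossing result \cref{prop:old-to-new-cut} twice: first intersecting the resulting sets over $v$, then taking the union over $u$. This assembles a single set $T$ with $X\subseteq T$, $\comp{X}\subseteq[n]\setminus T$ and $\rank A[T,[n]\setminus T]=1$. Since $T$ and its complement partition $[n]$, the index $i$ must fall on one side, so $T=X+i$ or $T=X$. The dichotomy follows with no case analysis.

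\textbf{Your route.} You argue by cases. Assuming row $i$ raises the rank of $A[X+i,\comp{X}]$, you extract a nonsingular witness minor $A[\{j,i\},\{k,\ell\}]$. For each $j'$, property $\prop$ supplies a set $Y$ for the minor $A[\{j,j'\},\{k,\ell\}]$, and the witness forces $i$ into $\comp{Y}$. The proportionality of column $i$ to column $k$ on $X$ is then read off entrywise. Your reduction of the second statement to the first by swapping $X$ and $\comp{X}$ matches the paper's ``similarly''.

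\textbf{What each buys.} Your argument avoids \cref{prop:old-to-new-cut} entirely. It needs only $|X|-1$ applications of $\prop$, all with the same column pair $\{k,\ell\}$, so it is more elementary and local. The paper's argument is more uniform and produces the lifted rank-one split as one explicit set. It also reuses the same intersection/union machinery that appears again in the proof of \cref{lem:small-matrices-DE-to-cut-3}.
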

\begin{proof}
From the definition of the cut of a matrix, we know that the cardinality of both $X$ and $\comp{X}$ is at least $2$. Let $s\in X$ and $t\in \comp{X}$. Let $X'=X-s$ and $\comp{X}'=\comp{X}-t$. Since $\rank A[X, \comp{X}]=1$, for all $u\in X'$ and $v\in\comp{X}'$, $$\rank A[\{s, u\}, \{t, v\}]=1.$$ Property $\prop$ of $A$ implies that, for all $u\in X'$ and $v\in\comp{X}'$, we have a subset $X_{uv}$ of $[n]$ such that $$s, u\in X_{uv},\ \ \ \  t, v\in \comp X_{uv}=[n]\setminus X_{uv}\  \text{ and }\ \rank A[X_{uv}, \comp X_{uv}]=1.$$ Now, for every $u\in X'$, consider the following subsets of $[n]$. $$X_u=\bigcap_{v\in \comp{X}'}X_{uv}\ \ \text{ and }\ \ \comp{X}_u=[n]\setminus X_u=\bigcup_{v\in \comp{X}'} \comp{X}_{uv}.$$ Then, \cref{prop:old-to-new-cut} implies that for all $u\in X'$, $\rank A[X_u, \comp{X}_u]=1$. Also, observe that for all $u\in X'$, $\comp{X}$ is a subset of $\comp{X}_u$ and $\{s,u\}\subseteq X_u$. We now consider the following subsets of $[n]$. $$T=\bigcup_{u\in X'}X_u \ \ \text{ and }\ \ \comp{T}=[n]\setminus T=\bigcap_{u\in X'}\comp{X}_u.$$ Note that $X\subseteq T$ and $\comp{X}\subseteq \comp{T}$. Furthermore, from~\cref{prop:old-to-new-cut}, we have $\rank A[T, \comp T]=1$. Since $X\subseteq T$, $\comp{X}\subseteq \comp{T}$, and $(X, \comp{X})$ is a partition of $\nsubset{i}$, $$\text{either } T=X+i \text{ and } \comp{T}=\comp{X},\ \text{ or }\ T=X \text{ and } \comp{T}=\comp{X}+i,$$ and  
\begin{equation}
\label{eqn:lift-of-cut-proof-eqn-1}
\rank A[X+i, \comp{X}]=1,\ \ \text{ or }\ \ \rank A[X, \comp{X}+i]=1
\end{equation}

Similarly, using $\rank A[\comp{X}, X]=1$, we can show that 
\begin{equation}
\label{eqn:lift-of-cut-proof-eqn-2}
\rank A[\comp{X}+i, X]=1,\ \ \text{ or }\ \ \rank A[\comp{X}, X+i]=1
\end{equation}
Thus,~\cref{eqn:lift-of-cut-proof-eqn-1} and~\cref{eqn:lift-of-cut-proof-eqn-2}, taken together, complete the proof of the lemma. 
\end{proof}

The combination of \cref{lem:no-cut-property} and \cref{lem:lift-of-cut} yields the following corollary.

\begin{corollary}
\label{cor:no-cut-from-big-to-small-matrix}
Let $n\geq 5$ be a positive integer. Let $A$ be an $n\times n$ matrix over $\F$ satisfying property $\prop$. Suppose that $A$ has no cut. Then, there exist at least three elements $i_1, i_2, i_3\in[n]$ such that the submatrices $A[\nsubset{i_j}]$ for $j=1,2,3$ have no cut. 
\end{corollary}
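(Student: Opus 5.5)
The plan is to show that every index $i$ for which $A[\nsubset{i}]$ has a cut lies in $\calB_1(A)\cup\calB_2(A)$. Then \cref{lem:no-cut-property} does the counting. Since $A$ satisfies property $\prop$, all its off-diagonal entries are nonzero, so $A$ is irreducible. By hypothesis $A$ has no cut, so \cref{lem:no-cut-property} applies and gives $|\calB_1(A)\cup\calB_2(A)|\le n-3$. Hence at least three indices $i_1,i_2,i_3$ lie outside $\calB_1(A)\cup\calB_2(A)$, and for each of them $A[\nsubset{i_j}]$ has no cut.

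For the main step, fix $i$ and suppose $X$ is a cut of $A[\nsubset{i}]$, with $\comp X=[n]\setminus(X+i)$ and $|X|,|\comp X|\ge 2$. \cref{lem:lift-of-cut} gives two disjunctions:
\begin{itemize}
\item \emph{(a)} $\rank A[X+i,\comp X]=1$ or $\rank A[X,\comp X+i]=1$;
\item \emph{(b)} $\rank A[\comp X+i,X]=1$ or $\rank A[\comp X,X+i]=1$.
\end{itemize}
The four combinations are handled as follows.
\begin{itemize}
\item If (a) holds via $A[X+i,\comp X]$ and (b) via $A[\comp X+i,X]$, then $(X+i,\comp X)$ and $(\comp X+i,X)$ both lie in $\calA(A)$. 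This is exactly the witness required for $i\in\calB_1(A)$.
\item If (a) holds via $A[X,\comp X+i]$ and (b) via $A[\comp X,X+i]$, then $(X,\comp X+i)$ and $(\comp X,X+i)$ lie in $\calA(A)$, so $i\in\calB_2(A)$.
\item If (a) holds via $A[X+i,\comp X]$ and (b) via $A[\comp X,X+i]$, then both $A[X+i,\comp X]$ and $A[\comp X,X+i]$ have rank at most one. So $X+i$ is a cut of $A$, with size in $[3,n-2]$, contradicting that $A$ has no cut.
\item Symmetrically, if (a) holds via $A[X,\comp X+i]$ and (b) via $A[\comp X+i,X]$, then $X$ (equivalently $\comp X+i$) is a cut of $A$, again a contradiction.
\end{itemize}

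In the membership cases, the size conditions $2\le|S|\le n-2$ required in the definition of $\calA(A)$ must be checked. They follow from $|X|,|\comp X|\ge 2$ and $|X|+|\comp X|=n-1$: for instance $|X+i|\le n-2$ because $|\comp X|\ge2$. The rank is exactly one, not zero, because the entries are nonzero.

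The main care point is matching the lemma's four outcomes to the definitions of $\calB_1,\calB_2$ and of a cut. This requires getting the roles of rows and columns right and recognizing that the two mixed cases produce a genuine cut of $A$. Beyond that bookkeeping, the argument is a direct combination of \cref{lem:lift-of-cut} and \cref{lem:no-cut-property}.
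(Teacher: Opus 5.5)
Your proof is correct and follows the paper's argument: \cref{lem:lift-of-cut} places every index $i$ for which $A[\nsubset{i}]$ has a cut into $\calB_1(A)\cup\calB_2(A)$, and \cref{lem:no-cut-property} bounds that union by $n-3$. Your explicit handling of the two mixed cases (each yields a cut of $A$, contradicting the hypothesis) and of the size conditions fills in details the paper leaves implicit.
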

\begin{proof}
Note that $A$ has no cut. Therefore, for every $i\in [n]$ with $A[\nsubset{i}]$ has a cut,~\cref{lem:lift-of-cut}, when applied to $A$, yields a constraint of type~\cref{eqn:cut-of-matrix-two} or~\cref{eqn:cut-of-matrix-three}. Consequently, by applying~\cref{lem:no-cut-property}, the corollary follows.
\end{proof}

Suppose that $n\geq 5$ is a positive integer. Let $A$ and $B$ be two $n\times n$ matrices such that $A$ has no cut and $A$ satisfies property $\prop$. Additionally, assume that $\kPME{A}{B}{n-1}$. Then, from~\cref{cor:no-cut-from-big-to-small-matrix} and~\cref{lem:no-cut-to-PME},  we know that $$|\calD_1\left(A, B\right)\cup \calD_2\left(A, B\right)|\geq 3.$$ On the other hand, applying~\cref{lem:partial-DS-to-complete-DS}, if $$|\calD_1\left(A, B\right)\cup \calD_2\left(A, B\right)|\geq 5,$$ then $A\DE B$, and therefore using~\cref{obs:DE-to-PME}, $A\PME B$. We also show that assumption  
\begin{equation}
\label{eqn:union-of-D_1-and-D_2}
|\calD_1\left(A, B\right)\cup \calD_2\left(A, B\right)|=3 \text{ or } 4.
\end{equation}
leads to a contradiction. Therefore, $$|\calD_1\left(A, B\right)\cup \calD_2\left(A, B\right)|\neq 3 \text{ or } 4.$$
This, in turn, allows us to prove that $A\PME B$ if and only if $\kPME{A}{B}{4}$. First, we show that if~\cref{eqn:union-of-D_1-and-D_2} holds, then $\calD_1(A, B)$ and $\calD_2(A,B)$  cannot have a nonempty intersection. 
\begin{lemma}
\label{prop:empty-intersection-of-D_1-and-D_2}
Let $n\geq 5$ be a positive integer. Suppose that $A$ and $B$ are two $n\times n$ matrices with all off-diagonal entries of $A$ are nonzero, and $\kPME{A}{B}{3}$. Let $$|\calD_1\left(A, B\right)\cup \calD_2\left(A, B\right)|=3 \text{ or } 4.$$ Then, the intersection of $\calD_1(A,B)$ and $\calD_2(A,B)$ is empty.
\end{lemma}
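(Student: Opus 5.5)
The plan is to argue by contradiction: assume some index $i$ lies in $\calD_1(A,B)\cap\calD_2(A,B)$, and use \cref{lem:partial-DS-to-complete-DS} to reduce to a single rigid configuration.

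\textbf{Reduction to one configuration.} First I note a general fact. If $|\calD_1(A,B)|\geq 3$, then \cref{lem:partial-DS-to-complete-DS} gives $A\DS B$, so every principal submatrix of $A$ is diagonally similar to the corresponding one of $B$. Hence $\calD_1(A,B)=[n]$ and $|\calD_1\cup\calD_2|=n\geq 5$, contradicting the hypothesis. The same holds for $\calD_2$, so both $|\calD_1|\leq 2$ and $|\calD_2|\leq 2$.

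Now take $i$ in the intersection. Every other element of $\calD_1\cup\calD_2$ lies in $\calD_1$ or in $\calD_2$, and each of these sets already contains $i$. If the union had size $4$, two of the three other elements would fall in the same set, giving that set size $3$, which is impossible. So the union has size exactly $3$, and the only surviving configuration is
\[
\calD_1=\{i,j\},\qquad \calD_2=\{i,k\},\qquad \calD_1\cup\calD_2=\{i,j,k\}.
\]
Since $n\geq 5$, there is some $m\notin\{i,j,k\}$. The goal in this configuration is to show $m\in\calD_1\cup\calD_2$, which contradicts the size of the union.

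\textbf{Tool: 3-cycle products.} To decide diagonal similarity I will use cycle products. For a matrix $M$ with nonzero off-diagonal entries, write $c_M(p,q,r)=M[p,q]M[q,r]M[r,p]$. Two such matrices are diagonally similar iff their 2-cycle products agree and their 3-cycle products agree. This follows from \cref{obs:uniqueness-of-DS}: normalize row $1$ to all ones, and then every remaining entry is determined by 2- and 3-cycle products through vertex $1$.

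Because $\kPME{A}{B}{2}$ and $A$ is dense, the 2-cycle products of $A$ and $B$ agree, so $B$ is also dense. Because $\kPME{A}{B}{3}$, for every triple we also have
\[
c_A(p,q,r)+c_A(p,r,q)=c_B(p,q,r)+c_B(p,r,q).
\]

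\textbf{Consequences of the three memberships.} Translating each membership into statements about triangles gives:
\begin{enumerate}
\item From $i\in\calD_1\cap\calD_2$: every triangle avoiding $i$ satisfies $c_A=c_B$ and $c_A=c_{B^T}$. So on such triangles $A$ is "cycle-symmetric", i.e.\ $c_A(p,q,r)=c_A(p,r,q)$.
\item From $j\in\calD_1$: triangles avoiding $j$ agree with $B$.
\item From $k\in\calD_2$: triangles avoiding $k$ agree with $B^T$.
\end{enumerate}
I then sort the triangles avoiding $m$ by how they meet $\{i,j,k\}$:
\begin{enumerate}
\item Triangles avoiding $i$ are consistent with both orientations.
\item Triangles through $i$ that avoid both $j$ and $k$ are forced to be cycle-symmetric, hence consistent with both.
\item Triangles $\{i,k,q\}$ with $q\neq j$ agree with $B$.
\item Triangles $\{i,j,q\}$ with $q\neq k$ agree with $B^T$.
\item The triangle $\{i,j,k\}$ is constrained only by the order-3 sum identity.
\end{enumerate}

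\textbf{Main obstacle.} The hard step is reconciling the opposite orientations of the triangles $\{i,k,q\}$ (agreeing with $B$) and $\{i,j,q\}$ (agreeing with $B^T$). Here I would use that $A$ itself is cycle-symmetric off $i$, together with the multiplicative relation among the four triangles of the quadruple $\{i,j,k,q\}$. The relation I have in mind is $c(i,j,k)\,c(i,k,q)=c(i,j,q)\,c(j,k,q)\cdot(\text{2-cycle factors})$.

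The aim is to propagate symmetry and show that the triangles through $i$ are themselves symmetric, in which case both orientations coincide. Alternatively, the aim is to show that one orientation is consistent on all triangles avoiding $m$, which gives $A[\nsubset{m}]\DS B[\nsubset{m}]$ or its transpose version, i.e.\ $m\in\calD_1\cup\calD_2$. If this direct cycle chase stalls, my fallback is to apply \cref{lem:partial-DS-to-complete-DS} to the $(n-1)\times(n-1)$ submatrix $A[\nsubset{m}]$ (or to smaller submatrices when $n-1\geq 4$), and look for three deletions inside it that are diagonally similar in a single orientation.
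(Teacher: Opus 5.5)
Your reduction to the single configuration $\calD_1(A,B)=\{i,j\}$, $\calD_2(A,B)=\{i,k\}$ is correct and matches the paper. Your triangle bookkeeping, including the four-triangle identity, is also sound. The gap is that the proof stops at the decisive step. The ``main obstacle'' is stated with two tentative aims and a fallback, none of which is carried out, and the ingredient that resolves it is never extracted.

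The triangle $\{i,j,k\}$ lies in none of $[n]-i$, $[n]-j$, $[n]-k$, so the three memberships do not fix its orientation. Without that, the configuration is actually consistent. For example, it can be realised with $n=5$, all 2-cycle products equal to $1$, and principal minors agreeing on every triple except $\{i,j,k\}$. So the order-$3$ hypothesis on $\{i,j,k\}$ must be used in a way that selects an orientation, not merely recorded as a sum identity. Your first aim, making the triangles through $i$ symmetric, is not what the constraints give. Your fallback inside $A[\nsubset{m}]$ only reproduces the same configuration one size smaller.

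The missing step is the following. Since $c_M(p,q,r)\,c_M(p,r,q)$ is the product of the three 2-cycle products, which agree for $A$ and $B$, the sum identity on $\{i,j,k\}$ gives equality of the multisets $\{c_A(i,j,k),c_A(i,k,j)\}$ and $\{c_B(i,j,k),c_B(i,k,j)\}$. That is, $A[\{i,j,k\}]\DE B[\{i,j,k\}]$, which is the case $n=3$ of \cref{lem:no-cut-to-PME}.

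Your identity then propagates the chosen orientation. Let $q$ range over the indices outside $\{i,j,k\}$.
\begin{itemize}
\item If $\{i,j,k\}$ agrees with $B$, compare the identity on $\{i,j,k,q\}$ for $A$ and for $B$. The triangles $\{i,j,k\}$, $\{i,k,q\}$ (avoids $j$) and $\{j,k,q\}$ (avoids $i$) agree with $B$, and the 2-cycle factor is common. This forces $c_A(i,j,q)=c_B(i,j,q)$, so every triangle agrees with $B$ and $A\DS B$.
\item If $\{i,j,k\}$ agrees with $B^T$, make the same comparison for $A$ and $B^T$. Now $\{i,j,k\}$, $\{i,j,q\}$ (avoids $k$) and $\{j,k,q\}$ agree with $B^T$. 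This forces every $\{i,k,q\}$ to agree with $B^T$, so $A\DS B^T$.
\end{itemize}
Either way $\calD_1$ or $\calD_2$ equals $[n]$, a contradiction.

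The paper uses the same $3\times 3$ fact but avoids cycle products. In $A[\{i,j,k,m\}]$, deleting $m$ leaves $A[\{i,j,k\}]\DE B[\{i,j,k\}]$. So $m$ joins $\calD_1$ or $\calD_2$ of this $4\times4$ submatrix, which then has three elements. \cref{lem:partial-DS-to-complete-DS} then gives diagonal similarity with $B$ or $B^T$ there. Adding the remaining indices one at a time extends this to all of $A$: each new index together with $i,j$ (resp.\ $i,k$) supplies three deletions of the same orientation.
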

\begin{remark}
If $|\calD_1\left(A, B\right)\cup \calD_2\left(A, B\right)|=4$, we do not require the assumption $\kPME{A}{B}{3}$ in our proof. Moreover, when $|\calD_1\left(A, B\right)\cup \calD_2\left(A, B\right)|=3$, it suffices to assume $A[S]\PME B[S]$ where $S$ is the union of $\calD_1(A, B)$ and $\calD_2(A, B)$.
\end{remark}
\begin{proof}
For the sake of contradiction, assume that $\calD_1(A, B)$ and $\calD_2(A, B)$ has nonempty intersection. We now divide our proof into the following two cases.
\paragraph*{Case I ($|\calD_1\left(A, B\right)\cup \calD_2\left(A, B\right)|=4$).} The nonempty intersection between $\calD_1(A,B)$ and $\calD_2(A,B)$ implies that either $|\calD_1(A, B)|=3$ or $|\calD_2(A, B)|=3$. Therefore, from~\cref{lem:partial-DS-to-complete-DS}, $\calD_1(A, B)\cup\calD_2(A, B)=[n]$, which is a contradiction. Hence, $\calD_1(A,B)$ and $\calD_2(A,B)$ have an empty intersection.

\paragraph*{Case II ($|\calD_1\left(A, B\right)\cup \calD_2\left(A, B\right)|=3$).} If the cardinality of the intersection of $\calD_1(A,B)$ and $\calD_2(A, B)$ is at least two, then either $|\calD_1(A, B)|=3$ or $|\calD_2(A, B)|=3$. Therefore, like \textbf{Case I},~\cref{lem:partial-DS-to-complete-DS} implies that $\calD_1(A, B)\cup\calD_2(A, B)=[n]$, which is a contradiction. Hence, $|\calD_1(A, B)\cap \calD_2(A, B)|=1$. Without loss of generality, assume that $$\calD_1(A, B)=\{1,2\} \text{ and } \calD_2(A, B)=\{2, 3\}.$$ For all $k\in[n]$, let $N_k=[k]$. Note that $$\{1,2\}\subseteq \calD_1(A[N_4], B[N_4])\ \ \ \text{ and }\ \ \ \{2,3\}\subseteq \calD_2(A[N_4], B[N_4]).$$ From the hypothesis of the lemma, we know that $A[N_3]\PME B[N_3]$. Therefore, applying~\cref{lem:no-cut-to-PME}, $4\in\calD_1(A[N_4], B[N_4])$ or $4\in\calD_2(A[N_4], B[N_4])$. Based on that, using~\cref{lem:partial-DS-to-complete-DS}, we obtain that $$A[N_4]\DS B[N_4]\ \ \text{ or }\ \ A[N_4]\DS B[N_4]^T.$$ Thus, by repeated application of~\cref{lem:partial-DS-to-complete-DS}, we obtain that $A[N_k]\DE B[N_k]$ for all $k\in\{5,6, \ldots, n\}$. Therefore, $\calD_1(A, B)\cup\calD_2(A, B)=[n]$, which is a contradiction. This completes the proof.
\end{proof}

The above lemma implies that in~\cref{eqn:union-of-D_1-and-D_2}, we can assume that the intersection of $\calD_1(A, B)$ and $\calD_2(A, B)$ is empty. Next we show that $|\calD_1\left(A, B\right)\cup \calD_2\left(A, B\right)|\neq 4$. To that end, the following lemma establishes a key property of $4 \times 4$ matrices having nonzero off diagonal entries.

\begin{lemma}
\label{lem:small-matrices-DE-to-cut-4-base-case}
Let $A$ and $B$ be two $4\times 4$ matrices such that all off-diagonal entries of $A$ are nonzero, and $A\PME B$. Let $X\subseteq \{1,2,3,4\}$ with $|X|=2$, $\calD_1(A,B)=X$ and $\calD_2(A,B)=\comp{X}$. Then, $X$ is a cut of $A$.  
\end{lemma}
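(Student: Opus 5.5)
The plan is a direct computation with cycle products, after normalizing $B$ by a diagonal similarity. Throughout, write $X=\{1,2\}$ and $\comp X=\{3,4\}$; this is without loss of generality after relabeling. Write $a_{ij}$ and $b_{ij}$ for the entries of $A$ and $B$. Then $\calD_1(A,B)=\{1,2\}$ and $\calD_2(A,B)=\{3,4\}$ mean the following:
\begin{itemize}
\item $A[\{2,3,4\}]\DS B[\{2,3,4\}]$ and $A[\{1,3,4\}]\DS B[\{1,3,4\}]$;
\item $A[\{1,2,4\}]\DS B[\{1,2,4\}]^T$ and $A[\{1,2,3\}]\DS B[\{1,2,3\}]^T$;
\item none of the other memberships hold.
\end{itemize}

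First, from $\kPME{A}{B}{2}$ we get $b_{ij}b_{ji}=a_{ij}a_{ji}\neq 0$, so $B$ is dense as well. For dense matrices, diagonal similarity of $3\times 3$ principal submatrices is equivalent to equality of all $2$-cycle products $a_{ij}a_{ji}$ and both directed $3$-cycle products. Similarity to the transpose swaps the two directed $3$-cycles. By \cref{obs:DE-to-PME} nothing changes if we replace $B$ by a diagonally similar matrix. So we normalize $b_{1j}=a_{1j}$ for $j=2,3,4$, which forces $b_{j1}=a_{j1}$.

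Next, the cycle conditions determine most of $B$.
\begin{itemize}
\item The $3$-cycle condition on $\{1,3,4\}$ (index $2\in\calD_1$) forces $b_{34}=a_{34}$, and hence $b_{43}=a_{43}$.
\item The reversed $3$-cycle conditions on $\{1,2,3\}$ and $\{1,2,4\}$ (indices $4,3\in\calD_2$) force
\[
b_{23}=\frac{a_{21}a_{13}a_{32}}{a_{12}a_{31}},\qquad b_{24}=\frac{a_{21}a_{14}a_{42}}{a_{12}a_{41}}.
\]
The entries $b_{32},b_{42}$ are then fixed by the $2$-cycle equalities.
\end{itemize}
So $B$ is an explicit rational function of $A$. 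Two equations remain unused:
\begin{enumerate}
\item the $3$-cycle equality $b_{23}b_{34}b_{42}=a_{23}a_{34}a_{42}$ on $\{2,3,4\}$ (index $1\in\calD_1$);
\item $\det A=\det B$. Expanding the $4\times 4$ determinants over permutations, the identity, $2$-cycle, $3$-cycle and double-transposition terms already agree: the $3$-cycle sums agree since $\kPME{A}{B}{3}$. Hence $\det A=\det B$ reduces to equality of the sums of the six directed $4$-cycle products.
\end{enumerate}

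The main step, and the one I expect to be the obstacle, is to clear denominators in these two equations and factor them. Substituting the expressions above, I expect (1) to become a product of a monomial and a factor such as $(a_{13}a_{24}-a_{14}a_{23})$ or $(a_{31}a_{42}-a_{32}a_{41})$, or a combination of these weighted by the ratios $a_{21}/a_{12}$, $a_{31}/a_{13}$, and so on. I expect the $4$-cycle equation (2) to factor as a product of $\det A[\{1,2\},\{3,4\}]$ and $\det A[\{3,4\},\{1,2\}]$, or of closely related expressions. Combining the two factorizations should force both $2\times 2$ determinants to vanish, which is exactly $\rank A[X,\comp X]=\rank A[\comp X,X]=1$, i.e.\ $X$ is a cut of $A$.

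Wherever a factorization leaves an alternative branch, I will use the exactness of the hypotheses to exclude it. For example, a branch giving $b_{23}=a_{23}$ would make $3\in\calD_1(A,B)$, or would give $\calD_1\cap\calD_2\neq\varnothing$. That contradicts $\calD_1(A,B)=X$ exactly; alternatively, it yields $|\calD_1|\ge 3$ and hence $A\DS B$ by \cref{lem:partial-DS-to-complete-DS}, which contradicts $\calD_2(A,B)=\comp X$. If the direct factorization is messy, I would first divide by the nonzero quantities $a_{ij}a_{ji}$ and work with the ratios $r_{ij}=a_{ij}/a_{ji}$, where the symmetry between $X$ and $\comp X$ should make the factorization transparent.
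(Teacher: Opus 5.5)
Your plan is correct and is essentially the paper's proof: normalize by a diagonal similarity, reduce the hypotheses to explicit cycle and determinant identities, factor, and discard the spurious factors using that $\calD_1,\calD_2$ are \emph{exactly} $X,\comp X$. The paper normalizes instead so that $A$ and $B$ agree outside the positions $(1,2),(2,1)$, which builds your equation (1) into the parametrization, so $\det A-\det B$ factors at once as $(a_{34}e_4-a_{43}e_3)(a_{13}a_{24}-a_{23}a_{14})(a_{21}/e_2-a_{12})$.

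In your normalization the computation also closes. Write $\Delta_1=\det A[X,\comp X]$ and $\Delta_2=\det A[\comp X,X]$. Then (1) reads $a_{32}a_{41}\Delta_1=a_{14}a_{23}\Delta_2$, and (2) reads $(a_{14}a_{43}a_{31}-a_{13}a_{34}a_{41})(a_{12}a_{31}a_{41}\Delta_1-a_{21}a_{13}a_{14}\Delta_2)=0$. So there are three branches:
\begin{itemize}
\item The first factor vanishes, which gives $2\in\calD_2$.
\item The $2\times 2$ linear system in $(\Delta_1,\Delta_2)$ is singular. This happens only when $a_{12}a_{23}a_{31}=a_{13}a_{32}a_{21}$, i.e.\ $b_{23}=a_{23}$. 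That gives $B[\{1,2,3\}]=A[\{1,2,3\}]$, so $4\in\calD_1$ (not $3$, as you wrote).
\item Otherwise $\Delta_1=\Delta_2=0$, and $X$ is a cut.
\end{itemize}
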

The proof closely follows the proof of~\cite[Theorem~4]{Hartfiell84}. For details see~\cref{subsec:proof-of-lem:small-matrices-DE-to-cut-4-base-case}. The following lemma addresses the case when the union of $\calD_1(A, B)$ and $\calD_2(A, B)$ contains four elements (see~\cref{eqn:union-of-D_1-and-D_2}). A lemma of similar flavour appears in~\cite[Theorem~2]{Loewy86} with the following differences. In their hypothesis, they assume that $A\PME B$, which is stronger than our assumption $\kPME{A}{B}{4}$. On the other hand, we assume that $A$ satisfies property $\prop$, but they only require that the off-diagonal entries of $A$ are nonzero, that is, the first condition of property $\prop$. Nevertheless, the conclusion remains same in both the cases. Additionally, the proof of our lemma is significantly simpler compared to the proof of~\cite[Theorem~2]{Loewy86}. 

\begin{lemma}
\label{lem:small-matrices-DE-to-cut-4}
Let $n$ be a positive integer with $n\geq 4$. Let $A$ and $B$ be two $n\times n$ matrices over $\F$ such that $A$ satisfies property $\prop$, and $\kPME{A}{B}{4}$. Let $i_1, i_2, i_3, i_4\in[n]$ be four distinct positive integers such that $\calD_1(A,B)=\{i_1, i_2\}$ and $\calD_2(A,B)=\{i_3, i_4\}$. Then, there exists an $X\subseteq [n]$ such that $i_1, i_2\in X$, $i_3, i_4\in\comp{X}$, and $X$ is a cut of $A$.
\end{lemma}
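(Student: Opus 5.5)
The plan is to reduce to the $4\times 4$ base case, \cref{lem:small-matrices-DE-to-cut-4-base-case}, and then lift the resulting local rank-one condition to a global cut using the rank-one extension property. Write $S=\{i_1,i_2,i_3,i_4\}$; if $n=4$ the statement is exactly the base case, since $\kPME{A}{B}{4}$ is then full principal minor equivalence. So assume $n\ge 5$. The main technical step is to show that $\calD_1(A[S],B[S])=\{i_1,i_2\}$ and $\calD_2(A[S],B[S])=\{i_3,i_4\}$.

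For the inclusion, note that $i_1\in\calD_1(A,B)$ means $A[\nsubset{i_1}]\DS B[\nsubset{i_1}]$. Restricting the diagonal similarity to $S-i_1$ gives $A[S-i_1]\DS B[S-i_1]$, so $i_1\in\calD_1(A[S],B[S])$. The same argument places $i_2$ in $\calD_1(A[S],B[S])$ and places $i_3,i_4$ in $\calD_2(A[S],B[S])$. For the reverse inclusion, suppose for instance that $i_3\in\calD_1(A[S],B[S])$. Then $|\calD_1(A[S],B[S])|\ge 3$, so \cref{lem:partial-DS-to-complete-DS} gives $A[S]\DS B[S]$. Since $|\calD_2(A[S],B[S])|\ge 2$, this should force $A[S]$ to be diagonally similar to its own transpose on the relevant $3\times 3$ blocks. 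I would try to show that this, together with the global data, puts $i_3$ or $i_4$ into $\calD_1(A,B)$, contradicting the hypothesis.

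Here is how I would attempt that step, and I expect it to be the main obstacle. Use $A[S]\DS B[S]$ and $A[\nsubset{i_1}]\DS B[\nsubset{i_1}]$. Both similarities can be normalized to the canonical representatives of \cref{obs:uniqueness-of-DS} relative to a common index in $S-i_1$, so that they agree on $S-i_1$. The aim is then to glue them, or else to derive a cut of $A$ directly. If gluing fails, the fallback is to carry the extra case along and obtain the cut by the same lifting argument used in the final step below. If the reverse inclusion cannot be established this way, I would instead follow the structure of the proof of \cite[Theorem~2]{Loewy86}, restricted to $5$-element sets $S+m$.

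Once the two sets are identified, \cref{lem:small-matrices-DE-to-cut-4-base-case} applies to $A[S]$ and $B[S]$, because $A[S]\PME B[S]$ follows from $\kPME{A}{B}{4}$. It yields that $\{i_1,i_2\}$ is a cut of $A[S]$, so $\rank A[\{i_1,i_2\},\{i_3,i_4\}]=1$ and $\rank A[\{i_3,i_4\},\{i_1,i_2\}]=1$. Property $\prop$ then provides $X_1\ni i_1,i_2$ with $i_3,i_4\in\comp X_1$ and $\rank A[X_1,\comp X_1]=1$. Applied to the second rank-one block, it provides $X_2\ni i_1,i_2$ with $i_3,i_4\in\comp X_2$ and $\rank A[\comp X_2,X_2]=1$.

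It remains to combine these into a single $X$ for which both off-diagonal blocks have rank one. I would take $X$ to be a minimal set with $i_1,i_2\in X$ and $i_3,i_4\notin X$ such that $\rank A[X,\comp X]=1$; such sets are closed under intersection by \cref{prop:old-to-new-cut}. Then I would show that $\rank A[\comp X,X]=1$ as well. The intended route is to test every element $m\notin S$ against the $5\times 5$ matrices $A[S+m]$ and $B[S+m]$. By \cref{lem:no-cut-to-PME} and \cref{lem:partial-DS-to-complete-DS}, each such $A[S+m]$ must have a cut separating $\{i_1,i_2\}$ from $\{i_3,i_4\}$ that places $m$ consistently on one side for both blocks. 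Consistency of these local placements across all $m$ then yields a common $X$.
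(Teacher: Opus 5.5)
Your skeleton matches the paper's: restrict to $S=\{i_1,i_2,i_3,i_4\}$, show $\calD_1(A[S],B[S])=\{i_1,i_2\}$ and $\calD_2(A[S],B[S])=\{i_3,i_4\}$, apply \cref{lem:small-matrices-DE-to-cut-4-base-case}, then lift with property $\prop$. The gap is in the final step, where you need one $X$ with both $\rank A[X,\comp X]=1$ and $\rank A[\comp X,X]=1$.

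Your route runs \cref{lem:no-cut-to-PME} on the $5\times 5$ pairs $A[S+m]$, $B[S+m]$. That lemma requires $A[S+m]\PME B[S+m]$, which includes the order-$5$ identity $\det A[S+m]=\det B[S+m]$. This identity is not part of $\kPME{A}{B}{4}$. Deducing such identities from order-$4$ data is exactly what this lemma is later used for in \cref{thm:PME-for-no-cut}.

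Minimality of $X$ plus property $\prop$ cannot substitute for it. Take a dense $5\times 5$ matrix with $\rank A[\{1,2,5\},\{3,4\}]=\rank A[\{3,4,5\},\{1,2\}]=1$ and all other entries generic. It satisfies $\prop$, and both $2\times 2$ blocks between $\{1,2\}$ and $\{3,4\}$ have rank one. Yet neither $\{1,2\}$ nor $\{1,2,5\}$ is a cut. So the combination needs information from the hypotheses on $B$ that your argument never extracts.

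The missing idea is that $i_1,i_2\in\calD_1(A,B)$ and $i_3,i_4\in\calD_2(A,B)$ already make the two off-diagonal blocks transposes of each other up to diagonal scaling.

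1. Since $A$ is dense, the similarities on $\nsubset{i_1}$ and $\nsubset{i_2}$ agree up to a scalar on $[n]\setminus\{i_1,i_2\}$. So after replacing $A$ by a diagonally similar matrix you may assume $A[p,q]=B[p,q]$ for all $(p,q)\notin\{(i_1,i_2),(i_2,i_1)\}$.
2. Substituting this into $B[\nsubset{i_3}]^T\DS A[\nsubset{i_3}]$ and $B[\nsubset{i_4}]^T\DS A[\nsubset{i_4}]$ yields a diagonal $E=\diag(e_1,\ldots,e_n)$ with $A[q,p]=A[p,q]\,e_q/e_p$ for every pair $\{p,q\}$ other than $\{i_1,i_2\}$ and $\{i_3,i_4\}$. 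Hence
\[
A[\comp X,X]=E[\comp X]\,A[X,\comp X]^T\,E[X]^{-1}\quad\text{whenever } i_1,i_2\in X,\ i_3,i_4\in\comp X.
\]
3. Consequently the set $X_1$ supplied by property $\prop$ is already a cut, and no combination step is needed.

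The same normalization also closes your reverse-inclusion step, which you left open. Gluing with $\nsubset{i_1}$ alone cannot suffice, because the entries $A[i_1,m]$, $A[m,i_1]$ with $m\notin S$ stay unconstrained.

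- If $A[S]\DS B[S]$, the similarity must be scalar on $S-i_1$ and on $S-i_2$, where $A$ and $B$ agree and are dense. Hence $A[i_1,i_2]=B[i_1,i_2]$ and $A=B$.
- If $A[S]\DS B[S]^T$, the same argument on $S-i_3$ and $S-i_4$ gives $B^T=E^{-1}AE$.

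Either outcome contradicts the assumed values of $\calD_1(A,B)$ or $\calD_2(A,B)$.
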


\begin{remark}
In the lemma above, it is sufficient to use $A[\{i_1, i_2, i_3, i_4\}]\PME B[\{i_1, i_2, i_3, i_4\}]$ in place of $\kPME{A}{B}{4}$ to establish the same conclusion. 
\end{remark}

\begin{proof}
Without loss of generality, we can assume that $\{i_1, i_2\}=\{1,2\}$ and $\{i_3, i_4\}=\{3, 4\}$. 

Observe that there exists a matrix $A'$ such that it is diagonally similar to $A$ and $A'[\nsubset{j}]=B[\nsubset{j}]$ for $j=1,2$. Also, instead of $A$, we can work with any matrix diagonally similar to $A$. Thus, without loss of generality, we may assume that $A[\nsubset{j}]=B[\nsubset{j}]$ for $j=1,2$. This implies that 
\begin{equation}
\label{eqn:small-matrices-DE-to-cut-4-one}
A[i,j]=B[i,j] \text{ for all } (i,j)\notin\{(1,2), (2,1)\}.
\end{equation}

Let $C$ be an $(n-1)\times (n-1)$ diagonal matrix whose rows and columns are indexed by $[n]\setminus\{3\}$ and for $i\in [n]\setminus\{3\}$, $C[i,i]=c_i$ with $c_1=1$. Let $D$ be another $(n-1)\times (n-1)$ diagonal matrix whose rows and columns are indexed by $[n]\setminus\{4\}$ and for $i\in [n]\setminus\{4\}$, $D[i,i]=d_i$ with $d_1=1$. Furthermore, assume that 
\begin{equation}
\label{eqn:small-matrices-DE-to-cut-4-two}
B[\nsubset{3}]^T=C^{-1}A[\nsubset{3}]C \ \ \text{ and }\ \ B[\nsubset{4}]^T=D^{-1}A[\nsubset{4}]D.
\end{equation}
The above equation implies that
\begin{equation}
\label{eqn:small-matrices-DE-to-cut-4-three}
c_2=d_2 = \frac{B[2,1]}{A[1,2]}, \text{ and for all } 5\leq i\leq n, c_i=d_i =\frac{A[i,1]}{A[1,i]} 
\end{equation}
Additionally, from~\cref{eqn:small-matrices-DE-to-cut-4-one},~\cref{eqn:small-matrices-DE-to-cut-4-two} and~\cref{eqn:small-matrices-DE-to-cut-4-three}, we have the following:
\begin{enumerate}
\item $B[1,2]=A[2,1]c_1c_2^{-1}$ and $B[2,1]=A[1,2]c_1^{-1}c_2$
\item $A[j,4]=A[4,j]c_jc_4^{-1}$ for all $j\in\{1,2,5,6,\ldots, n\}$ 
\item $A[j,3]=A[3,j]c_jd_3^{-1}$ for all $j\in\{1,2,5,6,\ldots, n\}$
\item $A[j,i]=A[i,j]c_jc_i^{-1}$ for all $j\in\{1,2,5,6,\ldots, n\}$ and $i\in\{5,6,\ldots, n\}$
\end{enumerate}
Let $e_1=1$, $e_2=c_2$, $e_3=d_3$, $e_4=c_4$ and for all $5\leq k \leq n$, $e_k=c_k$. Then, from the above discussion, we obtain that 
\begin{equation}
\label{eqn:small-matrices-DE-to-cut-4-four}
A[j,i] =A[i,j]\cdot \frac{e_j}{e_i}\ \  \text{ for all } (i,j)\notin\{(1,2), (2,1), (3,4), (4,3)\}.
\end{equation}
Let $E$ be an $n\times n$ diagonal matrix such that for all $i\in[n]$, $E[i,i]=e_i$. Thus, applying~\cref{eqn:small-matrices-DE-to-cut-4-four}, for any $S\subseteq [n]$ with $\{1,2\}\subseteq S$ and $\{3,4\}\subseteq \comp{S}$, 
\begin{equation}
\label{eqn:small-matrices-DE-to-cut-4-five}
A[\comp{S}, S]=E[\comp{S}]\cdot A[S, \comp{S}]^T\cdot E[S]^{-1}.
\end{equation}

From the hypthesis of the lemma, we know that $$\{1,2\}\subseteq \calD_1\left(A[\{1,2,3,4\}], B[\{1,2,3,4\}]\right) \text{ and } \{3,4\}\subseteq\calD_2\left(A[\{1,2,3,4\}], B[\{1,2,3,4\}]\right).$$ Next, we argue that $$\calD_1\left(A[\{1,2,3,4\}], B[\{1,2,3,4\}]\right)=\{1,2\} \text{ and } \calD_2\left(A[\{1,2,3,4\}], B[\{1,2,3,4\}]\right)=\{3,4\}.$$ For the sake of contradiction, first, assume that either $3$ or $4$ is in $\calD_1\left(A[\{1,2,3,4\}], B[\{1,2,3,4\}]\right)$. Then, from~\cref{lem:partial-DS-to-complete-DS}, $A[\{1,2,3,4\}]\DS  B[\{1,2,3,4\}]$. Applying~\cref{eqn:small-matrices-DE-to-cut-4-one}, we obtain that $A=B$. This implies $\calD_1(A,B)=[n]$, which is a contradiction. Now, assume that either $1$ or $2$ is in $\calD_2\left(A[\{1,2,3,4\}], B[\{1,2,3,4\}]\right)$. Then, again from~\cref{lem:partial-DS-to-complete-DS}, $A[\{1,2,3,4\}]\DS  B[\{1,2,3,4\}]^T$. This combined with~\cref{eqn:small-matrices-DE-to-cut-4-two} implies that $B^T=E^{-1}AE$. Therefore, $\calD_2(A,B)=[n]$, which is a contradiction. Thus, 
\begin{equation}
\label{eqn:small-matrices-DE-to-cut-4-six}
\calD_1\left(A[\{1,2,3,4\}], B[\{1,2,3,4\}]\right)=\{1,2\} \text{ and } \calD_2\left(A[\{1,2,3,4\}], B[\{1,2,3,4\}]\right)=\{3,4\}.
\end{equation}

\cref{eqn:small-matrices-DE-to-cut-4-six} and~\cref{lem:small-matrices-DE-to-cut-4-base-case} combined imply that $\{1,2\}$ is a cut of the matrix $A[\{1,2,3,4\}]$. Since $A$ satisfies property $\prop$, there exists a subset $X\subseteq [n]$ with $1,2\in X$, $3,4 \in\comp{X}$ and $\rank A[X, \comp{X}]=1$. Now, applying~\cref{eqn:small-matrices-DE-to-cut-4-five}, $\rank A[\comp{X}, X]=1$. Therefore, $X$ is our desired cut for the matrix $A$.
\end{proof}

Our next lemma deals with the case when the union of $\calD_1(A, B)$ and $\calD_2(A, B)$ contains three elements (see~\cref{eqn:union-of-D_1-and-D_2}). As with the previous lemma, a result of similar flavour appears in~\cite[Theorem~4]{Loewy86}. Furthermore, the differences in the hypotheses between our lemma and~\cite[Theorem~4]{Loewy86} are the same as before: they assume $A \PME B$, whereas we assume only $\kPME{A}{B}{4}$; and we assume that $A$ satisfies property $\prop$, while they require only that the off-diagonal entries of $A$ are nonzero. Nevertheless, the conclusions in both cases remain the same.

\begin{lemma}
\label{lem:small-matrices-DE-to-cut-3}
Let $n$ be a positive integer with $n\geq 4$. Let $A$ and $B$ be two $n\times n$ matrices over $\F$ such that $A$ satisfies property $\prop$, and $\kPME{A}{B}{4}$. Let $i_1, i_2, i_3\in[n]$ be three distinct positive integers, and $\calD_1(A,B)=\{i_1, i_2\}$ and $\calD_2(A,B)=\{i_3\}$. Let $$T_1:=\left\{j\in[n]\setminus\{i_1, i_2, i_3\}\,\mid\, B[\{i_1, i_2, i_3, j\}]^T\DS A[\{i_1, i_2, i_3, j\}]\right\}$$ and $$T_2:=\left\{j\in[n]\setminus\{i_1, i_2, i_3\}\,\mid\, B[\{i_1, i_2, i_3, j\}]^T \nDS A[\{i_1, i_2, i_3, j\}]\right\}.$$ Then, there exists a subset $X$ of $[n]$ such that $\{i_1, i_2\}\subseteq X$, $T_2+i_3\subseteq \comp{X}$ and $X$ is a cut of the matrix $A$.  
\end{lemma}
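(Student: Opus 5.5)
Without loss of generality take $i_1=1$, $i_2=2$, $i_3=3$; we imitate the proof of \cref{lem:small-matrices-DE-to-cut-4}. Since $1,2\in\calD_1(A,B)$, we replace $A$ by a diagonally similar matrix and assume $A[\nsubset{j}]=B[\nsubset{j}]$ for $j=1,2$. Then $A$ and $B$ agree everywhere except possibly at the entries $(1,2)$ and $(2,1)$. Since $3\in\calD_2(A,B)$, there is a diagonal $C$ on $[n]\setminus\{3\}$, normalised by $c_1=1$, with $B[\nsubset{3}]^T=C^{-1}A[\nsubset{3}]C$. Combining this with the agreement of $A$ and $B$ gives the near-symmetry relations
\[
A[j,i]=A[i,j]\,\frac{c_j}{c_i}\quad\text{for all } i,j\in[n]\setminus\{3\},\ \{i,j\}\neq\{1,2\}.
\]
The entries of row and column $3$ are not constrained by $C$. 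For each $j\ge 4$ we therefore consider the $4\times4$ submatrix on $\{1,2,3,j\}$. Here $\kPME{A}{B}{4}$ gives $A[\{1,2,3,j\}]\PME B[\{1,2,3,j\}]$, and both $1$ and $2$ lie in $\calD_1$ of this submatrix (it contains $\{2,3,j\}$ resp.\ $\{1,3,j\}$ on which $A=B$).

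Next I would classify each $j\notin\{1,2,3\}$ using $3$ and $j$. First, $3$ cannot lie in $\calD_1$ of the $4\times4$ matrix $A[\{1,2,3,j\}]$ vs.\ $B[\{1,2,3,j\}]$. Otherwise \cref{lem:partial-DS-to-complete-DS} gives $A[\{1,2,3,j\}]\DS B[\{1,2,3,j\}]$, which forces $A[1,2]A[2,1]$-type relations ($A[\{1,2\}]=B[\{1,2\}]$ up to the normalisation). That in turn gives $A=B$ globally and $\calD_1(A,B)=[n]$, a contradiction. So $3\in\calD_2$ locally. If also $j\in\calD_2$ locally, then $\calD_1=\{1,2\}$ and $\calD_2=\{3,j\}$ for the $4\times 4$ pair. \cref{lem:small-matrices-DE-to-cut-4-base-case} then makes $\{1,2\}$ a cut of $A[\{1,2,3,j\}]$, and in particular $\rank A[\{1,2\},\{3,j\}]=1$. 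If instead $j\in\calD_1$ locally, then by \cref{lem:partial-DS-to-complete-DS} the $4\times4$ blocks are DS, contradicting the previous argument. If $j$ is in neither set, the $4\times4$ pair is not DE; we treat this case under $T_2$ as well. The case $j\in T_1$, i.e.\ $B[\{1,2,3,j\}]^T\DS A[\{1,2,3,j\}]$, is the one giving no constraint on where $j$ lies.

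The key claim to prove is that for every $j\in T_2$, $\rank A[\{1,2\},\{3,j\}]=1$. I would establish it by a direct computation on the $4\times4$ principal minors through the index $3$. Using the near-symmetry relations, the three minors $\det A[\{1,3,j\}]$, $\det A[\{2,3,j\}]$, $\det A[\{1,2,3,j\}]$ equated with those of $B$ reduce to a polynomial identity. It should factor as (nonzero) times $\det A[\{1,2\},\{3,j\}]$ times a factor whose vanishing is exactly the transposed diagonal similarity defining $T_1$; this is the computation in Hartfiel's Theorem~4 and \cref{lem:small-matrices-DE-to-cut-4-base-case}. With that in hand, $T_2\neq\emptyset$ (else $\calD_2(A,B)$ would be larger) gives some $\rank A[\{1,2\},\{3,j\}]=1$. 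For each $j\in T_2$, property $\prop$ supplies $X_j\ni 1,2$ with $3,j\in\comp X_j$ and $\rank A[X_j,\comp X_j]=1$. Intersecting these via \cref{prop:old-to-new-cut} yields $X=\bigcap_{j\in T_2}X_j$ with $\{1,2\}\subseteq X$, $T_2+3\subseteq\comp X$, and $\rank A[X,\comp X]=1$.

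Finally I need $\rank A[\comp X,X]=1$, and this is the main obstacle. The near-symmetry relations transfer rank one from $A[X,\comp X]$ to $A[\comp X,X]$ only away from column $3$. So I would show that column $3$ is handled by the additional relations coming from $j\in T_1\cap\comp X$ (possible, since those $4\times4$ blocks are transposed-DS and supply the missing $c_3$ relations). Alternatively I would use the row $3$ version of the rank-one condition, argued symmetrically with the roles of rows and columns swapped, together with \cref{prop:old-to-new-cut}.
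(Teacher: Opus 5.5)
Your outline follows the paper's proof up to the construction of $X$. The step you flag as the main obstacle, however, is left open, and the repair you suggest uses the wrong index set. Since $3\in\comp X$, the entries of $A[\comp X,X]$ not controlled by your near-symmetry relations are $A[3,b]$ with $b\in X$. The entries $A[3,j]$ with $j\in T_1\cap\comp X$ lie in $A[\comp X,\comp X]$ and are irrelevant.

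The missing observation is that $T_2+3\subseteq\comp X$ forces $X\subseteq\{1,2\}\cup T_1$. Hence each needed relation $A[3,b]=c_3A[b,3]/c_b$ comes either from $A[\{1,2,3\}]\DS B[\{1,2,3\}]^T$ (for $b\in\{1,2\}$) or from the definition of $T_1$ (for $b\in T_1\cap X$). You must also check that these local similarities glue with the one on $\nsubset{3}$ into a single diagonal $C$ on $[n]$ with $c_1=1$. This holds because every block involved contains the index $1$ and all off-diagonal entries are nonzero, so each ratio $c_i/c_1$ is forced (e.g.\ $c_3=A[3,1]/A[1,3]$ in every block). This is exactly how the paper concludes: it obtains $A[\comp S,S]=C[\comp S]\,A[S,\comp S]^T\,C[S]^{-1}$ for every $S\supseteq\{1,2\}$ with $T_2+3\subseteq\comp S$, and applies it to $S=X$.

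Your alternative of rerunning the argument on rows does not by itself close the gap. It produces some $X'$ with $\rank A[\comp{X'},X']=1$, possibly different from $X$. \cref{prop:old-to-new-cut} cannot merge a row-type and a column-type rank-one condition into one set.

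There is also a smaller gap. You leave open the case where $j$ lies in neither the local $\calD_1$ nor the local $\calD_2$, to be handled by a factorisation you only conjecture. That case cannot occur:
\begin{itemize}
\item Whether $j$ lies in the local $\calD_1\cup\calD_2$ depends only on $A[\{1,2,3\}]$ versus $B[\{1,2,3\}]$.
\item These are principal minor equivalent $3\times 3$ irreducible matrices, hence diagonally equivalent by \cref{lem:no-cut-to-PME}.
\item Your own argument excluding $j\in\calD_1$ locally rules out $A[\{1,2,3\}]\DS B[\{1,2,3\}]$.
\end{itemize}
So $A[\{1,2,3\}]\DS B[\{1,2,3\}]^T$, and every $j\ge 4$ lies in the local $\calD_2$. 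This is the paper's first step, and it does three things for you. It supplies the relation for $b\in\{1,2\}$ in the final step. It lets your local classification plus \cref{lem:small-matrices-DE-to-cut-4-base-case} give $\rank A[\{1,2\},\{3,j\}]=1$ for every $j\in T_2$ with no new computation. For $n=4$, where the local $\calD_2$ is the global one, it shows the hypotheses are contradictory, which is how the paper disposes of that case.
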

\begin{proof}
Without loss of generality, we may assume that $\{i_1,i_2\}=\{1,2\}$ and $i_3=3$. We divide our proof into two cases.

\paragraph*{Case I ($n=4$).} We show that for $n=4$ the hypothesis of the lemma is false. Since $A[\{1,2,3\}]\PME B[\{1,2,3\}]$, from~\cref{lem:no-cut-to-PME}, $A[\{1,2,3\}]\DE A[\{1,2,3\}]$. Therefore, $4\in \calD_1(A, B)\cup \calD_2(A,B)$, which is a contradiction. Hence, for $n=4$, the lemma holds vacuously.

\paragraph*{Case II ($n\geq 5$).} We first show that $T_2$ is nonempty. For the sake of contradiction, assume that $T_2=\emptyset$. For all $i,j\in\{4,5,\ldots, n\}$, let $S_{i,j}=\{1,2,3,i,j\}$. Observe that $$T_2=\emptyset\Longrightarrow \{3,i,j\}\subseteq \calD_2(A[S_{i,j}], B[S_{i,j}]).$$ Therefore, from~\cref{lem:partial-DS-to-complete-DS}, $A[S_{i,j}]^T\DS B[S_{i,j}]$. Thus, repeatedly applying~\cref{lem:partial-DS-to-complete-DS}, we obtain that $\calD_2(A, B)=[n]$ which is a contradiction. Therefore, $T_2$ is a nonempty set.

Since $\kPME{A}{B}{4}$, $A[\{1,2,3\}]\PME B[\{1,2,3\}]$. Therefore, from~\cref{lem:no-cut-to-PME}, $A[\{1,2,3\}]\DS B[\{1,2,3\}]$, or $A[\{1,2,3\}]\DS B[\{1,2,3\}]^T$. Suppose that $A[\{1,2,3\}]\DS B[\{1,2,3\}]$. Then, for all $i\in\{4,5,6, \ldots, n\}$ $$\{1,2,i\}\subseteq \calD_1(A[\{1,2,3,i\}], B[\{1,2,3,i\}]).$$ Therefore, applying~\cref{lem:partial-DS-to-complete-DS}, $A[\{1,2,3,i\}]\DS B[\{1,2,3,i\}]$ for all $i\in\{4,5,6,\ldots, n\}$. Thus, repeatedly applying~\cref{lem:partial-DS-to-complete-DS}, we obtain $A\DS B$. Therefore, $\calD_1(A,B)=[n]$, which is a contradiction. Thus, 
\begin{equation}
\label{eqn:small-matrices-DE-to-cut-3-one}
A[\{1,2,3\}]\DS B[\{1,2,3\}]^T, \text{ but }A[\{1,2,3\}]\nDS B[\{1,2,3\}]
\end{equation}

Suppose that $i\in T_2$. Then, from the definition of $T_2$, $A[\{1,2,3,i\}]\nDS B[\{1,2,3,i\}]^T$.  From the hypothesis of the lemma and~\cref{eqn:small-matrices-DE-to-cut-3-one}, we have that $$\{1,2\}\subseteq \calD_1(A[\{1,2,3,i\}], B[\{1,2,3,i\}])\ \ \text{ and }\ \ \{3,i\}\subseteq\calD_2(A[\{1,2,3,i\}], B[\{1,2,3,i\}]).$$ Moreover, applying~\cref{eqn:small-matrices-DE-to-cut-3-one}, we obtain that $i\notin\calD_1(A[\{1,2,3,i\}], B[\{1,2,3,i\}])$. Since $\kPME{A}{B}{4}$, from~\cref{lem:no-cut-to-PME}, $A[\{1,2,i\}]\DS B[\{1,2,i\}]$ or $A[\{1,2,i\}]\DS B[\{1,2,i\}]^T$. If $A[\{1,2,i\}]\DS B[\{1,2,i\}]$, then applying~\cref{lem:partial-DS-to-complete-DS}, $A[\{1,2,3,i\}]\DS B[\{1,2,3,i\}]$. This implies $A[\{1,2,3\}]\DS B[\{1,2,3\}]$, which is a contradiction due to~\cref{eqn:small-matrices-DE-to-cut-3-one}. Therefore, $A[\{1,2,i\}]\DS B[\{1,2,i\}]^T$ but $A[\{1,2,i\}]\nDS B[\{1,2,i\}]$. Thus, $$\{1,2\}=\calD_1(A[\{1,2,3,i\}], B[\{1,2,3,i\}])\ \ \text{ and }\ \ \{3, i\}\subseteq \calD_2(A[\{1,2,3,i\}], B[\{1,2,3,i\}]).$$ Furthermore, if $1$ or $2$ is in $ \calD_2(A[\{1,2,3,i\}], B[\{1,2,3,i\}])$, then again by~\cref{eqn:small-matrices-DE-to-cut-3-one} and~\cref{lem:partial-DS-to-complete-DS}, $A[\{1,2,3,i\}]\DS B[\{1,2,3,i\}]^T$. This contradicts that $i\in T_2$. Therefore, for all $i\in T_2$,
\begin{equation}
\label{eqn:small-matrices-DE-to-cut-3-two}
\{1,2\}=\calD_1(A[\{1,2,3,i\}], B[\{1,2,3,i\}])\ \ \text{ and }\ \ \{3,i\}=\calD_2(A[\{1,2,3,i\}], B[\{1,2,3,i\}]).
\end{equation}

\cref{lem:small-matrices-DE-to-cut-4-base-case} and~\cref{eqn:small-matrices-DE-to-cut-3-two} imply that for all $i\in T_2$, $\{1,2\}$ is a cut of the matrix $A[\{1,2,3,i\}]$. Therefore, $\rank A[\{1,2\}, \{3, i\}]=1$. Since $A$ satisfies property $\prop$, for all $i\in T_2$, there exists $X_i\subset[n]$ such that $1,2\in X_i$, $3, i\in \comp{X}_i$ and $\rank A[X_i, \comp{X}_i]=1$. Consider the following sets $$X=\bigcap_{i\in T_2}X_i\:\: \text{ and }\:\: \comp{X}=\bigcup_{i\in T_2}\comp{X}_i.$$
From~\cref{prop:old-to-new-cut}, we obtain that 
\begin{equation}
\label{eqn:small-matrices-DE-to-cut-3-three}
\rank A[X, \comp{X}]=1 \:\:\text{ with }\:\: 1, 2\in X,\: \: \text{ and }\:\: T_2+3\subseteq \comp{X}.
\end{equation}

Observe that there exists a matrix $A'$ such that it is diagonally similar to $A$ and $A'[\nsubset{j}]=B[\nsubset{j}]$ for $j=1,2$. Additionally, instead of $A$, we can work with any matrix diagonally similar to $A$. Thus, without loss of generality, we may assume that $A[\nsubset{j}]=B[\nsubset{j}]$ for $j=1,2$. This implies that $$A[i,j]=B[i,j] \text{ for all } (i,j)\notin\{(1,2), (2,1)\}.$$

Let $C$ be an $n\times n$ diagonal matrix  such that for all $i\in[n]$, $C[i,i]=c_i$ with $c_1=1$, and the following hold:  
\begin{enumerate}
\item  $B[\nsubset{3}]^T=C[\nsubset{3}]^{-1}\cdot A[\nsubset{3}]\cdot C[\nsubset{3}]$
\item $B[\{1,2,3\}]^T=C[\{1,2,3\}]^{-1}\cdot A[\{1,2,3\}]\cdot C[\{1,2,3\}]$
\item for all $i\in T_1$, $B[\{1,2,3,i\}]^T=C[\{1,2,3,i\}]^{-1}\cdot A[\{1,2,3,i\}]\cdot C[\{1,2,3,i\}]$
\end{enumerate}
Observe that the existence of such an $C$ follows from $\calD_2(A,B)=\{3\}$, and~\cref{eqn:small-matrices-DE-to-cut-3-one}, and the definition of $T_1$. Thus, 
\begin{enumerate}
\item $B[1,2]=A[2,1]c_1c_2^{-1}$ and $B[2,1]=A[1,2]c_1^{-1}c_2$.
\item $A[j,i]=A[i,j]c_jc_i^{-1}$ for all $j\in\{4,5,\ldots, n\}$ and $i\in\{1,2,4,5,\ldots, n\}$.
\item $A[3,i]=A[i,3]c_3c_i^{-1}$ for all $i\in T_1\cup\{1,2\}$.
\end{enumerate}
This implies that for any $S\subseteq [n]$ with $1,2\in S$ and $T_2+3\subseteq \comp{S}$, $$A[\comp{S}, S]= C[\comp{S}]\cdot A[S, \comp{S}]^T\cdot C[S]^{-1}.$$
The above equation combined with~\cref{eqn:small-matrices-DE-to-cut-3-three} imply that $X$ is a cut of $A$ with $1,2\in X$ and $T_2+3\subseteq \comp{X}$. This completes the proof. 
 \end{proof}

Now, we prove~\cref{thm:main-characterization} for no-cut case. Specifically, we show that for any two square matrices $A$ and $B$ with $A$ has no cut  and satisfies property $\prop$, the corresponding principal minors of $A$ and $B$ are equal if and only if the corresponding principal minors of them of order up to $4$ are equal.

\begin{theorem}
\label{thm:PME-for-no-cut}
Let $A$ and $B$ be two $n\times n$ matrices over a field $\F$. Suppose that $A$ satisfies property $\prop$ and has no cut. Then $A\PME B$ if and only if $\kPME{A}{B}{4}$.
\end{theorem}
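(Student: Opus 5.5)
The forward direction is trivial, so the work is in showing that $\kPME{A}{B}{4}$ implies $A\PME B$. I would argue by induction on $n$. The base case $n=4$ is immediate, since $\kPME{A}{B}{4}$ is then full principal minor equivalence. For $n\geq 5$, assume the statement for all smaller sizes. The goal is to show $|\calD_1(A,B)\cup\calD_2(A,B)|\geq 5$. Then \cref{lem:partial-DS-to-complete-DS} (part 3, using that $A$ is dense) gives $A\DE B$, and \cref{obs:DE-to-PME} gives $A\PME B$.

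First I would produce three indices. By \cref{cor:no-cut-from-big-to-small-matrix}, there are $i_1,i_2,i_3$ such that $A[\nsubset{i_j}]$ has no cut. I need $A[\nsubset{i_j}]$ to satisfy property $\prop$: for $n-1\geq 5$ this is \cref{prop:matrix-property-large-to-small}, and for $n-1=4$ the induction hypothesis is not needed anyway. The restricted pair still satisfies $\kPME{\cdot}{\cdot}{4}$, so induction gives $A[\nsubset{i_j}]\PME B[\nsubset{i_j}]$. Since $A[\nsubset{i_j}]$ is dense (hence irreducible) and has no cut, \cref{lem:no-cut-to-PME} gives $A[\nsubset{i_j}]\DE B[\nsubset{i_j}]$. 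Therefore $i_1,i_2,i_3\in\calD_1(A,B)\cup\calD_2(A,B)$, and the union has size at least $3$.

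Next I would rule out sizes $3$ and $4$. Suppose the union has size $3$ or $4$. By \cref{prop:empty-intersection-of-D_1-and-D_2}, which only needs $\kPME{A}{B}{3}$, $\calD_1$ and $\calD_2$ are disjoint. By \cref{lem:partial-DS-to-complete-DS}, neither $\calD_1$ nor $\calD_2$ can have size $\geq 3$: otherwise $A\DS B$ or $A\DS B^T$, which would force the union to be all of $[n]$ with $n\geq 5$. So the sizes are $(2,2)$ when the union has four elements, and $\{2,1\}$ in some order when it has three.

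In case $(2,2)$, \cref{lem:small-matrices-DE-to-cut-4} yields a cut of $A$, contradicting that $A$ has no cut. In the three-element case with $|\calD_1|=2$ and $|\calD_2|=1$, \cref{lem:small-matrices-DE-to-cut-3} likewise yields a cut of $A$. For the symmetric case $|\calD_1|=1$ and $|\calD_2|=2$, I would replace $B$ by $B^T$. This swaps $\calD_1$ and $\calD_2$ and preserves the principal minors of $B$, so $\kPME{A}{B^T}{4}$ still holds, and the same lemma applies. In every case we get a contradiction, so the union has size at least $5$ and the conclusion follows.

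The main obstacle is this exclusion of sizes $3$ and $4$, and it rests on \cref{lem:small-matrices-DE-to-cut-4} and \cref{lem:small-matrices-DE-to-cut-3}. With those available, the remaining step I would check most carefully is the induction hypothesis: it must be applied to $A[\nsubset{i_j}]$, which has to satisfy property $\prop$ and have no cut, and this is exactly what \cref{cor:no-cut-from-big-to-small-matrix} and \cref{prop:matrix-property-large-to-small} provide.
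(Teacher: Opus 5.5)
Your proposal is correct and follows the paper's proof step for step. It uses induction on $n$, obtains three no-cut deletions from \cref{cor:no-cut-from-big-to-small-matrix} to get $|\calD_1(A,B)\cup\calD_2(A,B)|\geq 3$, and then excludes sizes $3$ and $4$ via \cref{prop:empty-intersection-of-D_1-and-D_2}, \cref{lem:small-matrices-DE-to-cut-4} and \cref{lem:small-matrices-DE-to-cut-3}. You also make explicit two points the paper glosses over: for $n=5$ the $4\times 4$ submatrices need no appeal to \cref{prop:matrix-property-large-to-small} (which requires $|S|\geq 5$), and the case $|\calD_1|=1$, $|\calD_2|=2$ reduces to \cref{lem:small-matrices-DE-to-cut-3} by passing to $B^T$.
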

\begin{proof}
We prove it by induction on $n$. 
This is trivially true for $n = 4$. 

Assume that $n \ge 5$ and $\kPME{A}{B}{4}$. Following \cref{lem:partial-DS-to-complete-DS}, to prove $A\PME B$, it suffices to show 
\[
|\calD_1(A,B)\cup\calD_2(A,B)|\geq 5.
\]
From our assumption, $A$ satisfies property $\prop$ and $A$ does not have a cut. Applying~\cref{cor:no-cut-from-big-to-small-matrix}, there exist at least three distinct indices $i_1, i_2, i_3\in[n]$ such that the submatrices $A[\nsubset{i_j}]$ for each $j \in \{1,2,3\}$ do not have a cut. 
Furthermore, each $A[\nsubset{i_j}]$ continues to satisfy property $\prop$ following~\cref{prop:matrix-property-large-to-small}. Additionally, $\kPME{A[\nsubset{j}]}{B[\nsubset{j}]}{4}$. Therefore, from the induction hypothesis, $A[\nsubset{j}]\PME B[\nsubset{j}]$ for each $j\in \{1,2,3\}$. 
Since $A[\nsubset{j}]$ has no cut and all the off-diagonal entries are nonzero, from~\cref{lem:no-cut-to-PME}, $A[\nsubset{j}]\DE B[\nsubset{j}]$.
Equivalently,
\[
|\calD_1(A,B)\cup\calD_2(A,B)|\geq 3.
\]
However,~\cref{lem:small-matrices-DE-to-cut-3} and~\cref{lem:small-matrices-DE-to-cut-4} combined with~\cref{prop:empty-intersection-of-D_1-and-D_2} rule out the possibility of $|\calD_1(A,B)\cup\calD_2(A,B)|$ being equal to 3 and 4 respectively, as that would imply the matrix $A$ to have a cut which leads to a contradiction. 
\end{proof}

We now focus on proving the main technical result of this paper: for any two square matrices $A$ and $B$ with nonzero off-diagonal entries, if $A$ satisfies property~$\prop$, then $A$ and $B$ are principal minor equivalent if and only if their corresponding principal minors of order up to $4$ are equal. Our proof proceeds by induction on the size of the matrices. To facilitate this, the following lemma provides a decomposition of larger matrices into smaller ones, allowing us to apply the induction hypothesis to the smaller matrices. The core idea of this lemma already appears in~\cite{Chatterjee25}, although not in the precise form stated here.

\begin{lemma}[Decomposition lemma]
\label{lem:cut-decomposition}
Let $A$ and $B$ be two $n\times n$ matrices over a field $\F$ with nonzero off-diagonal entries. Let $S\subseteq [n]$ be a cut of both $A$ and $B$. Consider any two indices $s,t$ such that $s\in S$ and $t\in \comp{S}$. Then $A \PME B$ if and only if $A[S+t] \PME B[S+t]$ and $A[\comp{S} + s] \PME B[\comp{S} + s]$. Furthermore, if $S$ is a minimal cut of $A$, then $A[S+t]$ has no cut.
\end{lemma}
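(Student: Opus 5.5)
The plan is to prove the forward direction trivially, prove the reverse direction by an explicit Schur-complement style expansion of principal minors across the cut, and prove the last claim by lifting a cut of $A[S+t]$ to a smaller cut of $A$.

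\emph{Forward direction.} This is immediate, since principal submatrices of PME matrices are PME.

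\emph{Reverse direction: setting up the cut data.} Because $S$ is a cut of both matrices and all off-diagonal entries are nonzero, write
\[
A[S,\comp S]=p q^T, \qquad A[\comp S,S]=u v^T,
\]
with all of $p,q,u,v$ entrywise nonzero, and similarly $B[S,\comp S]=p'q'^T$, $B[\comp S,S]=u'v'^T$. After a diagonal similarity of $B$ (which preserves PME and the cut), I normalize so that the column $t$ and row $s$ entries agree as far as possible. Then $A[S+t]\PME B[S+t]$ is governed by the data $A[S]$, $p$, $v$, the scalar $q_t u_t$, and $A[t,t]$.

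\emph{Reverse direction: the key expansion.} For a set $W=W_1\sqcup W_2$ with $W_1\subseteq S$ and $W_2\subseteq\comp S$, expand $\det A[W]$ using the rank-one off-diagonal blocks. The block $\begin{pmatrix}A[W_1] & p_{W_1}q_{W_2}^T\\ u_{W_2}v_{W_1}^T & A[W_2]\end{pmatrix}$ has determinant
\[
\det A[W_1]\det A[W_2]\;-\;\bigl(v_{W_1}^T\operatorname{adj}(A[W_1])\,p_{W_1}\bigr)\bigl(q_{W_2}^T\operatorname{adj}(A[W_2])\,u_{W_2}\bigr).
\]
This holds because a rank-one times rank-one coupling contributes only through one bilinear term. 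It is the matrix determinant lemma, valid even when the diagonal blocks are singular via adjugates. Call the factors $\alpha(W_1)=v^T\operatorname{adj}(A[W_1])p$ and $\beta(W_2)=q^T\operatorname{adj}(A[W_2])u$. The $S$-side quantities $\det A[W_1]$ and $\alpha(W_1)$ are recovered from principal minors of $A[S+t]$: apply the same formula with $W_2=\{t\}$ (the case $W_2=\emptyset$ is trivial), which gives
\[
\det A[W_1+t]=A[t,t]\det A[W_1]-q_tu_t\,\alpha(W_1).
\]
Since $q_tu_t=A[t,s]A[s,t]/(v_sp_s)$ can be normalized to be nonzero, this determines $\alpha(W_1)$ up to the fixed scalar $q_tu_t$. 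Symmetrically, $\beta(W_2)$ is determined from $A[\comp S+s]$ up to $p_sv_s$.

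\emph{Reverse direction: matching the products.} The product $\alpha\beta$ carries the factor $(q_tu_t)^{-1}(p_sv_s)^{-1}$ relative to the minors. The product $q_tu_tp_sv_s$ equals $A[s,t]A[t,s]$, which is a $2\times2$ principal-minor quantity, hence equal for $A$ and $B$. So every $\det A[W]$ equals $\det B[W]$. I expect the main obstacle here to be the normalization bookkeeping, i.e.\ making sure the products agree exactly rather than only up to individual scalars. This is where I would follow the proof of \cite[Lemma~2.12]{Chatterjee25}.

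\emph{Minimality claim.} Suppose $S$ is a minimal cut of $A$ and $A[S+t]$ has a cut $Y$, with complement $Z$ inside $S+t$; by symmetry assume $t\in Z$. Then $Y\subseteq S$, and I claim $Y$ is a cut of $A$ with $|Y|<|S|$. For the rows in $Y$, $A[Y,\comp S]=p_Yq^T$, so every column indexed by $\comp S$ restricted to the rows $Y$ is a multiple of $p_Y$. The column $t$ is among these, and it lies in the rank-one block $A[Y,Z]$. The columns indexed by $Z\setminus\{t\}\subseteq S$ are multiples of column $t$ on the rows $Y$, since $A[Y,Z]$ has rank one and all entries are nonzero. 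Hence $A[Y,[n]\setminus Y]$ has rank one, and symmetrically so does $A[[n]\setminus Y,Y]$. Finally $|Y|\ge2$ and $|Y|\le|S|-1$, because $Z$ contains an element of $S$ (as $|Z|\ge2$). So $Y$ is a strictly smaller cut of $A$, contradicting the minimality of $S$.
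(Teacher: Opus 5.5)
Your proof is correct and is essentially the paper's argument. Your adjugate identity $\det A[W]=\det A[W_1]\det A[W_2]-\alpha(W_1)\beta(W_2)$ is the paper's generalized-Laplace expansion, with the bordered determinants $\det R,\det P$ written as $v^T\mathrm{adj}(\cdot)p$ and $q^T\mathrm{adj}(\cdot)u$. Calibrating these through $\det A[W_1+t]$, $\det A[W_2+s]$ and the invariant $A[s,t]A[t,s]=p_sq_tu_tv_s$ is exactly how the paper concludes, so your extra diagonal-similarity normalization is unnecessary and $q_tu_t\neq 0$ holds automatically. For the minimality claim, the paper only cites \cite[Lemma~3.3]{Chatterjee25}, whereas your direct lifting of a cut $Y\subsetneq S$ of $A[S+t]$ to a cut of $A$ is a correct self-contained proof.
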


For a proof see~\cref{subsec:proof-of-lem:cut-decomposition}. We also require the following result, which establishes a relationship between the cuts of two matrices whose corresponding principal minors of order up to 4 are equal.

\begin{lemma}
\label{lem:common-cut}
Let $A$ and $B$ be two $n \times n$ matrices with nonzero off-diagonal entries and $\kPME{A}{B}{4}$. Let $S$ be a minimal cut of $A$ of size greater than $2$. Then, $S$ is also a cut of $B$.
\end{lemma}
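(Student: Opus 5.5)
Write $S$ for the minimal cut of $A$ with $|S|\ge 3$, and $\comp{S}$ for its complement. The goal is to show $\rank B[S,\comp{S}]\le 1$ and $\rank B[\comp{S},S]\le 1$. Since $B$ has nonzero off-diagonal entries, it suffices to prove that every $2\times 2$ submatrix $B[\{i,j\},\{k,\ell\}]$ with $i,j\in S$ and $k,\ell\in\comp{S}$ is singular, and the same for the transposed block. Each such minor involves only the four indices $\{i,j,k,\ell\}$. So the plan is to reduce to information about small principal submatrices, using $\kPME{A}{B}{4}$.

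\textbf{Step 1: pairs inside $S$.} Fix $i,j\in S$. By minimality, $\{i,j\}$ is not a cut of $A$, yet $\{i,j\}$ is a cut of $A[\{i,j,k,\ell\}]$. I would apply \cref{lem:Cut-and-PME-upto-4} to the $4\times4$ pair $A[\{i,j,k,\ell\}]$ and $B[\{i,j,k,\ell\}]$, which are PME. Either $\{i,j\}$ is a cut of $B[\{i,j,k,\ell\}]$, and we are done for this quadruple, or a cut-transpose is needed. In the latter case, the sets $X_1,X_2$ of \cref{lem:Cut-and-PME-upto-4} are determined by which $t\in\{k,\ell\}$ satisfy $A[\{i,j,t\}]\DS B[\{i,j,t\}]$. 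For $3\times3$ principal submatrices, \cref{lem:no-cut-to-PME} gives $A[\{i,j,t\}]\DE B[\{i,j,t\}]$. So the bad case is exactly when the quadruple has mixed types: one $t$ where $B$ agrees with $A$ up to diagonal similarity and another where it agrees with the transpose.

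\textbf{Step 2: use $|S|\ge 3$ to make the type globally consistent.} Pick a third element $m\in S$. Consider $5$-element sets $\{i,j,m,k,\ell\}$ and the $4\times4$ subsets containing $\{i,j,m\}$. The submatrix $A[\{i,j,m\}]$ is irreducible and has no cut. Using \cref{lem:partial-DS-to-complete-DS} and \cref{lem:small-matrices-DE-to-cut-4-base-case}, I would show that a mixed pattern (say $A[\{i,j,k\}]\DS B[\{i,j,k\}]$ but $A[\{i,j,\ell\}]\DS B[\{i,j,\ell\}]^T$ only) forces, via the base-case lemma on suitable $4\times4$ submatrices, that some $2$-subset of $S$ together with outside indices is a cut of a small submatrix of $A$. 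I would then try to lift this to a cut of $A$ itself, a rank-one extension. That lift would contradict minimality of $S$, or would imply that $\{i,m\}$ or $\{j,m\}$ behaves as a cut. The aim is to conclude that for fixed $t\in\comp{S}$, the relation type (DS versus transpose) of $A[S'+t]$ versus $B[S'+t]$ is the same for all triples $S'\subseteq S$, and moreover the same across all $t$.

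\textbf{Step 3: conclude.} With a uniform type, $B[\{i,j,k,\ell\}]$ is diagonally similar to $A[\{i,j,k,\ell\}]$ or to its transpose. Both of these have $\{i,j\}$ as a cut, so the relevant $2\times2$ minors of $B$ vanish, giving $\rank B[S,\comp{S}]=\rank B[\comp{S},S]=1$.

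The main obstacle is Step 2. Only property-free hypotheses are available here (just density and $\kPME{A}{B}{4}$), so the contradiction with minimality must come directly from explicit $4\times4$ entry relations rather than from property $\prop$. I would first work out the $4\times4$ case $\{i,m,k,\ell\}$ carefully, mirroring the computation in \cite[Lemma~3.6]{Chatterjee25}, and then check that only minors of order at most $4$ are used.
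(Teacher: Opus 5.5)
Your proposal has a genuine gap at Step~2, which is where all the content of the lemma lies. The only mechanism you give for turning a mixed DS/transpose pattern on a quadruple $\{i,j,k,\ell\}$ into a contradiction is to lift a cut of a small submatrix to a cut of $A$ ``via rank-one extension''. That is exactly property $\prop$, which you note is not available, and you offer no substitute.

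Working only with quadruples that have $i,j\in S$ uses minimality only in the weak form ``$\{i,j\}$ is not a cut of $A$''. That says nothing about how the behaviour at different outside indices $k,\ell$ is related. The uniformity you want (same type for all $t\in\comp{S}$) is essentially the lemma restated.

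A smaller point concerns Step~3. Agreement of type on $\{i,j,k\}$ and $\{i,j,\ell\}$ puts only two indices into $\calD_1$, not the three that \cref{lem:partial-DS-to-complete-DS} needs. The conclusion should come from \cref{lem:Cut-and-PME-upto-4} instead: in the uniform case the set $X_i$ there has size $1$ or $3$, so it cannot be a cut of a $4\times4$ matrix.

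The missing idea is to use minimality on the whole blocks $S+s$, and then apply \cref{lem:Cut-and-PME-upto-4} to a two-element cut \emph{outside} $S$.

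\textbf{No-cut transfer.} Since $S$ is minimal with $|S|>2$, \cite[Lemma~3.3]{Chatterjee25} gives that $A[S+s]$ has no cut for every $s\in\comp{S}$. The paper upgrades $\kPME{A[S+s]}{B[S+s]}{4}$ to $A[S+s]\PME B[S+s]$ via \cref{thm:PME-for-no-cut}. Then \cref{lem:no-cut-to-PME} gives $A[S+s]\DE B[S+s]$, so $B[S+s]$ has no cut either.

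\textbf{Outside cut.} Fix $s\in\comp{S}$ and take any $t\in\comp{S}-s$. The pair $\{s,t\}$ is a cut of $A[S\cup\{s,t\}]$. By \cref{lem:Cut-and-PME-upto-4}, either $\{s,t\}$ is a cut of $B[S\cup\{s,t\}]$, or that matrix has a cut $X$ with $s\in X$ and $t\notin X$. Since $|S\cup\{s,t\}|\ge 5$, there are two possibilities in the second case:
\begin{itemize}
\item if $|X|>2$, then $X-s$ is a cut of $B[S+t]$;
\item otherwise the complement of $X$ has size $>2$, and deleting $t$ from it gives a cut of $B[S+s]$.
\end{itemize}
Both contradict the no-cut transfer. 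Hence $\{s,t\}$ is a cut of $B[S\cup\{s,t\}]$ for every $t$. With $s$ fixed and all off-diagonal entries nonzero, this gives $\rank B[S,\comp{S}]=\rank B[\comp{S},S]=1$.

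This route never needs the DS/transpose type to be uniform across different $t$. Note also that the step through \cref{thm:PME-for-no-cut} requires $A[S+s]$ to satisfy $\prop$. That holds in the paper's applications, where either $A$ satisfies $\prop$ or full PME is available. So the paper does not give the purely $4\times4$, property-free argument you are aiming for. Your Step~2, as framed, attempts something harder than what the paper actually proves.
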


A closely related result appears in~\cite[Lemma~3.6]{Chatterjee25}, with the key difference being the strength of the hypothesis: while~\cite{Chatterjee25} assumes $A \PME B$, the above lemma only requires the weaker assumption $\kPME{A}{B}{4}$. Nonetheless, the proof of the above lemma follows the same general outline as that of~\cite[Lemma~3.6]{Chatterjee25}, with an additional reliance on~\cref{thm:PME-for-no-cut}. For a complete proof, refer to~\cref{subsec:proof-of-lem:common-cut}. With all necessary tools now in place, we are ready to formally prove~\cref{thm:main-characterization}.


\begin{proof}[Proof of~\cref{thm:main-characterization}]
Property $\prop$ ensures that all off-diagonal entries of $A$ are nonzero. Next, we show that all off-diagonal entries of $B$ are also nonzero. For the sake of contradiction, assume that $B[i,j]= 0$ for some $i\neq j\in[n]$. Then, $\det(A[\{i,j\}])\neq \det(B[\{i,j\}])$, contradicting $\kPME{A}{B}{4}$. Therefore, all the off-diagonal entries of $B$ are nonzero.

We prove the theorem by induction on $n$. 
This trivially holds for $n = 4$. If $A$ does not have a cut, the statement follows from \cref{thm:PME-for-no-cut}.

For the inductive step, suppose that $\kPME{A}{B}{4}$ and $A$ has a cut and let $S\subseteq [n]$ be a minimal cut of $A$. Now we divide our proof into the following two cases.

\paragraph*{Case I ($S$ is a cut of $B$).} Following~\cref{lem:cut-decomposition}, it suffices to prove $A[S+t] \PME B[S+t]$ and $A[\comp{S} + s] \PME B[\comp{S} + s]$ for some $s\in S$ and $t\in \comp{S}$. Notice that both $A[S+t]$ and $A[\comp{S}+s]$ continue to satisfy property $\prop$ following~\cref{prop:matrix-property-large-to-small}. Additionally, $\kPME{A[S+t]}{B[S+t]}{4}$ and $\kPME{A[\comp{S}+s]}{B[\comp{S}+s]}{4}$. The proof now follows from our induction hypothesis.

\paragraph*{Case II ($S$ is not a cut of $B$).} From~\cref{lem:common-cut}, the cardinality of $S$ must be two. Then, applying~\cref{lem:Cut-and-PME-upto-4}, there exists a cut $X$ of $B$ such that $S$ is a cut of $\widetilde B=\tw(B, X)$. Since the principal minors are preserved under the cut-transpose operation (see~\cref{lem:PME-under-cut-transpose}), $B\PME \widetilde B$ and $\kPME{A}{\widetilde B}{4}$. Additionally, $S$ is now also a cut of $\widetilde B$. Therefore, invoking \textbf{Case I}, we obtain $A\PME \widetilde B$. Thus, $A\PME B$. 
\end{proof}

\paragraph{Can property  $\prop$ be dropped?}
Here, we address the following question. Suppose that $A$ and $B$ are two $n\times n$ matrices over a field $\F$ with nonzero off-diagonal entries. From~\cref{thm:main-characterization}, if $A$ has property $\prop$, then $A\PME B$ if and only if $\kPME{A}{B}{4}$. A natural question is whether we can drop property $\prop$ while still guaranteeing the same conclusion as in~\cref{thm:main-characterization}? We answer this question in the negative. More specifically, we construct a pair of matrices $A$ and $B$ such that $\kPME{A}{B}{4}$, but $A$ is \emph{not} principal minor equivalent to $B$. For details, see~\cref{appendix:necessity-of-Q}.

\section{Black-Box Cut Finding} \label{sec: cut discovery}
\paragraph{Notation.} We begin by introducing the notation that will be used in this and the following section.
 Suppose that $B\in\F^{4\times 4}$ is a matrix with nonzero off-diagonal entries. Then, $\fourPMEfamily{B}$ is the set of all $4\times 4$ matrices over $\F$ such that 
\begin{enumerate}
\item for any matrix $C\in \fourPMEfamily{B}$, $C[1, j]=1$ for $j=2,3,4$, and
\item for any $4\times 4$ matrix $D$ with $D\PME B$, there exists a matrix $C\in \fourPMEfamily{B}$ such that $D\DS C$.
\end{enumerate}
By~\cite[Lemma~3.1]{Chatterjee25} and~\cref{obs:uniqueness-of-DS}, the family $\fourPMEfamily{A}$ is finite. Later, in~\cref{subsec:computing-upto-four-order-matrices}, we present an algorithm for computing $\fourPMEfamily{A}$ given access to the principal minors of $A$. Suppose that $I$ is a finite set. Let $A$ be a matrix over $\F$ whose rows and columns are indexed by $I$, and off-diagonal entries of $A$ are nonzero. Then, $$\submatrixfamily{A}=\left\{(T,\fourPMEfamily{A[T]})\,\mid\,T\subseteq I \text{ with } |T|=4\right\}.$$

\subsection{Cut Finding Algorithm}
Suppose that $A$ is an $n\times n$ matrix over a field~$\F$ satisfying property~$\prop$. Our goal is to determine whether $A$ has a cut in a \emph{black-box manner}; that is, instead of direct access to the entries of~$A$, we are allowed to query its principal minors. Furthermore, if $A$ has a cut, the algorithm also outputs a subset $S \subseteq [n]$ with $2 \le |S| \le n - 2$ such that $S$ is a cut in some matrix $C \in \F^{n\times n}$ satisfying $C \PME A$.

We emphasize that, we later (in~\cref{subsec:computing-upto-four-order-matrices}) provide an algorithm which, for any $I \subseteq [n]$ with $|I| = 4$, given access to the principal minors of~$A[I]$, computes the family~$\fourPMEfamily{A[I]}$. Consequently, we can obtain the family~$\submatrixfamily{A}$. Access to this family is sufficient for our purpose. Moreover, this algorithm will be used as a subroutine in a later reconstruction algorithm, where $\submatrixfamily{A}$ is computed explicitly from the principal minors of~$A$. Therefore, in the description of the cut-finding algorithm, we assume that $\submatrixfamily{A}$ is given as input, rather than access to the principal minors. 

We now define a property for the $4\times 4$ principal submatrices of an $n \times n$ matrix $A$. It will be crucial in designing our algorithm.
\begin{definition}[Property $\calP$]
\label{def:property-P}
    Given a matrix $A$, a set of two pairs $t=\{\{i,j\},\{k,l\}\}$ with distinct $i,j,k,l$ is said to satisfy property $\calP$ if there exists a $4\times 4$ matrix $C_t\PME A[\{i,j,k,l\}]$ such that $\{i,j\}$ (equivalently $\{k,l\}$) is a cut of $C_t$. 
\end{definition}

The following observation establishes that verifying whether a pair satisfies property $\calP$ can be done efficiently. 
\begin{observation}
\label{obs:checkP} 
For any matrix $A\in\F^{n\times n}$, given the family $\submatrixfamily{A}$ and a pair $t=\{\{i, j\}, \{k, l\}\}$, we can check whether $t$ satisfies property $\calP$ in constant time.
\end{observation}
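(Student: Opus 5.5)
The plan is to reduce the check to a finite inspection of the family $\fourPMEfamily{A[\{i,j,k,l\}]}$, which appears in $\submatrixfamily{A}$ as the entry indexed by $T=\{i,j,k,l\}$. By definition, every $4\times 4$ matrix $D$ with $D\PME A[T]$ is diagonally similar to some $C\in\fourPMEfamily{A[T]}$. Conversely, every $C\in\fourPMEfamily{A[T]}$ is itself principal minor equivalent to $A[T]$, since the family is built from matrices PME to $A[T]$. I would state this explicitly, as it is implicit in the construction in \cref{subsec:computing-upto-four-order-matrices}.

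The key point is that being a cut is invariant under diagonal similarity. If $D'=E^{-1}DE$ with $E$ invertible and diagonal, then
\[
D'[X,\comp{X}]=E[X]^{-1}D[X,\comp{X}]E[\comp{X}],
\]
and the same holds for the block $D'[\comp{X},X]$. Multiplying by invertible diagonal matrices does not change rank, so $X$ is a cut of $D$ if and only if it is a cut of $D'$.

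Hence $t=\{\{i,j\},\{k,l\}\}$ satisfies property $\calP$ if and only if some $C\in\fourPMEfamily{A[T]}$ has $\{i,j\}$ as a cut.

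The forward direction: take $C_t\PME A[T]$ with cut $\{i,j\}$. Pick $C\in\fourPMEfamily{A[T]}$ with $C_t\DS C$. By invariance, $\{i,j\}$ is a cut of $C$.

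The backward direction: such a $C$ is itself PME to $A[T]$, so $C$ serves as the witness $C_t$.

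For each $C$, checking that $\{i,j\}$ is a cut means checking that the two $2\times 2$ blocks $C[\{i,j\},\{k,l\}]$ and $C[\{k,l\},\{i,j\}]$ have rank at most one. That is two $2\times 2$ determinants, a constant number of field operations.

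It remains to see that the family has constant size. By \cite[Lemma~3.1]{Chatterjee25}, a $4\times 4$ matrix with nonzero off-diagonal entries has only a bounded number of PME matrices up to diagonal similarity. Together with \cref{obs:uniqueness-of-DS}, this bounds $|\fourPMEfamily{A[T]}|$ by an absolute constant independent of $n$, so the whole check takes constant time.

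The only delicate step is making sure that membership in $\fourPMEfamily{\cdot}$ genuinely implies principal minor equivalence, and that the size bound from \cite{Chatterjee25} is independent of $\F$. If the family definition did not enforce PME, I would instead filter the family by directly comparing its principal minors with those of $A[T]$. These are also constantly many, since $A[T]$ has only $15$ principal minors, so the constant-time claim is unaffected.
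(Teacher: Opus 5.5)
Your proposal is correct and matches the reasoning the paper leaves implicit. Property $\calP$ holds for $t$ exactly when some member of $\fourPMEfamily{A[\{i,j,k,l\}]}$ has $\{i,j\}$ as a cut: cuts are invariant under diagonal similarity, and every member is principal minor equivalent to $A[\{i,j,k,l\}]$, since \cref{algo:reconstruction-of-size-four-matrices} keeps a candidate only if it is. The family has constant size by \cite[Lemma~3.1]{Chatterjee25} together with \cref{obs:uniqueness-of-DS}; more directly, it has at most $|R_{2,3}|\cdot|R_{2,4}|\cdot|R_{3,4}|\le 8$ members by the root enumeration in that algorithm.
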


Intuitively, the above definition captures a certain kind of \emph{local} property that is applicable to only the $4\times 4$ principal submatrices of a large $n\times n$ matrix $A$. Next, using this local property, we define a global object for $A$. It will be helpful to characterize whether a matrix $A$ satisfying property $\prop$ has cut.
\begin{definition}[Plausible set]
\label{def:plausibleSet}
Given an $n \times n$ matrix $A$, a subset $S\subseteq [n]$ of size at least two and at most $n-2$ is called a \emph{plausible set}, if for each $\{i,j\}\subset S, \{k,l\}\subset, \comp{S}$, $\{\{i,j\},\{k,l\}\}$ satisfies property $\calP$.
\end{definition}

Next, we describe two lemmas that are backbone in designing our algorithm. We defer the proof of these lemmas to the end of the section. Suppose that $A$ is a matrix that satisfies property $\prop$. Then, the following lemma establishes a transitivity relation on the set of all pairs $\{\{i, j\}, \{k, l\}\}$ satisfying property $\calP$. Intuitively, given two such pairs that share common indices, one can derive a new pair that also satisfies~$\calP$.
\begin{lemma}[Transitivity of property $\calP$]
\label{lem:PropPtransitive}
Let $A$ be an $n\times n$ matrix that satisfies property $\prop$. Let $i,j,k,l_1,l_2\in [n]$ be distinct elements such that both $\{\{i,j\},\{k,l_1\}\}$ and $\{\{i,j\},\{k,l_2\}\}$ satisfy property $\calP$, then $\{\{i,j\},\{l_1,l_2\}\}$ also satisfies property $\calP$.
\end{lemma}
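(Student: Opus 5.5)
The plan is to make everything local to the $5\times 5$ principal submatrix on $U=\{i,j,k,l_1,l_2\}$. Whether a pair $\{\{p,q\},\{r,s\}\}$ with $p,q,r,s\in U$ satisfies property $\calP$ depends only on the principal minors of $A[\{p,q,r,s\}]$, hence only on those of $A[U]$. By \cref{prop:matrix-property-large-to-small}, $A[U]$ again satisfies property $\prop$. So we may assume $n=5$ and $A=A[U]$. We may also replace $A$ by any matrix $A'\PME A$ that still has nonzero off-diagonal entries, because the truth value of property $\calP$ on every $4$-subset is then unchanged.

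\textbf{Step 1: $4\times 4$ criterion.} Let $M$ be a dense $4\times 4$ matrix. I claim that $\{\{p,q\},\{r,s\}\}$ satisfies $\calP$ for $M$ if and only if either $\{p,q\}$ is a cut of $M$ itself, or $\{p,q\}$ is a cut of $\tw(M,Y)$ for some cut $Y$ of $M$.

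If $M$ has no cut, \cref{lem:no-cut-to-PME} shows that every $C\PME M$ satisfies $C\DE M$. Diagonal similarity and transposition both preserve the rank conditions defining a cut, so $C$ has no cut either, and $\calP$ fails. If $M$ has cuts, the cut-transpose characterization of \cite{Chatterjee25}, together with \cite[Lemma~3.1]{Chatterjee25}, describes all $C\PME M$ up to $\DE$ as $M$ or a cut-transpose of $M$. Thus $\calP$ is witnessed either by $M$ or by a single cut-transpose, since in the $4\times 4$ case every cut is a $2$-set.

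\textbf{Step 2: the generic case.} Suppose $\{i,j\}$ is a cut of both $A[\{i,j,k,l_1\}]$ and $A[\{i,j,k,l_2\}]$.

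Then $\rank A[\{i,j\},\{k,l_1\}]=\rank A[\{i,j\},\{k,l_2\}]=1$. The column $A[\{i,j\},k]$ is nonzero, so both $A[\{i,j\},l_1]$ and $A[\{i,j\},l_2]$ are multiples of it. Hence $\rank A[\{i,j\},\{l_1,l_2\}]=1$. The symmetric argument on rows gives $\rank A[\{l_1,l_2\},\{i,j\}]=1$. Therefore $\{i,j\}$ is a cut of $A[\{i,j,l_1,l_2\}]$, and $\calP$ holds with $C=A[\{i,j,l_1,l_2\}]$.

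\textbf{Step 3: reducing to Step 2.} Now suppose, say, $A[\{i,j,k,l_1\}]$ has a cut $Y\neq\{i,j\}$, such as $\{i,k\}$, and $\{i,j\}$ only becomes a cut after cut-transposing. The goal is to find $A'\PME A$ (on $U$) such that $\{i,j\}$ is a genuine cut of both relevant $4\times 4$ submatrices of $A'$, and then apply Step 2 to $A'$.

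To do this, lift the local rank-one block $A[Y,\{j,l_1\}]$ to a set $X\subseteq U$ with $\rank A[X,\comp X]=1$ using property $\prop$. Do the same for the reverse block $A[\{j,l_1\},Y]$. Then intersect or unite the resulting sets via \cref{prop:old-to-new-cut}, as in the proof of \cref{lem:lift-of-cut}, so that the same set works in both directions. This yields a genuine cut $X$ of $A[U]$, which for $n=5$ has size $2$ or $3$.

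Then take $A'=\tw(A[U],X)$. By \cref{lem:PME-under-cut-transpose}, $A'\PME A[U]$. Restricting a cut-transpose of the $5\times5$ matrix to a $4$-set containing indices on both sides of $X$ gives the corresponding cut-transpose of that $4\times 4$ block. The remaining work is a short case check over which $4$-sets contain which cut, to confirm that in $A'$ the set $\{i,j\}$ is a true cut of both $A'[\{i,j,k,l_1\}]$ and $A'[\{i,j,k,l_2\}]$.

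\textbf{Main obstacle.} I expect Step 3 to be the hardest part. Property $\prop$ only produces one-sided rank-one blocks $A[X,\comp X]$, whereas a cut needs both $A[X,\comp X]$ and $A[\comp X,X]$ to have rank one. Getting the two sides to come from the same $X$ is exactly what enables the global cut-transpose. It is also possible that the two hypotheses force different cuts $Y_1$ and $Y_2$ in the two $4$-subsets, in which case compatibility must be argued.

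My first attempt would be to use that the $5\times 5$ matrix is small: list its possible cut structures, which are $2|3$ splits, and check each directly. For consistency of the two directions I would rely on the relation $A[\comp S,S]=E\,A[S,\comp S]^T E'$. This relation arises from diagonal similarity to a transpose, as in the proofs of \cref{lem:small-matrices-DE-to-cut-4} and \cref{lem:small-matrices-DE-to-cut-3}.
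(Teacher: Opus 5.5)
Your Steps 1 and 2 are sound. Step 2 is exactly the first case of the paper's proof, and Step 1 is a weaker form of \cref{obs:FourMatP}. The target of Step 3 is also true: the paper's argument shows that in every case $\{i,j\}$ is a cut of $A[U]$ or of $\tw(A[U],X)$ for some cut $X$ of $A[U]$.

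\textbf{The gap is in Step 3.} The mechanism you propose for producing that two-sided cut $X$ does not work. Let $W$ be the $4$-set containing $Y$, let $Y'=W\setminus Y$, and let $m$ be the index of $U\setminus W$.
\begin{itemize}
\item Lifting $A[Y,Y']$ via $\prop$ gives $X_1\in\{Y,Y+m\}$ with $\rank A[X_1,U\setminus X_1]=1$.
\item Lifting $A[Y',Y]$ gives $X_2\in\{Y',Y'+m\}$ with $\rank A[X_2,U\setminus X_2]=1$.
\item \cref{prop:old-to-new-cut} merges two rank-one blocks of the same orientation, and only when $X_1\cap X_2$ and $(U\setminus X_1)\cap(U\setminus X_2)$ are both nonempty. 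Here both sets lie inside $\{m\}$, so they can never both be nonempty, and the lemma never applies.
\item Nothing stops $m$ from landing on inconsistent sides, i.e.\ $X_1=Y$ with $X_2=Y'$, or $X_1=Y+m$ with $X_2=Y'+m$. This one-sidedness is intrinsic; it is why \cref{lem:lift-of-cut} only concludes an ``or'' separately for each direction.
\end{itemize}
Your fallback relation $A[\comp{S},S]=E\,A[S,\comp{S}]^TE'$ is not available either. In \cref{lem:small-matrices-DE-to-cut-4} and \cref{lem:small-matrices-DE-to-cut-3} it comes from diagonal similarities between $A$ and a second matrix $B$ on $(n-1)$-subsets, and there is no such $B$ here. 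So the existence of the needed $X$ is asserted rather than proved, and the ``short case check'' hides the whole difficulty.

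\textbf{How the paper closes the gap.} It supplies two ingredients.
\begin{itemize}
\item Using \cref{thm:cutmap}, Step 1 sharpens to the following. If $\pp{i,j}{k,l}$ satisfies $\calP$ but $\{i,j\}$ is not a cut of $A[\{i,j,k,l\}]$, then both $\{i,k\}$ and $\{i,l\}$ are genuine cuts of it. This removes your worry about incompatible $Y_1,Y_2$. If neither block has $\{i,j\}$ as a genuine cut, then $\{i,k\}$ and $\{j,k\}$ are cuts of both blocks, hence of $A[U]$ by \cref{obs:cutTransitive}. Then $\tw(A[U],\{i,k\})$ has $\{i,k\}\Delta\{j,k\}=\{i,j\}$ as a cut, and no $\prop$ is needed.
\item In the mixed case, $\{i,j\}$ is a genuine cut of $A[\{i,j,k,l_1\}]$ while $\{i,k\}$ and $\{i,l_2\}$ are cuts of $A[\{i,j,k,l_2\}]$. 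Here $\prop$ is applied not to a $Y$-block but to the genuine block $A[\{i,j\},\{k,l_1\}]$ and its transpose, which gives four subcases.
\begin{itemize}
\item Two subcases are consistent. Either $\{i,j\}$ is a cut of $A[U]$, or $\{k,l_1\}$ is. In the latter, \cref{obs:cutTransitive} makes $\{i,l_2\}$ and $\{j,l_2\}$ cuts of $A[U]$, and one cut-transposes by $\{j,l_2\}$.
\item Two subcases are mismatched, e.g.\ $\rank A[\{i,j,l_2\},\{k,l_1\}]=\rank A[\{k,l_1,l_2\},\{i,j\}]=1$. These are settled by an explicit entry computation. Normalize row $i$ to ones by diagonal similarity, then use the rank-one conditions from the cuts $\{i,k\},\{i,l_2\}$ of the other block. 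This shows $\{i,j\}$ is in fact a genuine cut of $A[U]$.
\end{itemize}
\end{itemize}
That computation, or an equivalent argument forcing a two-sided cut, is the idea your proposal is missing.
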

For proof of the above lemma, see~\cref{subsec:PropPtransitive}. The following observation presents an analogous result for cuts instead of plausible sets and is straightforward to verify.
\begin{observation}\label{obs:cutTransitive}
Let $A$ be an $n\times n$ matrix with non-zero off diagonal entries. Let $i,j,k,l_1,l_2\in [n]$ be distinct elements such that both $A[\{i,j,k,l_1\}]$ and $A[\{i,j,k,l_2\}]$ have the cut $\{i,j\}$. Then, $\{i,j\}$  is also a cut of $A[\{i,j,l_1,l_2\}]$. In other words, $\{i,j\}$ is a cut of $A[\{i,j,k,l_1,l_2\}]$.
\end{observation}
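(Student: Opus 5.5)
The argument is a direct rank computation. All off-diagonal entries are nonzero. By definition, $\{i,j\}$ is a cut of $A[\{i,j,k,l\}]$ exactly when $A[\{i,j\},\{k,l\}]$ and $A[\{k,l\},\{i,j\}]$ both have rank one. We are given this for $l=l_1$ and $l=l_2$, and we want to show that $A[\{i,j\},\{l_1,l_2\}]$ and $A[\{l_1,l_2\},\{i,j\}]$ also have rank one. Because the matrix is dense, every entry involved is nonzero.

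Take the row block first. Rank one of $A[\{i,j\},\{k,l_1\}]$ with nonzero entries means the columns $A[\{i,j\},k]$ and $A[\{i,j\},l_1]$ are nonzero scalar multiples of each other. The same holds for the columns indexed by $k$ and $l_2$. By transitivity, the columns indexed by $l_1$ and $l_2$ are proportional to the column indexed by $k$, and hence to each other. Therefore $A[\{i,j\},\{k,l_1,l_2\}]$ has rank one, and in particular $A[\{i,j\},\{l_1,l_2\}]$ does. Equivalently, one can apply \cref{prop:old-to-new-cut} to the principal submatrix $A[\{i,j,k,l_1,l_2\}]$ with $X_1=\{i,j,l_2\}$ and $X_2=\{i,j,l_1\}$. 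Their intersection $\{i,j\}$ and the intersection of their complements $\{k\}$ are both nonempty.

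The column block is handled symmetrically. The rows $A[k,\{i,j\}]$, $A[l_1,\{i,j\}]$ and $A[l_2,\{i,j\}]$ are pairwise proportional, so $A[\{k,l_1,l_2\},\{i,j\}]$ has rank one. Combining the two blocks shows that $\{i,j\}$ is a cut of $A[\{i,j,k,l_1,l_2\}]$: both off-diagonal blocks with respect to $(\{i,j\},\{k,l_1,l_2\})$ have rank one. Restricting to the principal submatrix on $\{i,j,l_1,l_2\}$ gives the first assertion. Nothing here is a genuine obstacle; the only point needing care is that density rules out zero columns, so proportionality really is transitive.
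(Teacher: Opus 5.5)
Your direct proportionality argument is correct. The paper gives no proof, calling the observation straightforward to verify, and your reasoning — fix the column indexed by $k$ and observe that the columns indexed by $l_1$ and $l_2$ are multiples of it — is the same one that drives the proof of \cref{prop:old-to-new-cut}.

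One correction concerns your aside. \cref{prop:old-to-new-cut}, as stated, cannot be invoked with $X_1=\{i,j,l_2\}$ and $X_2=\{i,j,l_1\}$ inside $A[\{i,j,k,l_1,l_2\}]$. Its hypothesis $\rank A[X_1,\comp{X}_1]=1$ reads $\rank A[\{i,j,l_2\},\{k,l_1\}]=1$. That involves the entries $A[l_2,k]$ and $A[l_2,l_1]$, which your hypotheses leave unconstrained, so it need not hold; the same problem arises for $X_2$. Only the argument inside the lemma's proof carries over, because that argument uses just $\rank A[X_1\cap X_2,\comp{X}_1]=\rank A[X_1\cap X_2,\comp{X}_2]=1$. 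You should either drop that sentence or say that you are reusing the argument of that proof rather than the lemma itself.
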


Suppose that $A$ is a matrix satisfying property $\prop$. Suppose that instead of the entries of $A$, we have access to the principal minors of $A$. Then, $A$ has a cut if and only if any matrix principal minor equivalent to $A$ has a cut. It follows from~\cite[Theorem~1.1]{Chatterjee25}. Thus, the existence of cut in $A$ is a global property over all the matrices principal minor equivalent to $A$. However, that is not true for the question of whether a particular subset $S$ is cut in $A$. It may happen that $S$ is not a cut in $A$, but is a cut in some other matrix $C$ such that $A\PME C$.  It follows from~\cite[Theorem~1.1 and Lemma~3.2]{Chatterjee25}. Thus, given access to only the principal minors of $A$, it is expected that we would not be able to output a cut. Fortunately, for our purpose, it suffices to find a subset $S$ such that $S$ is cut in some matrix $C$ with $C\PME A$, and plausible sets exactly characterize such sets. We first note the following immediate observation.
\begin{observation}\label{obs:PlausibleImpCut}
For an $n\times n$ matrix $A$ with $S\subset [n]$ as a cut, $S$ is also a plausible set for $A$.
\end{observation}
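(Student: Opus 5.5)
The statement is \cref{obs:PlausibleImpCut}: if $S$ is a cut of $A$, then $S$ is plausible for $A$. The plan is to unwind the two definitions directly. Recall that $S$ is plausible if every pair $\{\{i,j\},\{k,l\}\}$ with $\{i,j\}\subset S$ and $\{k,l\}\subset\comp{S}$ satisfies property $\calP$. Property $\calP$ in turn asks for some $4\times 4$ matrix $C_t\PME A[\{i,j,k,l\}]$ having $\{i,j\}$ as a cut. We will simply take $C_t = A[\{i,j,k,l\}]$ itself; PME with itself is trivial.

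The size condition is immediate, since a cut already satisfies $2\le |S|\le n-2$. Next, fix $\{i,j\}\subset S$ and $\{k,l\}\subset\comp{S}$, and set $T=\{i,j,k,l\}$. In the principal submatrix $A[T]$, the block indexed by rows $\{i,j\}$ and columns $\{k,l\}$ is $A[\{i,j\},\{k,l\}]$. This is a submatrix of $A[S,\comp{S}]$, so its rank is at most $\rank A[S,\comp{S}]\le 1$. The same argument gives $\rank A[\{k,l\},\{i,j\}]\le \rank A[\comp{S},S]\le 1$. The subset $\{i,j\}$ of the $4$-element index set $T$ has size exactly $2=|T|-2$. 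Hence \cref{def:cut-of-matrix} applies and $\{i,j\}$ is a cut of $A[T]$, so $t$ satisfies $\calP$ with $C_t=A[T]$.

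There is no real obstacle here. The only point needing care is that \cref{def:cut-of-matrix} demands rank \emph{at most} one, which passes to submatrices without any nonzero-entry hypothesis. So no density or property $\prop$ assumption on $A$ is needed.
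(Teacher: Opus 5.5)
Your proof is correct and is exactly the intended argument, which the paper treats as immediate without writing it out: take $C_t = A[\{i,j,k,l\}]$ and note that its off-diagonal $2\times 2$ blocks are submatrices of $A[S,\comp{S}]$ and $A[\comp{S},S]$, so they have rank at most one. Your remark that no density or property $\prop$ hypothesis is needed is also right.
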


\begin{lemma}
\label{lem:CutChecking}
Let $A$ be an $n\times n$ matrix that satisfies property $\prop$. Then, a subset $S\subset [n]$ is plausible if and only if it is a cut of some matrix $C$ such that $C\PME A$.
\end{lemma}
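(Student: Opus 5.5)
The easy direction uses only earlier results. If $S$ is a cut of some $C\PME A$, then $S$ is a plausible set for $C$ by \cref{obs:PlausibleImpCut}. Property $\calP$ depends only on the principal minors of the $4\times 4$ principal submatrices. Since $C[T]\PME A[T]$ for every $T$ with $|T|=4$, every pair $\{\{i,j\},\{k,l\}\}$ satisfying $\calP$ for $C$ also satisfies it for $A$. Hence $S$ is plausible for $A$.

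For the converse I argue by induction on $n$, following the structure of the proof of \cref{thm:main-characterization}.

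\textbf{No-cut case is impossible.} Let $S$ be plausible for $A$ and suppose first that $A$ has no cut. I want to show that no plausible set exists. Fix $i,j\in S$ and $k,l\in\comp{S}$. Since $\{\{i,j\},\{k,l\}\}$ satisfies $\calP$, there is a $C_t\PME A[\{i,j,k,l\}]$ with cut $\{i,j\}$. If $\{i,j\}$ is a cut of $A[\{i,j,k,l\}]$ itself, then property $\prop$ gives $X\supseteq\{i,j\}$ with $\rank A[X,\comp X]=1$. Otherwise, by \cref{lem:Cut-and-PME-upto-4} applied with the roles of $A[\{i,j,k,l\}]$ and $C_t$ swapped, a cut-transpose relates them. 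In that case $A[\{i,j,k,l\}]$ has a cut of the form $\{i,k\}$ or $\{i,l\}$, which again produces rank-one blocks of $A$ through property $\prop$.

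To reach a contradiction I would pass to $A[S'+t]$ and use \cref{thm:PME-for-no-cut} together with \cref{lem:no-cut-to-PME}. On a no-cut submatrix, every matrix PME to $A$ is diagonally equivalent to $A$ or $A^T$. For such matrices the property ``$\{i,j\}$ is a cut of the $4\times4$ submatrix'' is invariant, since the rank of off-diagonal blocks is preserved under diagonal similarity and transposition. So on no-cut parts, plausibility agrees with the local cut condition. Then \cref{obs:cutTransitive} and \cref{prop:old-to-new-cut} assemble the local rank-one conditions into $\rank A[S,\comp S]=\rank A[\comp S,S]=1$. That makes $S$ a cut of $A$, which is a contradiction.

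\textbf{Cut case.} If $A$ has a cut, take a minimal cut $X$. By \cref{lem:cut-decomposition}, $A[X+t]$ has no cut. The set of matrices PME to $A$ is generated by cut-transposes together with diagonal equivalences on the pieces, by the characterization in \cite{Chatterjee25}. Intersect $S$ with the two sides, $X+t$ and $\comp X+s$. Plausibility is inherited by the restrictions, using the easy direction and the fact that $\calP$ only looks at $4$-sets, and each piece still satisfies $\prop$ by \cref{prop:matrix-property-large-to-small}. By induction I obtain $C_1\PME A[X+t]$ and $C_2\PME A[\comp X+s]$ in which the restricted sets are cuts, or are trivial. I then glue $C_1$ and $C_2$ along the cut $X$ using the converse part of \cref{lem:cut-decomposition}, possibly after a cut-transpose to align orientations. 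The glued matrix $C$ is PME to $A$ and has $S$ as a cut.

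The main obstacle is this gluing step, together with the boundary cases where $S\cap X$ or $S\cap\comp X$ has size less than $2$. There I would first try using \cref{lem:PropPtransitive} to show that $S$ must be nested with, or complementary to, $X$. If $S$ crosses $X$, \cref{prop:old-to-new-cut} should yield a smaller cut, contradicting the minimality of $X$.
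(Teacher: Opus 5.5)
Your easy direction is correct, but the converse has a genuine gap exactly where the lemma is hard.

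\textbf{The no-cut case.} The step ``on no-cut parts, plausibility agrees with the local cut condition'' is unsupported. Property $\calP$ for $\{\{i,j\},\{k,l\}\}$ only gives a $4\times 4$ matrix $C_t\PME A[\{i,j,k,l\}]$. Moreover, $A[\{i,j,k,l\}]$ always has a cut: by \cref{obs:FourMatP}, either $\{i,j\}$, or both $\{i,k\}$ and $\{i,l\}$. So \cref{lem:no-cut-to-PME} and \cref{thm:PME-for-no-cut} say nothing about it.

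To apply them to a larger cut-free submatrix (your $A[S'+t]$, which is never defined), you would need a matrix PME to that submatrix in which $S$ is a cut. Producing such a matrix is precisely what the lemma asserts, so the argument is circular.

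Property $\prop$ does not rescue it. A cut-free matrix with property $\prop$ can have $\rank A[X,\comp X]=1$ while $\rank A[\comp X,X]\ge 2$. For example, take a generic dense $5\times5$ matrix and force only $\rank A[\{1,2\},\{3,4,5\}]=1$. So the rank-one blocks you extract give no contradiction.

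What must actually be shown is that the two alternatives of \cref{obs:FourMatP}, which can be mixed across different $4$-sets, are globally consistent with a single cut-transpose. The paper does this in two stages:
\begin{itemize}
\item For $|S|=2$, it uses the equivalence relation $e\sim f$ iff $A[S\cup\{e,f\}]$ has cut $S$. An induction then shows that either $S$ is a cut of $A$, or both $\{s_1\}\cup T_e$ and $\{s_2\}\cup T_e$ are cuts of $A$ (\cref{cl:PSsizetwo}).
\item For a minimal plausible $S$ with $|S|,|\comp S|>2$, it deletes $e\in\comp S$. \cref{lem:PropPtransitive} (this is where $\prop$ is used) keeps $S$ minimal plausible in $A[[n]-e]$. \cref{lem:common-cut} then pulls the cut back to $A[[n]-e]$, so $S$ is a cut of $A$ itself.
\end{itemize}

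\textbf{The cut case.} Here there is a second gap. A minimal cut $X$ of $A$ chosen independently of $S$ need not be nested with $S$. Your fix through \cref{prop:old-to-new-cut} presupposes $\rank A[S,\comp S]=1$, which is unknown for a merely plausible $S$.

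Crossing really occurs. A dense $4\times 4$ matrix can have cuts $\{1,3\}$ and $\{1,4\}$ but not $\{1,2\}$. Then $\{1,2\}$ is plausible, since it is a cut of $\tw(A,\{1,3\})$ by \cref{thm:cutmap}. Yet it crosses the minimal cut $\{1,3\}$ of $A$, with no contradiction to minimality.

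Gluing along $X$ via \cref{lem:cut-decomposition} also does not by itself make $S$ a cut of the glued matrix.

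The paper avoids both problems by taking $X$ to be a minimal \emph{plausible} set contained in $S$, so nesting is automatic:
\begin{itemize}
\item the minimal case gives $C\PME A$ with cut $X$;
\item induction on $A[s+\comp X]$ gives $N$ with cut $\comp S\subseteq\comp X$;
\item \cref{cl:reductionStep} glues $C[X+t]$ and $N$ into a matrix still having cut $X$;
\item \cref{lem:Blowerhalf} lifts $\comp S$ to a cut of that glued matrix.
\end{itemize}
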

For proof of the above lemma, see~\cref{subsec:PlausibleIsCut}. 

\subsubsection{A description of cut finding algorithm and its correctness} 
\label{subsec:DesCutFind}
For a pseudocode of the blackbox cut finding algorithm, see~\cref{algo:cut-detection-finding-algorithm}.


\begin{algorithm}[H]
\caption{Cut finding algorithm}\label{algo:cut-detection-finding-algorithm}
\textbf{Input:} A set $I$ with $|I|\geq 4$; the family $\submatrixfamily{A}$ for some matrix $A$ whose rows and columns are indexed by $I$, where $A$ satisfies property $\prop$.\\
\textbf{Output:} `NO', if $A$ does not have a cut. Otherwise, output a $S\subseteq I$ such that $S$ is a cut in some matrix $C$ with $A\PME C$. 
\begin{algorithmic}[1]
\State Let $|I|=n$, and without loss of generality, assume $I=[n]$.
\For{$t\leftarrow \pp{i,j}{k,l}$ with  distinct $i,j,k,l\in [n]$}
    \If{$t$ satisfies property $\calP$} \label{step:correctInit}
        \State For each $e\in [n]\setminus \{i,j,k,l\},$ let $x_e$ denote a boolean variable.
        \State Let $\phi_t$ be a 2-SAT instance initialized with zero clauses.
        \For{$e\xleftarrow{} [n]\setminus t$}
            \If{either $\pp{i,j}{k,e}$ or $\pp{i,j}{l,e}$ does not satisfy property $\calP$}
                \State $\phi_t\xleftarrow{}\phi_t\wedge (x_e=\ \mathrm{True})$\label{step:eInS}
            \EndIf
        \EndFor
        \For{$e\in [n]\setminus t$}
            \If{either $\pp{i,e}{k,l}$ or $\pp{j,e}{k,l}$ does not satisfy property $\calP$}
                \State $\phi_t\xleftarrow{}\phi_t\wedge (x_e=\ \mathrm{False})$ \label{step:eNotInS}
            \EndIf
        \EndFor
        \For{$(p,q)$ with $p\neq q$ and $p,q\in [n]\setminus\{i,j,k,l\}$, $a\in \{i,j\}, c\in \{k,l\}$ }
            \If{ $\pp{a,p}{c,q}$ does not satisfy property $\calP$}
                \State $\phi_t\xleftarrow{} \phi_t \wedge (x_q\vee \neg x_p)$. \label{step:pEqualq}
            \EndIf
        \EndFor
        \If{$\phi_t$ is satisfiable}
            \State $x^*\xleftarrow{}$ a satisfying assignment for  2-SAT $\phi_t$.
            \State $S\xleftarrow{}\{i,j\}\cup \left\{e\in [n]\setminus\{i,j, k, l\}\mid x^*_e=\ \mathrm{True}\right\}$.
            \State \Return $S$.
        \EndIf
    \EndIf
\EndFor
\State \Return `NO'
\end{algorithmic}
\end{algorithm}
        
\begin{remark} 
\label{rem:cut-finding}
We make the following two remarks about~\cref{algo:cut-detection-finding-algorithm}.
\begin{enumerate}
\item \label{rem:ExplicitCutFinding}
 Given an explicit irreducible matrix $A$, \cref{algo:cut-detection-finding-algorithm} can be adapted to find a cut of $A$ with a minor modification. Specifically, in every instance where the algorithm verifies if property $\calP$ holds for a pair $\pp{i,j}{k,l}$, we instead check whether $A[\{i,j,k,l\}]$ has $\{i,j\}$ as a cut. The proof of correctness is very similar to that of \cref{algo:cut-detection-finding-algorithm}, as discussed in \cref{lem:CorrectnessOfCutFinding}.
\item \label{rem:MinimalCutFinding}
A plausible set $S$ is \emph{minimal} if none of its proper subsets are plausible. If we can find a satisfying assignment for the 2-SAT instance with a minimal number of $1$s, then it would correspond to a minimal plausible set $S$, and thus by \cref{lem:CutChecking}, the cut any $S$ in $C \PME A$ would be minimal. If there existed a matrix $C' \PME A$ with cut $X \subset S$, then $X$ would be a plausible set, contradicting the fact that $S$ is minimal. Given a satisfying assignment for a 2-SAT instance, we can easily find a minimal satisfying assignment by flipping $1$s to $0$s, one by one, and checking satisfiability. This means \cref{algo:cut-detection-finding-algorithm} can be adapted to output a minimal plausible set, and therefore, a minimal cut in some $C \PME A$.
\end{enumerate}
 \end{remark} 

 The following lemma discusses the proof of correctness and running time of \cref{algo:cut-detection-finding-algorithm}.
\begin{lemma}\label{lem:CorrectnessOfCutFinding}
Let $A$ be an $n\times n$ matrix that satisfies property $\prop$. If $A$ has a cut, then \cref{algo:cut-detection-finding-algorithm} outputs a cut $S$ of some matrix $C$ such that $C\PME A$; otherwise it outputs `NO', in $\poly(n)$ time.
\end{lemma}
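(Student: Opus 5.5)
The proof reduces everything to \cref{lem:CutChecking}, which says that for $A$ with property $\prop$, plausible sets are exactly the sets that are cuts of some $C\PME A$. It therefore suffices to show two things: \cref{algo:cut-detection-finding-algorithm} returns a plausible set whenever one exists, and every set it returns is plausible. Since $A$ has a cut iff some $C\PME A$ has a cut (\cite[Theorem~1.1]{Chatterjee25}), this also justifies the output `NO'.

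\emph{Soundness.} Fix an iteration $t=\pp{i,j}{k,l}$ that passes Step~\ref{step:correctInit}, and let $x^*$ satisfy $\phi_t$. Put $S=\{i,j\}\cup\{e: x^*_e=\mathrm{True}\}$, so $\comp S\supseteq\{k,l\}$ and $2\le|S|\le n-2$. We must show that every $\pp{a,b}{c,d}$ with $\{a,b\}\subseteq S$ and $\{c,d\}\subseteq\comp S$ satisfies $\calP$. The tool is \cref{lem:PropPtransitive}, used on either side by the symmetry of $\calP$ in its two pairs. We "walk" from $\pp{i,j}{k,l}$ to $\pp{a,b}{c,d}$ by swapping one index at a time.

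\begin{enumerate}
\item Replacing an element of $\{k,l\}$. The clauses of Step~\ref{step:eInS} give $\pp{i,j}{k,e}$ and $\pp{i,j}{l,e}$ for every $e\in\comp S$. Transitivity then gives $\pp{i,j}{e,e'}$ for $e,e'\in\comp S$.
\item Replacing an element of $\{i,j\}$. Symmetrically, Step~\ref{step:eNotInS} gives $\pp{i,e}{k,l}$ and $\pp{j,e}{k,l}$ for $e\in S$.
\item Mixed pairs. For $p\in S\setminus\{i,j\}$ and $q\in\comp S\setminus\{k,l\}$, Step~\ref{step:pEqualq} gives $\pp{a,p}{c,q}$ for all $a\in\{i,j\}$, $c\in\{k,l\}$.
\end{enumerate}

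Chaining these through \cref{lem:PropPtransitive} should reach every required quadruple. For example, from $\pp{i,p}{k,q}$ and $\pp{i,p}{k,l}$ we get $\pp{i,p}{q,l}$, and so on. This requires a careful case analysis on how many of $a,b,c,d$ lie outside $\{i,j,k,l\}$ (0 to 4 new indices).

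\emph{Completeness.} Suppose $A$ has a cut. By \cref{lem:CutChecking} there is a plausible set $S_0$. Pick $i,j\in S_0$ and $k,l\in\comp S_0$; then $t=\pp{i,j}{k,l}$ satisfies $\calP$. Set $x_e=\mathrm{True}$ iff $e\in S_0$. Every unit clause is satisfied by plausibility of $S_0$: if $e\in S_0$, then $\pp{i,e}{k,l}$ holds, so $e$ is never forced False; if $e\in \comp S_0$, then $\pp{i,j}{k,e}$ holds, so $e$ is never forced True. A clause $(x_q\vee\neg x_p)$ is added only when $\pp{a,p}{c,q}$ fails, and it is violated only if $p\in S_0$ and $q\notin S_0$, which contradicts plausibility. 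Hence $\phi_t$ is satisfiable, and the algorithm returns some set, which is plausible by soundness. Conversely, if $A$ has no cut then no plausible set exists, so by soundness no iteration can return, and the output is `NO'.

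\emph{Running time.} There are $O(n^4)$ choices of $t$. Each $\phi_t$ has $O(n^2)$ clauses, each checked in constant time via \cref{obs:checkP}, and 2-SAT is solvable in linear time. The total is $\poly(n)$.

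The main obstacle is the soundness chaining. Each application of \cref{lem:PropPtransitive} needs a shared pair together with a shared third index, and the cases where both new indices land on the same side, e.g.\ $a,b\in S\setminus\{i,j\}$, need an intermediate quadruple. For that case I would first derive $\pp{i,a}{c,d}$ and then apply transitivity on the $S$-side with the shared pair $\{c,d\}$ and shared index $i$.
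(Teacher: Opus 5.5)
Your proposal is correct and follows essentially the same route as the paper: completeness by checking that the indicator assignment of a cut (equivalently, a plausible set) of $A$ satisfies $\phi_t$ for a quadruple $t$ that straddles it, and soundness by a case analysis on how many of $a,b,c,d$ lie outside $\{i,j,k,l\}$, chaining \cref{lem:PropPtransitive} on either side and then applying \cref{lem:CutChecking}; your handling of the hardest case, via $\pp{i,a}{c,d}$ and $\pp{i,b}{c,d}$, is exactly the paper's. The only addition is that you spell out why `NO' is correct when $A$ has no cut, via \cite[Theorem~1.1]{Chatterjee25}, \cref{lem:CutChecking} and soundness, a step the paper leaves implicit.
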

\begin{proof}
First, we show that if $A$ has a cut, then the algorithm does not output `NO'. Let $S$ be some cut of $A$. Without loss of generality, assume that $\{1,2\}\subset S,\{3,4\}\subset \comp{S}$. Then, consider the assignment  $x$ such that for each $e\in [n]\setminus [4]$, \[x_e=\begin{cases}\text{True} & \text{if } e\in S\\
\text{False} & \text{otherwise}\end{cases}.\] We will argue that  $x$ satisfies the 2-SAT $\phi_t$ for $t=\pp{1,2}{3,4}$. 

Suppose, for the sake of contradiction, that $x$ does not satisfy $\phi_t$. This implies that $x$ does not satisfy one of the clauses of $\phi_t$. 
\begin{description}
\item [Case 1:] Suppose that it does not satisfy the clause in line \ref{step:eInS}. This implies that there exists $e\notin S$ such that $\pp{1,2}{3,e}$ or $\pp{1,2}{4,e}$ does not satisfy  property $\calP$. However, this is a contradiction as $S$ is a cut of $A$ (and hence a plausible set from \cref{obs:PlausibleImpCut}) such that $\{1,2\}\subseteq S$ and $\{3,4,e\}\subseteq \overline{S}$. 

\item [Case 2:] Similarly to \emph{Case 1}, we can argue that $x$ satisfies the clauses of line \ref{step:eNotInS}. 

\item [Case 3:] We show that $x$ satisfies the clauses of line \ref{step:pEqualq}. Suppose that this is not true. This implies that there exist some $p,q\in [n]\setminus [4]$ such that $q\in \comp{S},p\in {S}$ and $a\in \{1,2\},b\in \{3,4\}$ such that $\pp{a,p}{q,l}$ does not satisfy property $\calP$. However, this is a contradiction as $S$ is a cut of $A$ and $\{1,2,p\}\subseteq S$ and $\{3,4,q\}\subseteq \comp{S}$. 
\end{description}
Hence, $x$ satisfies all clauses of~$\phi_t$, and therefore the algorithm does not output `NO' when~$A$ has a cut.

Now, we show that if for any set of two pairs $\pp{i,j}{k,l}$ the algorithm outputs some set $S$, then $S$ is a plausible set. By~\cref{lem:CutChecking}, this would imply that $S$ is a cut of some matrix $C $ such that $C\PME A$. To prove that $S$ is a plausible set, we must show that for each  $\{a,b\} \subset S, \{c,d\}\subset \comp{S}$, $\pp{a,b}{c,d}$ satisfies property $\calP$. We analyze this by considering cases based on the size of the intersection $T = \{i,j,k,l\}\cap \{a,b,c,d\}$. By construction, $\{i,j\} \subseteq S$ and $\{k,l\} \subseteq \comp{S}$. The case $|T|=4$ follows directly from line \ref{step:correctInit} of the algorithm.
\begin{enumerate}
\item \label{item:cut_algo_case1} $|T|=3$ : Without loss of generality, let $\{a,b\}=\{i,j\}$ and $c=k, d\neq l$. Since $x^*$ satisfies $\phi_t$ and $x^*_d$ is False, $\pp{i,j}{k,d}$ satisfies property $\calP$ otherwise clause in line  \ref{step:eInS} is violated.
\item  $|T|=2$ :  This can be divided further into the following subcases (up to reindexing of elements). 
\begin{enumerate}
\item  $\{a,b\}=\{i,j\}$ and $\{c,d\}\cap \{k,l\}=\emptyset$ : From case ~\ref{item:cut_algo_case1}, both $\pp{i,j}{k,c}$ and $\pp{i,j}{k,d}$ satisfy property $\calP$. Hence, from \cref{lem:PropPtransitive}, $\pp{i,j}{c,d}$ satisfies property $\calP$. 
    \item \label{item:cut_algo_case2b} $a=i, c=k$ : Since $x^*_b=$True and $x^*_d=$False, $\pp{i,b}{k,d}$ satisfies property $\calP$ otherwise, the clause in line \ref{step:pEqualq} is violated. 
\end{enumerate}
\item \label{item:cut_algo_case3} $|T|=1$ : Without loss of generality, let $a=i$ and $\{j,k,l\}\cap\{b,c,d\}=\emptyset$. From case \hyperref[item:cut_algo_case2b]{2(b)}, $\pp{i,b}{k,c}$ and $\pp{i,b}{k,d}$ satisfy property $\calP$. Hence, by \cref{lem:PropPtransitive}, $\pp{i,b}{c,d}$ satisfies property $\calP$. 

\item $|T|=0$ : From case \hyperref[item:cut_algo_case3]{3}, $\pp{i,a}{c,d}$ and $\pp{i,b}{c,d}$ satisfy property $\calP$. Hence, from \cref{lem:PropPtransitive}, $\pp{a,b}{c,d}$ satisfies property $\calP$.
\end{enumerate}

We now analyze the running time of the algorithm. For any set~$t$ of two pairs, the algorithm constructs the corresponding formula~$\phi_t$ in polynomial time. This is because verifying whether property~$\calP$ holds can be done in constant time (by~\cref{obs:checkP}), and only polynomially many such checks are required during the construction of~$\phi_t$. Since each~$\phi_t$ is a 2-SAT instance of polynomially many clauses and there are $O(n^4)$ possible choices of two pairs~$t$, the overall running time of the algorithm is polynomial in~$n$.
\end{proof}

\subsubsection{Proof of \cref{lem:PropPtransitive}}
\label{subsec:PropPtransitive}
We start by stating two lemmas from \cite{Chatterjee25} that we use later in the proof. 
\begin{lemma}{\cite[Lemma 3.1]{Chatterjee25} }\label{lem:4PME}
    Let $A$ be a $4\times 4$ matrix with non-zero off diagonal entries and $B\PME A$ be another matrix. Then, either $A\DE B$ or there exists a common cut $S$ such that $\ct(A,S)\DE B$. 
\end{lemma}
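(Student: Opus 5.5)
The plan is to first reduce to the case where $A$ and $B$ are not diagonally equivalent and then extract the common cut directly. The statement to prove is Lemma~\ref{lem:4PME}: for a $4\times 4$ matrix $A$ with nonzero off-diagonal entries and $B\PME A$, either $A\DE B$ or $A$ and $B$ share a cut $S$ with $\ct(A,S)\DE B$.

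\textbf{Step 1 (no cut).} Since $A$ is dense it is irreducible. If $A$ has no cut, Lemma~\ref{lem:no-cut-to-PME} gives $A\DE B$ immediately. So from now on assume $A$ has a cut and $A\nDE B$.

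\textbf{Step 2 (splitting $[4]$ into $\calD_1$ and $\calD_2$).} For $n=4$ every $3\times 3$ principal submatrix of $A$ is dense, hence irreducible. Applying Lemma~\ref{lem:no-cut-to-PME} with $n=3$ to each $A[[4]-i]\PME B[[4]-i]$ gives $\calD_1(A,B)\cup\calD_2(A,B)=[4]$. Since $A\nDE B$, Lemma~\ref{lem:partial-DS-to-complete-DS} gives $|\calD_1|\le 2$ and $|\calD_2|\le 2$. Hence $|\calD_1|=|\calD_2|=2$, and the two sets are disjoint and complementary: $\calD_1=X$, $\calD_2=\comp X$ with $|X|=2$.

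\textbf{Step 3 (the cut).} Lemma~\ref{lem:small-matrices-DE-to-cut-4-base-case} now says that $X$ is a cut of $A$. We want $X$ to be a cut of $B$ as well. I would get this from the symmetric roles of the two matrices. Applying Step~2 with $A$ and $B$ swapped shows that $\calD_1(B,A)=X$ and $\calD_2(B,A)=\comp X$, since diagonal similarity and transposition are symmetric relations. One caveat: $B$ has nonzero off-diagonal entries, because its $2\times 2$ principal minors match those of $A$ and its diagonal does too, but this only gives $B[i,j]B[j,i]=A[i,j]A[j,i]\ne 0$. With that, Lemma~\ref{lem:small-matrices-DE-to-cut-4-base-case} applied to $(B,A)$ gives that $X$ is a cut of $B$. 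So $S=X$ is a common cut.

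\textbf{Step 4 (identifying $B$).} Put $A'=\ct(A,X)$. By Lemma~\ref{lem:PME-under-cut-transpose}, $A'\PME A\PME B$. By construction $A'[X]=A[X]$ and $A'[\comp X]=A[\comp X]^T$. The goal is to show $A'\DS B$ or $A'\DS B^T$. Without loss of generality $X=\{1,2\}$.

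\begin{itemize}
\item $A[\{1,3,4\}]$ is diagonally similar to $B[\{1,3,4\}]^T$ (because $2\in\calD_2$), and $A[\{2,3,4\}]$ is similarly related.
\item $A[\{1,2,3\}]$ and $A[\{1,2,4\}]$ are diagonally similar to the corresponding submatrices of $B$.
\end{itemize}

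Now normalize $A$ and $B$ by diagonal similarity, as in the proof of Lemma~\ref{lem:small-matrices-DE-to-cut-4}, so that their entries agree outside $\{(3,4),(4,3)\}$ after the appropriate transposition. The remaining task is to check that the rank-one blocks $pq^T$ and $uv^T$ of $A$ are rearranged exactly into the blocks $pu^T$ and $qv^T$ of $\ct(A,X)$, up to a diagonal scaling.

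This entry-level bookkeeping in Step~4 is the main obstacle. If it gets messy, the fallback is to note that $\ct(A,X)$ and $B$ have $\calD_1(\ct(A,X),B)\supseteq\{3,4\}$, after checking the relevant $3\times 3$ blocks via the first bullet above. Lemma~\ref{lem:partial-DS-to-complete-DS} then needs three indices in one of $\calD_1$ or $\calD_2$. The third index should come from the fact that $\ct(A,X)[\{1,3,4\}]=\ct(A[\{1,3,4\}],\cdot)$ is transposed on the $\{3,4\}$ part, which matches the transposed relation in the first bullet. That gives $\ct(A,X)\DE B$.
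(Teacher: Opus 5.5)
The paper does not prove this lemma; it imports it from \cite{Chatterjee25}, so there is no in-paper argument to compare against. Your Steps 1--3 are sound. Each $3\times 3$ principal submatrix is dense, so \cref{lem:no-cut-to-PME} gives $\calD_1(A,B)\cup\calD_2(A,B)=[4]$. Then \cref{lem:partial-DS-to-complete-DS} forces the split $\calD_1(A,B)=X$, $\calD_2(A,B)=\comp{X}$ with $|X|=2$. Applying \cref{lem:small-matrices-DE-to-cut-4-base-case} to $(A,B)$ and to $(B,A)$ makes $X$ a common cut. Two small remarks: Step 1 is redundant, and your ``caveat'' is not one, since $B[i,j]B[j,i]=A[i,j]A[j,i]\neq 0$ already forces both $B[i,j]$ and $B[j,i]$ to be nonzero.

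Step 4, however, is wrong as written, because the transposes in your bullets are on the wrong side. With your naming $X=\{1,2\}=\calD_1(A,B)$:
\begin{itemize}
\item $2\in\calD_1$, so $A[\{1,3,4\}]\DS B[\{1,3,4\}]$ with no transpose.
\item $4\in\calD_2$, so $A[\{1,2,3\}]\DS B[\{1,2,3\}]^T$.
\end{itemize}
Your bullets are in fact false, since $\calD_1\cap\calD_2=\emptyset$. The derived claim $\calD_1(\ct(A,X),B)\supseteq\{3,4\}$ fails too. Since $\ct(A,X)[\{1,2,4\}]\DS A[\{1,2,4\}]$, having $3\in\calD_1(\ct(A,X),B)$ would give $3\in\calD_1(A,B)$, a contradiction.

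The correct bookkeeping also makes your entry-level normalization unnecessary. Write $A[X,\comp{X}]=pq^T$ and $A[\comp{X},X]=uv^T$; all entries of $p,q,u,v$ are nonzero, so $\ct(A,X)$ is dense.
\begin{itemize}
\item For $k\in\comp{X}$: $\ct(A,X)[X+k]$ differs from $A[X+k]$ only in that the column $A[X,k]=q_kp$ becomes $u_kp$ and the row $A[k,X]=u_kv^T$ becomes $q_kv^T$. Hence $\ct(A,X)[X+k]\DS A[X+k]$.
\item For $a\in X$: $\ct(A,X)[\comp{X}+a]$ has block $A[\comp{X}]^T$, row $p_au^T$ and column $v_aq$. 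Meanwhile $A[\comp{X}+a]^T$ has row $v_au^T$ and column $p_aq$. Hence $\ct(A,X)[\comp{X}+a]\DS A[\comp{X}+a]^T$.
\end{itemize}
Combine these with $\calD_1(A,B)=X$, $\calD_2(A,B)=\comp{X}$, and the fact that $M\DS N$ implies $M^T\DS N^T$. This puts every index of $[4]$ into $\calD_2(\ct(A,X),B)$, so \cref{lem:partial-DS-to-complete-DS} gives $\ct(A,X)\DS B^T$.

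Equivalently, you can transpose against the pair $\comp{X}=\calD_2(A,B)$ instead. Then your bullets become correct, all four indices land in $\calD_1(\ct(A,\comp{X}),B)$, and you get $\ct(A,\comp{X})\DS B$.
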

\begin{lemma}{\cite[Lemma 3.2]{Chatterjee25}}
    \label{thm:cutmap}
Let $A$ be an $n\times n$ matrix over $\F$ with nonzero off-diagonal entries. Let $S\subseteq [n]$ be a cut in $A$. Then, for any $T\subseteq [n]$ the following holds.
\begin{enumerate}
\item If $T$ or $\comp{T}$ is a subset of $S$ or $\comp{S}$, then $T$ is a cut in $A$ if and only if $T$ is a cut in $\tw(A, S).$
\item Otherwise, $T$ is a cut in $A$ if and only if $T\Delta S$ is a cut in $\tw(A,S)$.
\end{enumerate}
\end{lemma}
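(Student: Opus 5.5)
We prove the lemma by a direct computation with the block form of \cref{def:cut-transpose-operation}. Write $A[S,\comp S]=p\,q^T$ and $A[\comp S,S]=u\,v^T$. Then $\tw(A,S)$ has blocks $A[S]$, $p\,u^T$, $q\,v^T$ and $A[\comp S]^T$. Since all off-diagonal entries of $A$ are nonzero, every entry of $p,q,u,v$ is nonzero. Hence every subvector such as $p_{T_1}:=p[T_1]$ is nonzero, and every product of subvectors is a rank-one matrix with nonzero entries. We use repeatedly that the definition of a cut is symmetric under $T\leftrightarrow\comp T$. It is therefore enough to compare the two rank conditions $\rank(\cdot)[T,\comp T]\le 1$ and $\rank(\cdot)[\comp T,T]\le 1$ for $A$ with the corresponding pair of conditions for $\tw(A,S)$, possibly with the two conditions interchanged.

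\textbf{Case 1.} After replacing $T$ by $\comp T$ if needed, we may assume $T\subseteq S$ or $T\subseteq\comp S$.

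\emph{Subcase $T\subseteq S$.} We have $A[T,\comp T]=\bigl(A[T,S\setminus T]\ \big|\ p_T q^T\bigr)$ and $\tw(A,S)[T,\comp T]=\bigl(A[T,S\setminus T]\ \big|\ p_T u^T\bigr)$. The block $p_T q^T$ has nonzero columns spanning $\langle p_T\rangle$, and the same holds for $p_T u^T$. So both matrices have rank at most one if and only if every column of $A[T,S\setminus T]$ lies in $\langle p_T\rangle$. The row-side condition is handled the same way, with $v_T^T$ spanning the row space of both $u\,v_T^T$ and $q\,v_T^T$.

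\emph{Subcase $T\subseteq\comp S$.} Here the transpose on $A[\comp S]$ swaps the roles of the two conditions. We have $\tw(A,S)[T,\comp T]=\bigl(q_T v^T\ \big|\ A[\comp S\setminus T,T]^T\bigr)$. This has rank at most one iff every row of $A[\comp S\setminus T,T]$ is a multiple of $q_T^T$. That is exactly the condition $\rank A[\comp T,T]\le 1$, because $A[\comp T,T]$ stacks $p\,q_T^T$ on top of $A[\comp S\setminus T,T]$. Symmetrically, $\rank\tw(A,S)[\comp T,T]\le1$ is equivalent to $\rank A[T,\comp T]\le 1$.

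\textbf{Case 2.} Set $T_1=T\cap S$ and $T_2=T\cap\comp S$. In this case all four sets $T_1$, $S\setminus T_1$, $T_2$, $\comp S\setminus T_2$ are nonempty. Put $R=T\Delta S=(S\setminus T_1)\cup T_2$, so that $\comp R=T_1\cup(\comp S\setminus T_2)$. With rows indexed by $T_1,T_2$ and columns by $S\setminus T_1,\comp S\setminus T_2$,
\[
A[T,\comp T]=\begin{pmatrix}X & p_{T_1}q_{\comp S\setminus T_2}^T\\ u_{T_2}v_{S\setminus T_1}^T & Y\end{pmatrix},
\]
where $X=A[T_1,S\setminus T_1]$ and $Y=A[T_2,\comp S\setminus T_2]$. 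With rows indexed by $T_1,\comp S\setminus T_2$ and columns by $S\setminus T_1,T_2$,
\[
\tw(A,S)[\comp R,R]=\begin{pmatrix}X & p_{T_1}u_{T_2}^T\\ q_{\comp S\setminus T_2}v_{S\setminus T_1}^T & Y^T\end{pmatrix}.
\]

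The key elementary fact is the following. Let $a,b,c,d$ be vectors with nonzero entries. Then
\[
\rank\begin{pmatrix}X & ab^T\\ cd^T & Y\end{pmatrix}\le 1
\iff
\exists\,\alpha\neq0:\ X=\alpha\, a d^T \text{ and } Y=\alpha^{-1} c\, b^T .
\]
To see this, note that the row space must be spanned by $(d^T\ \ \lambda b^T)$ and the column space by $(a;\mu c)$, and then compare blocks. Apply it to the first matrix with $(a,b,c,d)=(p_{T_1},q_{\comp S\setminus T_2},u_{T_2},v_{S\setminus T_1})$. Apply it to the second with $(a,b,c,d)=(p_{T_1},u_{T_2},q_{\comp S\setminus T_2},v_{S\setminus T_1})$. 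Both reduce to the same condition, namely $X=\alpha\,p_{T_1}v_{S\setminus T_1}^T$ and $Y=\alpha^{-1}u_{T_2}q_{\comp S\setminus T_2}^T$, where for the second matrix we transpose the equation for $Y^T$. The identical computation matches $\rank A[\comp T,T]\le1$ with $\rank\tw(A,S)[R,\comp R]\le1$. Finally, $|R|$ lies between $2$ and $n-2$ whenever $T$ is a cut of $A$, and conversely, since all four pieces are nonempty. Hence $T$ is a cut of $A$ iff $T\Delta S$ is a cut of $\tw(A,S)$.

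The main obstacle is the bookkeeping in Case 2, namely tracking which blocks get transposed and which index sets label the rows and columns. Once the $2\times2$ block rank-one criterion is isolated, the rest is mechanical.
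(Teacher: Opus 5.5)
Your proof is correct. The paper does not prove this statement itself; it imports it from \cite[Lemma 3.2]{Chatterjee25}. So there is no in-paper argument to compare against, and your direct block computation stands as a self-contained verification.

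The delicate identifications all check out:
\begin{itemize}
\item In Case 1 with $T\subseteq\comp S$, the transpose on $A[\comp S]$ swaps the two rank conditions, exactly as you say.
\item In Case 2, your rank-one criterion for $\begin{pmatrix}X & ab^T\\ cd^T & Y\end{pmatrix}$ gives the same constraint for $A[T,\comp T]$ and for $\tw(A,S)[\comp R,R]$: $X=\alpha\,p_{T_1}v_{S\setminus T_1}^T$ and $Y=\alpha^{-1}u_{T_2}q_{\comp S\setminus T_2}^T$.
\item The analogous computation matches $A[\comp T,T]$ with $\tw(A,S)[R,\comp R]$. Both reduce to $A[S\setminus T_1,T_1]=\beta\,p_{S\setminus T_1}v_{T_1}^T$ and $A[\comp S\setminus T_2,T_2]=\beta^{-1}u_{\comp S\setminus T_2}q_{T_2}^T$.
\end{itemize}

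Two small remarks:
\begin{itemize}
\item The specific normalisation of $p,q,u,v$ in \cref{def:cut-transpose-operation} (first nonzero row or column) plays no role. Your argument works for any rank-one factorisation of $A[S,\comp S]$ and $A[\comp S,S]$.
\item In Case 2 the size condition holds automatically for both $T$ and $T\Delta S$. Each of these sets, and each complement, is a union of two nonempty pieces. So the qualifier ``whenever $T$ is a cut of $A$'' in your last step is unnecessary.
\end{itemize}

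Your write-up also shows clearly where the nonzero-off-diagonal hypothesis enters. It is needed only so that restricted vectors such as $p_{T_1}$, $u_{T_2}$, $q_{\comp S\setminus T_2}$ and $v_{S\setminus T_1}$ are nonzero, and hence pin down the relevant row and column spaces.
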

\begin{observation}\label[observation]{obs:FourMatP}
    Let $A$ be an $n\times n$ matrix with non-zero off-diagonal entries. If $\pp{i,j}{k,l}$ satisfies property $\calP$, then one of the following must be true for  $A[\{i,j,k,l\}]$:
    \begin{enumerate}
        \item $\{i,j\}$ is a cut. \label{eq:PropPtransitive-1}
        \item $\{i,k\}$ and $\{i,l\}$ both are cuts but $\{i,j\}$ is not a cut. \label{eq:PropPtransitive-2}
    \end{enumerate}
    \end{observation}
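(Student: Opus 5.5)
We need to prove the observation: if $\pp{i,j}{k,l}$ satisfies $\calP$, then in $M:=A[\{i,j,k,l\}]$ either $\{i,j\}$ is a cut, or $\{i,j\}$ is not a cut but $\{i,k\}$ and $\{i,l\}$ both are.

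By the definition of $\calP$ there is a $4\times 4$ matrix $C\PME M$ having $\{i,j\}$ as a cut. Since $M$ has nonzero off-diagonal entries, so does $C$: equal $2\times 2$ principal minors give $C[p,q]C[q,p]=M[p,q]M[q,p]\neq 0$. I apply \cref{lem:4PME} to the pair $(C,M)$, with $C$ playing the role of the matrix with nonzero off-diagonal entries. This yields two cases.

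\emph{First case: $C\DE M$.} Diagonal similarity $N=D^{-1}CD$ multiplies $C[X,\comp X]$ and $C[\comp X,X]$ by invertible diagonal matrices on both sides, so it preserves ranks of these off-diagonal blocks. Transposition swaps the two blocks $C[X,\comp X]$ and $C[\comp X,X]$ (up to transposing each), so it also preserves the condition ``both have rank at most one''. Hence cuts are invariant under $\DE$, and $\{i,j\}$ is a cut of $M$. This is alternative (1).

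\emph{Second case: there is a cut $S$ of $C$ with $\ct(C,S)\DE M$.} Because $\DE$ preserves cuts, it suffices to determine which sets are cuts of $\ct(C,S)$. In a $4\times 4$ matrix the cuts are $2$-element sets, and $T$ is a cut exactly when $\comp T$ is, so the relevant bipartitions are $\{i,j\}|\{k,l\}$, $\{i,k\}|\{j,l\}$ and $\{i,l\}|\{j,k\}$. If $S=\{i,j\}$ or $\{k,l\}$, then $T=\{i,j\}$ is contained in $S$ or $\comp S$. Part 1 of \cref{thm:cutmap} then says $\{i,j\}$ remains a cut of $\ct(C,S)$, hence of $M$, giving alternative (1). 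Otherwise $S$ is $\{i,k\}$ or $\{i,l\}$ (or a complement of one of these, which gives the same bipartition). By symmetry take $S=\{i,k\}$.

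I then apply \cref{thm:cutmap} to each bipartition, computed in $C$ and transported to $\ct(C,S)$:
\begin{itemize}
\item $T=\{i,k\}=S$ falls under part 1, so it remains a cut.
\item $T=\{i,j\}$ falls under part 2: it is a cut of $C$, so $T\Delta S=\{j,k\}$, equivalently $\{i,l\}$, is a cut of $\ct(C,S)$.
\item For $T=\{i,l\}$, part 2 says $\{i,l\}$ is a cut of $\ct(C,S)$ iff $\{i,l\}\Delta\{i,k\}=\{k,l\}$ is a cut of $C$. Since $\{k,l\}$ is a cut of $C$, $\{i,l\}$ is a cut of $\ct(C,S)$, matching the previous item.
\item To decide whether $\{i,j\}$ is a cut of $\ct(C,S)$, take $T=\{i,j\}$: by part 2 it is a cut of $\ct(C,S)$ iff $T\Delta S=\{j,k\}$, i.e. $\{i,l\}$, is a cut of $C$.
\end{itemize}
So $\ct(C,S)$, and therefore $M$, has $\{i,k\}$ and $\{i,l\}$ as cuts. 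If $\{i,j\}$ is also a cut of $M$, alternative (1) holds; if not, alternative (2) holds. The symmetric choice $S=\{i,l\}$ is identical with $k$ and $l$ exchanged.

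The main obstacle is the careful bookkeeping with \cref{thm:cutmap}. Part 1 must be applied exactly when $T$ or $\comp T$ lies inside $S$ or $\comp S$, and $2$-element sets must be identified with their complements throughout. Once that is done correctly, the dichotomy follows from which bipartition $S$ realizes.
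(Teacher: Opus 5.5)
Your proof is correct and follows the paper's argument. You apply \cref{lem:4PME} with the witness $C$ as the base matrix, and you handle the case $C\DE M$ by the invariance of cuts under diagonal equivalence. In the cut-transpose case you use \cref{thm:cutmap} to turn the cut $\{i,j\}$ of $C$ into the cut $\{j,k\}$ (equivalently $\{i,l\}$) of $\ct(C,S)$ with $S=\{i,k\}$. The only differences are cosmetic: you obtain $S$ as a cut of $M$ from part~1 of \cref{thm:cutmap} rather than from the ``common cut'' clause of \cref{lem:4PME}, and you explicitly check that $C$ has nonzero off-diagonal entries, which the paper leaves implicit.
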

    \begin{proof}
        Let $A'=A[\{i,j,k,l\}]$. By definition, there exists a $4\times 4$ matrix $C'\PME A' $ with $\{i,j\}$ as a cut. From \cref{lem:4PME}, either $A'$ is diagonally equivalent to $C'$ or there exists a common cut, say $S$, such that, $A'$ is diagonally equivalent to $\ct(C',S)$. If it is the former case, then $A'$ has cut $\{i,j\}$. Suppose it is the latter case and $A'$ does not have cut $\{i,j\}$. Without loss of generality, let the common cut $S=\{i,k\}$. Since $\{i,j\}$ is a cut of $C'$, from \cref{thm:cutmap}, $\{i,k\}\Delta\{i,j\}=\{j,k\}$, equivalently $\{i,l\}$, is also a cut of $\ct(C',S)$ and hence of $A'$.  
    \end{proof}
Now, we come back to the proof of \cref{lem:PropPtransitive}. Without loss of generality, let $i=1,j=2,k=4,l_1=3,l_2=5$.  We prove this by considering following three cases based on which conditions $A_1=A[\{1,2,3,4\}]$ and $A_2=A[\{1,2,4,5\}]$ satisfy from \cref{obs:FourMatP}. Let $A_3=A[\{1,2,3,5\}]$.
\begin{enumerate}
    \item $A_1$ and $A_2$ both satisfy \cref{eq:PropPtransitive-1}. From \cref{obs:cutTransitive}, $\{1,2\}$ is a cut of $A[[5]]$. Hence, $A_3$ has cut $\{1,2\}$ implying $\pp{1,2}{3,5}$ satisfies $\calP$.
    \item $A_1$ and $A_2$ both satisfy \cref{eq:PropPtransitive-2}. This implies $\{1,4\}$ is a cut of both $A_1$ and $A_2$. Hence, $\{1,4\}$ is a cut of $A[[5]].$ Since $A_1$ and $A_2$ satisfy \cref{eq:PropPtransitive-2}, $\{1,3\}$ and $\{1,5\}$ are cuts of $A_1$ and $A_2$, respectively. This in turn implies that both of them have $\{2,4\}$ as a cut. Hence, from \cref{obs:cutTransitive}, $\{2,4\}$ is also a cut of $A[[5]]$. Let $C=\ct(A[[5]],\{1,4\})$. Then, from \cref{thm:cutmap}, $C\PME A[[5]]$ and has $\{1,4\}\Delta \{2,4\}=\{1,2\}$ as a cut. This implies $C[\{1,2,3,5\}]\PME A_3$ and has $\{1,2\}$ as a cut. Hence, $\pp{1,2}{3,5}$ also satisfies property $\calP$.
    \item $A_1$ satisfies \cref{eq:PropPtransitive-1} and $A_2$ satisfies \cref{eq:PropPtransitive-2}. Since $A$ satisfies property $\prop$, so does $A[[5]]$. Since $A[\{1,2\},\{3,4\}]$ has rank one, either $A[\{1,2,5\},\{3,4\}]$ or $A[\{1,2\},\{3,4,5\}]$ must have rank one. Similarly, $A[\{3,4,5\},\{1,2\}]$ or $A[\{3,4\},\{1,2,5\}]$ must have rank one as $A[\{3,4\},\{1,2\}]$ has rank one. This leads us to the following four subcases:
    \begin{enumerate}
        \item $A[\{1,2\},\{3,4,5\}]$ and $A[\{3,4,5\},\{1,2\}]$ have rank one. This implies $A[[5]]$ has $\{1,2\}$ as a cut. Hence, $\pp{1,2}{3,5}$ satisfies $\calP$.
        \item\label{item:cut-detec-lem1-case3b} $A[\{1,2,5\},\{3,4\}]$ and $A[\{3,4,5\},\{1,2\}]$ have rank one. Since $A$ has  non-zero off-diagonal entries, there exists a matrix diagonally similar to $A$ that has all ones in the first row except the diagonal entry. Hence,  without loss of generality, we can assume that $A$ has all ones in the first row except the diagonal entry. Let us assume that $A[2,3]=a,$ and $A[(3,4,5),1]=(b,c,d)^T$. Now, we show that $A[[5]]$ looks like the following matrix using known rank one submatrices where $\alpha$ is some non-zero scalar. This in turn implies that $\{1,2\}$ is a cut of $A[[5]]$ implying $\pp{1,2}{3,5}$ satisfies property $\calP$.
        \[A[[5]]= \begin{bmatrix}
            * & 1 & 1& 1 & 1\\
            a/\alpha & * & a & a & a\\
            b & \alpha b & * & * & *\\
            c & \alpha c & * & * & \alpha c\\
            d & \alpha d & \alpha d & \alpha d & *
        \end{bmatrix}\]
        Since $A[\{3,4,5\},\{1,2\}]$ has rank one, there exists a non-zero  scalar $\alpha$ such that $A[(3,4,5),2]=\alpha(b,c,d)^T$. Since $A_2$ satisfies \cref{eq:PropPtransitive-2}, $A[\{1,5\},\{2,4\}]$ and $A[\{1,4\},\{2,5\}]$ both have rank one. Hence, $A[5,4]=\alpha d$ and $A[4,5]=\alpha c$.  Since $A[\{1,2,5\},\{3,4\}]$ has rank one, $A[2,4]=a, A[5,3]=\alpha d$. Since $A_2$ satisfies \cref{eq:PropPtransitive-2}, $A[\{2,5\},\{1,4\}]$ has rank one. This implies $A[2,1]=\frac{A[5,1]A[2,4]}{A[5,4]}= a/\alpha$. Finally, since $A[\{2,4\},\{1,5\}]$ has rank one, $A[2,5]=\frac{A[2,1]A[4,5]}{A[4,1]}=\frac{a/\alpha\cdot \alpha c}{c}=a.$ This completes the argument why $A[[5]]$ appears as above.
        \item $A[\{1,2\},\{3,4,5\}]$ and $A[\{3,4\},\{1,2,5\}]$ have rank one. This reduces to case \hyperref[item:cut-detec-lem1-case3b]{3(b)} if we substitute $A$ with its transpose.
        \item $A[\{1,2,5\},\{3,4\}]$ and $A[\{3,4\},\{1,2,5\}]$ have rank one. This implies $A[\{1,3,4,5\}]$ has $\{1,5\}$ as a cut. Also, it is given that $A_2=A[\{1,2,4,5\}]$ has $\{1,5\}$ as a cut. This implies $A[[5]]$ has $\{1,5\}$ as a cut. Similarly, since $A[\{2,3,4,5\}]$ and $A_2$ have $\{2,5\}$ as a cut, $A[[5]]$ also has $\{2,5\}$ as a cut. Let $C=\ct(A,\{2,5\})$. From \cref{thm:cutmap}, $\{1,5\}\Delta \{2,5\}=\{1,2\}$ is a cut of $C\PME A$. This implies $C[\{1,2,3,5\}]\PME A_3$ and has $\{1,2\}$ as a cut. Hence, $\pp{1,2}{3,5}$ also satisfies property $\calP$. 
    \end{enumerate}
\end{enumerate}

\subsubsection{Proof of \cref{lem:CutChecking}}\label{subsec:PlausibleIsCut}
One direction is trivial. If $S$ is some cut of a matrix $C\PME A$, then $S$ is also a plausible set. Now, we show the other direction. Suppose $S$ is some plausible set. We call a plausible set $S$ \emph{minimal} if there does not exist another plausible set $S'$ which is a proper subset of $S$. We prove \cref{lem:CutChecking} in three steps. First, we show it for plausible sets of size two, then we show it for minimal plausible sets and finally for any plausible set.

\paragraph{$S$ has size two.}
Suppose $S$ is a plausible set for $A$ of size $2$. Define a binary relation $\sim$ on $[n] \setminus S$ as follows:
\[ e \sim f \iff e=f \text{ or } A[S \cup \{e,f\}] \text{ has cut } S\text{.}\]
We see that $\sim$ is an equivalence relation on $[n] \setminus S$:
\begin{itemize}
    \item $e \sim e$ as $e = e$.
    \item $e \sim f \iff f \sim e$. 
    \item If $S$ is a cut of both $A[S \cup \{e,f\}]$ and $A[S \cup \{f,g\}]$, then by \cref{obs:cutTransitive}, $S$ must also be a cut of $A[S \cup \{e,g\}]$, thus $e \sim f\;\wedge\;f \sim g \implies e \sim g$.
\end{itemize}
For all $e \in [n] \setminus S$, let $T_e$ be the equivalence class of $e$, and $\comp{T}_e = ([n] \setminus S) \setminus T_e$.

Before showing the following claim, we discuss how it implies the existence of a matrix $C\PME A$ with $S=\{s_1,s_2\}$ as a cut. The claim states that either $A$ has cut $S$ or for all $e\in [n]-S$, $\{s_1\}\cup T_e$ and $\{s_2\} \cup T_e$ are cuts. In the former setting, $C=A$. In the latter setting, $C=\ct(A,\{s_1\}\cup T_e)$. From \cref{thm:cutmap}, $(\{s_1\}\cup T_e)\Delta (\{s_2\}\cup T_e)=S$ is a cut of $C$.

\begin{claim}\label[claim]{cl:PSsizetwo}
    Given an $n \times n$ matrix $A$ with non-zero off-diagonal entries, suppose $S = \{s_1, s_2\}$ is a plausible set for $A$. Then either $S$ is a cut of $A$ or for all $e \in [n] \setminus S$, both $\{s_1\} \cup T_e$ and $\{s_2\} \cup T_e$ are cuts of $A$.
\end{claim}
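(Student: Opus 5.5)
Assume $S=\{s_1,s_2\}$ is not a cut of $A$. The plan is to show, for a fixed $e\in[n]\setminus S$, that both $\{s_1\}\cup T_e$ and $\{s_2\}\cup T_e$ are cuts of $A$. For every $f\in\comp{T}_e$ we have $e\not\sim f$. So $A[S\cup\{e,f\}]$ does not have the cut $S$. Plausibility of $S$, however, says that $\pp{s_1,s_2}{e,f}$ satisfies property $\calP$. By \cref{obs:FourMatP}, $A[\{s_1,s_2,e,f\}]$ then has both $\{s_1,e\}$ and $\{s_1,f\}$ as cuts; equivalently, it has the cuts $\{s_1,e\}$ and $\{s_2,e\}$. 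Thus every $4$-subset of the form $\{s_1,s_2,e',f'\}$ with $e'\in T_e$ and $f'\in\comp{T}_e$ carries the cuts $\{s_1,e'\}$ and $\{s_2,e'\}$. For $e',e''\in T_e$, the matrix $A[\{s_1,s_2,e',e''\}]$ carries the cut $S$.

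Before going further, I would check that $\comp{T}_e\neq\emptyset$. If $T_e=[n]\setminus S$, then every $A[S\cup\{e,f\}]$ has the cut $S$. Applying \cref{obs:cutTransitive} repeatedly (rank-one conditions on $2\times2$ blocks glue, since entries are nonzero) then shows that $S$ is a cut of $A$, which is a contradiction.

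Next I would verify that $\rank A[\{s_1\}\cup T_e,\ \{s_2\}\cup\comp{T}_e]=1$, and likewise for the transposed block. Because the off-diagonal entries are nonzero, a block has rank one if and only if every $2\times2$ minor inside it vanishes. Given the nonzero entries, it in fact suffices to check minors that share a fixed row $r_0$ and a fixed column $c_0$. I will take $r_0=s_1$ and $c_0=f_0$ for a fixed $f_0\in\comp{T}_e$. The required minors then come in the following types.
\begin{enumerate}
\item Rows $\{s_1,e'\}$ with $e'\in T_e$, and columns $\{s_2,f\}$ or $\{f,f'\}$ with $f,f'\in\comp{T}_e$. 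The minors with columns $\{s_2,f\}$ vanish because $\{s_1,e'\}$ is a cut of $A[\{s_1,s_2,e',f\}]$.
\item Rows $\{e',e''\}\subseteq T_e$. These minors vanish by a similar argument using the cut $S$ of $A[\{s_1,s_2,e',e''\}]$ and the cuts $\{s_1,e'\}$.
\end{enumerate}
The minors with rows $\{s_1,e'\}$ and columns $\{f,f'\}$, and more generally any minor avoiding $S$ in its columns, do not live inside a single $4\times4$ submatrix $A[\{s_1,s_2,\cdot,\cdot\}]$. Handling them is where I expect the main obstacle. I would chain ratios through $s_2$. The cut $\{s_1,e'\}$ in $A[\{s_1,s_2,e',f\}]$ gives $A[e',f]/A[s_1,f]=A[e',s_2]/A[s_1,s_2]$. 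Since the right-hand side does not depend on $f$, this ratio is the same for all $f\in\comp{T}_e$, so the row of $e'$ is proportional to the row of $s_1$ on the whole column set $\{s_2\}\cup\comp{T}_e$. Carrying out the same ratio argument for each $e'\in T_e$ shows that all rows of $\{s_1\}\cup T_e$ are proportional to row $s_1$ on these columns, which gives rank one. The transposed block is handled symmetrically using the column form of the same cuts.

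Finally, swapping the roles of $s_1$ and $s_2$ yields the cut $\{s_2\}\cup T_e$. The size conditions hold: $|\{s_i\}\cup T_e|\ge 2$ because it contains $s_i$ and $e$, and its complement $\{s_{3-i}\}\cup\comp{T}_e$ has size at least $2$ because $\comp{T}_e\neq\emptyset$. This completes the planned proof of \cref{cl:PSsizetwo}.
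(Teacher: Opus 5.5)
Your proposal is correct, but it takes a different route from the paper.

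\textbf{The paper's route.} The paper proves the claim by induction on $n$. When $|\comp{T}_e|\ge 2$, it deletes an index $f\in\comp{T}_e$ and applies the induction hypothesis to $A[[n]-f]$. It rules out the alternative that $S$ is a cut there by using a second element $g\in\comp{T}_e$. It then glues the cuts obtained in $A[[n]-f]$ and $A[[n]-g]$. The case $|\comp{T}_e|=1$ is handled separately: it passes to the singleton class $T_f=\{f\}$, whose complement class has size $n-3\ge 2$, and takes complements of cuts.

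\textbf{Your route.} You argue directly, with no induction. For every $e'\in T_e$ and $f\in\comp{T}_e$ you have $e'\not\sim f$, so \cref{obs:FourMatP} gives the cut $\{s_1,e'\}$ of $A[\{s_1,s_2,e',f\}]$. That is,
\[
\rank A[\{s_1,e'\},\{s_2,f\}]=\rank A[\{s_2,f\},\{s_1,e'\}]=1 .
\]
All these $2\times 2$ conditions pivot on the fixed nonzero entries $A[s_1,s_2]$ and $A[s_2,s_1]$. Hence row $e'$ equals $A[e',s_2]/A[s_1,s_2]$ times row $s_1$ on the columns $\{s_2\}\cup\comp{T}_e$. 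Likewise, column $e'$ equals $A[s_2,e']/A[s_2,s_1]$ times column $s_1$ on the rows $\{s_2\}\cup\comp{T}_e$. Together these are exactly the global rank-one conditions.

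\textbf{Comparison.} Your argument is shorter and needs no case split on $|\comp{T}_e|$. It also shows that only the ``cross'' quadruples $\{s_1,s_2,e',f\}$ matter; the cuts $S$ of $A[\{s_1,s_2,e',e''\}]$ are never used. The paper's inductive version instead stays at the level of cuts of principal submatrices, in the same style as the surrounding arguments.

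\textbf{Clean-up.} Your enumeration of minor ``types'' is superfluous once you fix $r_0=s_1$, since rows $\{e',e''\}$ then never occur. The justification you give for type 2 is also not the relevant fact: the cut $S$ of $A[\{s_1,s_2,e',e''\}]$ constrains the block with rows $\{s_1,s_2\}$, not rows $\{e',e''\}$. Delete that list and let the ratio argument carry the proof. What you flag as the main obstacle is in fact fully dispatched by that argument.
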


\begin{proof}
We prove by induction on $n$. Assume without loss of generality $S = \{1,2\}$. If $n=4$, we have two cases:
\begin{enumerate}
    \item If $3 \sim 4$, then $\{1,2\}$ is a cut of $A$.
    \item If $3 \not\sim 4$, then by \cref{obs:FourMatP}, $\{1,3\}$,$\{2,3\}$, $\{1,4\}$ and $\{2,4\}$ are all cuts of $A$.
\end{enumerate}
Now we handle the induction case for $n\geq5$.

\begin{claim}\label[claim]{cl:e-with-comp-geq-2} If for some $e \in [n] \setminus \{1,2\}$, $\card{\comp{T}_e} \geq 2$, then both $\{1\} \cup T_e$ and $\{2\} \cup T_e$ are cuts of $A$.
\end{claim}

\begin{proof}
For each $f \in \comp{T}_e$, note that $\{1,2\}$ remains a plausible set for $A[[n] - f]$ and since $f\in \comp{T}_e$, $T_e$ remains the same for $A[[n] - f]$ as well. By induction, one of the two is true:
\begin{enumerate}
    \item\label{item:e-with-comp-geq-2-item1} $\{1,2\}$ is a cut of $A[[n] - f]$.
    \item Both $\{1\} \cup T_e$ and $\{2\} \cup T_e$ are cuts of $A[[n] - f]$.
\end{enumerate}

Let $\{f,g\} \subseteq \comp{T}_e$. We now show that \cref{item:e-with-comp-geq-2-item1} cannot be true for any $h \in \comp{T}_e$; suppose $\{1,2\}$ were a cut of $A[[n]-h]$. Without loss of generality, let $h\neq g$.  Then $\{1,2\}$ would be a cut in $A[\{1,2,e,g\}]$, contradicting the fact that $g \in \comp{T}_e$. Thus, both $\{1\} \cup T_e$ and $\{2\} \cup T_e$ are cuts of $A[[n] - f]$ and $A[[n] - g]$, making both $\{1\} \cup T_e$ and $\{2\} \cup T_e$ cuts of $A$.

\end{proof}

If there exists $e \in [n] \setminus \{1,2\}$ such that $\card{\comp{T}_e} = 0$, then for all $f \in [n] \setminus \{1,2,e\}$, $A[\{1,2,e,f\}]$ has cut $\{1,2\}$. Thus, $\{1,2\}$ is a cut in $A$.
Now suppose $\card{\comp{T}_e} \geq 1$ for all $e \in [n] \setminus \{1,2\}$. Then for each $e \in [n] \setminus \{1,2\}$,
\begin{enumerate}
    \item If $\card{\comp{T}_e} \geq 2$, then by \cref{cl:e-with-comp-geq-2}, both $\{1\} \cup T_e$ and $\{2\} \cup T_e$ cuts of $A$.
    \item If $\card{\comp{T}_e} = 1$, let $\comp{T}_e = \{f\}$. Then $\comp{T}_f = T_e = [n] \setminus \{1,2,f\}$ and $\card{\comp{T}_f} = n -3 \geq 2$. Since $T_f = \{f\}$, by \cref{cl:e-with-comp-geq-2}, both $\{1,f\}$ and $\{2,f\}$ are cuts of $A$, which means both $\{1\} \cup T_e$ and $\{2\} \cup T_e$ are cuts of $A$.
\end{enumerate}

\end{proof}

\paragraph{S is minimal.} 
Now, we prove \cref{lem:CutChecking} for minimal plausible sets. by induction on $n$. Since any plausible set for a $4 \times 4$ matrix is of size $2$, the base case of $n=4$ is already proven. Suppose the lemma is true for minimal plausible sets for matrices of size $< n$. Let $A$ be a matrix with non-zero off-diagonal entries of size $n$ with a minimal plausible set $S$. Since we have already proved the lemma for $|S|=2$, without loss of generality, we can assume that $|S|,|\comp{S}|> 2$. In the following claim, we show that if $|\comp{S}|>2$, then for each $e\in \comp{S}$, $S$ is again a minimal plausible set for $A[[n]-e]$. By induction hypothesis, for each $e\in \comp{S}$, $S$ is a cut of some matrix $C_e\PME A[[n]-e]$. Since $S$ is a minimal plausible set for $A[[n]-e]$, from \cref{cl:minimalityPreservance}, it is also a minimal plausible set for $C_e$. By induction hypothesis, $S$ is a cut of $C_e$. Observe that minimal plausible sets must correspond to minimal cuts, implying $S$ is a minimal cut of $C_e$. Since $C_e\PME A[[n]-e]$ and $S$ has size $>2$, from  \cref{lem:common-cut}, $S$ is a cut of $A[[n]-e]$ as well. Since $|\comp{S}|>2$ and for each $e\in \comp{S}$, $A[[n]-e]$ has cut $S$, $S$ is also a cut of $A$. In fact, $S$ is a minimal cut of $A$.\\
\begin{claim}\label[claim]{cl:minimalityPreservance}
    Let $A$ be an $n\times n$ matrix with property $\prop$ and $S$ be a minimal plausible set of $A$ with $|\comp{S}|>2$. Let $e\in \comp{S}$. Then, $S$ is also a minimal plausible set of $A[[n]-e]$. 
\end{claim}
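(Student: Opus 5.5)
The claim has two parts: $S$ is a plausible set of $A[[n]-e]$, and no proper subset of $S$ is plausible there. The first part is immediate. Plausibility is a conjunction of local conditions, namely property $\calP$ for $\pp{i,j}{k,l}$ with $\{i,j\}\subset S$ and $\{k,l\}\subset \comp{S}$. Property $\calP$ depends only on the $4\times 4$ principal submatrix $A[\{i,j,k,l\}]$, which is unchanged when we delete $e$. Since $|\comp{S}|>2$, the set $\comp{S}-e$ still has size at least $2$, and $|S|\geq 2$. So $S$ satisfies the size constraints of \cref{def:plausibleSet} for $A[[n]-e]$, and every required pair is among those already verified for $A$.

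For minimality, I would argue by contradiction. Suppose $S'\subsetneq S$ is plausible for $A[[n]-e]$, so $|S'|\geq 2$. I would show that one of $S'$ or $S'+e$ is plausible for $A$; both are proper subsets of $S$ only in the first case, so I need to rule out $S'+e$ or handle it separately. The natural route uses \cref{lem:PropPtransitive} to decide where $e$ can be placed.

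Pairs of $A$ not involving $e$ are already covered by the plausibility of $S'$ in $A[[n]-e]$. For pairs of the form $\pp{a,b}{c,e}$ with $\{a,b\}\subset S'$ and $c\in \comp{S}\setminus\{e\}$, I would pick another $d\in\comp{S}-e$, which exists since $|\comp{S}|>2$. Then $\pp{a,b}{c,d}$ holds from $S'$, and $\pp{a,b}{d,e}$ holds from $S$, because $a,b\in S$ and $d,e\in\comp{S}$. Transitivity with common element $d$ then gives $\pp{a,b}{c,e}$.

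The remaining pairs have $c\in S\setminus S'$ and $d=e$. For these I would use $\pp{a,b}{c,f}$ with $f\in\comp{S}-e$, which holds by plausibility of $S'$ in $A[[n]-e]$ since $f\notin S'$. I would combine it with $\pp{a,b}{f,e}$, which holds from $S$, and apply \cref{lem:PropPtransitive} with common element $f$ to get $\pp{a,b}{c,e}$. This shows $S'$ is plausible for $A$ with $e$ placed in $\comp{S'}$. That contradicts the minimality of $S$, since $S'\subsetneq S$.

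The main obstacle is bookkeeping: every pair $\{c,d\}\subset [n]\setminus S'$ containing $e$ must be reached by one transitivity step from known pairs. This needs a third element of the complement distinct from $e$ and from $c$, and the hypothesis $|\comp{S}|>2$ supplies exactly such an element $f\in\comp{S}-e$. One should also check that the indices are distinct, as \cref{lem:PropPtransitive} requires, and that the lemma applies in the form $\pp{i,j}{k,l_1},\pp{i,j}{k,l_2}\Rightarrow\pp{i,j}{l_1,l_2}$, with the common element on the complement side.
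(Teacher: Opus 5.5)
Your proof is correct and follows the paper's argument. The only pairs that could fail for $S'$ in $A$ are $\pp{a,b}{c,e}$ with $c\in S\setminus S'$, and you recover these from \cref{lem:PropPtransitive} using an auxiliary $f\in\comp{S}-e$, exactly as the paper does. (Two simplifications: your transitivity detour for $c\in\comp{S}-e$ is unnecessary, since $\pp{a,b}{c,e}$ already holds there directly by plausibility of $S$; and the hedge about $S'+e$ can be dropped.)
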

\begin{proof}
Clearly, $S$ is a plausible set for $A[[n]-e]$. Suppose $S$ not minimal for $A[[n]-e]$ and there exist some $X\subset S$ which is a plausible set for $A[[n]-e]$. Since $X$ is not a plausible set for $A$ and is a subset of $S$, there must exist some $\{i,j\}\subseteq X, l\in S\setminus X$ such that $\pp{i,j}{l,e}$ does not satisfy property $\calP$. Let $k\in \comp{S}-e$. Since $S$ is a plausible set for $A$, $\{i,j\}\subseteq X\subset S$ and $k,e\in \comp{S}$, $\pp{i,j}{k,e}$ satisfies property $\calP$. Similarly, since $X$ is a plausible set for $A[[n]-e]$, and $\{k,l\}\subset \comp{X}-e$, $\pp{i,j}{k,l}$ satisfies property $\mathcal{P}$. Since $A$ satisfies property $\prop$, by \cref{lem:PropPtransitive}, $\pp{i,j}{l,e}$ must also satisfy property $\calP$ which is a contradiction. Hence, $S$ is a minimal plausible set for $A[[n]-e]$.
\end{proof}

\paragraph{General S.} First, we state a result from \cite[Lemma 3.4]{Chatterjee25} and then show \cref{cl:reductionStep} which we use later in the proof. 
\begin{lemma} \label[lemma]{lem:Blowerhalf}
Let $A$ be an $n\times n$ matrix over $\mathbb{F}$ with nonzero off-diagonal entries. Let $X\subseteq [n]$ be a cut in the matrix $A$ and $s\in X$, and suppose $Y\subseteq \comp{X}$ is a cut in $A[\comp{X}+s]$. Then, $Y$ is also a cut in the matrix $A$.
\end{lemma}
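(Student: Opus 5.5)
The plan is a direct rank computation. The key observation is that every column of $A$ indexed by $X$, restricted to rows in $\comp{X}$, is a scalar multiple of the single column indexed by $s$. The same holds for rows. So enlarging $\comp{X}+s$ back to $[n]$ adds no new directions to the off-diagonal blocks at $Y$.

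First, the size condition. Since $Y$ is a cut of $A[\comp{X}+s]$, we have $|Y|\geq 2$ and $|(\comp{X}+s)\setminus Y|\geq 2$. Because $(\comp{X}+s)\setminus Y\subseteq [n]\setminus Y$, we get $2\leq |Y|\leq n-2$.

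Next, the upper-right block. Note $[n]\setminus Y = ((\comp{X}+s)\setminus Y)\cup (X-s)$. Since $X$ is a cut of $A$ and $A$ is irreducible (its off-diagonal entries are nonzero), write $A[\comp{X},X]=u\cdot v^T$ with $u\in\F^{|\comp{X}|}$, $v\in\F^{|X|}$.
\begin{itemize}
\item Every entry of this block is nonzero, since these are off-diagonal entries of $A$. Hence all coordinates of $u$ and $v$ are nonzero; in particular $v_s\neq 0$.
\item Since $Y\subseteq\comp{X}$, for every $x\in X$ we get $A[Y,x]=v_x\, u_Y=(v_x/v_s)\,A[Y,s]$.
\item So each column of $A[Y,[n]\setminus Y]$ indexed by $x\in X-s$ is a multiple of the column indexed by $s$.
\item The column indexed by $s$ already belongs to $A[Y,(\comp{X}+s)\setminus Y]$, which has rank $1$ by hypothesis.
\end{itemize}
Therefore $\rank A[Y,[n]\setminus Y]=1$.

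The lower-left block is symmetric. Write $A[X,\comp{X}]=p\cdot q^T$. Every row $A[x,Y]$ with $x\in X$ equals $(p_x/p_s)\,A[s,Y]$. The row $A[s,Y]$ lies in $A[(\comp{X}+s)\setminus Y,Y]$, which has rank $1$. Hence $\rank A[[n]\setminus Y,Y]=1$, and $Y$ is a cut of $A$.

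I do not expect a real obstacle here. The only point needing care is that $v_s\neq 0$ (respectively $p_s\neq0$), which is where the nonzero off-diagonal hypothesis is used.
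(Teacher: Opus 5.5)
Your proof is correct. The decomposition $[n]\setminus Y=((\comp{X}+s)\setminus Y)\cup(X-s)$ is the key step. Together with the fact that the rank-one blocks $A[\comp{X},X]$ and $A[X,\comp{X}]$ have all entries nonzero, it makes every column $A[Y,x]$ (resp.\ row $A[x,Y]$) with $x\in X$ a multiple of $A[Y,s]$ (resp.\ $A[s,Y]$), which is exactly what is needed; the size bound $2\le|Y|\le n-2$ is also handled correctly.

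The paper itself gives no proof of this lemma; it imports it from \cite[Lemma~3.4]{Chatterjee25}. So there is no in-paper argument to compare with, and your direct rank computation serves as a complete, self-contained justification of the cited fact.
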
 
\begin{lemma}
\label{cl:reductionStep}
    Let $A$ be an $n\times n$ matrix with non-zero off diagonal entries and $S$ be a cut of $A$. Let $s\in S,t\in \comp{S}$ and $M$ and $N$ be two matrices such that $M\PME A[S+t]$ and $A[s+\comp{S}]\PME N$. Then, the matrix $A'$ defined as follows is PME to $A$.\[
    A'=\begin{blockarray}{ccc}
                    &  S             & \comp{S}\\
\begin{block}{c(cc)}
S                   & M[S]                                           & M[S, t]\cdot \frac{N[s,\comp{S}]}{N[s,t]}\\
&&\\
\comp{S}            & \frac{N[\comp{S}, s]}{N[t,s]}\cdot M[t, S] & N[\comp{S}]  \\
\end{block}
\end{blockarray}.
   \]
\end{lemma}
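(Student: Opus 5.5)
Since $S$ is a cut of $A$, the matrix $A' $ is, up to the obvious scalings, obtained by gluing: the plan is to show that $A'$ has $S$ as a cut and that $A'[S+t]$ is diagonally equivalent to something PME to $A[S+t]$, and $A'[\comp{S}+s]$ likewise with $A[\comp{S}+s]$. Then the Decomposition lemma (\cref{lem:cut-decomposition}) applied to $A$ and $A'$ with the common cut $S$ gives $A'\PME A$.

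\textbf{Step 1: the blocks are rank one.} $A'[S,\comp{S}] = M[S,t]\cdot \frac{N[s,\comp{S}]}{N[s,t]}$ is an outer product, and so is $A'[\comp{S},S]$. Hence $\rank A'[S,\comp{S}]\le 1$ and $\rank A'[\comp{S},S]\le 1$. Since $2\le |S|\le n-2$, $S$ is a cut of $A'$. We also need the off-diagonal entries of $A'$ to be nonzero, as the Decomposition lemma requires. $M$ and $N$ are PME to matrices with nonzero off-diagonal entries, so all their $2\times2$ principal minors force $M[i,j]M[j,i]\ne 0$. Hence all entries used, including the denominators $N[s,t]$ and $N[t,s]$, are nonzero, and every entry of the outer products is nonzero.

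\textbf{Step 2: compare the pieces.} For $A'[S+t]$: its $S$ block is $M[S]$. The column $t$ restricted to $S$ is $M[S,t]\cdot N[s,t]/N[s,t]=M[S,t]$, and the row $t$ restricted to $S$ is $\frac{N[t,s]}{N[t,s]}M[t,S]=M[t,S]$. The diagonal entry $A'[t,t]=N[t,t]=A[t,t]=M[t,t]$, by equality of $1\times1$ minors. So $A'[S+t]=M\PME A[S+t]$.

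For $A'[\comp{S}+s]$: its $\comp{S}$ block is $N[\comp{S}]$. The row $s$ restricted to $\comp{S}$ is $\frac{M[s,t]}{N[s,t]}N[s,\comp{S}]$, and the column $s$ restricted to $\comp{S}$ is $\frac{M[t,s]}{N[t,s]}N[\comp{S},s]$. The diagonal entry is $M[s,s]=N[s,s]$. Thus $A'[\comp{S}+s]$ is $N$ with row $s$ scaled by $\lambda=M[s,t]/N[s,t]$ and column $s$ scaled by $\mu=M[t,s]/N[t,s]$.

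Note $\lambda\mu=1$: both $M$ and $N$ are PME to the corresponding submatrices of $A$, so the $2\times2$ minor on $\{s,t\}$ gives $M[s,t]M[t,s]=A[s,t]A[t,s]=N[s,t]N[t,s]$. Scaling row $s$ by $\lambda$ and column $s$ by $\lambda^{-1}$ is the diagonal similarity $D^{-1}ND$ with $D=\diag$ having $\lambda^{-1}$ at position $s$ and $1$ elsewhere. Hence $A'[\comp{S}+s]\DS N\PME A[\comp{S}+s]$, and by \cref{obs:DE-to-PME} they are PME.

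\textbf{Step 3: conclude.} $S$ is a cut of both $A$ and $A'$, both have nonzero off-diagonal entries, and Step 2 verifies the two hypotheses of \cref{lem:cut-decomposition}. Therefore $A\PME A'$.

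The only delicate point is confirming that the scalings line up, i.e.\ $\lambda\mu=1$, so that the $s$-row/column modification is a genuine diagonal similarity. Everything else is bookkeeping.
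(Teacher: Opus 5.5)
Your proof is correct and follows the paper's argument. You show $A'[S+t]=M$, show that $A'[\comp{S}+s]$ is diagonally similar to $N$ via a scaling at $s$ (justified by $M[s,t]M[t,s]=N[s,t]N[t,s]$), and then invoke the Decomposition lemma (\cref{lem:cut-decomposition}) for the common cut $S$. You are slightly more careful than the paper: you explicitly check that $S$ is a cut of $A'$, that $A'$ has nonzero off-diagonal entries, and that $A'[t,t]=M[t,t]$, all of which the Decomposition lemma implicitly needs.
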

\begin{proof}
    Note that $A'[S+t]=M$, therefore $A[S+t]\PME A'[S+t]$. Now, we show $A'[s+\comp{S}]$ is diagonally similar to $N$. Precisely, we claim that $A'[s+\comp{S}]=DND^{-1}$ where \[ D[e,e]=\begin{cases}\frac{M[s,t]}{N[s,t]} & \text{ if } e=s\\
    1 & \text{ if } e\in\comp{S} \end{cases}.\] From the construction, it is easy to verify that $DND^{-1}[p,q]=A'[p,q]$ for each $p\in s+\comp{S}, q \in \comp{S}$. All that is left is to verify the the column corresponding to index $s$. From the principal minor equivalence of $A[s,t], M[s,t]$ and $N[s,t]$, we have $A[s] = N[s] = M[s]$ and $A[t] = N[t] = M[t]$, from which we infer that \[A[s,t]A[t,s]=M[s,t]M[t,s]=N[s,t]N[t,s].\] Observe that since $A[s,t]A[t,s]$ is nonzero, $N[s,t]$, $N[t,s]$, $M[s,t]$, and $M[t,s]$ must all be as well. Now, for $e\in \comp{S}$
    \[A'[e,s]=\frac{N[e,s]M[t,s]}{N[t,s]}=\frac{N[e,s]N[s,t]}{M[s,t]}=N[e,s]\frac{D[e,e]}{D[s,s]}.\]
    Since $S$ is a common cut of $A$ and $A'$, $A[S+t]\PME A'[S+t]$ and $A[s+\comp{S}]\PME A'[s+\comp{S}]$, by \cref{lem:cut-decomposition}, $A\PME A'$.
   
\end{proof}

Now, we come back to the proof for a general plausible set $S$. We show this by induction on $n$. The base case of $n=4$ is trivial as $S$ can only be a plausible set of size 2. Suppose the result is true for any matrices of size $<n$ for $n\geq 5$. Let $A$
be a matrix of size $n$ with property $\prop$ and $S$ be a plausible set for $A$. If $S$ is a minimal plausible set, then from above discussion, $S$ is indeed a cut of some matrix $C\PME A$. Suppose $S$ is not minimal and there exist a minimal plausible set $X$ which is a subset of $S$. This implies there exists a matrix $C\PME A$ for which $X$ is a cut. Let $s\in X$. Note that $\comp{S}$ is a plausible set for $A[s+\comp{X}]$ and hence even for $C[s+\comp{X}]$. By induction hypothesis, there exists some $N\PME C[s+\comp{X}]$ with $\comp{S}$ as a cut. Let $t\in S \setminus X$ and $M=C[X+t]$.  Let $C'$ be the matrix from \cref{cl:reductionStep} which is PME to $C$ and hence $A$. From the construction of $C'$, $X$ is a cut of $C'$ and $C'[s+\comp{X}]$ is diagonally similar to $N$. Since $N$ has cut $\comp{S}$, $C'[s+\comp{X}]$ also has cut $\comp{S}$ which is subset of $\comp{X}$. Hence, from \cref{lem:Blowerhalf}, $\comp{S}$ (and hence $S$) is a cut of $C'$ which PME to $A$. This completes the proof.

\section{PMAP for matrices with Property $\prop$: Proof of~\cref{thm:main-reconstruction-PM-with-Q-property}} \label{sec: PMAP for matrices with prop R}

In this section, we prove~\cref{thm:main-reconstruction-PM-with-Q-property}. More specifically, we design an algorithm such that given access to the principal minors of an $n\times n$ matrix $A$ with property $\prop$, in time $\poly(n)$, it outputs a matrix $B\in\F^{n\times n}$ such that $A\PME B$. We start by formally defining the principal minor oracle. 

\begin{definition}[Principal minor oracle]
\label{def:prinicpal-minor-oracle}
Let $A$ be an $n\times n$ matrix over a field $\F$. Then, $\PMoracle{A}$ denotes the principal minor oracle for the matrix $A$, that is, for any nonempty subset $S\subseteq [n]$ given as input to $\PMoracle{A}$, it outputs $\det(A[S])$.
\end{definition}
The following lemma describes that given access to the oracle $\PMoracle{A}$, some tasks that can be performed very efficiently.

\begin{lemma}
\label{lem:computation-using-PM-oracle}
Let $A$ be an $n\times n$ matrix over $\F$. Then, the following holds.
\begin{enumerate}
\item For any $i\in[n]$, $A[i,i]$ can be computed using one query to the oracle $\PMoracle{A}$.
\item For any $\{i, j\}\subseteq [n]$, $A[i,j]A[j,i]$ can be computed using three queries to the oracle $\PMoracle{A}$.
\item For any $\{i, j, k\}\subseteq [n]$, $A[i,j]A[j, k]A[k, i]+A[i, k]A[k, j]A[j, i]$ can be computed using seven queries to $\PMoracle{A}$.
\end{enumerate}
\end{lemma}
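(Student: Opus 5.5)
The plan is to expand the relevant determinants directly. We use that the oracle returns $\det(A[S])$ for any nonempty $S$. Each item then expresses the requested quantity as a fixed linear (or polynomial) combination of principal minors.

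For the first item, query $S=\{i\}$; the oracle returns $\det(A[\{i\}])=A[i,i]$. For the second item, first obtain $A[i,i]$ and $A[j,j]$ with two single-element queries. Then query $S=\{i,j\}$. Since $\det(A[\{i,j\}])=A[i,i]A[j,j]-A[i,j]A[j,i]$, we get
\[
A[i,j]A[j,i]=A[i,i]A[j,j]-\det(A[\{i,j\}]).
\]
This uses three queries in total.

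For the third item, expand the $3\times 3$ determinant by the Leibniz formula over the six permutations of $\{i,j,k\}$. The identity contributes $A[i,i]A[j,j]A[k,k]$. The three transpositions contribute $-A[i,i]A[j,k]A[k,j]$, $-A[j,j]A[i,k]A[k,i]$ and $-A[k,k]A[i,j]A[j,i]$. The two $3$-cycles contribute exactly $A[i,j]A[j,k]A[k,i]+A[i,k]A[k,j]A[j,i]$. Hence
\[
A[i,j]A[j,k]A[k,i]+A[i,k]A[k,j]A[j,i]=\det(A[\{i,j,k\}])-A[i,i]A[j,j]A[k,k]+\sum A[x,x]A[y,z]A[z,y],
\]
where the sum ranges over $\{x,y,z\}=\{i,j,k\}$ with $x$ the distinguished index.

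Every term on the right is already determined by the oracle. The three diagonal entries cost three singleton queries. Each product $A[y,z]A[z,y]$ is $A[y,y]A[z,z]-\det(A[\{y,z\}])$, which costs one additional pair query per pair once the diagonal is known, so three pair queries. The full $3\times 3$ minor costs one more query. That gives $3+3+1=7$ queries in total.

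There is no real obstacle here. The only care needed is to reuse the singleton queries across the pair computations, so the count comes to seven rather than more, and to keep the signs of the Leibniz expansion correct.
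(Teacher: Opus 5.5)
Your proposal is correct and is essentially the paper's proof: the same singleton and pair queries for the first two items, and the same expansion of the $3\times 3$ principal minor for the third. The paper writes the third identity as $\det(A[\{i,j,k\}])-\sum A[x,x]\det(A[\{y,z\}])+2A[i,i]A[j,j]A[k,k]$, which is your formula after substituting $A[y,z]A[z,y]=A[y,y]A[z,z]-\det(A[\{y,z\}])$, and it uses the same $3+3+1=7$ queries.
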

\begin{proof}
For all $i\in[n]$, we obtain $A[i,i]$ by querying $\PMoracle{A}$ on input $\{i\}$. For every $\{i,j\}\subseteq [n]$, $$A[i,j]A[j,i]=A[i,i]A[j,j]-\det(A[\{i,j\}]).$$ Therefore, $A[i,j]A[j,i]$ can be computed by querying $\PMoracle{A}$ on inputs $\{i\}$, $\{j\}$ and $\{i,j\}$. For every $\{i,j,k\}\subseteq [n]$ 
\begin{align*}
A[i,j]A[j,k]A[k,i]+A[i,k]A[k,j]A[j,i] &= \det(A[\{i,j,k\}])-\det(A[\{i,j\}])A[k,k]  \\
&\ \ \ -\det(A[\{j,k\}])A[i,i]-\det(A[\{i,k\}])A[j,j] \\
&\ \ \ +2A[i,i]A[j,j]A[k,k]
\end{align*}
Hence, we can compute $A[i,j]A[j,k]A[k,i]+A[i,k]A[k,j]A[j,i]$ using seven queries to $\PMoracle{A}$.
\end{proof}

We reuse the notation defined at the beginning of ~\cref{sec: cut discovery}. This section is organized as follows: In~\cref{subsec:computing-upto-four-order-matrices}, given access to principal minors of a $4\times 4$ matrix $A$ with nonzero off-diagonal entries, we describe how to compute the family $\fourPMEfamily{A}$. It will be used in other algorithms in this section. We reduce our learning problem for matrices with property $\prop$ to learning problem for matrices with property $\prop$ and have no cut. The corresponding algorithm is presented in~\cref{subsec:reconstruction-with-no-cut-&-Q}. Finally, in~\cref{subsec:reconstruction-with-Q}, we describe our complete learning algorithm for matrices with property $\prop$.

\subsection{Reconstruction of  $4\times 4$ matrices}
\label{subsec:computing-upto-four-order-matrices}
We begin with the following observation that provides a procedure to compute principal minor equivalent $2\times 2$ matrices.
\begin{observation}
\label{prop:reconstruction-size-two-matrix}
Let $A$ be a $2\times 2$ matrix over $\F$ with nonzero off-diagonal entries. Then, for any $2\times 2$ matrix $B$, $A\PME B$ if and only if $A[i,i]=B[i,i]$ for $i=1,2$ and $$B[1,2]B[2,1]=A[1,1]A[2,2]-\det(A).$$ 
\end{observation}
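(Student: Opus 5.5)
The plan is a direct computation with the $2\times 2$ determinant, which is essentially all there is. For any $2\times 2$ matrix $M$, its principal minors are $M[1,1]$, $M[2,2]$ and $\det(M)=M[1,1]M[2,2]-M[1,2]M[2,1]$. So $A\PME B$ means exactly $A[i,i]=B[i,i]$ for $i=1,2$ together with $\det(A)=\det(B)$.

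For the forward direction, assume $A\PME B$. The order-one minors give the diagonal equalities. Substituting these into $\det(B)=\det(A)$ gives
\[
B[1,2]B[2,1]=B[1,1]B[2,2]-\det(B)=A[1,1]A[2,2]-\det(A).
\]

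For the converse, assume the diagonal entries agree and $B[1,2]B[2,1]=A[1,1]A[2,2]-\det(A)$. Then
\[
\det(B)=B[1,1]B[2,2]-B[1,2]B[2,1]=A[1,1]A[2,2]-\bigl(A[1,1]A[2,2]-\det(A)\bigr)=\det(A).
\]
So all three principal minors coincide.

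The hypothesis that $A$ has nonzero off-diagonal entries is not needed for the equivalence itself. Its role is only to guarantee that the prescribed product $A[1,1]A[2,2]-\det(A)=A[1,2]A[2,1]$ is nonzero. That makes $B[1,2]$ and $B[2,1]$ nonzero as well, so $B[2,1]$ is uniquely determined once $B[1,2]$ is fixed, for example to $1$. This is how the observation will be used in the reconstruction algorithm, via the second item of \cref{lem:computation-using-PM-oracle}. There is no real obstacle in this argument; the only point to get right is that $\det(M)$ depends on the off-diagonal entries only through their product $M[1,2]M[2,1]$.
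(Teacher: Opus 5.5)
Your argument is correct and is the direct $2\times 2$ determinant computation the paper has in mind; the paper states this as an observation and gives no proof. Your side remark is also right: the nonzero off-diagonal hypothesis is not needed for the equivalence. It only makes the prescribed product nonzero, so that normalizing $B[1,2]=1$ determines $B[2,1]$ uniquely, as in the reconstruction algorithm.
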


We now focus on learning of principal minor equivalent $3\times 3$ matrices. Suppose that $A\in\F^{3\times 3}$ with nonzero off-diagonal entries. By~\cref{obs:uniqueness-of-DS} and~\cref{obs:DE-to-PME}, there exists a $3\times 3$ matrix $B$ such that $B\PME A$ and $B[1,2]=B[1,3]=1$. The next lemma provides a way of computing such principal minor equivalent matrices. 
\begin{lemma}
\label{lem:reconstruction-size-three-matrix}
Let $A$ be a $3\times 3$ matrix over $\F$ with nonzero off-diagonal entries. Then, for any $3\times 3$ matrix $B$ with $B[1,2]=B[1,3]=1$, $A\PME B$ if and only if the following holds: 
\begin{enumerate}
\item $B[i,i]=A[i,i]$ for $i=1,2,3$, 
\item $B[j, 1]=A[1,j]A[j,1]$ for $j=2,3$, 
\item $B[2,3]$ is a root of the quadratic polynomial $az^2-bz+c$ where
 $$a=A[1,3]A[3,1],\ \ \ b=A[1,2]A[2,3]A[3,1]+A[1,3]A[3,2]A[2,1], \text{ and }$$  $$c=A[1,2]A[2,1]A[2,3]A[3,2],$$ 
\item $B[3,2]=\frac{A[2,3]A[3,2]}{B[2,3]}.$
\end{enumerate}
\end{lemma}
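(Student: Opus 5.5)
We prove the two directions separately, working directly with the explicit formulas for principal minors of orders one, two and three of a $3\times 3$ matrix.

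\emph{Forward direction.} Assume $A\PME B$ with $B[1,2]=B[1,3]=1$. Equality of the $1\times 1$ minors gives condition 1. For each pair $\{i,j\}$, equality of the $2\times 2$ minors together with condition 1 gives
\[
B[i,j]B[j,i]=A[i,j]A[j,i],
\]
as in \cref{prop:reconstruction-size-two-matrix}. Taking $(i,j)=(1,j)$ and using $B[1,j]=1$ yields condition 2. Taking $\{2,3\}$ gives $B[2,3]B[3,2]=A[2,3]A[3,2]$, which is nonzero, so $B[2,3]\neq 0$ and condition 4 follows.

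For condition 3, we use \cref{lem:computation-using-PM-oracle}(3): equality of all minors of order at most three gives equality of the cyclic sums,
\[
B[1,2]B[2,3]B[3,1]+B[1,3]B[3,2]B[2,1]=b.
\]
We substitute $B[1,2]=B[1,3]=1$, $B[3,1]=a$, $B[2,1]=A[1,2]A[2,1]$, and $B[3,2]=A[2,3]A[3,2]/B[2,3]$. This gives
\[
a\,B[2,3]+\frac{c}{B[2,3]}=b.
\]
Multiplying through by $B[2,3]$ gives $aB[2,3]^2-bB[2,3]+c=0$, which is condition 3.

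\emph{Reverse direction.} Assume conditions 1 to 4 hold. Since $c\neq 0$, any root $z$ of $az^2-bz+c$ is nonzero, so $B[3,2]$ in condition 4 is well defined. Now we check all seven principal minors.
\begin{enumerate}
\item The $1\times 1$ minors agree by condition 1.
\item The $2\times 2$ minors on $\{1,2\}$ and $\{1,3\}$ agree by condition 2, since $B[1,j]B[j,1]=A[1,j]A[j,1]$.
\item The $2\times 2$ minor on $\{2,3\}$ agrees by condition 4.
\item The $3\times 3$ determinant expands as the diagonal product, minus the terms $B[k,k]B[i,j]B[j,i]$, plus the cyclic sum. The first two parts agree with those of $A$ by the previous items. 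The cyclic sum equals $az+c/z$, where $z=B[2,3]$, and this equals $b$ because $z$ is a root of the quadratic.
\end{enumerate}
Hence $\det B=\det A$, and so $A\PME B$.

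\emph{Main obstacle.} There is no serious obstacle; the argument is a direct computation. The points needing care are the nonvanishing facts: $a\neq 0$, so the polynomial is genuinely quadratic, and $c\neq 0$, so the division by $B[2,3]$ is legitimate in both directions. Both follow from the hypothesis that $A$ has nonzero off-diagonal entries.
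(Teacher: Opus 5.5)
Your proof is correct and follows the paper's argument. Equality of the $1\times 1$ and $2\times 2$ minors fixes the diagonal and the products $B[i,j]B[j,i]$, and after substituting $B[1,2]=B[1,3]=1$ the $3\times 3$ minor condition reduces to $aB[2,3]+c/B[2,3]=b$, with $c\neq 0$ guaranteeing $B[2,3]\neq 0$. Getting the cyclic-sum identity from \cref{lem:computation-using-PM-oracle}(3), and checking $\det B=\det A$ explicitly in the converse, are only cleaner packagings of the paper's direct expansion of $\det A-\det B$; the latter also spells out a step the paper leaves implicit.
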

\begin{proof}
We first assume that $B\PME A$. Then, 
\begin{align}
A[i,i] &= B[i,i] \ \text{ for } i=1,2,3\label{eqn:reconstruction-size-three-matrix-one}\\
\det(A[\{i,j\}]) &= \det(B[\{i,j\}]) \text{ for } \{i, j\}\subseteq \{1,2,3\} \label{eqn:reconstruction-size-three-matrix-two}\\
\det(A) &=\det(B)\label{eqn:reconstruction-size-three-matrix-three}
\end{align}
From~\cref{eqn:reconstruction-size-three-matrix-one} and \cref{eqn:reconstruction-size-three-matrix-two}, we obtain that for all $\{i, j\}\subset\{1,2,3\}$, $$A[i,j]A[j,i]=B[i,j]B[j,i].$$
Furthermore, since $B[1,2]=B[1,3] = 1$, $B[2,1] = A[1,2]A[2,1]$ and $B[3,1]=A[1,3]A[3,1]$. This combined with~\cref{eqn:reconstruction-size-three-matrix-three}, we have 
\begin{align*}
0 &= A[1,2]A[2,3]A[3,1]+A[1,3]A[3,2]A[3,1]-B[1,2]B[2,3]B[3,1]-B[1,3]B[3,2]B[2,1]\\
&=A[1,2]A[2,3]A[3,1]+A[1,3]A[3,2]A[3,1]-B[2,3]A[1,3]A[3,1]-\frac{A[2,3]A[3,2]A[1,2]A[2,1]}{B[2,3]}
\end{align*}
This implies that $B[2,3]$ is a root of the equation $az^2-bz+c$, and by~\cref{eqn:reconstruction-size-three-matrix-two}, $B[3,2]=\frac{A[2,3]A[3,2]}{B[2,3]}$.

For the converse direction, $B$ is a matrix with $B[1,2]=B[1,3]=1$. Moreover, $$B[i,i]=A[i,i] \ \forall\,i\in\{1,2,3\} \text{ and } B[i,j]B[j,i]=A[i,j]A[j,i] \text{ for }\{i,j\}\subseteq \{1,2,3\}.$$ Therefore, $\kPME{A}{B}{2}$. Since off-diagonal entries of $A$ are nonzero, $c\neq 0$. Furthermore, $B[2,3]$ is a root of $az^2-bz+c$. Hence, $B[2,3]$ is nonzero. Additionally, $$aB[2,3]^2-bB[2,3]+c = 0.$$ Therefore,
\begin{align*}
b&=\left(A[1,3]A[3,1]\right)B[2,3]+\frac{A[1,2]A[2,1]A[2,3]A[2,3]}{B[2,3]}\\
&=B[2,3]B[3,1]+B[3,2]B[2,1] \qquad (\because\  A[i,j]A[j,i]=B[i,j]B[j,i])\\
&=B[1,2]B[2,3]B[3,1]+B[1,3]B[3,2]B[2,1] \qquad (\because\ B[1,2]=B[1,3]=1)
\end{align*}
This completes the proof.
\end{proof}

We next consider the problem of computing all matrices that are principal minor equivalent to a given $4\times4$ matrix.
Assume that $A\in\F^{4\times4}$ has nonzero off-diagonal entries and that we have access to all its principal minors. Our goal is to construct a family of matrices such that, for every matrix $C$ satisfying $C\PME A$, the family contains a matrix that is diagonally similar to~$C$.~\cref{algo:reconstruction-of-size-four-matrices} presents an algorithm for this task, and~\cref{lem:reconstruction-of-size-four-matrices} establishes its correctness. 
\begin{algorithm}[H]
\caption{Reconstruction of $4\times 4$ principal minor equivalent matrices}
\label{algo:reconstruction-of-size-four-matrices}
\textbf{Input:} An access to oracle $\PMoracle{A}$ for some matrix $A\in\F^{4\times 4}$ with nonzero off-diagonal entries\\
\textbf{Output:} The family $\fourPMEfamily{A}$ (for definition, see the first paragraph of \cref{sec: cut discovery})\\
\textbf{Assumption:} A root finding algorithm for quadratic univariate polynomials over $\F$ is available
\begin{algorithmic}[1]
\State For every $\{i,j\}\subseteq \{2,3,4\}$ with $i<j$, compute the univariate polynomial
\begin{equation}
\label{eqn:reconstruction-of-size-four-matrices-equation}
E_{i,j}(z)= a_{i,j}z^2-b_{i,j}z+c_{i,j}
\end{equation}
such that \begin{itemize}
\item $a_{i,j}=A[1,j]A[j,1]$
\item $b_{i,j}=A[1,i]A[i,j]A[j,1]+A[1, j]A[j, i]A[i,1]$
\item $c_{i,j}=A[1,i]A[i,1]A[i,j]A[j,i]$
\end{itemize}
\State For every $\{i,j\}\subseteq \{2,3,4\}$, compute the set of roots $R_{i,j}$ for the polynomial $E_{i, j}(z)$.
\State $\mathcal T\leftarrow \emptyset$
\State For every $(r_1, r_2, r_3)\in R_{2,3}\times R_{2, 4}\times R_{3,4}$ do the following:
\begin{enumerate}[label=\roman*)]
\item construct a $4\times 4$ matrix $B$ such that
\begin{itemize}
 \item $B[i,i]\leftarrow A[i,i]$ for all $i\in\{1,2,3,4\}$  
 \item $B[1, j]\leftarrow 1$ for $j=2,3,4$,
 \item $B[2,3]=r_1$, $B[2,4]=r_2$ and $B[3,4]=r_3$
 \item for $\{i, j\}\subseteq \{1,2,3,4\}$ with $i<j$, $B[j,i]\leftarrow\frac{A[i,j]A[j,i]}{B[i,j]}$
 \end{itemize}
\item If $A\PME B$, $\mathcal T\leftarrow \mathcal T\cup\{B\}$
\end{enumerate}
\State Output $\mathcal T$
\end{algorithmic}
\end{algorithm}
\begin{lemma}
\label{lem:reconstruction-of-size-four-matrices}
Let $A$ be a $4\times 4$ matrix with nonzero off-diagonal entries. Then, given the oracle access to $\PMoracle{A}$, \cref{algo:reconstruction-of-size-four-matrices} outputs the set $\fourPMEfamily{A}$ in constantly many operations over $\F$.
\end{lemma}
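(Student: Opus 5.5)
We need to show two things. First, every matrix placed in $\mathcal T$ lies in the family required by the definition of $\fourPMEfamily{A}$: it is PME to $A$ and has ones in the first row. Second, every $D\PME A$ is diagonally similar to some matrix in $\mathcal T$. The first is immediate from the construction: each $B$ added satisfies $B[1,j]=1$ for $j=2,3,4$, and it is added only after the explicit check $A\PME B$.

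That check is well defined. The oracle gives all $15$ principal minors of $A$, and for each candidate $B$ we compute its $15$ principal minors directly. Also, every quantity the algorithm uses is available from $\PMoracle{A}$ via \cref{lem:computation-using-PM-oracle}, provided it is read with the normalization $A[1,j]=1$:
\begin{itemize}
\item $a_{i,j}=A[1,j]A[j,1]$ and $c_{i,j}=A[1,i]A[i,1]\cdot A[i,j]A[j,i]$ are products of $2$-cycle values.
\item $b_{i,j}$ is the $3$-cycle sum on $\{1,i,j\}$.
\item Each $B[j,i]$ uses only a $2$-cycle value.
\end{itemize}
These cycle values are invariant under diagonal similarity, so we may assume without loss of generality that $A$ is normalized with $A[1,j]=1$ (by \cref{obs:uniqueness-of-DS}).

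For the second statement, let $D\PME A$. By \cref{obs:uniqueness-of-DS} (applicable since $D$ has nonzero off-diagonal entries, because $D[i,j]D[j,i]=A[i,j]A[j,i]\neq0$), there is a unique $C\DS D$ with $C[1,j]=1$ for $j=2,3,4$. By \cref{obs:DE-to-PME}, $C\PME A$.

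For each pair $\{i,j\}\subseteq\{2,3,4\}$, restrict to $\{1,i,j\}$: $C[\{1,i,j\}]\PME A[\{1,i,j\}]$, and $C$ has ones in positions $(1,i),(1,j)$. Relabeling $\{1,i,j\}$ as $\{1,2,3\}$, \cref{lem:reconstruction-size-three-matrix} yields three facts:
\begin{itemize}
\item $C$ and $A$ have the same diagonal.
\item $C[j,1]=A[1,j]A[j,1]$.
\item $C[i,j]$ is a root of exactly the quadratic $E_{i,j}$, so $C[i,j]\in R_{i,j}$, and $C[j,i]=A[i,j]A[j,i]/C[i,j]$.
\end{itemize}
Taking $(r_1,r_2,r_3)=(C[2,3],C[2,4],C[3,4])$, the matrix $B$ built in Step 4 agrees entrywise with $C$. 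Hence $B=C\PME A$ passes the test and belongs to $\mathcal T$, and $D\DS B$. Together with the first paragraph, this shows $\mathcal T=\fourPMEfamily{A}$.

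The delicate step is matching the coefficients of $E_{i,j}$ with those in \cref{lem:reconstruction-size-three-matrix} after relabeling. Note that $E_{i,j}$ has $a_{i,j}=A[1,j]A[j,1]\neq 0$, so it is a genuine quadratic. We must also check that the sign and ordering conventions ($b$ as the sum of the two $3$-cycles through $1,i,j$) agree. Degenerate cases such as a double root cause no problem, since $R_{i,j}$ then simply has one element.

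For the complexity:
\begin{itemize}
\item Each $R_{i,j}$ has at most $2$ elements, giving at most $8$ candidate triples.
\item Each candidate requires building $B$ and checking $15$ principal minors of a $4\times4$ matrix, a constant amount of work.
\item Computing the coefficients takes $O(1)$ oracle queries.
\item Three quadratic root-findings are needed, permitted by the stated assumption.
\end{itemize}
Hence the total is a constant number of field operations.
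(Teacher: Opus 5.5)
Your proposal is correct and follows the paper's proof. Any $D\PME A$ has nonzero off-diagonal entries and can be normalized via \cref{obs:uniqueness-of-DS} to have ones in the first row; \cref{lem:reconstruction-size-three-matrix}, applied to each triple $\{1,i,j\}$, then places its entries among the enumerated roots, so \cref{algo:reconstruction-of-size-four-matrices} produces and keeps it at constant cost. One small point: your caveat about reading the coefficients under the normalization $A[1,j]=1$ is unnecessary, since $a_{i,j}$, $b_{i,j}$, $c_{i,j}$ and the products $A[i,j]A[j,i]$ are already cycle quantities obtainable directly from $\PMoracle{A}$ by \cref{lem:computation-using-PM-oracle}.
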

\begin{proof}
Let $B\in\F^{4\times 4}$ be a matrix such that $A\PME B$. Then, the off-diagonal entries of $B$ must be nonzero. Otherwise, for any $B[i,j]=0$, one can show that one of the following must be true:
\begin{enumerate}
\item $A[i,i]\neq B[i,i]$
\item $A[j, j]\neq B[j, j]$
\item $\det(A[\{i,j\}])\neq \det(B[\{i,j\}])$
\end{enumerate}
 Any of this would contradict the assumption $A\PME B$. 
 
 By~\cref{obs:uniqueness-of-DS}, we may assume without loss of generality that $B[1,j]=1$ for $j=2,3,4$. We now show that $B\in \fourPMEfamily{A}$. 

From~\cref{lem:reconstruction-size-three-matrix}, for every $\{i,j\}\subseteq \{2,3,4\}$ with $i<j$, $B[i,j]$ is a root of~\cref{eqn:reconstruction-of-size-four-matrices-equation} and $B[j, i]=\frac{A[i,j]A[j,i]}{B[i,j]}$. Thus, $B\in \fourPMEfamily{A}$. Moreover, by~\cref{lem:computation-using-PM-oracle}, we can compute~\cref{eqn:reconstruction-of-size-four-matrices-equation} using only a constant number of queries to the oracle $\PMoracle{A}$. 

The time complexity follows directly from the description of the algorithm. 
\end{proof}

\subsection{PMAP for matrices with property $\prop$ and no cut}
\label{subsec:reconstruction-with-no-cut-&-Q}

In this section, we describe our learning algorithm for matrices with $\prop$ property and have no cut. A detailed description of the algorithm is available in~\cref{algo:reconstruction-with-no-cut-&-Q}. The input consists of the index set $I$, access to the oracle $\PMoracle{A}$, and the family $\submatrixfamily{A}$ (see the beginning of \cref{sec: cut discovery} for the definition). The goal is to output a matrix $B$ such that $B\PME A$. We note that if we are given only access to $\PMoracle{A}$, then by~\cref{lem:reconstruction-of-size-four-matrices}, the family $\submatrixfamily{A}$ can be computed. However, since this algorithm is used as a subroutine in~\cref{algo:reconstruction-with-Q}, where the family is already computed, we include it as part of the input to avoid recomputation. 

\begin{algorithm}
\caption{Reconstruction of principal minor equivalent matrix with property $\prop$ and no cut}
\label{algo:reconstruction-with-no-cut-&-Q}
\textbf{Input:} A set $I$ with $|I|\geq 4$; the family $\submatrixfamily{A}$ and access to the oracle $\PMoracle{A}$  for some matrix $A\in\F^{n\times n}$ whose rows and columns are indexed by $I$, where $A$ has no cut and satisfies property $\prop$\\
\textbf{Output:} A matrix $B\in \F^{|I|\times |I|}$ such that $A\PME B$\\
\textbf{Assumption:} A root finding algorithm for quadratic univariate polynomials over $\F$ is available
\begin{algorithmic}[1]

\State Let$|I|=n$. Fix any two elements $i_1, i_2\in I$. \label{algo:reconstruction-with-no-cut-&-Q-fixing-indices}
\State Applying~\cref{algo:finding-no-cut-sequence} with input $(I, \submatrixfamily{A}, i_1, i_2)$, compute a sequence $(i_n,i_{n-1}, \ldots, i_{3})$. \label{algo:reconstruction-with-no-cut-&-Q-first-no-cut-sequence}

\State Construct the output matrix $B\in\F^{n\times n}$, whose rows and columns are indexed by $I$, as follows: 
\State $B[i_1,i_1] \leftarrow A[i_1,i_1]$, and $B[i_2,i_2] \leftarrow A[i_2,i_2]$
\State $B[i_1, i_2]\leftarrow 1$ and $B[i_2, i_1]\leftarrow A[i_1, i_2]A[i_2, i_1]$. 

\For{$j\leftarrow 3$ to $n$} \label{algo:reconstruction-with-no-cut-&-Q-first-loop}

\State $B[i_j,i_j]\leftarrow A[i_j, i_j]$

\State $B[i_1, i_j]\leftarrow 1$ and $B[i_j, i_1]\leftarrow A[i_1, i_j]A[i_j, i_1]$\label{algo:reconstruction-with-no-cut-&-Q-first-loop-first-row-entries}

\State Let $I_j=\{i_1, i_2, i_3,\ldots, i_j\}$. 

\State If $j=3$, $k_3\leftarrow i_2$, otherwise applying~\cref{algo:finding-no-cut-sequence} with input ($I_j$, $\submatrixfamily{A_{I_j}}$, $i_1$, $i_j$), \label{algo:reconstruction-with-no-cut-&-Q-second-no-cut-sequence}
\State \hspace{0.5 cm} compute a sequence $(k_j, k_{j-1}, \ldots, k_3)$.

\For{$\ell\leftarrow 3$ to $j$}\label{algo:reconstruction-with-no-cut-&-Q-second-loop}
\State Let $L_{\ell}=\{i_1, i_j, k_3, k_4, \ldots, k_\ell\}$
\State Let $R_{j, \ell}$ be the roots of the polynomial 
\vspace{-1em}
\begin{equation}
\label{eqn:reconstruction-with-no-cut-&-Q-one}
az^2-bz+c, \text{ where }   
\end{equation}\label{algo:reconstruction-with-no-cut-&-Q-root-set}
\vspace{-2em}
\begin{itemize}[leftmargin=3cm]
\item $a=A[i_1, k_\ell]A[k_\ell, i_1]$, 
\item $b=A[i_1, i_j]A[i_j, k_\ell]A[k_\ell, i_1]+A[i_1, k_\ell]A[k_\ell, i_j]A[i_j, i_1]$, and  
\item $c=A[i_1, i_j]A[i_j, i_1]A[i_j, k_\ell]A[k_\ell, i_j].$
\end{itemize}

\State Let $B_1\in \F^{|L_\ell|\times |L_\ell|}$ be a matrix such that 
\begin{itemize}[leftmargin=3cm]
\item  $B_1[k_\ell, i_j]$ and $B_1[i_j, k_\ell]$ are unknown, and
\item other entries are same as $B[L_\ell]$
\end{itemize} \label{algo:reconstruction-with-no-cut-&-Q-B1-matrix} 
\State Let $D_\ell\in\F^{|L_{\ell-1}|\times |L_{\ell-1}|}$ be diagonal matrix s.t.
\vspace{0em}
\[D_\ell[i_1, i_1]=1 \text{ and }D[s,s]=B[s,i_1]\ \ \forall\, s\in L_{\ell-1}-i_1\] \label{algo:reconstruction-with-no-cut-&-Q-diagonal-matrix}
\vspace{-1em}
\State  Let $B_2\in \F^{|L_\ell|\times |L_\ell|}$ be another matrix such that
\begin{itemize}[leftmargin=3cm]
\item  $B_2[L_{\ell-1}]=D_\ell\cdot B[L_{\ell-1}]^T\cdot D_\ell^{-1}$
\item  $B_2[k_\ell, i_j]$ and $B_2[i_j, k_\ell]$ are unknown, and
\item  rest of the entries are same as $B[L_\ell]$
\end{itemize} \label{algo:reconstruction-with-no-cut-&-Q-B2-matrix}
\For{$s\in\{1,2\}$ and $\gamma\in R_{j, \ell}$}\label{algo:reconstruction-with-no-cut-&-Q-third-loop}
\State $B_s[i_j, k_\ell]\leftarrow \gamma \text{ and } B_s[k_\ell, i_j]\leftarrow \frac{A[i_j, k_\ell]A[k_\ell, i_j]}{\gamma}$ 
\If{$\kPME{A[L_\ell]}{B_s[L_\ell]}{4}$ holds} \label{algo:reconstruction-with-no-cut-&-Q-if-condition}
\State $B[L_\ell]\leftarrow B_s$\label{algo:reconstruction-with-no-cut-&-Q-submatrix-constr}
\State \textbf{exit} the `for'-loop at Step~\ref{algo:reconstruction-with-no-cut-&-Q-third-loop}
\EndIf
\EndFor
\EndFor
\EndFor

\State Output $B$
\end{algorithmic}
\end{algorithm}

\begin{algorithm}
\caption{Finding a sequence of indices satisfying no cut property}
\label{algo:finding-no-cut-sequence}
\textbf{Input:} A set $I$ with $|I|\geq 4$; the family $\submatrixfamily{A}$ for some matrix $A$ whose rows and columns are indexed by $I$, where $A$ satisfies  property $\prop$ and has no cut; two elements $r, s\in I$.\\
\textbf{Output:} A sequence $(i_{|I|}, i_{|I|-1}, \ldots, i_{3})$ such that all $i_j$ are distinct elements of $I\setminus \{r,s\}$ and $j\in \{4,5, \ldots, |I|\}$, the submatrix $A[I_j]$ has no cut, where $I_j=\{r,s, i_3, \ldots, i_j\}$. 
\begin{algorithmic}[1]
\State \Return \Call{Finding-no-cut-sequence}{$I$, $\submatrixfamily{A}$, $r$, $s$}
\bigskip
\Function{Finding-no-cut-sequence}{$I$, $\submatrixfamily{A}$, $r$, $s$}
\If{$|I|=4$}
\State Let $\{i, j\}=I\setminus \{r,s\}$. Then, \Return $(i,j)$
\Else
\State For all $ i\in I\setminus\{r, s\}$, applying~\cref{algo:cut-detection-finding-algorithm} with input $(I-i, \submatrixfamily{A[I-i]})$, find an $e\in I\setminus\{r, s\}$ such that $A[I-e]$ has no cut.\label{algo:finding-no-cut-sequence-index-find}
\State $\mathcal X\leftarrow$ \Call{Finding-no-cut-sequence}{$I-e$, $\submatrixfamily{A[I-e]}$, $r$, $s$}\label{algo:finding-no-cut-sequence-recursive-call}
\State \Return $(e, \mathcal X)$.
\EndIf
\EndFunction
\end{algorithmic}
\end{algorithm}

\subsubsection{Correctness of~\cref{algo:reconstruction-with-no-cut-&-Q}}
\label{sec:reconstruction-with-no-cut-&-Q-correctness}

\cref{algo:reconstruction-with-no-cut-&-Q} invokes~\cref{algo:finding-no-cut-sequence} as a key subroutine. Therefore, we begin by establishing the correctness of~\cref{algo:finding-no-cut-sequence}. 
\paragraph{Correctness of~\cref{algo:finding-no-cut-sequence}.}  The input of~\cref{algo:finding-no-cut-sequence} has the following components:
\begin{enumerate}
\item a finite set $I$ with $|I|\geq 4$
\item the family $\submatrixfamily{A}$ for some matrix $A$ whose rows and columns are indexed by $I$, where $A$ satisfies property $\prop$ and has no cut
\item two elements $r,s\in I$.
\end{enumerate}
Let $|I|=n$.~\cref{algo:finding-no-cut-sequence} calls a recursive procedure \textsc{Finding-no-cut-sequence} with the input $(I, \submatrixfamily{A}, r,s)$, and it outputs a sequence $\mathcal X= (i_{n}, i_{n-1}, i_{n-2}, \ldots, i_{3})$. For all $j\in\{3,4,5, \ldots, n\}$ , let $I_j=\{r, s, i_3, \ldots, i_j\}.$ Then, the output sequence $\mathcal X$ satisfies the following property.
\begin{claim}
\label{clm:correctness-of-finding-no-cut-sequence}
Given $(I, \submatrixfamily{A}, r,s)$ as input to \textsc{Finding-no-cut-sequence}, it returns a sequence $$\mathcal X=(i_{n}, i_{n-1}, i_{n-2}, \ldots, i_3)$$ such that for all $j\in\{4,5,6, \ldots, n-1\}$, $A[I_j]$ has no cut.
\end{claim}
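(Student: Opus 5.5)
The claim is proved by induction on $n=|I|$, following the recursive structure of \textsc{Finding-no-cut-sequence}. Throughout, the invariant is that each recursive call receives a matrix that satisfies property $\prop$ and has no cut. At the top level this holds by the input assumption. In the base case $|I|=4$, the procedure returns the two remaining indices $(i,j)$. Here $I_4=I$, and $A[I]=A$ has no cut by hypothesis, so the statement holds; the range $\{4,\dots,n-1\}$ is in fact empty.

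For the inductive step, let $n\ge 5$. The first task is to show that Step~\ref{algo:finding-no-cut-sequence-index-find} always succeeds, meaning some $e\in I\setminus\{r,s\}$ has $A[I-e]$ with no cut. By \cref{cor:no-cut-from-big-to-small-matrix}, which applies because $A$ satisfies $\prop$ and has no cut, there are at least three indices $i$ such that $A[I-i]$ has no cut. At most two of these lie in $\{r,s\}$, so at least one valid $e\in I\setminus\{r,s\}$ exists.

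The algorithm also has to detect this $e$ correctly using only $\submatrixfamily{A[I-i]}$. To do so, it runs \cref{algo:cut-detection-finding-algorithm} on each $A[I-i]$. Two facts justify this:
\begin{itemize}
\item When $|I-i|\ge 5$, the matrix $A[I-i]$ satisfies $\prop$ by \cref{prop:matrix-property-large-to-small}. When $|I-i|=4$, the definition of property $\prop$ only demands nonzero off-diagonal entries and is otherwise vacuous up to trivial index issues, which I would handle separately. Either way, \cref{lem:CorrectnessOfCutFinding} applies, and the algorithm answers `NO' exactly when $A[I-i]$ has no cut.
\item The family $\submatrixfamily{A[I-i]}$ is simply the restriction of $\submatrixfamily{A}$ to the $4$-subsets of $I-i$, so it is available.
\end{itemize}

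Once the chosen $e$ is fixed, $A[I-e]$ satisfies $\prop$ (again by \cref{prop:matrix-property-large-to-small}, with the $4\times 4$ case trivial) and has no cut. So the recursive call on $(I-e,\submatrixfamily{A[I-e]},r,s)$ satisfies the invariant. By induction it returns a sequence $\mathcal X=(i_{n-1},\dots,i_3)$ of distinct elements of $(I-e)\setminus\{r,s\}$. For every $j\le n-1$, with $I_j$ formed from these elements, $A[I_j]$ has no cut; this is what the induction gives, using the stronger form of the inductive statement that includes the top index $j=|I|$. Setting $i_n=e$ gives $I_n=I$ and $I_{n-1}=I-e$. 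All the $i_j$ are distinct because $e\notin I-e$. The sets $I_j$ for $j\le n-1$ coincide with those of the recursive call, so $A[I_j]$ has no cut for $4\le j\le n-1$. Moreover $A[I_n]=A$ has no cut as well.

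The main subtlety is bookkeeping rather than mathematics. The induction has to be carried with the statement strengthened to all $j\in\{4,\dots,|I|\}$, so that the recursive call certifies $A[I_{n-1}]=A[I-e]$. The actual mathematical content is the existence of $e$ outside $\{r,s\}$, and that step rests entirely on the count of three indices in \cref{cor:no-cut-from-big-to-small-matrix}.
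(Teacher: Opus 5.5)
Your proof is correct and follows the same route as the paper. Both argue by induction on $|I|$: \cref{cor:no-cut-from-big-to-small-matrix} yields an index $e\notin\{r,s\}$ with $A[I-e]$ cut-free, \cref{prop:matrix-property-large-to-small} passes property $\mathcal{R}$ down to $A[I-e]$, and the procedure recurses. You also fill in points the paper leaves implicit: the pigeonhole step (three good indices, at most two in $\{r,s\}$), the justification that the cut-finding subroutine applies to each $A[I-i]$, and the top index $j=n-1$, which follows from the choice of $e$ rather than from the induction hypothesis as literally stated.
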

\begin{proof}
We prove it by induction on $|I|$. Since the matrix $A$ has no cut, the base case, $|I|=4$, trivially follows. Next, we focus on the induction step. Let $I$ be  the input set with $|I|\geq 5$. Since the input $A$ has no cut and satisfies property $\prop$, by~\cref{cor:no-cut-from-big-to-small-matrix}, there exists an $e\in I\setminus\{r,s\}$ such that $A[I-e]$ has no cut. Step~\ref{algo:finding-no-cut-sequence} finds such an index $e$. Since $A$ has property $\prop$, applying~\cref{prop:matrix-property-large-to-small}, $A[I-e]$ also satisfies property $\prop$. Let $(i_{n-1}, i_{n-2}, \ldots, i_3)$ be the sequence returned in Step~\ref{algo:finding-no-cut-sequence-recursive-call}. Then, from the induction hypothesis, for all $j\in\{4,5, 6, \ldots, n-1\}$, $A[I_j]$ has no cut. Let $i_{n}=e$. Then, the output sequence $(i_n, i_{n-1}, i_{n-2}, \ldots, i_3)$ satisfies property $\prop$. 
\end{proof}
The correctness of~\cref{algo:finding-no-cut-sequence} immediately follows from~\cref{clm:correctness-of-finding-no-cut-sequence}.

\paragraph{Correctness of~\cref{algo:reconstruction-with-no-cut-&-Q}.}
As input,~\cref{algo:reconstruction-with-no-cut-&-Q} has access to the following things:
\begin{enumerate}
\item A set $I$ of size at least $4$
\item the family $\submatrixfamily{A}$ and the oracle $\PMoracle{A}$ for some matrix $A$ whose rows and columns are indexed by $A$, where $A$ does not have cut and satisfies property $\prop$.
\end{enumerate}
By~\cref{obs:uniqueness-of-DS} and~\cref{obs:DE-to-PME}, without loss of generality, we may assume that $A[i_1, s]=1$ for all $s\in I\setminus\{i_1\}$, for some arbitrary element $i_1\in I$ fixed at the starting of the algorithm. We want to prove that~\cref{algo:reconstruction-with-no-cut-&-Q} produces a matrix $B$ whose rows and columns are indexed by $I$ and $A\PME B$.  Moreover, $B[i_1, s]=1$ for all $s\in I\setminus\{i_1\}$.

Let $(i_n, i_{n-1}, i_{n-2},\ldots, i_3)$  be the sequence produced at Step~\ref{algo:reconstruction-with-no-cut-&-Q-first-no-cut-sequence}. For all $j\in[n]$, let $I_j$ denote the set $\{i_1, i_2, \ldots, i_j\}$. From~\cref{clm:correctness-of-finding-no-cut-sequence}, for all $j\in\{4, 5, 6, \ldots,n\}$, the submatrix $A[I_j]$ has no cut. Observe that for each $j\in\{3,4,5,\ldots, n\}$,  at the starting $j$-th iteration of the `for'-loop at Step~\ref{algo:reconstruction-with-no-cut-&-Q-first-loop}, all the entries of the submatrix $B[I_{j-1}]$ are known. Our goal is to show that the following loop-invariant is satisfied at Step~\ref{algo:reconstruction-with-no-cut-&-Q-first-loop}.  
\begin{equation}
\label{eqn:reconstruction-with-no-cut-&-Q-loop-invariant}
\forall\, j\in\{2,3,4,\ldots,n\},\ B[I_j]\PME A[I_j] \text{ and } B[i_1, i_j]=1.
\end{equation}
This will immediately imply that after the $n$-th iteration, we obtain the desired output matrix $B$, that is, $B\PME A$ and $B[i_1, s]=1$ for all $s\in I\setminus\{i_1\}$. 

We prove the invariant in~\cref{eqn:reconstruction-with-no-cut-&-Q-loop-invariant} by induction on $j$. The base case $j=2$ follows from~\cref{prop:reconstruction-size-two-matrix}. Moreover, by applying~\cref{lem:computation-using-PM-oracle}, we can compute $B[i_2, i_1]$ using the oracle $\PMoracle{A}$.

We now show the induction step. From the induction hypothesis, for every $j>2$, we obtain that $B[I_{j-1}]\PME A[I_{j-1}]$ and $B[i_1, s]=1$ for all $s\in I_{j-1}\setminus \{i_1\}$. Note that initially, all the entries of $B[I_j]$ except those in row and column indexed by $i_j$ are known. From~\cref{lem:computation-using-PM-oracle}, $A[i_j, i_j]$ can be computed using the oracle $\PMoracle{A}$. We make $B[i_j,i_j]=A[i_j, i_j]$. Subsequently, we compute the remaining unknown entries of $B[I_j]$ in iterative manner.

Since $A$ satisfies property $\prop$, from~\cref{prop:matrix-property-large-to-small}, $A[I_j]$ also satisfies property $\prop$. Additionally, $A[I_j]$ has no cut. Therefore, from~\cref{clm:correctness-of-finding-no-cut-sequence}, the sequence $(k_j, k_{j-1}, \ldots, k_3)$, satisfies the following property: For all $\ell\in\{3,4, 5, \ldots, j\}$, $A[L_\ell]$ has no cut where $L_\ell=\{i_1, i_j, k_3, k_4, \ldots, k_\ell\}$. Let $L_2=\{i_1, i_j\}$.  Next, we prove the following loop-invariant satisfied by the `for'-loop at Step~\ref{algo:reconstruction-with-no-cut-&-Q-second-loop}.
\begin{claim}
\label{clm:reconstruction-with-no-cut-&-Q-loop-invariant}
For all $\ell\in\{2,3,4, \ldots, j\}$, $A[L_\ell]\PME B[L_\ell]$
\end{claim}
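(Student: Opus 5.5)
We argue by induction on $\ell$, inside the $j$-th iteration of the outer loop, where the outer invariant \cref{eqn:reconstruction-with-no-cut-&-Q-loop-invariant} is assumed for $j-1$. Throughout, we use the normalization $A[i_1,s]=1$ for all $s\ne i_1$, and the fact that $B[i_1,s]=1$ for all $s$ whose column has been filled.

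\textbf{Base case $\ell=2$.} The set $L_2=\{i_1,i_j\}$ is handled by Step~\ref{algo:reconstruction-with-no-cut-&-Q-first-loop-first-row-entries} together with \cref{prop:reconstruction-size-two-matrix}.

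\textbf{Inductive step.} Assume $A[L_{\ell-1}]\PME B[L_{\ell-1}]$; we show that some candidate $B_s$ tested in Step~\ref{algo:reconstruction-with-no-cut-&-Q-third-loop} passes the test of Step~\ref{algo:reconstruction-with-no-cut-&-Q-if-condition}. Once that holds, correctness follows from \cref{thm:main-characterization}. Indeed, $A[L_\ell]$ satisfies property $\prop$ by \cref{prop:matrix-property-large-to-small} (for $|L_\ell|\ge 5$; for $|L_\ell|\le 4$ the order-$4$ test is already full equivalence). So passing the order-$4$ test yields $A[L_\ell]\PME B[L_\ell]$, whichever candidate is accepted.

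To exhibit a passing candidate, we combine two diagonal-equivalence facts.
\begin{enumerate}
\item $L_\ell-i_j\subseteq I_{j-1}$ and $A[I_{j-1}]\PME B[I_{j-1}]$. Since $A[I_{j-1}]$ has no cut (or has size $\le 3$), \cref{lem:no-cut-to-PME} gives $B[I_{j-1}]\DE A[I_{j-1}]$. Restricting to $L_\ell-i_j$, we get $B[L_\ell-i_j]\DS A[L_\ell-i_j]$ or $B[L_\ell-i_j]\DS A[L_\ell-i_j]^T$.
\item Since $A[L_{\ell-1}]$ has no cut (or has size $3$), \cref{lem:no-cut-to-PME} likewise gives $B[L_{\ell-1}]\DE A[L_{\ell-1}]$.
\end{enumerate}
In both cases the first-row normalization pins the diagonal scaling. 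For a direct similarity it is the identity, so the restriction is literally equal to $A$. For a transpose similarity it is the matrix $D_\ell$ built from the column $B[\cdot,i_1]$ (resp.\ the analogous matrix for $A$), because $B[s,i_1]=A[i_1,s]A[s,i_1]$.

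We then treat the four combinations. Suppose (i) and (ii) are of the same type. Then $B_1$, which keeps $B[L_\ell-i_j]$ and $B[L_{\ell-1}]$, coincides with $A[L_\ell]$ or with a fixed diagonal conjugate of $A[L_\ell]^T$ on every entry except $(i_j,k_\ell),(k_\ell,i_j)$. Suppose instead the types differ. Then replacing $B[L_{\ell-1}]$ by $D_\ell B[L_{\ell-1}]^TD_\ell^{-1}$ flips its type, so that $B_2$ becomes consistent with (i). Note that $L_{\ell-1}$ and $L_\ell-i_j$ together cover all off-diagonal pairs of $L_\ell$ except $\{i_j,k_\ell\}$, and transposing within $L_{\ell-1}$ changes nothing outside it.

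In the resulting consistent case, the true value $A[i_j,k_\ell]$ (resp.\ the corresponding entry of the conjugated transpose) is a root of the quadratic \eqref{eqn:reconstruction-with-no-cut-&-Q-one} by \cref{lem:reconstruction-size-three-matrix} applied to $L=\{i_1,i_j,k_\ell\}$. The partner entry is then forced by the $2\times 2$ condition. Hence for some $\gamma\in R_{j,\ell}$ the candidate $B_s$ is exactly $A[L_\ell]$ or $E\,A[L_\ell]^TE^{-1}$ for a diagonal $E$. Such a candidate is PME to $A[L_\ell]$ by \cref{obs:DE-to-PME}, so in particular it passes the order-$4$ test.

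\textbf{Main obstacle.} The delicate point is the bookkeeping showing that the two diagonal scalings agree on the overlap $L_{\ell-1}\cap(L_\ell-i_j)=L_{\ell-1}-i_j$. This is what makes $B_1$ or $B_2$ a single global similarity of $A[L_\ell]$ or $A[L_\ell]^T$. We plan to handle it via the uniqueness in \cref{obs:uniqueness-of-DS}, after normalizing row $i_1$ to ones: the canonical representative is unique, so the scalings must coincide on the shared indices.
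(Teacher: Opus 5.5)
Your proposal is correct and follows the paper's proof. Both argue by induction on $\ell$, combining $B[I_{j-1}]\DE A[I_{j-1}]$ (restricted to $L_\ell-i_j$) with $B[L_{\ell-1}]\DE A[L_{\ell-1}]$ from \cref{lem:no-cut-to-PME}, pin the scalings by the first-row normalization, obtain the missing entry as a root of the quadratic via \cref{lem:reconstruction-size-three-matrix}, and conclude with the order-$4$ characterization on $A[L_\ell]$. The only difference is bookkeeping: the paper uses $A\PME A^T$ to assume without loss of generality that $B[I_{j-1}]=A[I_{j-1}]$, which collapses your four type-combinations into its two cases ($B_1$ when $B[L_{\ell-1}]=A[L_{\ell-1}]$, and $B_2$ when $B[L_{\ell-1}]=D_\ell A[L_{\ell-1}]^TD_\ell^{-1}$).
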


We defer the proof of the above claim, and complete the induction step, assuming the claim holds. Specifically, we aim to show that $A[I_j]$ has no cut and $B[i_1, i_j]=1$,  Observe that $L_j=I_j$. Therefore, by~\cref{clm:reconstruction-with-no-cut-&-Q-loop-invariant}, we immediately obtain $A[I_j]\PME B[I_j]$. Furthermore, we will later show that Step~\ref{algo:reconstruction-with-no-cut-&-Q-submatrix-constr} must be executed for some $s\in\{1,2\}$ and $\gamma\in R_{j,\ell}$. Consequently, by the definitions of $B_1$ and $B_2$, we have $B[i_1, i_j]=1$. 

\begin{proof}[Proof of~\cref{clm:reconstruction-with-no-cut-&-Q-loop-invariant}]
We prove by induction on $\ell$. The base case $\ell=2$ follows from~\cref{prop:reconstruction-size-two-matrix}.  Assume that for some $\ell>2$, $A[L_{\ell-1}]\PME B[L_{\ell-1}]$. Next, we want to show that $A[L_{\ell}]\PME B[L_\ell]$.

Observe that all entries of $B[L_\ell]$, except $B[i_j, k_\ell]$ and $B[k_\ell, i_j]$, are known.~\cref{lem:computation-using-PM-oracle} ensures that we can compute ~\cref{eqn:reconstruction-with-no-cut-&-Q-one} using constantly many queries to the oracle $\PMoracle{A}$. Additionally, using the root finding algorithm available to us, the set $R_{j, \ell}$ in Step~\ref{algo:reconstruction-with-no-cut-&-Q-root-set} can be computed.

Since $A[I_{j-1}]$ has no cut and $B[I_{j-1}]\PME A[I_{j-1}]$, from~\cref{lem:no-cut-to-PME}, $$B[I_{j-1}]\DS A[I_{j-1}], \text{ or } B[I_{j-1}]\DS A^T[I_{j-1}].$$ As $A\PME A^T$, the oracle $\PMoracle{A}$ is identical to $\PMoracle{A^T}$. Thus, without loss of generality, we may assume that $B[I_{j-1}]\DS A[I_{j-1}]$. Then, by~\cref{obs:uniqueness-of-DS}, it follows that $B[I_{j-1}]=A[I_{j-1}]$. Note that $L_\ell-i_j$ is a subset of $I_{j-1}$. Therefore, we obtain 
\begin{equation}
\label{eqn:reconstruction-with-no-cut-&-Q-2nd-loop-invariant-one}
B[L_\ell-i_j]=A[L_\ell-i_j].
\end{equation}

Again, since $A[L_{\ell-1}]$ has no cut and $A[L_{\ell-1}]\PME B[L_{\ell-1}]$, by~\cref{lem:no-cut-to-PME}, $$B[L_{\ell-1}]\DS A[L_{\ell-1}], \text{ or } B[L_{\ell-1}]\DS A^T[L_{\ell-1}].$$ Thus, by~\cref{obs:uniqueness-of-DS}, we obtain
\begin{equation}
\label{eqn:reconstruction-with-no-cut-&-Q-2nd-loop-invariant-two}
B[L_{\ell-1}]=A[L_{\ell-1}], \text{ or } B[L_{\ell-1}]=D_\ell\cdot A^T[L_{\ell-1}]\cdot D_{\ell}^{-1},
\end{equation}
where $D_\ell$ is defined at Step~\ref{algo:reconstruction-with-no-cut-&-Q-diagonal-matrix}. Next, we divide our proof into the following two cases.

\paragraph*{Case I ($B[L_{\ell-1}]=A[L_{\ell-1}]$).} From~\cref{eqn:reconstruction-with-no-cut-&-Q-2nd-loop-invariant-one} and~\cref{eqn:reconstruction-with-no-cut-&-Q-2nd-loop-invariant-two}, we know that all the entries, except $B[i_j, k_\ell]$ and $B[k_\ell, i_j]$,  of $B[L_\ell]$ are known; more specifically, they equal to the corresponding entry of $A[L_\ell]$. Therefore, from the definition in Step~\ref{algo:reconstruction-with-no-cut-&-Q-B1-matrix}, $$B_1[L_{\ell-1}]=A[L_{\ell-1}] \text{ and } B_1[L_{\ell}-i_j]=A[L_{\ell}-i_j].$$ By~\cref{lem:reconstruction-size-three-matrix}, $A[i_j, k_j]$ is a root of~\cref{eqn:reconstruction-with-no-cut-&-Q-one}. Hence, for $s=1$ and $\gamma=A[i_j, k_j]$ at Step~\ref{algo:reconstruction-with-no-cut-&-Q-third-loop}, we obtain a $B[L_\ell]$ such that $\kPME{A[L_\ell]}{B[L_\ell]}{4}$. Since $A[L_\ell]$ satisfies property $\prop$ by \cref{prop:matrix-property-large-to-small}, applying~\cref{thm:PME-for-no-cut} yields $A[L_\ell]\PME B[L_\ell]$.

\paragraph*{Case II ($B[L_{\ell-1}]=D_\ell\cdot A^T[L_{\ell-1}]\cdot D_{\ell}^{-1}$).} This implies that $$A[L_{\ell-1}]= D_\ell\cdot B^T[L_{\ell-1}]D_{\ell}^{-1}.$$ Therefore, from the definition in Step~\ref{algo:reconstruction-with-no-cut-&-Q-B2-matrix}, $B_2[L_{\ell-1}]=A[L_{\ell-1}]$. Combining this with~\cref{eqn:reconstruction-with-no-cut-&-Q-2nd-loop-invariant-one}, we observe that all the entries, except $B_2[i_j, k_\ell]$ and $B_2[k_\ell, i_j]$, are equal to the corresponding entries of $A[L_\ell]$. The remaining proof is similar to \textbf{Case I}. 
\end{proof}

\subsubsection{Time complexity of~\cref{algo:reconstruction-with-no-cut-&-Q}} In Steps~\ref{algo:reconstruction-with-no-cut-&-Q-first-no-cut-sequence} and~\ref{algo:reconstruction-with-no-cut-&-Q-second-no-cut-sequence},  \cref{algo:reconstruction-with-no-cut-&-Q} invokes~\cref{algo:finding-no-cut-sequence}. We therefore begin by analyzing the time complexity of~\cref{algo:finding-no-cut-sequence}. 

\cref{algo:finding-no-cut-sequence} invokes the recursive function \textsc{Finding-no-cut-sequence}. The function makes at most one recursive call to itself, where the size of the input set $I$ decreases by one. Moreover, in Step~\ref{algo:finding-no-cut-sequence-index-find} of each recursive call, it invokes~\cref{algo:cut-detection-finding-algorithm} at most $|I|$ times. By~\cref{lem:CorrectnessOfCutFinding}, the running time of each invocation is polynomial. Hence, the overall running time of~\cref{algo:finding-no-cut-sequence} is $\poly(|I|)$.

We now analyze the time complexity of~\cref{algo:reconstruction-with-no-cut-&-Q}. The algorithm is iterative in nature. Step~\ref{algo:reconstruction-with-no-cut-&-Q-first-no-cut-sequence} invokes~\cref{algo:finding-no-cut-sequence}, which runs in polynomial time.  In Step~\ref{algo:reconstruction-with-no-cut-&-Q-first-loop}, it executes a `for'-loop at most $n$ times, where $n$ is the cardinality of the input set $I$. For the $j$th iteration of this loop,~\cref{algo:reconstruction-with-no-cut-&-Q}:
\begin{enumerate}
\item In Step~\ref{algo:reconstruction-with-no-cut-&-Q-second-no-cut-sequence}, invokes~\cref{algo:finding-no-cut-sequence} with an input set of size at most $j$, for $j\geq 4$.
\item In Step~\ref{algo:reconstruction-with-no-cut-&-Q-second-loop}, executes another `for'-loop  at most $j$ times. In each iteration of this loop, it identifies a correct pair $(s, \gamma)$ from the set $\{1,2\}\times R_{j,\ell}$ in Step~\ref{algo:reconstruction-with-no-cut-&-Q-third-loop}. Given access to the oracle $\PMoracle{A}$, we can verify whether $\kPME{A[L_\ell]}{B[L_\ell]}{4}$ in polynomial time in Step~\ref{algo:reconstruction-with-no-cut-&-Q-if-condition}. Therefore, the correct pair $(s, \gamma)$ can be found in polynomial time.
\end{enumerate}
All the remaining steps of the algorithm can also be performed in polynomial time. Thus, the overall running time of~\cref{algo:reconstruction-with-no-cut-&-Q} is $\poly(n)$.

\subsection{PMAP for matrices with property $\prop$: Reduction to the no-cut case}
\label{subsec:reconstruction-with-Q}

We now present an algorithm for computing a principal minor equivalent matrix to a given  matrix $A\in\F^{n\times }$ satisfying property $\prop$. As input, we have access to the oracle $\PMoracle{A}$ (see~\cref{def:prinicpal-minor-oracle}). It is a recursive algorithm that reduces our learning problem of a matrix with property $\prop$ to the learning problem of a matrix with property $\prop$ and has \emph{no} cut. For a detailed description, see~\cref{algo:reconstruction-with-Q}. 



\begin{algorithm}[H]
\caption{Reconstruction of matrices with  property $\prop$}
\label{algo:reconstruction-with-Q}
\textbf{Input:} An access to oracle $\PMoracle{A}$ for some matrix $A\in\F^{n\times n}$ with $\prop$ property.\\
\textbf{Output:} A matrix $B$ such that $B\PME A.$\\
\textbf{Assumption:} A root finding algorithm for quadratic univariate polynomials over $\F$ is available
\begin{algorithmic}[1]
\item For $n=2,3$, reconstruct $B$ using~\cref{prop:reconstruction-size-two-matrix} and~\cref{lem:reconstruction-size-three-matrix}, respectively. Otherwise, do the following.

\State For every $I\subseteq [n]$ with $|I|=4$, applying~\cref{algo:reconstruction-of-size-four-matrices}, compute $\fourPMEfamily{A_I}$. Thus, we compute $\submatrixfamily{A}$\label{algo:reconstruction-with-Q-family-construction}

\State Output \Call{Reconstruction}{$[n]$}\label{algo:reconstruction-with-Q-function-call}

\bigskip
\Function{Reconstruction}{$I$}
\State For $|I|\geq 4$, applying~\cref{algo:cut-detection-finding-algorithm} with input $(I, \submatrixfamily{A[I]})$, decide whether $A[I]$ has a cut. If cut exists, let $S\subseteq I$ be the output of~\cref{algo:cut-detection-finding-algorithm}.\label{algo:reconstruction-with-Q-cut-finding}
\If{$A$ has no cut OR $|I|=3$}
\State If $|I|=3$, using~\cref{lem:reconstruction-size-three-matrix}, reconstruct $B\in \F^{|I|\times |I|}$ with $A[I]\PME B$
\State Otherwise, using~\cref{algo:reconstruction-with-no-cut-&-Q} with input $(I, \submatrixfamily{A[I]}, \PMoracle{A})$, reconstruct $B\in\F^{|I|\times |I|}$ with $A\PME B$\label{algo:reconstruction-with-Q-no-cut-reconstr-step}
\State \Return $B$
\Else
\State Let $s\in S$ and $t\in \comp{S}=I\setminus S$.
\State $B_1\leftarrow$\Call{Reconstruct}{$S+t$} \label{algo:reconstruction-with-Q-B1-matrix}
\State $B_2\leftarrow$\Call{Reconstruct}{$\comp S+t$}\label{algo:reconstruction-with-Q-B2-matrix}
\State Using~\cref{cl:reductionStep}, combine $B_1, B_2$, and obtain a matrix $B\in\F^{|I|\times |I|}$ with $B\PME A$.\label{algo:reconstruction-with-Q-combine-step}
\State \Return $B$
\EndIf
\EndFunction
\end{algorithmic}
\end{algorithm}
\begin{remark} We make the following two remarks about~\cref{algo:reconstruction-with-no-cut-&-Q} and~\cref{algo:reconstruction-with-Q}.
\begin{enumerate}
\item In both~\cref{algo:reconstruction-with-no-cut-&-Q} and~\cref{algo:reconstruction-with-Q}, every query to~$\PMoracle{A}$ involves an input subset $S$ of size at most $4$. In other words, it is sufficient to query only principal minors of  order up to $4$.

\item According to~\cref{rem:MinimalCutFinding} of~\cref{rem:cut-finding}, we can assume that the cut $S$ obtained in Step~\ref{algo:reconstruction-with-Q-cut-finding} of~\cref{algo:reconstruction-with-Q} is a minimal cut for some matrix $C$ such that $C\PME A$. Then, by~\cite[Lemma~3.3]{Chatterjee25} and~\cref{lem:common-cut}, $A[S+t]$ has no cut whenever $|S|>2$. Consequently, we can skip the recursive call at Step~\ref{algo:reconstruction-with-Q-B1-matrix} and directly invoke~\cref{algo:reconstruction-with-no-cut-&-Q} to obtain the matrix $B_1$. Thus, instead of two recursive calls in~\cref{algo:reconstruction-with-Q}, a single call suffices.
\end{enumerate}
\end{remark}

\subsubsection{Correctness of~\cref{algo:reconstruction-with-Q}}
The algorithm has an access to the oracle $\PMoracle{A}$ (see~\cref{def:prinicpal-minor-oracle}) for some matrix $A\in\F^{n\times n}$ satisfying property $\prop$. The algorithm outputs an $n\times n$ matrix $B$. We want to show that $B\PME A$. For $n=2$ and $n=3$, the correctness of the algorithm follows from~\cref{prop:reconstruction-size-two-matrix} and~\cref{lem:reconstruction-size-three-matrix}, respectively. By repeatedly applying~\cref{algo:reconstruction-of-size-four-matrices}, we compute the family $\submatrixfamily{A}$. In Step~\ref{algo:reconstruction-with-Q-function-call}, the algorithm invokes the recursive procedure \textsc{Reconstruction} on the input $[n]$. More generally, for any $I\subseteq [n]$ with $|I|\geq 3$, \textsc{Reconstruction}($I$) satisfies the following.

\begin{claim}
\label{clm:reconstruction-with-Q-recursion}
Let $n\geq 3$ be a positive integer, and $I\subseteq [n]$ of size at least $3$. Then, given $I$ as input, the function \textsc{Reconstruction} returns a matrix $B\in \F^{|I|\times |I|}$ such that $B\PME A[I]$.
\end{claim}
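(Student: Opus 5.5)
We will prove the claim by strong induction on $|I|$, following the structure of \textsc{Reconstruction}. The base case $|I|=3$ is handled directly by \cref{lem:reconstruction-size-three-matrix}. The off-diagonal entries of $A[I]$ are nonzero because $A$ has property $\prop$, so the lemma gives a matrix $B$ with $B\PME A[I]$; its quadratic has a root because $A[I]$ itself supplies one, after normalizing by diagonal similarity via \cref{obs:uniqueness-of-DS}.

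For $|I|\geq 4$, we first note that $A[I]$ again satisfies property $\prop$. For $|I|\geq 5$ this is \cref{prop:matrix-property-large-to-small}. For $|I|=4$ the density part is immediate, and the algorithm's only use of property $\prop$ there is through $4\times 4$ objects, where the rank-one extension condition is trivially satisfied by $X=\{i,j\}$. We also note that $\submatrixfamily{A[I]}$ is just the restriction of $\submatrixfamily{A}$, which is already computed in Step~\ref{algo:reconstruction-with-Q-family-construction}. By \cref{lem:CorrectnessOfCutFinding}, Step~\ref{algo:reconstruction-with-Q-cut-finding} correctly decides whether $A[I]$ has a cut. If it has none, Step~\ref{algo:reconstruction-with-Q-no-cut-reconstr-step} calls \cref{algo:reconstruction-with-no-cut-&-Q}, whose correctness was established in \cref{sec:reconstruction-with-no-cut-&-Q-correctness}, and the returned $B$ satisfies $B\PME A[I]$.

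Otherwise the cut-finding algorithm returns a set $S$ with $2\leq |S|\leq |I|-2$ that is a cut of some $C\PME A[I]$. Fix $s\in S$ and $t\in\comp S$. Both $S+t$ and $\comp S+s$ have size between $3$ and $|I|-1$. (The pseudocode writes $\comp S+t$, which must be read as $\comp S+s$ to match \cref{cl:reductionStep}.) By the induction hypothesis, the recursive calls return $B_1\PME A[S+t]$ and $B_2\PME A[\comp S+s]$. Since $C\PME A[I]$, every principal submatrix of $C$ is PME to the corresponding one of $A[I]$. In particular $B_1\PME C[S+t]$ and $B_2\PME C[\comp S+s]$.

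We then apply \cref{cl:reductionStep} to $C$ with the cut $S$, taking $M=B_1$ and $N=B_2$. Its hypotheses require only that $C$ have nonzero off-diagonal entries, which holds because $C\PME A[I]$ and $A[I]$ is dense (the $2\times 2$ minors force this, as in the proof of \cref{thm:main-characterization}). The lemma yields $B\PME C\PME A[I]$. The crucial point is that the combined matrix $B$ depends only on $B_1$, $B_2$ and $S$, not on the unknown $C$, so Step~\ref{algo:reconstruction-with-Q-combine-step} is computable. We also need the denominators $N[s,t]$ and $N[t,s]$ to be nonzero; this holds because $N[s,t]N[t,s]=A[s,t]A[t,s]\neq 0$.

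The main obstacle is this last step. The cut $S$ is certified only for some unknown $C$ and not for $A$ itself, so the argument must avoid any direct claim about $A$. Passing through $C$ and using that PME is inherited by principal submatrices resolves this. We must also check that the recursion terminates with subproblems of size at least $3$, which holds because $|S|,|\comp S|\geq 2$.
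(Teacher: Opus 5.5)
Your proof is correct and follows the same induction as the paper: the base cases go through \cref{lem:reconstruction-size-three-matrix} and \cref{algo:reconstruction-with-no-cut-&-Q}, and the inductive step uses the two recursive calls combined by \cref{cl:reductionStep}. Your version is more careful than the paper's in two places: the paper tacitly treats the output $S$ of \cref{algo:cut-detection-finding-algorithm} as a cut of $A[I]$ itself, whereas you correctly apply \cref{cl:reductionStep} to the unknown $C\PME A[I]$ that actually has $S$ as a cut, and you also catch the $\comp S+t$ versus $\comp S+s$ typo.
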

\begin{proof}
We prove this by induction. 

\paragraph*{Base case ($|I|=3$ or $A[I]$ has no cut).} For $|I|=3$,  by~\cref{lem:reconstruction-size-three-matrix}, the matrix $B\PME A[I]$. If $A$ has no cut, then by~\cref{algo:reconstruction-with-no-cut-&-Q}, the algorithm ensures that $B\PME A$. 
\paragraph*{Induction step ($|I|\geq4$ and $A[I]$ has a cut).} Step~\ref{algo:reconstruction-with-Q-cut-finding} of~\cref{algo:reconstruction-with-Q} finds a cut $S\subseteq I$ of $A[I]$. Let $s\in S$ and $t\in \comp{S}=I\setminus S$. Since $2\leq |S|\leq |I|-2$, both $|S+t|$ and $|\comp{S}+s|$ are at least $3$. Then, from the induction hypothesis, the matrices $B_1\in \F^{|S+t|\times |S+t|}$ and $B_2\in\F^{|\comp{S}+s|\times |\comp{S}+s|}$ constructed in Step~\ref{algo:reconstruction-with-Q-B1-matrix} and Step~\ref{algo:reconstruction-with-Q-B2-matrix}, respectively, satisfy $B_1\PME A[S+t]$ and $B_2\PME A[S+t]$. Then, by~\cref{cl:reductionStep}, the algorithm combines $B_1$ and $B_2$, and returns the matrix such that $B\PME A[I]$.  
\end{proof}
The above claim immediately implies that the output matrix $B$ of~\cref{algo:reconstruction-with-Q} satisfies $B\PME A$.

\subsubsection{Time complexity of~\cref{algo:reconstruction-with-Q}} By~\cref{lem:reconstruction-of-size-four-matrices}, in Step~\ref{algo:reconstruction-with-Q-family-construction}, the family $\submatrixfamily{A}$ can be computed in $\poly(n)$ time. Thereafter, ~\cref{algo:reconstruction-with-Q} invokes the recursive function \textsc{Reconstruction}. 

In a particular invocation of \textsc{Reconstruction} with $|I|\geq 4$, by~\cref{lem:CorrectnessOfCutFinding},~\cref{algo:cut-detection-finding-algorithm} decides whether $A[I]$ has a cut in time $\poly(|I|)$. Moreover, if $A[I]$ has a cut, it also returns a cut $S\subseteq I$ within the same time complexity. If $A[I]$ has no cut, in Step~\ref{algo:reconstruction-with-Q-no-cut-reconstr-step},~\cref{algo:reconstruction-with-no-cut-&-Q} computes the matrix $B$ in $\poly(|I|)$ time. 

In Steps~\ref{algo:reconstruction-with-Q-B1-matrix} and~\ref{algo:reconstruction-with-Q-B2-matrix}, \textsc{Reconstruction} recursively calls itself on inputs $S+t$ and $\comp{S}+s$ where $\comp{S}=I\setminus S$, $s\in S$ and $t\in \comp{S}$. Finally, by~\cref{cl:reductionStep}, Step~\ref{algo:reconstruction-with-Q-combine-step} combines $B_1$ and $B_2$ in time $\poly(|I|)$. Thus, for an input $I$ with $|I|=n$, the time complexity of \textsc{Reconstruction} satisfies $$T(n)=T(n_1)+T(n_2)+\poly(n), \text{ where } n_1+n_2=n+2.$$
Solving this recursion, we obtain that~\cref{algo:reconstruction-with-Q} runs in time polynomial in $n$.

\section{An NC Algorithm for PME: Proof of Theorem \ref{thm:main-NC-algorithm}}\label{sec:NC-algo-for-PME}
In this section, we present an NC algorithm to test principal minor equivalence of two matrices and prove \cref{thm:main-NC-algorithm}. The high level idea is as follows:  given two input matrices $A$ and $B$ of size $n \times n$, we design an NC reduction that transforms the problem of testing whether $A \PME B$ into testing principal minor equivalence for polynomially many pairs $(A', B')$ of matrices with $ A'$ satisfies property $\prop$. 

From~\cite{Csanky76, Borodin83}, we know that the determinant of a matrix can be computed in NC. Combined with \cref{thm:main-characterization}, this implies that principal minor equivalence of matrices $A'$ and $B'$, where $A'$ satisfies $\prop$, can also be tested in NC. Thus, our reduction yields an NC algorithm for testing principal minor equivalence of general matrices $A$ and $B$. 

The NC reduction proceeds via the following steps. 
\begin{description}
\item [\textbf{Step I:}] We perform an NC reduction that reduces the problem of testing principal minor equivalence of two arbitrary matrices to testing principal minor equivalence for at most $n$ instances where the input pairs are irreducible matrices. By \cref{lem:redToIr}, matrices $A$ and $B$ are principal minor equivalent if and only if their maximal irreducible submatrices define the same partition of the index set, and the corresponding principal submatrices are themselves principal minor equivalent.

In a directed graph, we can find strongly connected components in NC~\cite{Gazit88, Cole89}. Hence, using \cref{ob:irreducible-matrix-and-directed-graph}, we can find the maximal irreducible submatrices of the input matrices in NC. If the index sets of the irreducible components of two matrices do not match, we output that $A\nPME B$. Otherwise, we test principal minor equivalence for each corresponding pair of maximal irreducible submatrices, in parallel.

\item [\textbf{Step II:}] In this step, we transform, in NC, an input pair $(A, B)$ of \emph{irreducible} matrices into another pair $(A', B')$ such that all off-diagonal entries of $A'$ are \emph{nonzero}, and $$A\PME B\Longleftrightarrow A'\PME B'.$$ For details, see~\cref{subsec:NC-algo-performing-step-II}.

\item [\textbf{Step III:}] Given a pair $(A, B)$ with all off-diagonal entries of $A$ are \emph{nonzero}, we further transform it, in NC, to another pair $(A', B')$ where $A'$ satisfies property $\prop$, such that $$A\PME B\Longleftrightarrow A'\PME B'.$$ For details, see~\cref{subsec:NC-algo-performing-step-III} 

Currently, we do not know of an NC procedure to construct such a pair $(A', B')$ directly from irreducible matrices $(A, B)$. If such a procedure were available, we could bypass \textbf{Step II} entirely. 
\end{description}
Next, we provide a detailed description of \textbf{Step II} and \textbf{Step III}.
\subsection{Ensuring nonzero off-diagonal entries} 
\label{subsec:NC-algo-performing-step-II}
The following lemma ensures that the \textbf{Step II} of the above-mentioned procedure can be done.
\begin{lemma}
\label{lem:ShiftAdjugatePreservesPME}
Let $M$ and $N$ be two $n\times n$ irreducible matrices over $\mathbb{F}$ and $Y=\diag(y_1,y_2,\dots,y_n)$. Then, 
\begin{enumerate}
\item $M\PME N$ if and only if $\adj{M+Y}\PME \adj{N+Y}$.
\item Moreover, if $|\F|\geq 10n^5$, we can find, in NC, a diagonal matrix $D\in \mathbb{F}^{n\times n}$ (equivalently, a substitution of $Y$) such that $M+D$ and $N+D$ are invertible and $\adj{M+D}$ and $\adj{N+D}$ have nonzero off-diagonal entries.
\end{enumerate}
\end{lemma}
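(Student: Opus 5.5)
For part 1, I would argue as in \cref{lem:pme_A_from_pme_adj_A}, working over the rational function field $\mathbb{K}=\F(y_1,\dots,y_n)$. Put $M_Y=M+Y$ and $N_Y=N+Y$. Both are invertible over $\mathbb{K}$, since their determinants are monic in $y_1\cdots y_n$. Then $\adj{M_Y}=\det(M_Y)\,M_Y^{-1}$, and likewise for $N_Y$.

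For the forward direction, suppose $M\PME N$. Then $\det(M+Y)=\det(N+Y)$ as polynomials, and shifting $Y$ shows that $M_Y\PME N_Y$ over $\mathbb{K}$. PME is preserved under inversion of invertible matrices. This follows either by the variable-inversion computation in the proof of \cref{lem:pme_A_from_pme_adj_A} (with fresh indeterminates $Z$), or from Jacobi's identity $\det(M_Y^{-1}[S])=\det(M_Y[\comp S])/\det(M_Y)$. Applying this gives $M_Y^{-1}\PME N_Y^{-1}$. Scaling both by the common scalar $c=\det M_Y=\det N_Y$ multiplies every $k\times k$ principal minor by $c^k$, so PME is preserved and the adjugates are PME.

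For the converse, suppose $\adj{M_Y}\PME\adj{N_Y}$. Taking the full determinant gives $c_M^{n-1}=c_N^{n-1}$, where $c_M=\det M_Y$ and $c_N=\det N_Y$. Jacobi's identity then gives $\det\adj{M_Y}[S]=c_M^{|S|-1}\det M_Y[\comp S]$. Using $S=[n]-i$ in the rational function field, I would compare the $1\times1$ minors: $\det M_Y[\comp{\{i\}}]=\det N_Y[\comp{\{i\}}]$ follows once $c_M=c_N$ is known. To get $c_M=c_N$, note that both are polynomials that are monic in $y_1\cdots y_n$, and $c_M^{n-1}=c_N^{n-1}$ forces $c_M=\zeta c_N$ with $\zeta$ a root of unity. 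Comparing the leading coefficients gives $\zeta=1$. Now $\det M_Y[\comp S]=\det N_Y[\comp S]$ for all $S$, which is exactly $\det(M+Y)=\det(N+Y)$ on every principal subset. Setting $Y=0$ gives $M\PME N$. The irreducibility hypothesis is not needed for part 1; it matters only in part 2.

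For part 2, I need a $D$ that avoids the zeros of the following polynomials: $\det(M+Y)$, $\det(N+Y)$, and the off-diagonal adjugate entries $\det((M+Y)[\comp j,\comp i])$ and $\det((N+Y)[\comp j,\comp i])$ for $i\ne j$. Each of these is nonzero. The determinants are nonzero by monicity. For the adjugate entries, \cite[Theorem~1]{Hartfiell84} (already invoked in \cref{clm:AplusYsatisfy}) shows that irreducibility makes the off-diagonal entries of $\adj{M+Y}$ nonzero. Each polynomial has degree at most $n$, and there are at most $2n^2+2$ of them, so their product $P$ has degree at most $O(n^3)$.

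The main obstacle is choosing $D$ deterministically in NC. I would use a Kronecker-style univariate substitution $y_i=t^{(n+1)^{i-1}}$, or better, a low-degree curve. Since each factor is multilinear, the substitution $y_i=t^{(n+1)^{i-1}}$ keeps it nonzero, but the resulting degree is exponential. Instead, I would use the fact that each factor is an ROD-like polynomial with individual degree $1$, and substitute $y_i=\alpha^{i}\,t$ plus a shift. A cleaner deterministic route is to take $y_i=t^{w_i}$ with small weights from a hitting set, or simply to restrict to a line $y_i=a_i+b_i t$ on which $P$ is nonzero.

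Concretely, I would try $y_i=t^{i}$, the moment curve, after first arguing via the leading monomial that each factor remains nonzero. For $\det(M+Y)$ this is clear from the top monomial $y_1\cdots y_n$. For an off-diagonal adjugate entry, I would identify a monomial $\prod_{k\in T}y_k$ with the largest index sum, which is unique for $y_i=t^i$ only when $T$ is maximal with a nonzero coefficient. This is where I expect difficulty: I would argue that the support of such a minor is upward-closed in the appropriate matroid sense, so the lexicographically largest set is unique. Given this, the univariate $P(t)$ has degree at most $O(n^4)$, below $|\F|/10$. In parallel, I would try $10n^5$ field values of $t$ and evaluate all the determinants in NC \cite{Csanky76, Borodin83}. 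At most $\deg P$ of these values fail, so at least one $D$ works, and I would choose the first successful one using a parallel prefix computation.
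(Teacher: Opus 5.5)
Your part 1 is correct. The Jacobi-identity argument, with the coefficient of $y_1\cdots y_n$ forcing $\det(M+Y)=\det(N+Y)$, is a self-contained replacement for the paper's citation of \cite[Lemma~4]{Hartfiell84} (for $n\ge 2$).

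The gap is in part 2, at exactly the step you flag. The moment-curve substitution $y_k=t^k$ does not keep the off-diagonal adjugate entries nonzero for irreducible $M$, and your justification fails for two reasons. First, the support of such an entry need not be upward-closed. Second, uniqueness of the lexicographically largest support set is irrelevant, because the weights $w_k=k$ do not realize lexicographic order; only Kronecker-type weights do, and those have exponential degree.

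Here is a concrete counterexample. Take $n=6$ and let $M[5,1]=M[1,4]=M[4,6]=M[5,2]=M[2,3]=M[6,5]=1$ and $M[3,6]=-1$, with all other entries, including the diagonal, zero. Then $G_M$ is strongly connected, so $M$ is irreducible. The only $5\to6$ paths are $5\to1\to4\to6$ and $5\to2\to3\to6$, and a direct expansion gives $\adj{M+Y}[5,6]=y_1y_4-y_2y_3$. This vanishes identically under $y_k=t^k$, so your $P(t)$ is the zero polynomial and no value of $t$ works.

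Your fallbacks do not close the gap either. The known hitting sets for RODs are quasi-polynomial, not NC. Finding a line on which $P\neq0$ is the original problem again.

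The missing idea is to use the graph structure to build, for each polynomial separately, an explicit point where it is nonzero, and only then merge these points. This is what the shortest-path and interpolation ingredients in the paper's sketch (via \cite[Claim~4.1]{Chatterjee25}) provide.

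For the entry of $\adj{M+Y}$ indexed by $(a,b)$, proceed as follows:
\begin{enumerate}
\item Take a shortest path $a=v_0\to v_1\to\cdots\to v_m=b$ in $G_M$; this is computable in NC.
\item Set $y_{v_1}=\cdots=y_{v_{m-1}}=0$ and every other $y_k=\lambda$.
\item The coefficient of $\lambda^{n-m-1}$ is then $\pm\det M[\{a,v_1,\dots,v_{m-1}\},\{v_1,\dots,v_{m-1},b\}]$.
\item A shortest path has no shortcut edges, so it is the unique $a$-to-$b$ path on its vertex set. Hence this minor equals $\pm\prod_p M[v_p,v_{p+1}]\neq 0$, and one of $n+1$ values of $\lambda$ gives a nonzero entry.
\end{enumerate}

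Now pass a Lagrange-interpolation curve of degree $O(n^2)$ through these $O(n^2)$ witness points, together with points witnessing $\det(M+Y),\det(N+Y)\neq0$. Along this curve every factor becomes a nonzero univariate polynomial of degree $O(n^3)$, so the product has degree $O(n^5)$. Scanning $10n^5$ parameter values in parallel, as you planned, then finds a valid $D$; this is where the bound $|\F|\ge 10n^5$ comes from.
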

 \begin{remark}
 When $|\F|\leq 10n^5$, we can construct, in NC, an extension $\mathbb K$ of $\F$ such that $|\mathbb K|>10n^5$ and work over that field. 
 \end{remark}

\begin{proof}[Proof sketch]
The first part of the lemma follows from \cite[Lemma~4]{Hartfiell84}. The second part follows from the observation that the all the steps of construction procedure described in~\cite[Claim~4.1]{Chatterjee25} can be implemented in NC. There are NC algorithms available for finding the shortest path between two vertices in a directed graph~\cite{Han97}, determinant computation~\cite{Csanky76, Borodin83}, polynomial interpolation and evaluation. Thus, the construction in~\cite[Claim~4.1]{Chatterjee25} can be performed in NC. 
\end{proof}

The above lemma combined with~\cite[Lemma~2.9]{Chatterjee25} ensures that given a pair $(A, B)$ of $n\times n$ irreducible matrices, we can compute, in NC, a diagonal matrix $D\in\F^{n\times n}$ such that $$A\PME B \Longleftrightarrow \adj{A+D}\PME \adj{B+D}.$$ After \textbf{Step II}, $\adj{A+D}$ and $\adj{B+D}$ will be our output matrices. Since the determinant of a matrix can be computed in NC~\cite{Csanky76, Borodin83}, given $D$, we can compute $\adj{A+D}$ and $\adj{B+D}$ in NC.

\subsection{Ensuring property $\prop$} 
\label{subsec:NC-algo-performing-step-III}
In \textbf{Step III}, we assume that $A$ and $B$ are two input matrices with all the off-diagonal entries of $A$ are nonzero. We now discuss an NC algorithm to compute a diagonal matrix $D$ such that $\adj{A+D}$ and $\adj{B+D}$ both satisfy property $\prop$.

\begin{lemma}
\label{lem:findDwithQ}
Let $n\geq 5$ and $\mathbb{F}$ be a field of sufficiently large size. Let $A,B\in \mathbb{F}^{n \times n}$ be invertible matrices with non-zero off-diagonal entries and $Y=\diag(y_1,y_2,\dots,y_n)$. Then, in NC, we can find a diagonal matrix $D\in \mathbb{F}^{n\times n}$  with the following properties.
\begin{enumerate}
\item $A+D$ and $B+D$ are invertible.
\item $\adj{A+D}$ and $\adj{B+D}$ have nonzero off-diagonal entries.
\item For any four distinct elements $i,j,k,\ell\in [n]$, $\adj{A+D}$ and $\adj{B+D}$ both satisfy \cref{eq:diag-shift-property} in \cref{clm:satisfying-prop-Q}.
\end{enumerate}
\end{lemma}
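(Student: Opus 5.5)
We will find $D$ by a deterministic, parallel substitution argument applied to a polynomially large family of nonzero polynomials that together certify the three conditions. For each $M\in\{A,B\}$, the conditions of \cref{clm:satisfying-prop-Q} amount to nonvanishing of the following polynomials at $Y=D$: $\det(M+Y)$; the $n^2$ cofactors $\det((M+Y)[\comp{j},\comp{i}])$ for $i\neq j$; and, for each $4$-tuple $i,j,k,\ell$ such that $\det(\adj{M+Y}[\{i,j\},\{k,\ell\}])$ is a nonzero polynomial, that polynomial. By Jacobi's identity (as in \cref{lemma:random-diag-shift-satisfies-R-derandomization}) the last can be replaced, up to the factor $\det(M+Y)$, by $\det((M+Y)[\comp{\{k,\ell\}},\comp{\{i,j\}}])$. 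For the nonzero ones, vanishing at $D$ then gives nothing to check, because the implication in \cref{eq:diag-shift-property} is only required when the left side vanishes. So it suffices to find a $D$ at which every nonzero polynomial in this family, of size $O(n^4)$, is nonzero.

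The first step determines, in NC, which of these polynomials are nonzero. Each is a determinant of a matrix of the form $M'+Y'$ with $Y'$ diagonal (possibly with some zero diagonal positions), so it has degree at most $n$ and is multilinear. One option is to test nonzeroness by evaluating at a fixed explicit set of points. A cleaner option is to note that $\det(M'+Y')$, viewed as a polynomial in the $y$'s, has as coefficients minors of $M'$ with a specified structure. It is nonzero iff some such minor is nonzero, which can be decided via a univariate substitution $y_i=z^{w_i}$ with weights $w_i=2^{i}$. That substitution yields a univariate polynomial of exponential degree, however, so I would instead use the substitution $y_i=\lambda_i z$ with $\lambda_i$ taken from a fixed set. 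The simplest route is probably the following. Each polynomial is nonzero iff $\det(M'+tY'')$ is nonzero for generic $Y''$, and to avoid genericity we can use the fact that the matrices involved (the $A,B$ have nonzero off-diagonal entries and the polynomials are the ones already handled by \cite[Claim~4.1]{Chatterjee25}) admit an explicit Kronecker-style substitution of polynomial degree after restricting to a single variable family.

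The second step is the main obstacle: producing a single point $D$ avoiding all zero sets, in NC and deterministically. The plan is to use the product polynomial $P=\prod f$ over the nonzero members of the family. It has degree $d=O(n^5)$ in $n$ variables and is nonzero. Restrict to the curve $y_i=t^{i}$, or better $y_i=\alpha^{i}t+\beta_i$. Choosing $y_i=t^{(n+1)^{i-1}}$ keeps $P(t)$ nonzero because $P$ is multilinear in each factor, but the degree becomes exponential. Instead I would process the variables one at a time: $P$ restricted to varying $y_1$ has degree at most $d$, so among $d+1$ field values some choice keeps $P$ nonzero as a polynomial in the remaining variables. Sequential fixing is not NC, so I would rather argue directly: each factor $f$ is a determinant of an ROD of the form $M'+Y'$, and the \cite{GT20}-type hitting sets are only quasi-polynomial. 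Hence the NC route must exploit the special structure. For each $f$ separately, $f(tc_1,\dots,tc_n)$ for a fixed vector $c$ of distinct nonzero constants is a univariate polynomial of degree at most $n$ whose leading coefficient is a nonzero combination of top-degree coefficients. I would aim to pick $c$ in $\{1,\dots,\mathrm{poly}(n)\}^n$, in parallel over $\mathrm{poly}(n)$ candidates. For a suitable candidate, the top homogeneous part of every $f$ survives; I expect this can be justified by \cref{lem:Qpropertyhelpertheorm}-style structure, since the top part is a principal-minor-type sum. Then $P(tc)$ is a nonzero univariate polynomial of degree at most $d$, and one tries $d+1$ values of $t$ in parallel, evaluating all determinants in NC \cite{Csanky76,Borodin83} and selecting the first good one, which is an NC prefix operation.

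Finally, I would set $D=\diag(t^*c)$ and verify conditions 1 to 3 directly. Conditions 1 and 2 hold because $\det(M+D)$ and the cofactors are nonzero. Condition 3 holds since, for every tuple with nonzero symbolic $2\times2$ minor, the evaluated minor is nonzero, so the implication is vacuously satisfied. The weakest point of this plan is justifying the choice of the direction $c$ in NC; if that fails, I would fall back on the explicit construction of \cite[Claim~4.1]{Chatterjee25}, extended to the additional $O(n^4)$ minors.
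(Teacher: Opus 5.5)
Your proposal has a genuine gap at the step that carries all the content.

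\textbf{The direction-$c$ step is unjustified.} Your main plan is to pick one direction $c$ from a polynomial-size explicit set so that $f(tc)\not\equiv 0$ for every nonzero $f$ in the family. But $f(tc)\equiv 0$ exactly when every homogeneous component of $f$ vanishes at $c$. These components are multilinear polynomials with possibly exponentially many monomials, whose coefficients are minors of $A$ (or $B$). So choosing such a $c$ deterministically is itself an identity-testing problem of the kind you are trying to solve, and no deterministic NC method is known for it in general. As you note, the \cite{GT20} hitting sets are quasi-polynomial. \cref{lem:Qpropertyhelpertheorm} does not help here: it only characterizes when $\det(M+Z)\equiv 0$ and gives no hitting point. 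The same issue leaves your first step (deciding in NC which members of the family are nonzero) unresolved, since each option you list is either of exponential degree or unspecified.

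\textbf{The missing idea: exploit the explicit matrices one polynomial at a time, then glue.}
\begin{enumerate}
\item \emph{Reduce each minor to a single adjugate entry.} Take $S=\{1,2\}$, $T=\{3,4\}$. Jacobi's identity turns the $2\times 2$ minor condition into nonvanishing of $\det((A+Y)[\comp{T},\comp{S}])$. The first two rows of this matrix contain no variables. Since $a_{1,3}\neq 0$ (this is where the nonzero off-diagonal hypothesis enters), column operations clear the first row. This gives $a_{1,3}\det(\tilde A+\diag(0,y_5,\dots,y_n))$, which is a single off-diagonal entry of the adjugate of an explicit matrix plus a generic diagonal.
\item \emph{Get a good point for that entry.} For such a single cofactor, the explicit construction of \cite[Claim~4.1]{Chatterjee25} is NC-implementable (\cref{lem:ShiftAdjugatePreservesPME}). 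It yields a point $A_{S,T}$ at which the polynomial is nonzero if and only if it is nonzero as a polynomial. This equivalence also removes the need for any separate nonzeroness test.
\item \emph{Combine the points.} You now have $\widetilde D$ (for conditions 1 and 2) and the $O(n^4)$ points $A_{S,T}$, together with the corresponding ones for $B$. Lagrange-interpolate a curve $\gamma(t)$ of polynomial degree passing through all of them. Each relevant polynomial restricted to $\gamma$ is a nonzero univariate polynomial of polynomial degree. Trying polynomially many values of $t$ in parallel therefore produces a single $D$.
\end{enumerate}

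Your fallback sentence gestures toward step 2. It supplies neither the reduction in step 1 nor the interpolation in step 3, and the direction-$c$ argument cannot substitute for the latter.
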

\begin{remark}
More specifically, we require the size of the underlying to be greater than $10n^5$. If $|\F|\leq 10n^5$, we construct, in NC, a field extension $\mathbb K$ of $\F$ of size greater than $10n^5$, and perform our computations over $\mathbb K$.
\end{remark}
\begin{proof}
From~\cref{lem:ShiftAdjugatePreservesPME}, we can compute, in NC, a diagonal matrix $\widetilde D\in \F^{n\times n}$ such that the first two properties of the lemma are satisfied. That is, both $A+\widetilde D$ and $B+\widetilde D$ are invertible, and $\adj{A+\widetilde D}$ has nonzero off-diagonal entries. Next, we focus on satisfying the third property. Let $$\mathcal{T}_A  =\left\{(S,T)\mid S,T\subset [n], |S|=|T|=2,\ S\cap T=\emptyset,\  \rank \adj{A+Y}[S,T] =2\right\}.$$
We consider the following claim.
\begin{claim}
\label{clm:3typeMatrix}
For any $(S, T)\in \mathcal T_A$, we can compute, in NC, a diagonal matrix $A_{S,T}\in \mathbb{F}^{n\times n} $  such that $$\rank \adj{A+A_{S,T}}[S,T]=2 \Longleftrightarrow \rank \adj{A+Y}[S,T]=2.$$
\end{claim}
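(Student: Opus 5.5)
Fix $(S,T)\in\mathcal T_A$ with $S=\{i,j\}$, $T=\{k,\ell\}$. By Jacobi's identity, $\det \adj{A+Y}[S,T]=\pm\det(A+Y)\cdot\det\big((A+Y)[\comp{T},\comp{S}]\big)$. Hence the rank condition $\rank\adj{A+Y}[S,T]=2$ splits into two nonvanishing conditions: $\det(A+Y)\neq 0$ as a polynomial, and $p_{S,T}(Y):=\det\big((A+Y)[\comp T,\comp S]\big)\not\equiv 0$. The same splitting holds at a substitution $Y=D$. So the plan is to construct, in NC, a diagonal $D=A_{S,T}$ at which both $\det(A+D)$ and $p_{S,T}(D)$ are nonzero, whenever both are nonzero polynomials. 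The implication ``$\Leftarrow$'' is then automatic, since a nonzero evaluation forces a nonzero polynomial.

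The key structural observation is that $p_{S,T}$ is a read-once determinant. As in the proof of \cref{clm:AplusYsatisfy}, $(A+Y)[\comp T,\comp S]=A'+Y'$, where $A'=A[\comp T,\comp S]$ and $Y'$ is diagonal with two zero entries and variables $y_m$ for $m\notin S\cup T$. Consequently $p_{S,T}$ is multilinear, and its coefficients are minors of $A'$. Its monomials are $\prod_{m\in U}y_m$ with coefficient $\pm\det A[\comp T\setminus U,\comp S\setminus U]$.

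I would therefore proceed as follows. First, find a nonzero coefficient: pick $U\subseteq [n]\setminus(S\cup T)$ with $\det A[\comp T\setminus U,\comp S\setminus U]\neq 0$. I would choose $U$ of maximal size, so that the remaining minor is as small as possible. Rather than searching over subsets, I would compute in NC the univariate polynomial $q(z)=p_{S,T}(z,\dots,z)$ via determinant interpolation. The top nonzero coefficient of $q$ gives a degree $d$ at which some monomial survives. Second, isolate a single monomial of that degree using a weight assignment $y_m\mapsto z^{w_m}$. The main obstacle is making this deterministic and in NC. I would use the bivariate substitution $y_m=\alpha^{m}z$ (Kronecker-style): the coefficient of $z^d$ is then a polynomial in $\alpha$ of degree at most $n^2$ that is not identically zero, since distinct $U$ give distinct exponent sums only up to collisions. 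The safer option is $y_m=z\,\beta_m$ with $\beta_m$ chosen from a Vandermonde-type family. Third, evaluate at $|\F|>10n^5$ points in parallel and choose a point where both $p_{S,T}$ and $\det(A+\cdot)$ are nonzero. Since each is a nonzero univariate polynomial of degree $\mathrm{poly}(n)\le 10n^5$ after the substitution, some point among polynomially many works, and all these evaluations are determinants, hence in NC.

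Concretely, I would reduce everything to a univariate substitution $y_m=\gamma^{m-1}$ (a moment curve). For a multilinear polynomial in $n$ variables, the pullback $p(\gamma,\gamma^2,\dots)$ can vanish identically even when $p$ does not. To avoid this, I would use the curve $y_m=(\gamma+c)^{-1}$-style substitutions, or better a one-parameter family $y_m=t\lambda_m$ with $\lambda_m=c^{m}$ for a varying $c$. For a suitable $c$ among $n^2+1$ candidates, the top homogeneous part of $p$, evaluated at $(\lambda_m)$, is nonzero, because as a polynomial in $c$ it is a nonzero sum of distinct monomials $c^{\sum_{m\in U}m}$. Here I would pick the lexicographically extreme $U$ via the weight to avoid cancellation. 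The remaining step is a univariate search in $t$. Both searches are polynomial-size parallel determinant evaluations, so $A_{S,T}$ is obtained in NC, and I expect this cancellation issue to be the delicate point.
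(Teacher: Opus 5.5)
Your Jacobi reduction is correct: rank $2$ at $Y=D$ amounts to $\det(A+D)\neq 0$ and $p_{S,T}(D)\neq 0$, and the first condition is harmless. The gap is the central step. You need a deterministic NC point at which $p_{S,T}=\det\big((A+Y)[\comp{T},\comp{S}]\big)$ does not vanish, and none of your devices supplies one.

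\begin{itemize}
\item \textbf{Diagonal substitution.} This can fail outright. Over $\mathbb{Q}$, take $S=\{1,2\}$, $T=\{3,4\}$, $n=6$, $A[\{1,2\},\{3,4\}]$ all ones, $(a_{15},a_{25},a_{53},a_{54})=(1,2,1,2)$, $(a_{16},a_{26},a_{63},a_{64})=(1,2,2,1)$, and every other entry equal to $1$. Then $p_{S,T}=y_5-y_6$, so $q\equiv 0$ although $p_{S,T}\not\equiv 0$.
\item \textbf{Moment curve $y_m=c^m$.} It sends $y^U$ to $c^{\sum_{m\in U}m}$, and these exponents are not distinct (e.g.\ $\{5,8\}$ and $\{6,7\}$). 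So the claim that you get ``a nonzero sum of distinct monomials'' is false as stated. Taking the extreme weight does not help, since it need not be attained by a unique $U$.
\item \textbf{Isolating weights.} Weights that isolate a monomial in every possible support must have pairwise distinct subset sums, so they are exponentially large. That makes the univariate degree, and hence the search over $c$, exponential.
\item \textbf{Other variants.} The ``Vandermonde-type'' family is the same curve, and a random $\lambda$ gives only RNC.
\end{itemize}

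In effect you are trying to derandomize identity testing for a read-once-determinant-type polynomial without using any structure of $A$. No generic polynomial-size derandomization of that kind is known; for read-once determinants only quasi-polynomial hitting sets are known, which is why the paper's black-box derandomizations are quasi-polynomial.

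The missing idea is to exploit the hypothesis that $A$ has nonzero off-diagonal entries, as the paper does. Since $a_{1,3}\neq 0$ is the $(1,1)$ entry of $(A+Y)[\comp{T},\comp{S}]$, column operations clear the rest of the first row. This gives $\det\big((A+Y)[\comp{T},\comp{S}]\big)=a_{1,3}\det(\tilde A+\tilde Y)$ with $\tilde Y=\diag(0,y_5,\dots,y_n)$, so only one diagonal position lacks a variable.

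Such a determinant is, up to sign, an off-diagonal entry of an adjugate $\adj{M+Z}$, and there nonvanishing becomes a reachability question. Take a shortest path in the digraph of $M$. The monomial in the variables off the path has coefficient equal to a single nonzero product of entries, because a shortest path admits no competing path-plus-cycles decomposition on its vertex set. Setting those variables to $t$ and the path variables to $0$ then gives a univariate polynomial with a known nonzero coefficient.

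This is the NC construction behind \cref{lem:ShiftAdjugatePreservesPME} (shortest paths, determinants and interpolation are all in NC). It supplies exactly the deterministic isolation your argument lacks. Your ``choose $U$ with $c_U\neq 0$, set $y_m=t$ on $U$ and $0$ off $U$'' step would go through only once such a $U$ is produced this way.
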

We defer the proof of the claim to the end of the section. Once we have constructed the diagonal matrices $\widetilde D$ and $A_{S,T}$ for $(S, T)\in \mathcal T_A$,  we use a standard technique based on Lagrange interpolation to compute, in NC, a single matrix $D$ which satisfies all three properties of the lemma. For details of this step, see the proof of~\cite[Claim~4.1]{Chatterjee25}\,\footnote{Although the method in \cite[Claim~4.1]{Chatterjee25} is presented in the context of a polynomial-time algorithm, it can also be implemented in NC.}. 
\end{proof}
We now describe the proof of~\cref{clm:3typeMatrix}.
\begin{proof}[Proof of~\cref{clm:3typeMatrix}]
Note that $A+Y$ is nonsingular. Therefore, applying Jacobi's identity~\cite[Page~21]{Gantmacher59}, $$\det\left(\adj{A+Y}[S,T]\right)=0 \Longleftrightarrow \det\left((A+Y)[\comp{T},\comp{S}]\right)=0.$$ Without loss of generality, let $S=\{1,2\}$ and $T=\{3,4\}$. Let $A=(a_{i,j})_{i,j\in[n]}$. Then,
\[
A'=(A+Y)[\comp{T},\comp{S}]= 
\begin{pmatrix} a_{1,3} & a_{1,4} & a_{1,5} & \cdots & a_{1,n}\\
a_{2,3} & a_{2,4} & a_{2,5} & \cdots & a_{2,n}\\
a_{5,3} & a_{5,4} & a_{5,5}+y_5 & \cdots & a_{5,n}\\
    \vdots & & & \ddots &\\
a_{n,3} & a_{n,4} & a_{n,5} & \cdots & a_{n,n}+y_n 
\end{pmatrix}
\]
We now perform some column operations on $A'$. Consider that the rows and the columns of $A'$ are indexed by $[n-2]$. Note that $a_{1, 3}=A'[1, 1]\neq 0$. We subtract $A'[1,i]/A'[1,1]$ times the first column from $i$-th column of $A'$ for each $i\in \{2,3,\dots n-2\}$. This operations zero out all the entries in the first row of $A'$, except for the first entry, and can be  performed in NC. 

Let $\widetilde Y=\diag (0, y_5, y_6,\ldots, y_n)$. After this column operations, let $\tilde A+\widetilde Y$ be the resulting principal submatrix whose rows and columns are indexed by $\{2,3,\dots,n-2\}$. Then, we have $$\det(A')=a_{1,3}\cdot \det(\tilde A+\widetilde Y),$$ and so the determinant of $A'$ is nonzero if and only if $\det(\tilde A+\widetilde Y)$ is nonzero. 

Therefore, it suffices to find a substitution $(a_1, a_2, \ldots, a_{n-4})\in\F^{n-4}$ of $(y_5, y_6, \ldots, y_n)$ such that $$\det\left(\tilde A+\widetilde Y\right)\neq 0\Longrightarrow \det\left(\tilde A+\widetilde D\right)\neq 0$$ where  $\widetilde D=\diag(0,a_1, a_2, \ldots, a_{n-4})$. This problem reduces to the following problem: Given an $M\in\F^{(n-1)\times (n-1)}$, find a diagonal matrix $C\in\F^{(n-1)\times (n-1)}$ such that $$\adj{M+Z}[1,2 ]\neq 0\Longrightarrow \adj{M+C}[1,2]\neq 0,$$ where $Z=\diag(z_1, z_2, \ldots, z_{n-1})$. This problem is solved in NC by~\cref{lem:ShiftAdjugatePreservesPME}. 

Thus, for all $(S, T)\in\mathcal T_A$, we can compute the desired diagonal matrix $A_{S, T}$ in NC.  
\end{proof}
Since both $(A+D)$ and $(B+D)$ are invertible,~\cite[Lemma~2.9]{Chatterjee25} implies that $$A\PME B\Longleftrightarrow \adj{A+D}\PME\adj{B+D}.$$ Hence, from~\cref{thm:main-characterization}, 
$$A\PME B \Longleftrightarrow \kPME{\adj{A+D}}{\adj{B+D}}{4}.$$
We can compute $\adj{A+D},\adj{B+D}\in \mathbb{F}^{n\times n}$~\cite{Csanky76, Borodin83} and check whether $\kPME{\adj{A+D}}{\adj{B+D}}{4}$. Therefore, we can test whether $A\PME B$ in NC. It completes the proof of~\cref{thm:main-NC-algorithm}.

\paragraph{Acknowledgements} We would like to thank anonymous reviewers for their valuable comments.

\newcommand{\etalchar}[1]{$^{#1}$}

\appendix
\section{Missing Proofs from~\cref{sec:PME-upto-4-characterization}}
\label{appendix:missing-proofs-from-prprelim}

\subsection{Proof of~\cref{lem:small-matrices-DE-to-cut-4-base-case}}
\label{subsec:proof-of-lem:small-matrices-DE-to-cut-4-base-case}
\repstatement{lem:small-matrices-DE-to-cut-4-base-case}{
Let $A$ and $B$ be two $4\times 4$ matrices such that all off-diagonal entries of $A$ are nonzero, and $A\PME B$. Let $X\subseteq \{1,2,3,4\}$ with $|X|=2$, $\calD_1(A,B)=X$ and $\calD_2(A,B)=\comp{X}$. Then, $X$ is a cut of $A$.} 
\begin{proof}
Without loss of generality, we can assume that $\calD_1(A, B)=\{1,2\}$ and $\calD_2(A, B)=\{3,4\}$. Since $\calD_1(A,B)=\{1,2\}$, there exists a matrix $A'$ that is diagonally similar to $A$ and $A'[\{j, 3,4\}]=B[\{j,3,4\}]$ for $j=1,2$. Furthermore, without loss of generality, we may work with any matrix that is diagonally similar to $A$. Thus,  we assume that $$A[\{j,3,4\}]=B[\{j,3,4\}] \text{ for } j=1,2.$$ This implies that $A$ and $B$ of the following form:
\begin{equation}
\label{eqn:small-matrices-DE-to-cut-4-base-case-one}
A=
\begin{pmatrix}
a_{11}  &  a_{12}   & a_{13}    & a_{14}\\
a_{21}  &  a_{22}   & a_{23}    & a_{24}\\
a_{31}  &  a_{32}   & a_{33}    & a_{34}\\
a_{41}  &  a_{42}   & a_{43}    & a_{44}
\end{pmatrix}
\ \ \text{ and }\ \ 
B=
\begin{pmatrix}
a_{11}  &  b_{12}   & a_{13}    & a_{14}\\
b_{21}  &  a_{22}   & a_{23}    & a_{24}\\
a_{31}  &  a_{32}   & a_{33}    & a_{34}\\
a_{41}  &  a_{42}   & a_{43}    & a_{44}
\end{pmatrix}
\end{equation}

Let $C$ be a $3\times 3$ diagonal matrix such that for all $i=1,2,3$, $C[i,i]=c_i$ with $c_1=1$. Similarly, let $D$ be a $3\times 3$ diagonal matrix such that for all $i=1,2,3$, $D[i,i]=d_i$ with $d_1=1$. Moreover, assume that 
\begin{equation}
\label{eqn:small-matrices-DE-to-cut-4-base-case-two}
B[\{1,2,3\}]^T=C^{-1}\cdot A[\{1,2,3\}]\cdot C\ \ \text{ and }\ \ B[\{1,2,4\}]^T=D^{-1}\cdot A[\{1,2,4\}]\cdot D
\end{equation}
As a consequence, $c_2=d_2$. Let $e_1=1, e_2=c_2=d_2, e_3=c_3, e_4=d_4$. Thus, from~\cref{eqn:small-matrices-DE-to-cut-4-base-case-one} and~\cref{eqn:small-matrices-DE-to-cut-4-base-case-two}, $A$ and $B$ are of the following form:
\[
A=
\begin{pmatrix}
a_{11}     &  a_{12}                  & a_{13}    & a_{14}\\
&&&\\
a_{21}     &  a_{22}                  & a_{23}    & a_{24}\\
&&&\\
a_{13}e_3  &  \frac{a_{23}e_3}{e_2}   & a_{33}    & a_{34}\\
&&&\\
a_{14}e_4  &  \frac{a_{24}e_4}{e_2}   & a_{43}    & a_{44}
\end{pmatrix}
\ \ \text{ and }\ \ 
B=
\begin{pmatrix}
a_{11}     &  \frac{a_{21}}{e_2}      & a_{13}    & a_{14}\\
&&&\\
a_{12}e_2  &  a_{22}                  & a_{23}    & a_{24}\\
&&&\\
a_{13}e_3  &  \frac{a_{23}e_3}{e_2}   & a_{33}    & a_{34}\\
&&&\\
a_{14}e_4  &  \frac{a_{24}e_4}{e_2}   & a_{43}    & a_{44}
\end{pmatrix}
\]
Since $A\PME B$,
\begin{align*}
0 &= \det(A)-\det(B)\\
  &= (a_{12}a_{23}a_{34}a_{41}-a_{14}a_{43}a_{32}b_{21}) + (a_{14}a_{43}a_{32}a_{21}-b_{12}a_{23}a_{34}a_{41})\\
  &\ \ \ \ +(a_{12}a_{24}a_{43}a_{31}-a_{13}a_{34}a_{42}b_{21})+(a_{13}a_{34}a_{42}a_{21}-b_{12}a_{24}a_{43}a_{31})\\
  &= a_{12}a_{23}a_{14}(a_{34}e_4-a_{43}e_3)-\frac{a_{14}a_{23}a_{21}}{e_2}(a_{34}e_4-e_3a_{43})\\
  &\ \ \ \ -a_{13}a_{24}a_{12}(a_{34}e_4-a_{43}e_3)+\frac{a_{13}a_{24}a_{21}}{e_2}(a_{34}e_4-a_{43}e_3)\\
  &=(a_{34}e_4-a_{43}e_3)\left(a_{12}a_{23}a_{14}-\frac{a_{14}a_{23}a_{21}}{e_2}-a_{13}a_{24}a_{12}+\frac{a_{13}a_{24}a_{21}}{e_2}\right)\\
  &= (a_{34}e_4-a_{43}e_3)(a_{13}a_{24}-a_{23}a_{14})\left(\frac{a_{21}}{e_2}-a_{12}\right)
  \end{align*}
We now divide the proof into following three cases.
\begin{enumerate}
\item \textbf{Case I ($a_{34}e_4-a_{43}e_3=0$):} In this case, it follows that $B^T=E^{-1}AE$ where $E$ is a diagonal matrix such that $E[i,i]=e_i$ for all $i=1,2,3,4$. Therefore, $\calD_2(A,B)=\{1,2,3,4\}$, which contradicts the assumptions of the lemma.

\item \textbf{Case II ($\frac{a_{21}}{e_2}-a_{12}=0$):} This implies that $A=B$, which contradicts the assumption that $\calD_1(A,B)=\{1,2\}$.

\item \textbf{Case III ($a_{13}a_{24}-a_{23}a_{14}=0$):} In this case, it follows that $\{1,2\}$ is a cut of $A$.
\end{enumerate}
This concludes the proof of the lemma.
\end{proof}

\subsection{Proof of~\cref{lem:cut-decomposition}}
\label{subsec:proof-of-lem:cut-decomposition}
\repstatement{lem:cut-decomposition}{
Let $A$ and $B$ be two $n\times n$ matrices over a field $\F$ with nonzero off-diagonal entries. Let $S\subseteq [n]$ be a cut of both $A$ and $B$. Consider any two indices $s,t$ such that $s\in S$ and $t\in \comp{S}$. Then $A \PME B$ if and only if $A[S+t] \PME B[S+t]$ and $A[\comp{S} + s] \PME B[\comp{S} + s]$. Furthermore, if $S$ is a minimal cut of $A$, then $A[S+t]$ has no cut.
}

\begin{proof}
The “only if” direction is straightforward. We now prove the converse. The spirit of our argument follows that of~\cite[Lemma~2.12]{Chatterjee25}. Without loss of generality, assume that $S=[|S|]$, $s=|S|-1$ and $t=|S|+1$. Consider the following definitions.
\begin{align*}
p=A[t, S], \ \ &\tilde p=B[t, S],\ \ q^T=\frac{A[s, \comp{S}]}{A[s, t]},\ \ \tilde q^T=\frac{B[s, \comp{S}]}{B[s, t]},\ \text{ and }\\
u=A[s, \comp{S}], \ \ & \tilde u=B[s, \comp{S}],\ \ v^T=\frac{A[t, S]}{A[t, s]},\ \text{ and }\ \tilde v^T=\frac{B[t, S]}{B[t, s]}.
\end{align*}
Therefore, $A$ and $B$ can be written as follows: 
\[
A=
\begin{blockarray}{ccc}
         & S          & \comp{S}\\
\begin{block}{c(cc)}
S        & M          & p\cdot q^T\\
&&\\
\comp{S} & u\cdot v^T & N\\
\end{block}
\end{blockarray} 
\:\:\text{ and }\:\:
B=
\begin{blockarray}{ccc}
         & S          & \comp{S}\\
\begin{block}{c(cc)}
S        & \widetilde M          & \tilde p\cdot \tilde q^T\\
&&\\
\comp{S} & \tilde u\cdot \tilde v^T & \widetilde N\\
\end{block}
\end{blockarray}
\]

Let $X\subseteq [n]$. Observe that if $X$ is a subset of $S+t$ or $\comp{S}+s$, then $\det(A[X])= \det(B[X])$. Now consider that  $X=X_1\sqcup X_2$ such that $X_1$ and $X_2$ are nonempty subsets of  $S$ and $\comp{S}$, respectively.  Next, we prove that $\det(A[X])=\det(B[X])$.

Assume that the coordinates $p, \tilde p, v, \tilde v$ are indexed by $S$ and the coordinates of $q, \tilde q, u, \tilde u$ are indexed by $\comp{S}$. By $p_{X_1}, q_{X_2}, u_{X_2}$ and $v_{X_1}$, we denote the projection of the respective vectors on the respective coordinates. Similarly, we can define the vectors $\tilde p_{X_1}, \tilde q_{X_2}, \tilde u_{X_2}$ and $\tilde v_{X_1}$. Let $A'$ and $B'$ denote the submatrices $A[X]$ and $B[X]$, respectively. Then, 
\[
A'=
\begin{pmatrix}
M[X_1]                 & p_{X_1}\cdot q_{X_2}^T\\
&\\
u_{X_2}\cdot v_{X_1}^T & N[X_2]
\end{pmatrix}
\:\:\text{ and }\:\: 
B'=
\begin{pmatrix}
\tilde M[X_1]                 & \tilde p_{X_1}\cdot \tilde u_{X_2}^T\\
&\\
\tilde q_{X_2}\cdot \tilde v_{X_1}^T & \tilde N[X_2]^T
\end{pmatrix}
\]


Let $\ell=|X|$, $k=|X_1|$, $K=[k]$ and $\comp{K}=\{k+1, k+2, \ldots, \ell\}$. Suppose that the rows and columns of $A'$ and $B'$ are indexed by $[\ell]$, and the rows and columns of $M[X_1]$ and $\widetilde M[X_1]$ are indexed by $K$.  For each $i\in K$, let $M_i$ denote the $k\times k$ matrix obtained by removing $i$-th column of $M[X_1]$ and appending $p_{X_1}$ as the $k$-th column. For $j\in \comp{K}$, let $N_j$ denote the $(l-k)\times (l-k)$ matrix obtained by removing $j$-th column of $N[X_2]$ and adding $u_{X_2}$ as the first column. Similarly, for all $i\in K$ and $j\in \comp{K}$, we can define the matrices $\widetilde M_i$ and $\widetilde N_j$ from $\widetilde M[X_1]$ and $\widetilde N[X_2]$, respectively. 

Using the Generalized Laplace Theorem (see~\cite[Theorem~3.1]{Ahmadieh23}),  $\det(A')$ can be written as follows.
$$\det(A') = \sum_{T\subseteq [\ell],\, |T|=k} (-1)^{\sum K+\sum T}\det(A'[K, T])\det(A'[\comp{K},\comp{T}]).$$
Note that for all $T\subset [\ell]$ with $|T\cap \comp{K}|\geq 2$, the submatrix $A'[K, T]$ is not full rank since $\rank(A'[K,\comp{K}])$ $\leq 1$. Therefore, for all such $T\subseteq [\ell]$ with $|T|=k$,  $\det(A'[K, T])=0$. This implies that 
$$\det(A') = \det(M[X_1])\det(N[X_2]) + \sum_{i\in K,j\in \comp{K}}(-1)^{j-i}\det(A'[K, K-i+j])\det(A'[\comp{K},\comp{K}+i-j]).$$
Observe that for all $i\in K$ and $j\in \comp{K}$, $$\det(A'[K, K-i+j])=q_{X_2}[j-k]\det(M_i)\ \text{ and }\ \det(A'[\comp{K}, \comp{K}+i-j])=v_{X_1}[i]\det(N_j).$$ Therefore, from the above two equations,

\begin{align*}
\det(A') &= \det(M[X_1])\det(N[X_2]) + \sum_{i\in K,j\in \comp{K}}(-1)^{j-i}v_{X_1}            [i]q_{X_2}[j-k]\det(M_i)\det(N_j)\\
        &= \det(M[X_1])\det(N[X_2])+\left(\sum_{i\in K}(-1)^{i} v_{X_1}[i]\det(M_i)\right)\left(\sum_{j\in \comp{K}}(-1)^{j} q_{X_2}[j-k]\det(N_j)\right)
\end{align*}
Similarly, using the Generalized Laplace Theorem (see~\cite[Theorem~3.1]{Ahmadieh23}), we obtain that 
$$\det(B')=\det(\widetilde M[X_1])\det(\widetilde N[X_2])+\left(\sum_{i\in K}(-1)^{i} \tilde v_{X_1}[i]\det(\widetilde M_i)\right)\left(\sum_{j\in \comp{K}}(-1)^{j} \tilde q_{X_2}[j-k]\det(\widetilde N_j)\right)$$ 
Since $A[S+t]\PME B[S+t]$ and $A[\comp{S}+s]\PME B[\comp{S}+s]$, $\det(M[X_1])\cdot \det(N[X_2])$ is equal to $\det(\widetilde M[X_1])\cdot \det(\widetilde N[X_2])$. Therefore, $\det(A')=\det(B')$ if and only if the second terms of the above expressions for $\det(A')$ and $\det(B')$ are equal. 


Let $P$ and $\widetilde P$ be two $(|X_2|+1)\times (|X_2|+1)$ matrices defined as follows:
\[
P=
\begin{pmatrix}
0       & q_{X_2}^T\\
u_{X_2} & N[X_2]
\end{pmatrix}
\:\:\text{ and }\:\:
\widetilde P=
\begin{pmatrix}
0       & \tilde q_{X_2}^T\\
\tilde u_{X_2} & \widetilde N[X_2]
\end{pmatrix}
\]
Then, 
$$(-1)^k\det(P)      =  \sum_{j\in\comp{K}}(-1)^j q_{X_2}[j-k]\det(N_j) \:\:\text{ and }\:\: (-1)^k\det(\widetilde P) =  \sum_{j\in\comp{K}}(-1)^j \tilde q_{X_2}[j-k]\det(\widetilde N_j).$$
Similarly, let $R$ and $\widetilde R$ be two $(|X_1|+1)\times (|X_1|+1)$ matrices defined as follows:
\[
R=
\begin{pmatrix}
M[X_1]      & p_{X_1}\\
v_{X_1}^T & 0
\end{pmatrix}
\:\:\text{ and }\:\:
\widetilde R=
\begin{pmatrix}
\widetilde M[X_1]      & \tilde p_{X_1}\\
\tilde v_{X_1}^T & 0
\end{pmatrix}
\]
Then, 
$$(-1)^k \det(R)            =  \sum_{i\in K}(-1)^i v_{X_1}[i]\det(M_i)\:\: \text{ and }\:\: (-1)^k \det(\widetilde R) =  \sum_{i\in K}(-1)^i \tilde v_{X_1}[i]\det(\widetilde M_i)$$
Thus, considering the above expressions for $\det(A')$ and $\det(B')$, we get that $$\det(A')=\det(B')\Longleftrightarrow \det(P)\det(R)=\det(\widetilde P)\det(\widetilde R).$$  

From the definitions of $q_{X_2}, u_{X_2}, \tilde q_{X_2}, \tilde u_{X_2}$, we can write that 
\[
A[X_2+s]=
\begin{pmatrix}
A[s,s]      & A[s, t]\cdot q_{X_2}^T\\
u_{X_2} & N[X_2]
\end{pmatrix}
\:\:\text{ and }\:\:
B[X_2+s]=
\begin{pmatrix}
B[s,s]      & B[s, t]\cdot \tilde q_{X_2}^T\\
\tilde u_{X_2} & \widetilde N[X_2]
\end{pmatrix}
\]
Since $A[\comp{S}+s]\PME B[\comp{S}+s]$, $\det(A[X_2+s])=\det(B[X_2+s])$. Thus, 
\begin{equation}
\label{eqn:decomp_bottom-part}
A[s, t]\det(P)=B[s, t]\det(\widetilde P)  
\end{equation}

From the definitions of $p_{X_1}, v_{X_1}, \tilde p_{X_1}, \tilde v_{X_1}$, we obtain the following: 
\[
A[X_1+t]=
\begin{pmatrix}
M[X_1]      & p_{X_1}\\
A[t, s]\cdot v_{X_1}^T & A[t,t]
\end{pmatrix}
\:\:\text{ and }\:\:
B[X_1+t]=
\begin{pmatrix}
\widetilde M[X_1]      & \tilde p_{X_1}\\
A[t, s]\cdot \tilde v_{X_1}^T & B[t,t]
\end{pmatrix}
\]
Since $A[S+t]\PME B[S+t]$, $\det(A[X_1+t])=\det(B[X_1+t])$. Thus, 
\begin{equation}
\label{eqn:decomp_top-part}
A[t, s]\det(R)=B[t, s]\det(\widetilde R).
\end{equation}
Since $A[S+t]\PME B[S+t]$, $A[\{s, t\}]\PME B[\{s, t\}]$. This combined with~\cref{eqn:decomp_bottom-part} and~\cref{eqn:decomp_top-part} imply that 
$\det(P)\det (R)=\det(\widetilde P)\det(\widetilde R)$. Thus, $\det(A')=\det(B')$. This completes the proof of the lemma.

The second part of the lemma directly follows from~\cite[Lemma~3.3]{Chatterjee25}.
\end{proof}

\subsection{Proof of~\cref{lem:common-cut}}
\label{subsec:proof-of-lem:common-cut}
\repstatement{lem:common-cut}{
Let $A$ and $B$ be two $n \times n$ matrices with nonzero off-diagonal entries and $\kPME{A}{B}{4}$. Let $S$ be a minimal cut of $A$ of size greater than $2$. Then, $S$ is also a cut of $B$.
}
\begin{proof}
Fix an element $s \in \comp{S}$. We will show that for all $t \in \comp{S+s}$, the set $T_t := \{s, t\}$ is a cut in $B[S \cup T_t]$. This will imply that $S$ is a cut in $B$.

Since $S$ is a minimal cut in $A$ of size greater than two, by~\cite[Lemma~3.3]{Chatterjee25}, both $A[S + {s}]$ and $A[S + {t}]$ have no cuts. By assumption, $\kPME{A}{B}{4}$ holds. Then, applying \cref{thm:PME-for-no-cut}, we get $A \PME B$. Therefore, by~\cref{lem:no-cut-to-PME}, we conclude that $A[S + {s}] \DE B[S + {s}]$ and $A[S + {t}] \DE B[S + {t}]$. Consequently, both $B[S + {s}]$ and $B[S + {t}]$ have no cuts.

Now, suppose for contradiction that $T_t$ is not a cut in $B[S \cup T_t]$. Note that $T_t$ is a cut in $A[S \cup T_t]$ of size two. Then, by~\cref{lem:Cut-and-PME-upto-4}, there exists a cut $X \subseteq S \cup T_t$ in the matrix $B[S \cup T_t]$ such that $s \in X$ and $t \notin X$.

Since $|S \cup T_t| \geq 5$, either $|X| > 2$ or the size of the complement $\widetilde{X} := (S \cup T_t) \setminus X$ is greater than 2.
\begin{itemize}
\item If $|X| > 2$, then $X - {s}$ is a cut in $B[S + {t}]$, since $X$ is a cut in $B[S \cup T_t]$ and $s \in X$.

\item Otherwise, if $|\widetilde{X}| > 2$, then $\widetilde{X} \setminus {t}$ is a cut in $B[S + {s}]$.
\end{itemize}
In both cases, we obtain a contradiction, as we have already shown that $B[S \cup {s}]$ and $B[S + {t}]$ have no cuts.

Thus, $T_t$ is a cut in $B[S \cup T_t]$ for all $t \in \comp{S+s}$. This completes the proof.
\end{proof}

\section{Can property~$\prop$ be dropped?}
\label{appendix:necessity-of-Q}
{\em We here construct a pair of matrices $A,B\in\F^{n\times n}$ such that 
\begin{itemize}
\item $A$ \emph{does not} satisfy property $\prop$, 
\item $\kPME{A}{B}{4}$ is satisfied, but $A$ is \emph{not} principal minor equivalent to $B$.
\end{itemize}

We first  relax that the first condition of property $\prop$, that is, some off-diagonal entries of $A$ can be zero. Suppose that the matrices $A$ and $B$ are defined as follows: $$A[i, i+1]=B[i, i+1]=1 \text{ for all } i\in[n-1],\ A[n,1]=1,\ B[n,1]=2, $$ and the other entries of $A$ and $B$ are zero. Then, it is not difficult to verify that $\kPME{A}{B}{n-1}$ but $A\nPME B$. Thus, if we allow zero off-diagonal entries in $A$ and $B$, then $\kPME{A}{B}{4}$ does not imply $A\PME B$.   

 We now assume that off-diagonal entries of $A$ and $B$ are nonzero, but $A$ does not satisfy property $\prop$, that is, $A$ violates the second condition of property $\prop$. Suppose that $A$ is an $n\times n$ matrix of the following form:
\begin{equation}
\label{eqn:counter-example-one}
A=
\begin{pmatrix}
1                &  1                & 1                & 1          & a_{15}      & a_{16} &\cdots & a_{1n}\\
2                &  1                & 1                & 1          & a_{25}      & a_{26} & \cdots  & a_{2n}\\
1                &  1                & 1                & 2          & a_{35}      & a_{36} & \cdots  & a_{3n}\\
1                &  1                & 1                & 1          & a_{45}      &  a_{46} & \cdots  & a_{4n}\\
a_{15}           &  a_{25}           & a_{35}           & a_{45}     & 1           & a_{56} & \cdots  & a_{5n}\\
a_{16}           &  a_{26}           & a_{36}           & a_{46}     & a_{56}      & 1 & \cdots  & a_{6n}\\
\vdots           &  \vdots           & \vdots           & \vdots     &\vdots    &\vdots   &\ddots  & \vdots\\
a_{1n}           &  a_{2n}           & a_{3n}           & a_{4n}     & a_{5n}      & a_{6n} & \cdots & 1\\
\end{pmatrix}
\end{equation}
Then, we can show the following claim.
\begin{claim}
\label{clm:counter-example-tool}
Let $A$ be an $n\times n$ matrix over a field $\F$ with nonzero off-diagonal entries, and assume that$A$ is of the form given in~\cref{eqn:counter-example-one}. Suppose that $A$ has no cut $X$ such that $\{1,2\}\subseteq X$ and $\{3,4\}\subseteq \comp{X}$. Let $B$ be another $n\times n$ matrix such that 
\begin{equation}
\label{eqn:counter-example-two}
B[i,j]=A[i,j]\ \forall\, (i,j)\notin\{(1,2), (2,1)\},\ \ \   B[1,2]=A[2,1]=2, \text{ and } B[2,1]=A[1,2]=1
\end{equation}
Then, $\kPME{A}{B}{4}$, but $A\nPME B$.
\end{claim}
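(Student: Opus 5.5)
Two things have to be shown: that $\kPME{A}{B}{4}$ holds, and that $\det A \neq \det B$ or, more generally, that some principal minor differs. I would organise both around the observation that $A$ and $B$ differ only in the transposed pair $(1,2),(2,1)$.

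\emph{Step 1: minors of order at most $4$.} A principal minor $\det A[S]$ with $\{1,2\}\not\subseteq S$ involves neither entry $(1,2)$ nor $(2,1)$, so it agrees with $\det B[S]$. Order-$2$ minors agree because $A[1,2]A[2,1]=B[1,2]B[2,1]=2$.

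For $S=\{1,2,k\}$ the only terms that change are the two $3$-cycles, $a_{12}a_{2k}a_{k1}+a_{1k}a_{k2}a_{21}$. For $S=\{1,2,k,l\}$ the changed terms are the $3$-cycles through $1,2$ multiplied by the diagonal entry of the fourth index, together with the $4$-cycles containing the arc $1\to2$ or $2\to1$.

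\emph{Step 2: $3\times 3$ and $4\times4$ checks.} The plan is to prove a Laplace-type identity. Write $\det A[S]-\det B[S]$ as $(a_{12}-a_{21})$ times a combination of entries of the rows and columns indexed by $1$ and $2$. For $|S|=3$ this difference equals $(a_{12}-a_{21})(a_{2k}a_{k1}-a_{1k}a_{k2})$. The form \eqref{eqn:counter-example-one} is built so that columns $1,2$ and rows $1,2$ agree in coordinates $3,4$ and, by symmetry, in coordinates $\ge5$. I need to check carefully that $a_{2k}a_{k1}=a_{1k}a_{k2}$ for every $k$.

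For $|S|=4$, the $4$-cycle terms yield a factor $(a_{12}-a_{21})$ times a $2\times2$ determinant built from rows/columns $1,2$ and indices $k,l$. That determinant vanishes when $\{k,l\}\subseteq\{3,4\}$ because the relevant entries are all $1$. For indices $\ge5$ it vanishes because rows $1,2$ and columns $1,2$ restricted to $\{k,l\}$ are proportional.

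\emph{Step 3: failure of full PME.} If $A\PME B$, I would argue as follows. Entrywise, $B$ equals $A$ except on $\{1,2\}$, and $B[\nsubset{j}]=A[\nsubset{j}]$ for $j=1,2$, so $\calD_1(A,B)\supseteq\{1,2\}$. The symmetric structure gives $B^T$ diagonally similar to $A$ on $\nsubset{3}$ and $\nsubset{4}$ (by the design of the $2$'s at $(1,2)$ and $(3,4)$), so $\calD_2\supseteq\{3,4\}$.

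Suppose for contradiction that $A\PME B$. Then \cref{lem:small-matrices-DE-to-cut-4} does not apply, since $A$ lacks property $\prop$. Instead I would redo its argument directly: the computation in \cref{lem:small-matrices-DE-to-cut-4-base-case} generalises, and $\det A-\det B$ factors as $(a_{12}-a_{21})$ times an expression that vanishes only if there is a cut $X\supseteq\{1,2\}$ with $\{3,4\}\subseteq\comp X$. Equivalently, I would invoke Loewy's \cite[Theorem~2]{Loewy86}, which for dense PME matrices with $\calD_1=\{1,2\}$ and $\calD_2=\{3,4\}$ forces such a cut. That contradicts the hypothesis of the claim.

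\emph{Main obstacle.} The hardest part is Step 3, verifying that $\calD_1(A,B)$ and $\calD_2(A,B)$ are exactly $\{1,2\}$ and $\{3,4\}$ so that Loewy's theorem applies. I would first try citing \cite[Theorem~2]{Loewy86} verbatim, checking its hypotheses. Nondegeneracy is needed: $A\nDS B$ because $a_{12}\ne a_{21}$, and $A\nDS B^T$ because of the $2$ at $(3,4)$.
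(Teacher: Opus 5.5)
Your Step~3 follows the paper: pin down $\calD_1(A,B)=\{1,2\}$ and $\calD_2(A,B)=\{3,4\}$, then invoke Loewy's Theorem~2 to produce a forbidden cut. Three small points there.
\begin{itemize}
\item To pass from $A\nDS B$ and $A\nDS B^T$ to these exact equalities you need \cref{lem:partial-DS-to-complete-DS}: three indices in $\calD_1$ would force $A\DS B$, and similarly for $\calD_2$.
\item Non-similarity should be witnessed by a cycle product, e.g.\ $a_{12}a_{23}a_{31}=1\neq 2=b_{12}b_{23}b_{31}$, not by $a_{12}\neq a_{21}$ alone, since a diagonal similarity can rescale $a_{12}$.
\item Your alternative of ``generalising the base-case computation'' to all $n$ is essentially the content of Loewy's theorem, so keep the citation.
\end{itemize}

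The genuine gap is in Step~2. Use the symmetry $a_{ki}=a_{ik}$ for $i\in\{1,2\}$ and $k\notin\{1,2\}$, on which your $3\times 3$ identity already relies. Once the $3$-cycle terms cancel, the $4$-cycle terms for $S=\{1,2,k,l\}$ give
\[
\det A[S]-\det B[S]=(a_{12}-a_{21})(a_{kl}-a_{lk})\det A[\{1,2\},\{k,l\}] .
\]
You dropped the factor $a_{kl}-a_{lk}$.
\begin{itemize}
\item For $\{k,l\}=\{3,4\}$ the determinant vanishes, as you say.
\item For every other pair, including the mixed pairs $k\in\{3,4\}$, $l\ge 5$ that your case split omits, it is the factor $a_{kl}-a_{lk}$ that vanishes.
\end{itemize}
Your stated reason for the other pairs, that $A[\{1,2\},\{k,l\}]$ has proportional rows, is false and must fail for some pair. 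If all these $2\times 2$ minors were zero, then $\rank A[\{1,2\},\comp{\{1,2\}}]\le 1$, and by the same symmetry $\rank A[\comp{\{1,2\}},\{1,2\}]\le 1$. Then $\{1,2\}$ would be a cut separating $\{1,2\}$ from $\{3,4\}$, contradicting the hypothesis of the claim. Concretely, in the paper's explicit construction with $n\ge 6$, $A[\{1,2\},\{5,n\}]=\begin{pmatrix}1&2\\1&1\end{pmatrix}$ has rank $2$.

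The paper avoids this computation entirely. You already noticed in Step~3 that $A[\nsubset{j}]=B[\nsubset{j}]^T$ for $j=3,4$; this is equality, not just similarity. Now take $S$ with $|S|\le 4$.
\begin{itemize}
\item If $S$ misses $1$ or $2$, then $A[S]=B[S]$.
\item If $\{1,2\}\subseteq S$ and $S\neq\{1,2,3,4\}$, then $S$ lies in $\nsubset{3}$ or $\nsubset{4}$, and transposition preserves the determinant.
\item For $S=\{1,2,3,4\}$, the set $\{3,4\}$ is a cut of $A[\{1,2,3,4\}]$ (all-ones off-diagonal blocks), and $B[\{1,2,3,4\}]=\tw(A[\{1,2,3,4\}],\{3,4\})$, so \cref{lem:PME-under-cut-transpose} finishes.
\end{itemize}
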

\begin{proof}
Observe that $A[\nsubset{j}]=B[\nsubset{j}]$ for $j=1,2$, and $A[\nsubset{j}]=B[\nsubset{j}]^T$ for $j=3,4$. Moreover, $B[\{1,2,3,4\}]=\tw\left(A, \{3,4\}\right)$, therefore $A[\{1,2,3,4\}]\PME B[\{1,2,3,4\}]$ (see~\cref{lem:PME-under-cut-transpose}). Thus, $\kPME{A}{B}{4}$, and $$\{1,2\}\subseteq \calD_1(A,B)\ \ \text{ and }\ \ \{3,4\}\subseteq \calD_2(A,B).$$ Since $A[\{1,2,3,4\}]\nDS B[\{1,2,3,4\}]$ and $A[\{1,2,3,4\}]\nDS B[\{1,2,3,4\}]^T$, $$ \calD_1(A,B)=\{1,2\}\ \ \text{ and }\ \ \calD_2(A,B)=\{3,4\}.$$ If $A\PME B$, then the above equation combined with~\cite[Theorem~2]{Loewy86} implies that $A$ has a cut $X$ such that $\{1,2\}\subseteq X$ and $\{3,4\}\subseteq \comp{X}$. This is a contradiction. Therefore, $A\nPME B$.
\end{proof}

\paragraph*{Explicit construction.} Using the above claim, we provide an explicit construction of matrices $A$ and $B$ such that $\kPME{A}{B}{4}$ but $A\nPME B$. We construct an $n\times n$ matrix $A$ that satisfies the following properties:
\begin{enumerate}
\item The matrix $A$ of the form~\cref{eqn:counter-example-one}, and the matrix $B$ is defined as per~\cref{eqn:counter-example-two} 
\item  The matrix $A$ does not satisfy the second condition of property~$\prop$. Note that rank of submatrix $ A[\{1,2\}, \{3,4\}]$ is one. However, the construction will imply that there does not exist a subset $X \subseteq [n]$ such that $1,2 \in X$, $3,4 \in \comp{X}$, and $\rank A[X, \comp{X}] = 1$. 
\end{enumerate}

In this construction, we define the matrix $A$ as follows:
\[
A=
\begin{pmatrix}
1       &  1      & 1       & 1       & 1       & 1      & 1      & \cdots & 1      & \textcolor{red}{2}\\
2      &  1      & 1       & 1       & 1       & 1      & 1      & \cdots  & 1      & 1\\
1       &  1      & 1       & 2     & 1       & 1      & 1      & \cdots  & 1      & 1\\
1       &  1      & 1       & 1       & \textcolor{red}{2}       & 1      & 1      & \cdots  & 1      & 1\\
1       &  1      &  1      & \textcolor{red}{2}      & 1       & \textcolor{red}{2}      & 1      & \cdots  & 1      & 1\\
1       &  1      &  1      & 1       & \textcolor{red}{2}      & 1      & \textcolor{red}{2}     & \cdots  & 1      & 1\\
1       &  1      &  1      & 1       & 1       & \textcolor{red}{2}     & 1      & \cdots  & 1      & 1\\
\vdots  &  \vdots & \vdots  & \vdots  &\vdots   &\vdots  & \vdots & \cdots  & \vdots & \vdots\\
1     &  1      & 1       & 1       & 1       & 1      & 1      & \cdots  & 1      & \textcolor{red}{2}\\
\textcolor{red}{2}       &  1      & 1       & 1       & 1       & 1      & 1      & \cdots  & \textcolor{red}{2}     & 1\\
\end{pmatrix},
\]
that is,
\begin{enumerate}
\item $A[\{1,2,3,4\}]$ is defined as in mentioned in~\cref{eqn:counter-example-one}
\item For all $i\in\{5,6,7,\ldots, n\}$, $A[i-1, i]=A[i, i-1]=2$, and additionally  $A[1,n]=A[n,1]=2$
\item All the remaining remaining entries are 1.
\end{enumerate}
 
Now, assume that $X\subseteq [n]$ such that $\{1,2\}\subseteq X$ and $\{3,4\}\subseteq \comp{X}$. Since $\rank A[\{1,2\}, \{3, n\}]=2$, in order for $X$ to be cut of $A$, it must include $n$. Continuing this reasoning inductively, we can show that $\{4,5,6,\ldots, n\}\subseteq X$, which contradicts the assumption $4\in \comp{X}$. Thus, no such subset $X$ is a cut of $A$. Therefore, from~\cref{clm:counter-example-tool}, $A\nPME B$.

The above discussion demonstrates the necessity of property $\prop$ to conclude that $\kPME{A}{B}{4}$ implies $A\PME B$. Moreover, the same construction implies something stronger. Consider the following question: can we relax the second condition of property $\prop$ if we strengthen our assumption from $\kPME{A}{B}{4}$ to $\kPME{A}{B}{\ell}$ for some $4<\ell<n$? The construction described above also provides a negative answer to this question. 

Let $k\in\{5,6,7, \ldots, n\}$, and  define $$X_k=\{1,2,k+1, k+2, \ldots, n\}\ \ \text{ and }\ \ \comp{X}_k=\nsubset{k}\setminus X_k,$$ that is, $\comp{X}_k=\{3,4,5, 6, \ldots k-1\}$. Observe that $X_k$ is a cut of the matrix $A[\nsubset{k}]$, and $$B[\nsubset{k}]=\tw\left(A[\nsubset{k}], \comp{X}_k\right).$$ Therefore, from~\cref{lem:PME-under-cut-transpose}, $A[\nsubset{k}]\PME B[\nsubset{k}]$ for all $k\in\{5,6, \ldots, n\}$. On the other hand, the structure of the matrix ensures that $A[\nsubset{k}]\PME B[\nsubset{k}]$ for $k=1,2,3,4$. Thus, we conclude $A[\nsubset{k}]\PME B[\nsubset{k}]$ for all $k\in[n]$, that is, $\kPME{A}{B}{n-1}$. However, as shown earlier, $A\nPME B$. This confirms that even assuming $\kPME{A}{B}{\ell}$ for some $4<\ell < n$ is insufficient to guarantee $A \PME B$ if $A$ does not satisfy property $\prop$.}
\end{document}